\documentclass[%
 aip,
 amsmath,amssymb,
 reprint,%
groupedaddress,nofootinbib,floatfix]{revtex4-2}

\usepackage{graphicx}
\usepackage{dcolumn}
\usepackage{bm}

\usepackage{amsfonts}
\usepackage{amsthm}

\usepackage[colorlinks=true,linkcolor=blue,citecolor=blue,urlcolor=blue,plainpages=false,pdfpagelabels]{hyperref}
\usepackage{color}
\usepackage[dvipsnames]{xcolor}
\usepackage{mathtools}
\usepackage{bbm}

\usepackage{float}
\usepackage{array}
\usepackage{verbatim}

\usepackage[utf8]{inputenc}
\usepackage[T1]{fontenc}
\usepackage{etoolbox}

\usepackage{anyfontsize}  
\usepackage{microtype} 

\usepackage{newtxtext}
\usepackage{newtxmath}

\usepackage{boxedminipage}
\usepackage{enumitem}

\urlstyle{same}

\makeatletter
\def\@email#1#2{%
 \endgroup
 \patchcmd{\titleblock@produce}
  {\frontmatter@RRAPformat}
  {\frontmatter@RRAPformat{\produce@RRAP{*#1\href{mailto:#2}{#2}}}\frontmatter@RRAPformat}
  {}{}
}%
\makeatother

\makeatletter     
\renewcommand\onecolumngrid{
\do@columngrid{one}{\@ne}%
\def\set@footnotewidth{\onecolumngrid}
\def\footnoterule{\kern-6pt\hrule width 1.5in\kern6pt}%
}
\renewcommand\twocolumngrid{
        \def\footnoterule{
        \dimen@\skip\footins\divide\dimen@\thr@@
        \kern-\dimen@\hrule width.5in\kern\dimen@}
        \do@columngrid{mlt}{\tw@}
}%
\makeatother

\newcommand{\nocontentsline}[3]{}
\newcommand{\tocless}[2]{\bgroup\let\addcontentsline=\nocontentsline#1{#2}\egroup}
\newcommand{\toclesslab}[3]{\bgroup\let\addcontentsline=\nocontentsline#1{#2\label{#3}}\egroup}

\newcommand{\bra}[1]{\langle #1|}
\newcommand{\ket}[1]{|#1\rangle}

\newcommand{\braket}[2]{\langle #1|#2\rangle}
\newcommand{\ketbra}[2]{\ket{#1}\!\bra{#2}}

\newcommand{\abs}[1]{\left|#1\right|}
\newcommand{\e}{\mathrm{e}}
\newcommand{\I}{\mathrm{i}}
\newcommand{\Tr}{\mathrm{Tr}}
\renewcommand{\t}{{\scriptscriptstyle\mathsf{T}}}
\newcommand{\id}{\text{id}}
\newcommand{\conj}[1]{\overline{#1}}

\newcommand{\Lin}{\mathrm{L}}

\newcommand{\SF}[1]{\textnormal{\textsf{#1}}}

\newcommand{\GHZ}{\textnormal{GHZ}}
\newcommand{\ES}{\textnormal{ES}}

\theoremstyle{definition}
\newtheorem{theorem}{Theorem}[section]
\newtheorem{lemma}[theorem]{Lemma}

\newtheorem{definition}[theorem]{Definition}

\newtheorem{proposition}[theorem]{Proposition}
\newtheorem{remark}[theorem]{Remark}

\newenvironment{specbox*}[2]
{\begin{center}\begin{boxedminipage}{#1}
			\begin{center}\textbf{#2}\end{center} }
{\end{boxedminipage}\end{center}}

\definecolor{dred}  {RGB}{164,12,52}

\allowdisplaybreaks

\begin{document}


\title{On the design and analysis of near-term quantum network protocols using Markov decision processes}

\author{Sumeet Khatri}
\email{sumeet.khatri@fu-berlin.de}
\affiliation{Dahlem Center for Complex Quantum Systems, Freie Universit\"{a}t Berlin, 14195 Berlin, Germany}

\date{\today}

\begin{abstract}
	
	The quantum internet is one of the frontiers of quantum information science research. It will revolutionize the way we communicate and do other tasks, and it will allow for tasks that are not possible using the current, classical internet. The backbone of a quantum internet is entanglement distributed globally in order to allow for such novel applications to be performed over long distances. Experimental progress is currently being made to realize quantum networks on a small scale, but much theoretical work is still needed in order to understand how best to distribute entanglement, especially with the limitations of near-term quantum technologies taken into account. This work provides an initial step towards this goal. In this work, we lay out a theory of near-term quantum networks based on Markov decision processes (MDPs), and we show that MDPs provide a precise and systematic mathematical framework to model protocols for near-term quantum networks that is agnostic to the specific implementation platform. We start by simplifying the MDP for elementary links introduced in prior work, and by providing new results on policies for elementary links. In particular, we show that the well-known memory-cutoff policy is optimal. Then we show how the elementary link MDP can be used to analyze a quantum network protocol in which we wait for all elementary links to be active before creating end-to-end links. We then provide an extension of the MDP formalism to two elementary links, which is useful for analyzing more sophisticated quantum network protocols. Here, as new results, we derive linear programs that give us optimal steady-state policies with respect to the expected fidelity and waiting time of the end-to-end link.
	 
\end{abstract}

\maketitle

\tableofcontents

	\begin{figure*}
		\centering
		\includegraphics[width=0.95\textwidth]{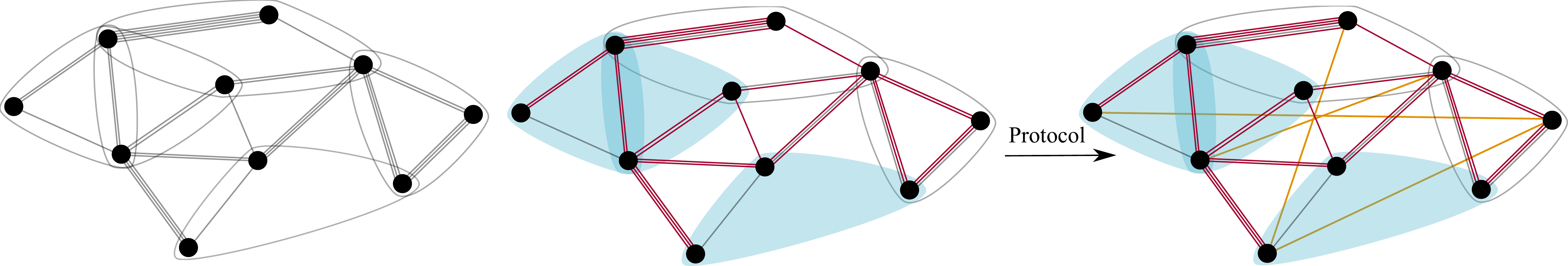}
		\caption{Graphical depiction of a quantum network and entanglement distribution. (Left) The physical layout of the quantum network is described by a hypergraph $G$, which should be thought of as fixed, in which the vertices represent the nodes (senders and receivers) in the network and the (hyper)edges represent quantum channels that are used to distribute entangled states (elementary links) shared by the corresponding nodes. (Center) At any point in time only a certain number of elementary links in the network may be active. By ``active'', we mean that an entangled state has been distributed successfully to the nodes and the corresponding quantum systems stored in the respective quantum memories. Active bipartite links are indicated by a red line, and active $k$-partite elementary links, $k\geq 3$, corresponding to the hyperedges are indicated by a blue bubble. (Right) An entanglement distribution protocol transforms elementary links to virtual links, which are indicated in orange, thus leading to a new graph for the network. The protocol is described mathematically by an LOCC channel.}\label{fig-network_physical} 
	\end{figure*}

\section{Introduction}\label{sec-introduction}

	The quantum internet~\cite{Kim08,Sim17,Cast18,WEH18,Dowling_book2} is envisioned to be a global-scale interconnected network of devices that exploits the uniquely quantum-mechanical phenomenon of entanglement. By operating in tandem with today's internet, it will allow people all over the world to perform quantum communication tasks such as quantum key distribution (QKD)~\cite{BB84,Eke91,GRG+02,SBPC+09,XXQ+20,PAB+19}, quantum teleportation~\cite{BBC+93,Vaidman94,BFK00}, quantum clock synchronization~\cite{JADW00,Preskill00,UD02,DB+18}, distributed quantum computation~\cite{CEHM99,CCT+20}, and distributed quantum metrology and sensing~\cite{DRC17,ZZS18,XZCZ19}. A quantum internet will also allow for exploring fundamental physics~\cite{Bruschi+14}, and for forming an international time standard~\cite{KKBJ+14}. Quantum teleportation and QKD are perhaps the primary use cases of the quantum internet in the near term. In fact, there are several metropolitan-scale QKD systems already in place~\cite{PPA+09,CWL+10,MP10,SLB+11,SFI+11,WCY+14,BLL+18,ZXCPP18}.
	
	Scaling up beyond the metropolitan level towards a global-scale quantum internet is a major challenge. All of the aforementioned tasks require the use of shared entanglement between distant locations on the earth, which typically has to be distributed using single-photonic qubits sent through either the atmosphere or optical fibers. It is well known that optical signals transmitted through either the atmosphere or optical fibers undergo an exponential decrease in the transmission success probability with distance~\cite{SveltoBook,KJK_book,KGMS88_book}, limiting direct transmission distances to roughly hundreds of kilometers. Therefore, one of the central research questions in the theory of quantum networks is how to overcome this exponential loss and thus to distribute entanglement over long distances efficiently and at high rates. 
	
	A quantum network can be modelled as a graph $G=(V,E)$, where the vertices $V$ represent the nodes in the network and the edges in $E$ represent quantum channels connecting the nodes; see Fig.~\ref{fig-network_physical}. Then, the task of entanglement distribution is to transform \textit{elementary links}, i.e., entanglement shared by neighbouring nodes, to \textit{virtual links}, i.e., entanglement between distant nodes; see the right-most panel of Fig.~\ref{fig-network_physical}. In this context, nodes that are not part of the virtual links to be created can act as \textit{quantum repeaters}, i.e., helper nodes whose purpose is to mitigate the effects of loss and noise along a path connecting the end nodes, thereby making the quantum information transmission more reliable~\cite{BDC98,DBC99}. Specifically, quantum repeaters perform entanglement distillation~\cite{BBP96,DAR96,BDSW96} (or some other form of quantum error correction), entanglement swapping~\cite{BBC+93,ZZH93}, and possibly some form of routing, in order to create the desired virtual links. Protocols for entanglement distribution in quantum networks have been described from an information-theoretic perspective in Refs.~[\onlinecite{Pir16,AML16,AK17,BA17,RKB+18,Pir19,Pir19b,DBWH19}], and limits on communication in quantum networks have been explored in Refs.~[\onlinecite{BCHW15,Pir16,AML16,STW16,TSW17,LP17,AK17,BA17,CM17,RKB+18,BAKE20,Pir19,Pir19b,DBWH19,HP22}]. Linear programs, and other techniques for obtaining optimal entanglement distribution rates in a quantum network, have been explored in Refs.~[\onlinecite{BAKE20,DPW20,CERW20,GEW20}]. However, information-theoretic analyses are agnostic to physical implementations, and generally speaking the protocols and the rates derived apply in an idealized scenario, in which quantum memories have high coherence times and quantum gate operations have no error.
	
	What are the fundamental limitations on \textit{near-term quantum networks}? Such quantum networks are characterized by the following elements:
	\begin{itemize}
		\item Small number of nodes;
		\item Imperfect sources of entanglement;
		\item Non-deterministic elementary link generation and entanglement swapping;
		\item Imperfect measurements and gate operations;
		\item Quantum memories with short coherence times;
		\item No (or limited) entanglement distillation/error correction.
	\end{itemize}
	A theoretical framework taking these practical limitations into account would act as a bridge between statements about what can be achieved in principle (which can be answered using information-theoretic methods) and statements that are directly useful for the purpose of implementation. The purpose of this work is to present the initial elements of such a theory of near-term quantum networks.
	
	The main contribution of this work is to frame quantum network protocols in terms of Markov decision processes (MDPs), and to place the Markov decision process for elementary links introduced in Ref.~[\onlinecite{Kha21b}] within an overall quantum network protocol. More specifically, the contributions of this work are as follows:
	\begin{enumerate}
		\item In Sec.~\ref{sec-MDP_elem_links}, we start by recapping the model for elementary link generation presented in Ref.~[\onlinecite{Kha21b}]. Then, as a new contribution, we show that the quantum decision process for elementary links introduced in Ref.~[\onlinecite{Kha21b}] can be written in a simpler manner as an MDP in terms of different variables. Furthermore, we emphasize that the figure of merit associated with the MDP, as introduced in Ref.~[\onlinecite{Kha21b}], takes into account the both the fidelity of the elementary link as well as the probability that it is active. To the best of our knowledge, such a figure of merit has not been considered in prior work. The simplified form of the MDP allows us to derive two new results. The first new result is Theorem~\ref{thm-elem_link_steady_state_dist}, which gives us an analytic expression for the steady-state value of an elementary link undergoing an arbitrary time-homogenous policy. The second new result is Theorem~\ref{thm-opt_pol_mem_cutoff}, in which we show that the so-called ``memory-cutoff policy''---in which the elementary link is kept for some fixed amount of time and then discarded and regenerated---is an optimal policy in the steady-state limit. We demonstrate the usefulness of the the MDP approach to modeling elementary links in Appendix~\ref{sec-elem_link_examples} and \ref{sec-sats}.
		
		\item In Sec.~\ref{sec-joining_dist_protocols}, we describe entanglement distillation protocols and protocols for joining elementary links (in order to create virtual links) in general terms as LOCC quantum instrument channels. We then present three joining protocols and write them down explicitly as LOCC channels. Doing so allows us to determine the output state of the protocol for \textit{any} set of input states, including input states that are noisy as a result of device imperfections, etc. This in turn allows us to compute the fidelity of the output state with respect to the ideal target state that would be obtained if the input states were ideal. Formulas for the fidelity at the output of the protocols are presented as Proposition~\ref{prop-ent_swap_post_fid}, Proposition~\ref{prop-GHZ_ent_swap_post_fid}, and Proposition~\ref{prop-graph_state_dist_post_fid}. In particular, Proposition~\ref{prop-ent_swap_post_fid} provides a formula for the fidelity at the output of the usual entanglement swapping protocol, which to the best of our knowledge is not explicitly found in prior works. Prior works typically use (as an approximation) the product of the individual elementary link fidelities in order to obtain the fidelity after entanglement swapping.
		
		\item In Sec.~\ref{sec-network_protocol}, we present a quantum network protocol that combines the Markov decision process for elementary links with known routing and path-finding algorithms. In essence, the protocol is a simple one in which we first wait for all of the relevant elementary links to become active, and then we perform the required joining operations to establish the virtual links; see Fig.~\ref{fig-QDP_protocol} for a summary. For this protocol, we provide a general method for determining waiting times and key rates for quantum key distribution. 
		
		\item In Sec.~\ref{sec-MDP_two_elem_links}, we provide a first step towards extending the elementary link MDP by defining an MDP for two elementary links with entanglement swapping. We then show how to approximate waiting times using a linear program, and we find that this linear programming approximation reproduces exactly the known analytic results on the waiting time for such a scenario~\cite{CJKK07}. However, our result is more general, allowing us to compute waiting times for arbitrary parameter regimes, while the analytic results are true only for restricted parameter regimes. Broadly speaking, having linear-programming approximations to the waiting time and other important quantities of interest (such as fidelity) will be important when considering MDPs for larger networks.
	\end{enumerate}
	
	This work is one in a long line of work on quantum repeaters, taking device imperfections and noise into account, beginning with the initial theoretical proposal~\cite{BDC98,DBC99}, and then resulting in a vast body of work~\cite{DLCZ01,CJKK07,SRM+07,SSM+07,BPv11,ZDB12,KKL15,ZBD16,EKB16,WZM+16,LZH+17,VK17,MMG19,ZPD+18,PWD18,DKD18,WPZD19,PD19,HBE20,GKL+03,RHG05,JTN+09,FWH+10,MSD+12,MKL+14,NJKL16,MLK+16,MEL17,CERW20,GEW20,STCMW20,DT21,RNX+21,BCMO21}. (See also Refs.~[\onlinecite{SSR+11,MATN15,VanMeter_book,ABC21,MPD+22}] and the references therein.) All of these proposals deal almost exclusively with a single line of repeaters connecting a sender and a receiver. However, for a quantum internet, we need to go beyond the linear topology to an arbitrary topology, and we need to consider multiple transmissions operating in parallel. While recent small-scale experiments~\cite{Hump+18,chung2021illinois,chung2022illinois,PHB+21,hermans2022teleportation,pompili2022demonstrationstack,bradley2022robustmemory} have demonstrated some of the key building blocks of a quantum internet, a unified and self-consistent theoretical framework will help to guide real-world implementations, especially when scaling up to larger distances and more nodes. It is our hope that this work provides a good starting point along this line of thought, and leads to a better understanding of how realistic, near-term quantum devices could be used to realize large-scale quantum networks, and eventually a global-scale quantum internet.

\section{Markov decision process for elementary links}\label{sec-MDP_elem_links}
	
	We start by presenting a Markov decision process (MDP) for elementary links, based on Ref.~[\onlinecite{Kha21b}]. To be specific, this is an MDP for an arbitrary edge of the graph corresponding to a quantum network. We start by describing the physical model of elementary link generation. Then, we define the MDP corresponding to this model of elementary link generation.

\subsection{Elementary link generation}\label{sec-practical_elem_link_generation}

	Our model for elementary link generation is the one considered in Ref.~[\onlinecite{Kha21b}] and illustrated in Fig.~\ref{fig-net_architecture}, based on the same model considered in prior work~\cite{AJP+03,JKR+16,DKD18,KMSD19}. Consider an arbitrary physical link in the network. For every such physical link, there is a source station that prepares and distributes an entangled state to the corresponding nodes. In general, all of these source stations operate independently of each other, distributing entangled states as they are requested. Specifically, we have the following.
	
	\begin{figure}
		\centering
		\includegraphics[scale=1]{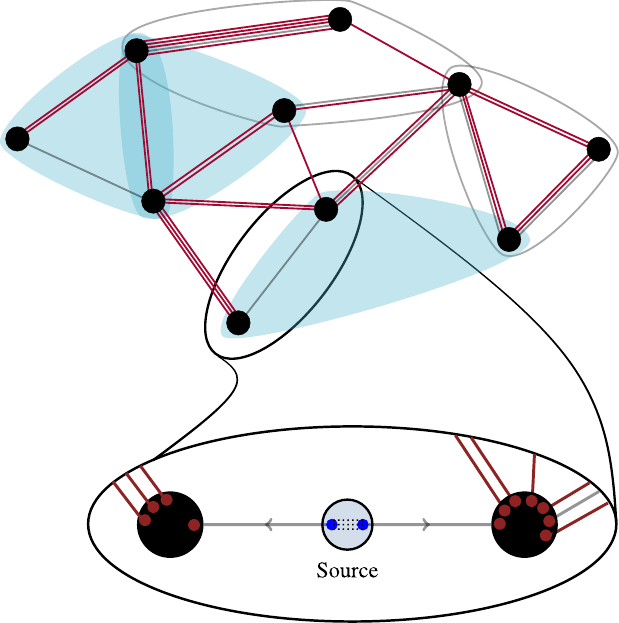}
		\caption{Our model for elementary link generation in a quantum network consists of source stations associated to every elementary link that distributes entangled states to the corresponding nodes~\cite{AJP+03,JKR+16,DKD18,KMSD19}. (Adapted from Ref.~[\onlinecite{Kha21b}].)}\label{fig-net_architecture}
	\end{figure}

	\begin{itemize}
		\item The source produces a $k$-partite quantum state $\rho^S$, $k\geq 2$, and sends it to the nodes via a quantum channel $\mathcal{S}$, leading to the state $\mathcal{S}(\rho^S)$. Here, $k$ is the number of nodes belonging to an edge, with $k=2$ corresponding to ordinary, bipartite edges (such as the red edges in Fig.~\ref{fig-net_architecture}) and $k\geq 3$ corresponding to hyperedges (such as the blue bubbles in Fig.~\ref{fig-net_architecture}).
		
		\item The nodes perform a heralding procedure, which is a protocol involving local operations and classical communication. It can be described by a quantum instrument $\{\mathcal{M}^0,\mathcal{M}^1\}$, where $\mathcal{M}^0$ and $\mathcal{M}^1$ are completely positive trace non-increasing maps such that $\mathcal{M}^0+\mathcal{M}^1$ is trace preserving. These maps capture not only the probabilistic nature of the heralding procedure but also the various imperfections of the devices that are used to perform the procedure. The map $\mathcal{M}^0$ corresponds to failure of heralding and $\mathcal{M}^1$ corresponds to success. The probability of successful transmission and heralding is
			\begin{equation}\label{eq-elem_link_success_prob}
				p=\Tr[(\mathcal{M}^1\circ\mathcal{S})(\rho^S)],
			\end{equation}
			and the states conditioned on success and failure are, respectively,
			\begin{align}
				\sigma^0&\coloneqq\frac{1}{p}(\mathcal{M}^1\circ\mathcal{S})(\rho^S),\label{eq-initial_link_state_success}\\
				\tau^{\varnothing}&\coloneqq\frac{1}{1-p}(\mathcal{M}^0\circ\mathcal{S})(\rho^S). \label{eq-initial_link_state_failure}
			\end{align}
			The superscript ``$0$'' in $\sigma^0$ indicates that, upon success of the heralding procedure, the quantum systems have been immediately stored in local quantum memories at the nodes and have not yet suffered from any decoherence.
			
		\item The state of the quantum systems after $m\in\{0,1,2,\dotsc\}$ time steps in the quantum memories is given by
			\begin{equation}\label{eq-elem_link_state_memory}
				\sigma(m)\coloneqq\mathcal{N}^{\circ m}(\sigma^0),
			\end{equation}
			where $\mathcal{N}$ is a quantum channel that describes the decoherence of the individual quantum memories at the nodes.
	\end{itemize}
	For specific, realistic noise models for the heralding and for the quantum memories, as well as for other realistic parameters for elementary link generation, we refer to Refs.~[\onlinecite{RGR+18,RYG+18,DSM+19,WAS20,CKD+20,GEW20,PHB+21}]. Also, in Appendix~\ref{sec-elem_link_examples}, we present two specific models of elementary link generation, as special cases of the abstract developments presented here.

\subsection{Definition of the MDP}\label{sec-elem_link_MDP_def}

	Having described the physical model of elementary link generation in the previous section, let us now proceed to the definition of the Markov decision process (MDP) for an elementary link. Note that while the formalism of the previous section gives us a mathematical description of the quantum state of an elementary link immediately after it is successfully generated, the MDP formalism provides us with a systematic framework to define actions on an elementary link and their effects on the quantum state over time. 
	
	Before starting, let us briefly summarize the definition of a Markov decision process (MDP); we refer to Appendix~\ref{sec-MDPs_overview} for more details and a detailed explanation of the notation being used. An MDP is a mathematical model of an agent performing actions on a system (usually called the environment). The system is described by a set $\SF{S}$ of \textit{(classical) states}, and the agent picks actions from a set $\SF{A}$. Corresponding to every action $a\in\SF{A}$ is a $\abs{\SF{S}}\times\abs{\SF{S}}$ \textit{transition matrix} $T^a$, such that the matrix element $T^a(s';s)$ is equal to the probability of transitioning to the state $s'\in\SF{S}$ given that the current state is $s\in\SF{S}$ and the action $a\in\SF{A}$ is taken. 
	
	The results of Ref.~[\onlinecite{Kha21b}] show us that, for the purposes of tracking the quantum state of an elementary link over time, as well as its fidelity to a target pure state, it is enough to keep track of the time that the quantum systems of the elementary link reside in their respective quantum memories. With this observation, we can define a simpler MDP for elementary links; see Fig.~\ref{fig-elem_link_MDP}.
	
	\begin{figure}
		\centering
		\includegraphics[width=0.7\columnwidth]{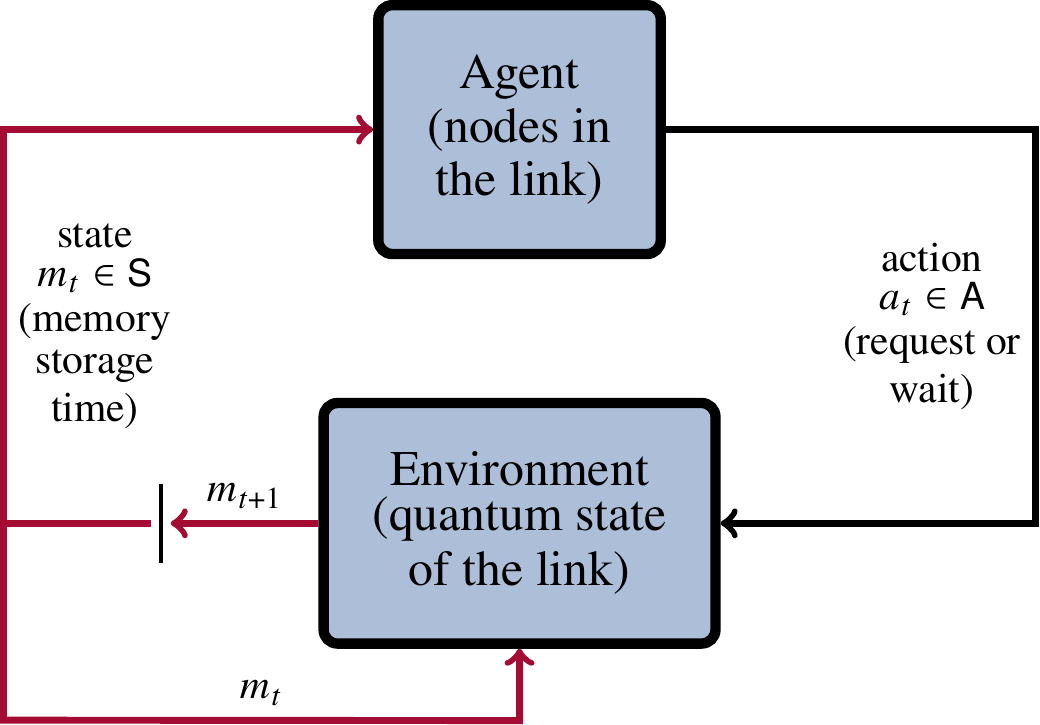}
		\caption{Schematic depiction of the Markov decision process (MDP) for elementary links presented in Sec.~\ref{sec-elem_link_MDP_def}. The MDP provides us with a systematic way of keeping track of the quantum state of an elementary link in a quantum network based on the actions at the nodes of the elementary link. Specifically, the states $m_t$ of the MDP encode information about the quantum state via \eqref{eq-elem_link_state_memory}.}\label{fig-elem_link_MDP}
	\end{figure}
	
	\begin{itemize}
		\item \textit{States}: The states in our elementary link MDP are defined by the set $\SF{S}=\{-1,0,1,\dotsc,m^{\star}\}$, which correspond to the number of time steps that the quantum systems of the elementary link have been sitting in their respective quantum memories. The state $-1$ corresponds to the elementary link being inactive, and $m^{\star}\in\mathbb{N}_0\coloneqq\{0,1,2,\dotsc\}$ corresponds to the coherence time of the quantum memory. Specifically, if $t_{\text{coh}}$ is the coherence time of the quantum memory (say, in seconds), and the duration of every time step (in seconds) is $\Delta t$ (based on the classical communication time between the nodes in the elementary link), then $m^{\star}=\frac{t_{\text{coh}}}{\Delta t}$. From now on, we refer to $m^{\star}$ as the \textit{maximum storage time} of the elementary link.
		
			We use $M(t)$, $t\in\mathbb{N}$, to refer to the random variables (taking values in $\SF{S}$) corresponding to the state of the MDP at time $t$. We also associate to the elements in $\SF{S}$ orthonormal vectors $\{\ket{m}\}_{m\in\SF{S}}$, and we emphasize that these vectors should not be thought of as representing quantum states but as representing the extreme points of a probability simplex associated with the set $\SF{S}$; see Appendix~\ref{sec-MDPs_overview} for details.
		
		\item \textit{Actions}: The set of actions is $\SF{A}=\{0,1\}$, where 0 corresponds to the action of ``wait'' and 1 corresponds to ``request''. In other words, at every time step, the agent can decide to keep their quantum systems currently in memory (``wait'') or to discard the quantum systems and perform the elementary link generation procedure again (``request'').
		
			The transition matrices $T^0$ and $T^1$ corresponding to the two actions are defined as follows:
			\begin{align}
				T^0&=\mathbbm{1}^{(-)}+B^{(+)}\label{eq-elem_link_transition_0}\\
				T^1&=\ketbra{g_p}{\gamma},\label{eq-elem_link_transition_1}
			\end{align}
			where
			\begin{align}
				\mathbbm{1}^{(-)}&\coloneqq\ketbra{-1}{-1},\\
				B^{(+)}&\coloneqq \sum_{m=0}^{m^{\star}-1}\ketbra{m+1}{m}+\ketbra{-1}{m^{\star}},\\
				\ket{g_p}&\coloneqq(1-p)\ket{-1}+p\ket{0},\label{eq-elem_link_MDP_gen_vector}\\
				\ket{\gamma}&=\sum_{m=-1}^{m^{\star}}\ket{m}.
			\end{align}
			(Note that we define our transition matrices such that probability vectors are applied to them from the right; see Appendix~\ref{sec-MDPs_overview} for details.) The transition matrix $T^0$ describes what happens to the elementary link when the action $a=0$ (``wait'') is taken by the agent: if the elementary link is currently inactive, then it stays inactive; if the elementary link is active, and it is in memory for less than $m^{\star}$ time steps, then the memory time is incremented by one; if the elementary link is active and it has been in memory for $m^{\star}$ time steps, then because the coherence time of the memory has been reached (as per the definition of $m^{\star}$), the elementary link becomes inactive. If the action $a=1$ (``request'') is taken, then regardless of the current state of the elementary link, the state changes to $-1$ (inactive) with probability $1-p$, meaning that the elementary link generation failed, or it changes to $0$ with probability $p$, meaning that the elementary link generation succeeded. These two possibilities are captured by the probability vector $\ket{g_p}$.
			 
			 We use $A(t)$, $t\in\mathbb{N}$, to refer to the random variable (taking values in the set $\SF{A}$) corresponding to the action taken at time $t$.
			 
			 We let $H(t)=(M(1),A(1),M(2),A(2),\dotsc,A(t-1),M(t))$ be the \textit{history}, consisting of a sequence of states and actions, up to time $t$, with $H(1)=M(1)$.
		
		\item \textit{Figure of merit}: Our figure of merit for an elementary link is the following function:
			\begin{align}
				f(m)&\coloneqq\left\{\begin{array}{l l} \bra{\psi}\sigma(m)\ket{\psi} & \text{if }m\in\{0,1,2,\dotsc,m^{\star}\}, \\ 0 & \text{if }m=-1. \end{array}\right.\\
				&=(1-\delta_{m,-1})\bra{\psi}\sigma(m)\ket{\psi},\label{eq-elem_link_fidelity}
			\end{align}
			where $\sigma(m)$ is defined in \eqref{eq-elem_link_state_memory} and $\ket{\psi}$ is a target state vector for the elementary link. (For example, if the elementary link contains two nodes, then $\ket{\psi}$ could be the state vector for the two-qubit maximally entangled state.) We emphasize that the function $f$ is not just the fidelity of the elementary link---it also depends implicitly on the probability that the elementary link is active, because if $f$ was simply the fidelity of the elementary link then instead of the definition $f(-1)=0$ we would have $f(-1)=\bra{\psi}\tau^{\varnothing}\ket{\psi}$, where $\tau^{\varnothing}=(1/(1-p))(\mathcal{M}^0\circ\mathcal{S})(\rho^S)$ is the quantum state corresponding to failure of the heralding procedure; see \eqref{eq-initial_link_state_failure}. We illustrate the importance of this distinction, and therefore the usefulness of this figure of merit for designing and evaluating protocols, in Sec.~\ref{sec-sats_policies_mem_cutoff}, specifically Fig.~\ref{fig-tstarInf_sats}. To the best of our knowledge, this figure of merit has not been considered in prior work.
	\end{itemize}
	
	
	A \textit{policy} is a sequence $\pi=(d_1,d_2,\dotsc)$ of \textit{decision functions} $d_t:\SF{S}\times\SF{A}\to[0,1]$, which indicate the probability of performing a particular action conditioned on the state of the system:
	\begin{equation}
		d_t(s)(a)=\Pr[A(t)=a|S(t)=s].
	\end{equation}
	For a particular policy $\pi=(d_1,d_2,\dotsc,d_{t-1})$, the probability of a particular history $h^t=(m_1,a_1,m_2,a_2,\dotsc,a_{t-1},m_t)$ of states and actions is (see Appendix~\ref{sec-MDPs_definitions})
	\begin{multline}\label{eq-MDP_elem_link_history_prob}
		\Pr[H(t)=h^t]_{\pi}\\=\Pr[M(1)=m_1]\prod_{j=1}^{t-1}T^{a_j}(m_{j+1};m_j)d_j(m_j)(a_j).
	\end{multline}
	Then, the quantum state of the elementary link is~\cite{Kha21b}
	\begin{align}
		\rho^{\pi}(t)&=\sum_{h^t}\Pr[H(t)=h^t]_{\pi}\,\ketbra{h^t}{h^t}\otimes\sigma(t|h^t),\label{eq-elem_link_initial_cq_state_2}\\
		\sigma(t|h^t)&=(1-\delta_{m_t,-1})\sigma(m_t)+\delta_{m_t,-1}\tau^{\varnothing},\label{eq-elem_link_cond_state}
	\end{align}
	where we recall that $\sigma(m_t)$ is given by \eqref{eq-elem_link_state_memory}.
	
	We are interested primarily in the expected value of the function $f$ defined in \eqref{eq-elem_link_fidelity} at times $t\in\mathbb{N}$:
	\begin{equation}\label{eq-network_QDP_total_fid_pol_elem}
		\widetilde{F}^{\pi}(t)\coloneqq\mathbb{E}[f(M(t))]_{\pi}=\sum_{m=0}^{m^{\star}}f(m)\Pr[M(t)=m]_{\pi},
	\end{equation}
	for policies $\pi=(d_1,d_2,\dotsc,d_{t-1})$. We are also interested in the probability that the elementary link is active at time $t\in\mathbb{N}$, which is given by
	\begin{equation}\label{eq-link_value}
		X^{\pi}(t)\coloneqq 1-\Pr[M(t)=-1]_{\pi}.
	\end{equation}
	From this, the expected fidelity of the elementary link is given by
	\begin{equation}
		F^{\pi}(t)\coloneqq\frac{\widetilde{F}^{\pi}(t)}{X^{\pi}(t)}.
	\end{equation}

\subsection{Optimal policies}\label{sec-elem_link_policies}

	We define an \textit{optimal policy} to be one that achieves the quantity $\sup_{\pi}\widetilde{F}^{\pi}(t)$, i.e., the maximum value of the function $\widetilde{F}^{\pi}$ defined in \eqref{eq-network_QDP_total_fid_pol_elem} among all policies~$\pi$. In the steady-state (infinite-time) limit, we are interested in the quantity
	\begin{multline}\label{eq-opt_elem_link_fidelity_avg_inf}
		\sup_d\lim_{t\to\infty}\widetilde{F}^{(d,d,\dotsc)}(t)\\=\sup_d\sum_{m=0}^{m^{\star}}f(m)\lim_{t\to\infty}\Pr[M(t)=m]_{(d,d,\dotsc)}
	\end{multline}
	(if the limit exists), which is the maximum value of $\widetilde{F}^{\pi}$ among all time-homogeneous (stationary) policies $\pi=(d,d,\dotsc)$, i.e., policies in which a fixed decision function $d$ is used at every time step.

	In Ref.~[\onlinecite{Kha21b}], it was shown that an optimal policy can be determined using a backward recursion algorithm. We restate this algorithm here for completeness.
	
	\begin{theorem}[Optimal finite-time policy for an elementary link~\cite{Kha21b}]\label{thm-opt_policy}
		For all $t\in\mathbb{N}$, the optimal value of an elementary link with success probability $p\in[0,1]$ and maximum storage time $m^{\star}\in\mathbb{N}_0$ is given by
		\begin{equation}\label{eq-opt_policy_BR_1}
			\sup_{\pi}\widetilde{F}^{\pi}(t)=\sum_{m_1\in\SF{S}}\max_{a_1\in\SF{A}}w_2(m_1,a_1),
		\end{equation}
		where
		\begin{equation}\label{eq-opt_policy_BR_2}
			w_j(h^{j-1},a_{j-1})=\sum_{m_j\in\SF{S}}\max_{a_j\in\SF{A}}w_{j+1}(h^{j-1},a_{j-1},m_j,a_j)
		\end{equation}
		for all $j\in\{2,3,\dotsc,t-1\}$, and
		\begin{multline}\label{eq-opt_policy_BR_3}
			w_t(h^{t-1},a_{t-1})\\=\sum_{m_t\in\SF{S}}\braket{m_1}{g_p}\left(\prod_{j=1}^{t-1}T^{a_j}(m_{j+1};m_j)\right)f(m_t).
		\end{multline}
		Furthermore, the optimal policy is deterministic and given by $\pi=(d_1^*,d_2^*,\dotsc,d_{t-1}^*)$, where
		\begin{equation}\label{eq-opt_policy_BR_4}
			d_j^*(h^j)=\max_{a\in\SF{A}}w_{j+1}(h^j,a)\quad\forall~j\in\{1,2,\dotsc,t-1\}.
		\end{equation}
	\end{theorem}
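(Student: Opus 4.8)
The plan is to prove the theorem by backward induction on the time index, which is the standard dynamic-programming argument for finite-horizon MDPs adapted to the present notation. First I would expand the objective $\widetilde{F}^{\pi}(t)=\mathbb{E}[f(M(t))]_{\pi}$ by inserting the history probability \eqref{eq-MDP_elem_link_history_prob} and the definition \eqref{eq-elem_link_fidelity} of $f$, and by using the initial distribution $\Pr[M(1)=m_1]=\braket{m_1}{g_p}$. This writes $\widetilde{F}^{\pi}(t)$ as a single sum over all histories $h^t$ of a product of the initial weight $\braket{m_1}{g_p}$, the transition factors $T^{a_j}(m_{j+1};m_j)$, the terminal reward $f(m_t)$, and the decision weights $d_j(h^j)(a_j)$. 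The crucial structural observation is that this expression is multilinear in the decision functions: each $d_j$ enters exactly once, through the single factor $d_j(h^j)(a_j)$, and for each fixed history the numbers $\{d_j(h^j)(a)\}_{a\in\SF{A}}$ form a probability distribution over the action set.

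The engine of the proof is the elementary simplex-maximization fact that, for any $g\colon\SF{A}\to\mathbb{R}$, one has $\sup_d\sum_{a\in\SF{A}}d(a)\,g(a)=\max_{a\in\SF{A}}g(a)$, with the supremum attained by the deterministic distribution supported on $\argmax_a g(a)$. I would apply this first to the innermost decision $d_{t-1}$. Summing the terminal reward over $m_t$ exposes precisely the quantity $w_t(h^{t-1},a_{t-1})$ of \eqref{eq-opt_policy_BR_3} as the coefficient multiplying $d_{t-1}(h^{t-1})(a_{t-1})$; since all remaining factors (the earlier decision weights and transition probabilities) are nonnegative and $d_{t-1}$ may be chosen independently for each history, the supremum over $d_{t-1}$ is obtained term by term, replacing $\sum_{a_{t-1}}d_{t-1}(h^{t-1})(a_{t-1})\,w_t$ by $\max_{a_{t-1}}w_t(h^{t-1},a_{t-1})$. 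Carrying out the remaining sum over $m_{t-1}$ then produces $w_{t-1}$ exactly as in the recursion \eqref{eq-opt_policy_BR_2}.

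I would then set up the induction hypothesis that, after optimizing $d_{t-1},\dotsc,d_k$, the objective equals a sum over the corresponding states and actions of $d_1(h^1)(a_1)\cdots d_{k-1}(h^{k-1})(a_{k-1})\,w_k(h^{k-1},a_{k-1})$, and verify that one more application of the simplex fact to $d_{k-1}$ followed by the sum over $m_{k-1}$ advances the hypothesis from $k$ to $k-1$, reproducing \eqref{eq-opt_policy_BR_2} at each stage. Because $\SF{A}=\{0,1\}$ is finite, every supremum is attained, so $\sup=\max$ throughout and the maximizers define a deterministic optimal policy; the maximizing action at stage $j$ is exactly the $\argmax_a w_{j+1}(h^j,a)$ recorded in \eqref{eq-opt_policy_BR_4}. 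Running the induction down to $k=2$ leaves $\sum_{m_1,a_1}d_1(m_1)(a_1)\,w_2(m_1,a_1)$ with only $d_1$ still free, and a final application of the simplex fact to $d_1$ yields $\sum_{m_1}\max_{a_1}w_2(m_1,a_1)$, which is \eqref{eq-opt_policy_BR_1}.

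The main obstacle I anticipate is not conceptual but organizational: keeping rigorous track of exactly which variables each partial sum and each $w_j$ depends on, so that at every stage the factor multiplying the decision function currently being optimized is genuinely independent of that decision function and the remaining weights are nonnegative---these two properties are what license the term-by-term maximization. A secondary point that needs a clean statement is the interchange $\sup_{d_1,\dotsc,d_{t-1}}=\sup_{d_1}\cdots\sup_{d_{t-1}}$ together with the fact that each inner supremum admits the closed form above; this holds because optimizing over a product of simplices can be done block by block, and the multilinear structure guarantees that each block optimization sees a linear objective on a single simplex.
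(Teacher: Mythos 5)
Your backward-induction argument is correct and is the standard dynamic-programming proof of this result; note that the paper itself supplies no proof of Theorem~\ref{thm-opt_policy} (it is restated ``for completeness'' from Ref.~[\onlinecite{Kha21b}], where essentially this same argument---multilinearity of $\widetilde{F}^{\pi}(t)$ in the history-dependent decision weights plus term-by-term simplex maximization from $d_{t-1}$ down to $d_1$---is used). The two points you flag as needing care are indeed the only delicate ones, and both go through: the iterated supremum over a product of simplices equals the joint supremum, and at each stage the coefficient multiplying the block being optimized is a product of earlier (nonnegative) decision weights times $w_{j+1}$, so the simplex fact applies regardless of the sign of $w_{j+1}$.
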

	
	Intuitively, the result of Theorem~\ref{thm-opt_policy} tells us that, for finite times, an optimal policy can be found by optimizing the individual actions going ``backwards in time'', by first optimizing the final action at time $t-1$ and then optimizing the action at time $t-2$, etc., and then finally optimizing the action at time $t=1$. This is indeed the case, because from \eqref{eq-opt_policy_BR_4} we see that the optimal action at the first time step is obtained using the function $w_2$, but from \eqref{eq-opt_policy_BR_2} we see that to calculate $w_2$ we need $w_3$, and to calculate $w_3$ we need $w_4$, etc., until we get to the function $w_t$ for the final time step, which we can calculate using \eqref{eq-opt_policy_BR_3}.
	
	While the optimal policy for finite times was determined in Ref.~[\onlinecite{Kha21b}], the steady-state value of the function $\widetilde{F}$ with respect to arbitrary stationary policies (i.e., the value in \eqref{eq-opt_elem_link_fidelity_avg_inf}) was not determined. We now show that the limit in \eqref{eq-opt_elem_link_fidelity_avg_inf} exists, and we determine its value for arbitrary decision functions.
	
	\begin{theorem}[Steady-state expected value of an elementary link]\label{thm-elem_link_steady_state_dist}
		Let $p\in[0,1]$ be the success probability of generating an elementary link in a quantum network, let $m^{\star}\in\mathbb{N}_0$ be the maximum storage time of the elementary link, and let $d$ be a decision function such that $d(m)(0)=\alpha(m)$, $m\in\{-1,0,1,\dotsc,m^{\star}\}$, is the probability of executing the action ``wait'' and $d(m)(1)=1-d(m)(0)=\overline{\alpha}(m)$ is the probability of executing the action ``request''. If the elementary link undergoes the stationary policy $(d,d,\dotsc)$, then
		\begin{equation}\label{eq-elem_link_steady_state_dist}
			\lim_{t\to\infty}\widetilde{F}^{(d,d,\dotsc)}(t)=\sum_{m=0}^{m^{\star}}f(m)s_d(m),
		\end{equation}
		where
		\begin{align}
			s_d(-1)&=\frac{1}{N_d}\left(1-p\left(1-\prod_{m'=0}^{m^{\star}}\alpha(m')\right)\right),\\
			s_d(0)&=\frac{1}{N_d}p\overline{\alpha}(-1),\\
			s_d(m)&=\frac{1}{N_d}p\overline{\alpha}(-1)\prod_{m'=0}^{m-1}\alpha(m'),\quad m\in\{1,\dotsc,m^{\star}\},
		\end{align}
		with
		\begin{multline}\label{eq-elem_link_steady_state_dist_normalization}
			N_d=1-p\left(1-\prod_{m'=0}^{m^{\star}}\alpha(m')\right)\\+p\overline{\alpha}(-1)\left(1+\sum_{m=1}^{m^{\star}}\prod_{m'=0}^{m-1}\alpha(m')\right).
		\end{multline}
	\end{theorem}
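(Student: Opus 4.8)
The plan is to recognize the stationary policy $(d,d,\dots)$ as inducing a time-homogeneous Markov chain, to identify the limit in \eqref{eq-opt_elem_link_fidelity_avg_inf} with the expectation of $f$ under that chain's stationary distribution, and then to solve the balance equations in closed form. Fixing $d$ merges the two actions \eqref{eq-elem_link_transition_0}--\eqref{eq-elem_link_transition_1} into a single effective transition matrix
\[
T_d = T^0 D_\alpha + T^1 D_{\overline{\alpha}},\qquad
D_\alpha=\sum_{m\in\SF{S}}\alpha(m)\ketbra{m}{m},\quad
D_{\overline{\alpha}}=\sum_{m\in\SF{S}}\overline{\alpha}(m)\ketbra{m}{m},
\]
so that the law of $M(t)$ evolves by $\ket{P(t+1)}=T_d\ket{P(t)}$. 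Since $f$ is a fixed bounded function, \eqref{eq-network_QDP_total_fid_pol_elem} shows it is enough to prove that $\Pr[M(t)=m]_{(d,d,\dots)}\to s_d(m)$, where $\ket{s_d}$ is the stationary distribution solving $T_d\ket{s_d}=\ket{s_d}$, and then to compute $\ket{s_d}$ explicitly; \eqref{eq-elem_link_steady_state_dist} follows immediately.

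For the existence of the limit I would show that $T_d$ is \emph{unichain} for every admissible $d$, i.e. has a single recurrent communicating class, so that the stationary distribution is unique. This is read off the reachability structure of $T_d$: ``request'' connects every state to $\{-1,0\}$ with weights $1-p$ and $p$, while ``wait'' advances the counter and forces $m^{\star}\mapsto-1$, so all states communicate with one common class (which contains $-1$ when $p<1$, and may reduce to an active cycle among $\{0,\dots,m^{\star}\}$ when $p=1$). A self-loop at $-1$ is present whenever $p<1$ or $\alpha(-1)>0$, making the class aperiodic; then $T_d^{\,t}\to\ket{s_d}\bra{\gamma}$ and $\Pr[M(t)=m]\to s_d(m)$ for any initial law. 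In the residual case $p=1$ the chain can be periodic, and I would fall back on the Ces\`{a}ro form of the ergodic theorem for finite unichain chains, which still converges to $\ket{s_d}$ and hence gives the same limiting value of $\widetilde{F}$. I expect this reachability-and-periodicity bookkeeping, carried out uniformly over all $d$, to be the main obstacle; everything afterwards is routine algebra.

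To compute $\ket{s_d}$ I would write the balance equations $s_d(m')=\sum_{m}T_d(m';m)\,s_d(m)$ state by state. For $m'\in\{1,\dots,m^{\star}\}$ only the ``wait'' transition feeds $m'$, giving the recursion $s_d(m')=\alpha(m'-1)\,s_d(m'-1)$ and hence $s_d(m)=s_d(0)\prod_{m'=0}^{m-1}\alpha(m')$. The $m'=0$ equation collects all successful requests, $s_d(0)=p\sum_{m=-1}^{m^{\star}}\overline{\alpha}(m)\,s_d(m)$; substituting the recursion and applying the telescoping identity
\[
\sum_{m=0}^{m^{\star}}\overline{\alpha}(m)\prod_{m'=0}^{m-1}\alpha(m')
=1-\prod_{m'=0}^{m^{\star}}\alpha(m'),
\]
which holds term-by-term because $\overline{\alpha}(m)\prod_{m'=0}^{m-1}\alpha(m')=\prod_{m'=0}^{m-1}\alpha(m')-\prod_{m'=0}^{m}\alpha(m')$, pins down the ratio $s_d(0)/s_d(-1)$. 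The remaining $m'=-1$ equation is then redundant, since the columns of $T_d$ sum to one (equivalently $\bra{\gamma}T_d=\bra{\gamma}$). Finally, imposing normalization $\sum_{m=-1}^{m^{\star}}s_d(m)=1$ fixes the overall scale and produces the constant $N_d$ of \eqref{eq-elem_link_steady_state_dist_normalization}, together with the stated closed forms for $s_d(-1)$, $s_d(0)$, and $s_d(m)$.
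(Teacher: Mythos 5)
Your proposal is correct and follows essentially the same route as the paper's proof in Appendix~\ref{sec-elem_link_steady_state_dist_pf}: both reduce the claim to finding the stationary distribution of the induced chain $P^d=\sum_a T^aD_a^d$ and invoking ergodicity (via \eqref{eq-MDP_func_inf}) to identify the limit with $\braket{f}{s_d}$. The only differences are cosmetic: you derive $s_d$ from the balance equations with the telescoping identity, whereas the paper simply verifies that the stated vector is the unit-eigenvalue fixed point, and your reachability/aperiodicity bookkeeping (including the $p=1$ Ces\`aro fallback) is more careful than the paper's one-line appeal to ergodicity.
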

	
	\begin{proof}
		See Appendix~\ref{sec-elem_link_steady_state_dist_pf}.
	\end{proof}
	
	Using Theorem~\ref{thm-elem_link_steady_state_dist}, we can determine the optimal steady-state value of the function $\widetilde{F}^{(d,d,\dotsc)}$, and thus the optimal decision function $d$, by optimizing the quantity in \eqref{eq-elem_link_steady_state_dist} with respect to $m^{\star}$ independent variables $\alpha(-1),\alpha(0),\dotsc,\alpha(m^{\star})$ subject to the constraints $\alpha(m)\in[0,1]$ for all $m\in\{-1,0,1,\dotsc,m^{\star}\}$. (Recall from the statement of Theorem~\ref{thm-elem_link_steady_state_dist} that the variables $\alpha(m)$ are directly related to the decision function $d$.) Alternatively, we can use the following linear program in order to obtain an optimal policy.
	
	\begin{proposition}[Linear program for the optimal steady-state value of an elementary link]\label{thm-opt_elem_link_fidelity_avg_inf_LP}
		Consider an elementary link in a quantum network with generation success probability $p\in[0,1]$ and maximum storage time $m^{\star}\in\mathbb{N}_0$. Let $\ket{f}\coloneqq\sum_{m=-1}^{m^{\star}}f(m)\ket{m}$. The optimal steady-state value of the elementary link, namely, the quantity in \eqref{eq-opt_elem_link_fidelity_avg_inf}, is equal to the solution to the following linear program:
		\begin{equation}\label{eq-opt_elem_link_fidelity_avg_inf_LP}
			\begin{array}{l l} \text{maximize} & \braket{f}{v} \\[0.2cm] \text{subject to} & 0\leq \ket{w_a}\leq \ket{v}\leq 1\quad\forall~a\in\{0,1\}, \\[0.1cm] & \braket{\gamma}{v}=1, \\[0.1cm] & \ket{w_0}+\ket{w_1}=\ket{v}=T^0\ket{w_0}+T^1\ket{w_1},  \end{array}
		\end{equation}
		where the optimization is with respect to the $(m^{\star}+1)$-dimensional vectors $\ket{v},\ket{w_0},\ket{w_1}$, and the inequality constraints on the vectors are componentwise. For every feasible point of this linear program, we obtain a decision function $d$ as follows: $d(m)(a)=\frac{\braket{m}{w_a}}{\braket{m}{v}}$ for all $m\in\{-1,0,1,\dotsc,m^{\star}\}$ and $a\in\{0,1\}$. If $\braket{m}{v}=0$, then we set $d(m)(0)=\alpha(m)$ and $d(m)(1)=1-\alpha(m)$ for an arbitrary $\alpha(m)\in[0,1]$.
	\end{proposition}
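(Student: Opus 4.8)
The plan is to recognize the linear program \eqref{eq-opt_elem_link_fidelity_avg_inf_LP} as the standard occupation-measure (dual) linear program for an average-reward MDP, and to set up a value-preserving correspondence (up to the stated freedom at unvisited states) between its feasible points and the stationary policies appearing in \eqref{eq-opt_elem_link_fidelity_avg_inf}. The guiding interpretation is that $\braket{m}{w_a}$ is the long-run frequency with which the chain occupies state $m$ while taking action $a$, so that $\braket{m}{v}=\braket{m}{w_0}+\braket{m}{w_1}$ is the stationary probability of $m$ and $d(m)(a)=\braket{m}{w_a}/\braket{m}{v}$ is the conditional probability of action $a$ given state $m$. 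Under this reading, $\braket{\gamma}{v}=1$ is normalization of the stationary distribution and $\ket{v}=T^0\ket{w_0}+T^1\ket{w_1}$ is the global-balance (stationarity) equation; the bounds $0\leq\ket{w_a}\leq\ket{v}\leq 1$ are then automatic consequences of nonnegativity, $\ket{w_0}+\ket{w_1}=\ket{v}$, and normalization, serving only to make the quotient defining $d$ lie in $[0,1]$.

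For the inequality ``LP value $\geq$ steady-state value'', I would take an arbitrary decision function $d$, written through $\alpha(m)=d(m)(0)$ and $\overline{\alpha}(m)=d(m)(1)$, and use Theorem~\ref{thm-elem_link_steady_state_dist} to produce its stationary distribution $s_d$. Setting $\ket{v}=\sum_{m}s_d(m)\ket{m}$, $\ket{w_0}=\sum_m\alpha(m)s_d(m)\ket{m}$, and $\ket{w_1}=\sum_m\overline{\alpha}(m)s_d(m)\ket{m}$, all constraints are immediate: nonnegativity and $\ket{w_0}+\ket{w_1}=\ket{v}$ hold since $\alpha(m)+\overline{\alpha}(m)=1$, normalization holds because $s_d$ is a probability distribution, and the balance constraint $\ket{v}=T^0\ket{w_0}+T^1\ket{w_1}$ is exactly the statement that $s_d$ is stationary for the induced transition matrix $T^0D_0+T^1D_1$, where $D_a$ is the diagonal matrix of $d(\cdot)(a)$. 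Since $f(-1)=0$, the objective $\braket{f}{v}=\sum_{m=0}^{m^{\star}}f(m)s_d(m)$ coincides with the steady-state value \eqref{eq-elem_link_steady_state_dist}, so every stationary policy produces a feasible point of matching value and the LP optimum is at least the supremum in \eqref{eq-opt_elem_link_fidelity_avg_inf}.

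For the reverse inequality I would start from any feasible $(\ket{v},\ket{w_0},\ket{w_1})$, define $d$ by the stated quotient rule, and verify it is a genuine decision function: $d(m)(a)\in[0,1]$ follows from $0\leq\ket{w_a}\leq\ket{v}$, and $\sum_a d(m)(a)=1$ from $\ket{w_0}+\ket{w_1}=\ket{v}$. Because $\ket{w_a}=D_a\ket{v}$ holds componentwise (trivially when $\braket{m}{v}=0$, since then $\braket{m}{w_a}=0$), the balance relation becomes $\ket{v}=(T^0D_0+T^1D_1)\ket{v}$, i.e.\ $\ket{v}$ is a stationary distribution of the chain induced by $d$, normalized by $\braket{\gamma}{v}=1$. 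Granting the identification $\ket{v}=\ket{s_d}$ with the distribution of Theorem~\ref{thm-elem_link_steady_state_dist}, we get $\braket{f}{v}=\sum_{m=0}^{m^{\star}}f(m)s_d(m)$, an attained steady-state value, so the LP optimum is at most the supremum in \eqref{eq-opt_elem_link_fidelity_avg_inf}, completing the argument.

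The main obstacle is precisely this last identification, because a stationary distribution need not be unique in general. Here, however, the explicit forms of $T^0$ and $T^1$ make the balance equations triangular: componentwise they read $\braket{m}{v}=\alpha(m-1)\braket{m-1}{v}$ for $m\in\{1,\dots,m^{\star}\}$ together with $\braket{0}{v}=p\sum_{m}\overline{\alpha}(m)\braket{m}{v}$, which expresses each $\braket{m}{v}$ with $m\geq 0$ as a fixed multiple of $\braket{-1}{v}$, after which normalization pins down the constant. Solving this recursion reproduces exactly the $s_d$ of Theorem~\ref{thm-elem_link_steady_state_dist}, and for every $p\in(0,1)$ the solution is unique because the inactive state $-1$ is reachable from every state under every policy---by waiting up to $m^{\star}$ or by a failed request---so the induced chain is unichain with $-1$ in its unique recurrent class. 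I would therefore carry out the proof in this regime; the boundary values $p\in\{0,1\}$ are physically degenerate (the chain can become reducible or periodic, and the normalization $N_d$ can vanish) and are naturally excluded by the same hypotheses under which Theorem~\ref{thm-elem_link_steady_state_dist} is valid.
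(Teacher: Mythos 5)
Your proposal is correct and follows essentially the same route as the paper: the paper simply invokes its general occupation-measure linear program (Proposition~\ref{prop-lin_prog_func_1} in Appendix~\ref{sec-MDPs_overview}) together with the ergodicity guaranteed by Theorem~\ref{thm-elem_link_steady_state_dist}, and your two-inequality argument with $\ket{w_a}=D_a^d\ket{v}$ is exactly the content of that general proposition specialized to the elementary-link MDP. Your explicit verification of uniqueness of the stationary distribution via the triangular balance recursion (and the caveat about $p\in\{0,1\}$) is a welcome filling-in of a detail the paper delegates to the ergodicity assumption.
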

	
	\begin{proof}
		The linear program in \eqref{eq-opt_elem_link_fidelity_avg_inf_LP} is a special case of the linear program presented in Proposition~\ref{prop-lin_prog_func_1} in Appendix~\ref{sec-MDPs_overview}. The main assumption of that result is that the MDP be ergodic, which is true in this case by Theorem~\ref{thm-elem_link_steady_state_dist}.
	\end{proof}

\subsection{The memory-cutoff policy and its optimality}\label{sec-elem_link_mem_cutoff_policy}
	
	An example of a stationary policy is the \textit{memory-cutoff policy}, which has been considered extensively in prior work~\cite{CJKK07,SRM+07,SSM+07,BPv11,JKR+16,DHR17,RYG+18,SSv19,KMSD19,SJM19,LCE20,Kha21b}. This is a deterministic policy that is defined by a cutoff time $t^{\star}\in\mathbb{N}_0\cup\{\infty\}$, where $\mathbb{N}_0\coloneqq\{0,1,2,\dotsc,\}$, such that $t^{\star}\leq m^{\star}$. Then, the decision function for this policy is defined by the values $d^{t^{\star}}\!(m)(0)$ and $d^{t^{\star}}\!(m)(1)=1-d^{t^{\star}}\!(m)(0)$ for all $m\in\{-1,0,1,\dotsc,t^{\star}\}$ as follows:
	\begin{equation}\label{eq-mem_cutoff_policy}
		d^{t^{\star}}\!(m)(0)=\left\{\begin{array}{l l} 0 & \text{if }m=-1,\,t^{\star},\\ 1 & \text{if }m\in\{0,1,\dotsc,t^{\star}-1\}, \end{array}\right.
	\end{equation}
	for all $t^{\star}\in\mathbb{N}_0$. In other words, the elementary link is kept in memory for $t^{\star}$ time steps, and then it is discarded and regenerated. If $t^{\star}=\infty$,
	\begin{equation}
		d^{\infty}(m)(0)=\left\{\begin{array}{l l} 0 & \text{if }m=-1, \\ 1 & \text{otherwise}, \end{array}\right.
	\end{equation}
	which means that the elementary link, once generated, is never discarded.
	
	For the memory-cutoff policy, we use the abbreviations $\widetilde{F}^{t^{\star}}\equiv\widetilde{F}^{(d^{t^{\star}},d^{t^{\star}},\dotsc)}$, $X^{t^{\star}}\equiv X^{(d^{t^{\star}},d^{t^{\star}},\dotsc)}$, and $F^{t^{\star}}\equiv F^{{(d^{t^{\star}},d^{t^{\star}},\dotsc)}}$. Using Theorem~\ref{thm-elem_link_steady_state_dist}, we have $N_d=1+t^{\star}p$ and $s_{d^{t^{\star}}}(m)\equiv s_{t^{\star}}(m)=\frac{p}{1+t^{\star}p}$ for all $m\in\{0,1,\dotsc,t^{\star}\}$, so that
	\begin{equation}\label{eq-avg_fid_tilde_tInfty}
		\lim_{t\to\infty}\widetilde{F}^{t^{\star}}\!(t)=\frac{p}{1+t^{\star}p}\sum_{m=0}^{t^{\star}}f(m),
	\end{equation} 
	for all $t^{\star}\in\mathbb{N}_0$, which agrees with Ref.~[\onlinecite[Eq.~(4.15)]{Kha21b}], which was obtained using different methods. We also obtain
	\begin{align}
		\lim_{t\to\infty}X^{t^{\star}}\!(t)&=\frac{(t^{\star}+1)p}{1+t^{\star}p},\\
		\lim_{t\to\infty}F^{t^{\star}}\!(t)&=\frac{1}{t^{\star}+1}\sum_{m=0}^{t^{\star}}f(m), \label{eq-avg_fid_tInfty}
	\end{align}
	for all $t^{\star}\in\mathbb{N}_0$.
	
	For $t^{\star}=\infty$, we have, for all $t\geq 1$~\cite{Kha21b},
	\begin{align}
		\widetilde{F}^{\infty}(t)&=\sum_{m=0}^{t-1}f(m)p(1-p)^{t-(m+1)},\label{eq-avg_fid_tilde_sats}\\
		X^{\infty}(t)&=1-(1-p)^t, \label{cor-link_status_Pr1}\\
		F^{\infty}(t)&=\sum_{m=0}^{t-1}f(m)\frac{p(1-p)^{t-(m+1)}}{1-(1-p)^t}.
	\end{align}
	
	

	It turns out that, in the steady-state limit, there always exists a cutoff such that the memory-cutoff policy achieves the optimal value of the elementary link.
	
	\begin{theorem}[Optimality of the memory-cutoff policy in the steady-state limit]\label{thm-opt_pol_mem_cutoff}
		Consider an elementary link in a quantum network with generation success probability $p\in[0,1]$ and maximum storage time $m^{\star}\in\mathbb{N}_0$. The optimal steady-state value of the elementary link, namely, the quantity in \eqref{eq-opt_elem_link_fidelity_avg_inf}, is achieved by a memory-cutoff policy, i.e.,
		\begin{equation}
			\sup_d \lim_{t\to\infty}\widetilde{F}^{(d,d,\dotsc)}(t)=\max_{t^{\star}\in\{0,1,\dotsc,m^{\star}\}} \frac{p}{1+t^{\star}p}\sum_{m=0}^{t^{\star}}f(m).
		\end{equation}
	\end{theorem}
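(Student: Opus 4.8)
The plan is to start from the closed-form steady-state value supplied by Theorem~\ref{thm-elem_link_steady_state_dist} and to maximize it directly over the decision variables $\alpha(-1),\alpha(0),\dots,\alpha(m^{\star})\in[0,1]$. Writing $\beta\coloneqq\overline{\alpha}(-1)=1-\alpha(-1)$ and $P_m\coloneqq\prod_{m'=0}^{m-1}\alpha(m')$ (with $P_0=1$), the expression \eqref{eq-elem_link_steady_state_dist} collapses, since $s_d(m)=\frac{1}{N_d}p\beta P_m$ for every $m\in\{0,\dots,m^{\star}\}$, to
\[
\lim_{t\to\infty}\widetilde{F}^{(d,d,\dots)}(t)=\frac{p\beta\sum_{m=0}^{m^{\star}}f(m)P_m}{N_d},\qquad N_d=1-p\left(1-P_{m^{\star}+1}\right)+p\beta\sum_{m=0}^{m^{\star}}P_m.
\]
The lower bound ``$\geq$'' in the theorem is immediate, because the memory-cutoff policies form a subfamily of the stationary policies and \eqref{eq-avg_fid_tilde_tInfty} already identifies their value; the real content is the matching upper bound.

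First I would dispose of the inactive-state variable. Holding $\alpha(0),\dots,\alpha(m^{\star})$ fixed, the objective has the form $\beta\mapsto A\beta/(B+C\beta)$ with $A=p\sum_m f(m)P_m\geq0$, $B=1-p(1-P_{m^{\star}+1})\geq 1-p\geq0$, and $C=p\sum_m P_m\geq0$, all independent of $\beta$; its derivative equals $AB/(B+C\beta)^2\geq0$, so the value is non-decreasing in $\beta$ and the maximum is attained at $\beta=1$, i.e.\ $\alpha(-1)=0$ (``always request when inactive''), exactly as in the memory-cutoff policy. At $\beta=1$ one has $N_d\geq 1$, because the sum $\sum_{m=0}^{m^{\star}}P_m$ contains the term $P_0=1$, so the objective is continuous and well defined on the remaining box.

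Next I would show that the optimum over $\alpha(0),\dots,\alpha(m^{\star})\in[0,1]$ is a vertex of the box. The key observation is that both the numerator and $N_d$ are \emph{multilinear} in these variables: fixing all $\alpha(m')$ with $m'\neq k$, each $P_m$ is affine in $\alpha(k)$ (independent of it for $m\leq k$, proportional to it for $m>k$), so the objective restricted to $\alpha(k)$ is a linear-fractional function on $[0,1]$ and hence monotone. Starting from a maximizer (which exists by compactness and continuity), I would push each coordinate in turn to an endpoint without decreasing the value; after sweeping all coordinates every $\alpha(k)\in\{0,1\}$, so a deterministic maximizer exists. Finally I would read off what deterministic policies with $\alpha(-1)=0$ compute: if $t^{\star}$ is the position of the first $0$ among $\alpha(0),\dots,\alpha(m^{\star})$, then $P_m=1$ for $m\leq t^{\star}$ and $P_m=0$ for $m>t^{\star}$, so the value reduces to $\frac{p}{1+t^{\star}p}\sum_{m=0}^{t^{\star}}f(m)$, precisely the memory-cutoff value \eqref{eq-avg_fid_tilde_tInfty}; the remaining case, all $\alpha(m)=1$, yields $\frac{p\sum_{m=0}^{m^{\star}}f(m)}{1+p(m^{\star}+1)}$, which is dominated by the $t^{\star}=m^{\star}$ cutoff since $f\geq0$. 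Taking the maximum over $t^{\star}\in\{0,\dots,m^{\star}\}$ then gives the claimed identity.

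I expect the main obstacle to be the vertex argument rather than any single computation: one must verify that the per-coordinate restriction is genuinely linear-fractional (hence monotone) for \emph{arbitrary} fixed values of the other coordinates, and that the denominator stays strictly positive throughout the coordinate sweep, so that no division by zero or sign change can reverse a monotonicity conclusion. The $\beta=1$ reduction is what secures $N_d\geq1$ and removes the only degenerate corner (where state $-1$ becomes absorbing and the chain fails to be ergodic), so performing that step first is essential; everything after it is a monotone linear-fractional optimization over a box.
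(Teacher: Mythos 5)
Your proof is correct, but it takes a genuinely different route from the paper's. The paper also starts from Theorem~\ref{thm-elem_link_steady_state_dist}, but then argues via pointwise domination of stationary distributions: it observes $s_d(m)\leq s_d(0)$ for $m\geq 1$ and tries to show that $s_d(0)\leq \frac{p}{1+m^{\star}p}$, so that the single cutoff $t^{\star}=m^{\star}$ would dominate \emph{every} stationary policy, yielding $\sum_{m}f(m)s_d(m)\leq\frac{p}{1+m^{\star}p}\sum_{m=0}^{m^{\star}}f(m)$. You instead treat the closed-form value as a linear-fractional objective over the box of decision probabilities, eliminate $\alpha(-1)$ by monotonicity in $\beta$, reduce to vertices by a coordinate-wise linear-fractional sweep (justified because $N_d\geq 1$ once $\beta=1$), and then identify the vertices with the memory-cutoff policies, taking the maximum over the resulting cutoffs at the end. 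Your route buys two things. First, it is more robust: the paper's domination step requires the specific bound $s_d(0)\leq\frac{p}{1+m^{\star}p}$, which fails for some admissible $d$ (e.g.\ the $t^{\star}=0$ cutoff itself has $s_d(0)=p$), and the inequality chain in Appendix~\ref{sec-thm_opt_pol_mem_cutoff_pf} bounds $\frac{1}{s_d(0)}-\frac{1}{p}$ from \emph{above} when a lower bound is what is needed to certify $t^{\star}=m^{\star}$; your argument never needs one cutoff to dominate all policies, only that some vertex does, which is exactly what the $\max_{t^{\star}}$ in the statement expresses. Second, your vertex argument automatically produces a deterministic optimal policy and cleanly handles the residual vertex ($\alpha(m)=1$ for all $m$), which you correctly show is dominated by the $t^{\star}=m^{\star}$ cutoff. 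The only points to make fully explicit in a write-up are (i) the existence of a maximizer before the sweep (compactness plus continuity, which your $N_d\geq 1$ bound secures), and (ii) the disposal of the non-ergodic corner $\beta=0$, where the formula of Theorem~\ref{thm-elem_link_steady_state_dist} is being extended by continuity; you flag both, so the proposal is complete in substance.
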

	
	\begin{proof}
		See Appendix~\ref{sec-thm_opt_pol_mem_cutoff_pf}.
	\end{proof}

\section{Entanglement distillation and joining protocols}\label{sec-joining_dist_protocols}

	In the previous section, we discussed elementary links in a quantum network, how to model the generation of elementary links and how to model them in time in terms of a Markov decision process. The description of an elementary link in terms of a Markov decision process allows us to determine, as a function of time, the quantum state of an elementary link. Keeping in mind the overall goal of entanglement distribution, i.e., the creation of long-distance virtual links, the next step in an entanglement distribution protocol is to take elementary links and to improve their fidelity using entanglement distillation and then to join them in order to create the virtual links (using, e.g., entanglement swapping). In this section, we explain how to model entanglement distillation protocols and joining protocols using LOCC channels. We refer to Appendix~\ref{sec-LOCC_channels} for a detailed explanation of LOCC channels. The explicit description of these protocols as LOCC channels is important because, as we saw in the previous section, the quantum state of an elementary link will not always be the ideal entangled state with respect to which joining protocols are typically defined. It is therefore important to understand how the protocols will act when the input states are not ideal.

\subsection{Entanglement distillation}\label{sec-ent_distill}

	The term ``entanglement distillation'' refers to the task of taking many copies of a given quantum state $\rho_{AB}$ and transforming them, via an LOCC protocol, to several (fewer) copies of the maximally entangled state $\Phi_{AB}\coloneqq\frac{1}{d}\sum_{i,j=0}^{d-1}\ketbra{i,i}{j,j}$. Typically, with only a finite number of copies of the initial state $\rho_{AB}$, it is not possible to perfectly obtain copies of the maximally entangled state, so we aim instead for a state $\sigma_{AB}$ whose fidelity $F(\Phi_{AB},\sigma_{AB})$ to the maximally entangled state is higher than the fidelity $F(\Phi_{AB},\rho_{AB})$ of the initial state. Mathematically, the task of entanglement distillation corresponds to the transformation
		\begin{equation}
			\rho_{AB}^{\otimes n}\mapsto\mathcal{L}_{A^nB^n\to A^m B^m}(\rho_{AB}^{\otimes n})=\sigma_{AB}^{\otimes m},
		\end{equation}
		where $n,m\in\mathbb{N}$, $m<n$, and $\mathcal{L}_{A^nB^n\to A^m B^m}$ is an LOCC channel.
		
		Typically, in practice, we have $n=2$ and $m=1$, with the task being to transform two two-qubit states $\rho_{A_1B_1}^1$ and $\rho_{A_2B_2}^2$ to a two-qubit state $\sigma_{A_1B_1}$ having a higher fidelity to the maximally entangled state than the initial states. Protocols achieving this aim are typically probabilistic in practice, meaning that the state $\sigma_{A_1B_1}$ with higher fidelity is obtained only with some non-unit probability.

	\begin{figure}
		\centering
		\includegraphics[scale=1]{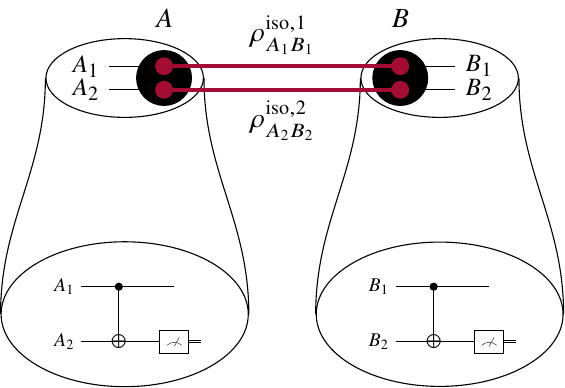}
		\caption{Depiction of the simple entanglement distillation protocol from Ref.~[\onlinecite{BBP96}]. The protocol takes two isotropic states $\rho_{A_jB_j}^{\text{iso},j}$, $j\in\{1,2\}$ (see \eqref{eq-isotropic_state}), and transforms them probabilistically to a state with higher fidelity.}\label{fig-ent_distill_example}
	\end{figure}
		
	We are not concerned with any particular entanglement distillation protocol in this work. All we are concerned with is their mathematical structure. In particular, entanglement distillation protocols that are probabilistic can be described mathematically as an LOCC instrument, which we now demonstrate with a simple example, depicted in Fig.~\ref{fig-ent_distill_example}, which comes from Ref.~[\onlinecite{BBP96}]. In this protocol, Alice and Bob first apply the CNOT gate to their qubits and follow it with a measurement of their second qubit in the standard basis. They then communicate the results of their measurement to each other. The protocol is considered successful if they both obtain the same outcome, and a failure otherwise. This protocol has the following corresponding LOCC instrument channel:
	\begin{align}
		&\mathcal{L}_{A_1A_2B_1B_2\to A_1B_1}\left(\rho_{A_1B_1}^1\otimes\rho_{A_2B_2}^2\right)\nonumber\\
		&\quad=\ketbra{0}{0}\otimes\left((K_{A}^0\otimes K_{B}^1)(\rho_{A_1B_1}^{\text{iso},1}\otimes\rho_{A_2B_2}^{\text{iso},2})(K_{A}^0\otimes K_{B}^1)^{\dagger}\right.\nonumber\\
		&\qquad\quad\left.+(K_{A}^1\otimes K_{B}^0)(\rho_{A_1B_1}^{\text{iso},1}\otimes\rho_{A_2B_2}^{\text{iso},2})(K_{A}^1\otimes K_{B}^0)^{\dagger}\right)\nonumber\\
		&\quad+\ketbra{1}{1}\otimes\left((K_{A}^0\otimes K_{B}^0)(\rho_{A_1B_1}^{\text{iso},1}\otimes\rho_{A_2B_2}^{\text{iso},2})(K_{A}^0\otimes K_{B}^0)^{\dagger}\right.\nonumber\\
		&\qquad\quad\left.+(K_{A}^1\otimes K_{B}^1)(\rho_{A_1B_1}^{\text{iso},1}\otimes\rho_{A_2B_2}^{\text{iso},2})(K_{A}^1\otimes K_{B}^1)^{\dagger}\right),\label{eq-ent_distill_channel}
	\end{align}
	where
	\begin{align}
		K_A^{x}&\equiv K_{A_1A_2\to A_1}^x \coloneqq\bra{x}_{A_2}\text{CNOT}_{A_1A_2}\quad\forall~x\in\{0,1\},\\
		K_B^x&\equiv K_{B_1B_2\to B_1}^x\coloneqq\bra{x}_{B_2}\text{CNOT}_{B_1B_2}\quad\forall~x\in\{0,1\}.
	\end{align}
	Furthermore, the states $\rho_{A_jB_j}^{\text{iso},j}$, $j\in\{1,2\}$, are defined as
	\begin{align}
		\rho_{A_jB_j}^{\text{iso},j}&\coloneqq\mathcal{T}_{A_jB_j}^U(\rho_{A_jB_j}^j)\\
		&\coloneqq\int_U \left(U_{A_j}\otimes\conj{U}_{B_j}\right)(\rho_{A_jB_j}^j)\left(U_{A_j}\otimes\conj{U}_{B_j}\right)^{\dagger},\label{eq-isotropic_state}
	\end{align}
	where $\mathcal{T}^U$ is the \textit{isotropic twirling channel}; see, e.g., Ref.~[\onlinecite[Example~7.25]{Wat18_book}].
		
	It is a straightforward calculation to show that if $f_1=\bra{\Phi}\rho_{A_1B_1}^1\ket{\Phi}$ and $f_2=\bra{\Phi}\rho_{A_2B_2}^2\ket{\Phi}$ are the fidelities of the initial states with the maximally entangled state, then the protocol depicted in Fig.~\ref{fig-ent_distill_example}, with corresponding LOCC channel given by \eqref{eq-ent_distill_channel}, succeeds with probability
	\begin{equation}
		p_{\text{succ}}=\frac{8}{9}f_1f_2-\frac{2}{9}(f_1+f_2)+\frac{5}{9},
	\end{equation}
	and the fidelity of the output state $\sigma_{A_1B_1}$ with the maximally entangled state (conditioned on success) is
	\begin{equation}
		\bra{\Phi}\sigma_{A_1B_1}\ket{\Phi}=\frac{1}{p_{\text{succ}}}\left(\frac{10}{9}f_1f_2-\frac{1}{9}(f_1+f_2)+\frac{1}{9}\right).
	\end{equation}
		
	The above example illustrates a general principle, which is that entanglement distillation protocols that are probabilistic (and heralded) can be described using LOCC instrument channels. Specifically, let $G=(V,E)$ be the graph corresponding to the physical links in a quantum network. Given an element $e\in E$ with $n$ parallel edges $e^1,e^2,\dotsc,e^n$, every probabilistic entanglement distillation protocol has the form of an LOCC instrument channel of the following form:
	\begin{multline}
		\mathcal{D}_{e^1\dotsb e^n\to e^1\dotsb e^{n'}}^{e}(\cdot)=\ketbra{0}{0}\otimes\mathcal{D}_{e^1\dotsb e^n\to e^1\dotsb e^{n'}}^{e;0}(\cdot)\\+\ketbra{1}{1}\otimes\mathcal{D}_{e^1\dotsb e^n\to e^1\dotsb e^{n'}}^{e;1}(\cdot),
	\end{multline}
	where $\mathcal{D}_{e^1\dotsb e^n\to e^1\dotsb e^{n'}}^{e;0}$ and $\mathcal{D}_{e^1\dotsb e^n\to e^1\dotsb e^{n'}}^{e;1}$ are completely positive trace non-increasing LOCC maps such that $\mathcal{D}_{e^1\dotsb e^n\to e^1\dotsb e^{n'}}^{e;0}+\mathcal{D}_{e^1\dotsb e^n\to e^1\dotsb e^{n'}}^{e;1}$ is a trace-preserving map, and thus an LOCC quantum channel. Specifically, $\mathcal{D}_{e^1\dotsb e^n\to e^1\dotsb e^{n'}}^{e;0}$ corresponds to failure of the protocol and $\mathcal{D}_{e^1\dotsb e^n\to e^1\dotsb e^{n'}}^{e;1}$ corresponds to success of the protocol.

\subsection{Joining protocols}\label{sec-joining_protocols}

	Let us now discuss joining protocols, such as entanglement swapping. We can describe such protocols using LOCC instrument channels, just as with entanglement distillation protocols. As above, let $G=(V,E)$ be the graph corresponding to the physical links in a quantum network. A path in a graph is a sequence $w=(v_1,e_1,v_2,e_2,\dotsc,e_{n-1},v_n)$ of vertices and edges that specifies how to get from the vertex $v_1$ to the vertex~$v_n$. Given a path $w$ of active elementary links in the network, the joining channel $\mathcal{L}_{w\to e'}$ that forms the new virtual link $e'$ is given in the probabilistic setting by
	\begin{equation}\label{eq-q_instr_channel_swapping}
		\mathcal{L}_{w\to e'}(\cdot)=\ketbra{0}{0}\otimes\mathcal{L}_{w\to e'}^0(\cdot)+\ketbra{1}{1}\otimes\mathcal{L}_{w\to e'}^1(\cdot),
	\end{equation}
	where $\mathcal{L}_{w\to e'}^0$ and $\mathcal{L}_{w\to e'}^1$ are completely positive trace non-increasing LOCC maps such that $\mathcal{L}_{w\to e'}^0+\mathcal{L}_{w\to e'}^1$ is a trace-preserving map, and thus an LOCC quantum channel. Specifically, $\mathcal{L}_{w\to e'}^0$ corresponds to failure of the joining protocol and $\mathcal{L}_{w\to e'}^1$ corresponds to success of the joining protocol. Given an input state $\rho_w$ corresponding to the given path $w$, the success probability of the joining protocol is $p_{\text{succ}}=\Tr\!\left[\mathcal{L}_{w\to e'}^1(\rho_w)\right]$, and the state conditioned on success is
	\begin{equation}\label{eq-network_QDP_virtual_link_fid_cond}
		\frac{1}{p_{\text{succ}}}\mathcal{L}_{w\to e'}^1(\rho_w).
	\end{equation}
	Note that as input states to the maps $\mathcal{L}_{w\to e'}^0$ and $\mathcal{L}_{w\to e'}^1$ we could have arbitrary states of the elementary links along the path $w$. In particular, depending on the elementary link policy, they could be states of the form \eqref{eq-elem_link_initial_cq_state_2}, which take into account the noise in the quantum memories and other device imperfections arising during the process of generating the elementary links.
	
	The precise joining protocol, and thus the explicit form for the maps $\mathcal{L}_{w\to e'}^0$ and $\mathcal{L}_{w\to e'}^1$, depends on the type of entanglement that is to be created. For bipartite entanglement, we consider entanglement swapping in Sec.~\ref{ex-ent_swap}. For tripartite GHZ entanglement, we describe a protocol in Sec.~\ref{ex-GHZ_ent_swap}, and for multipartite graph states we describe a protocol in Sec.~\ref{ex-graph_state_dist}.

\subsubsection{Entanglement swapping protocol}\label{ex-ent_swap}

	\begin{figure}
		\centering
		\includegraphics[scale=1]{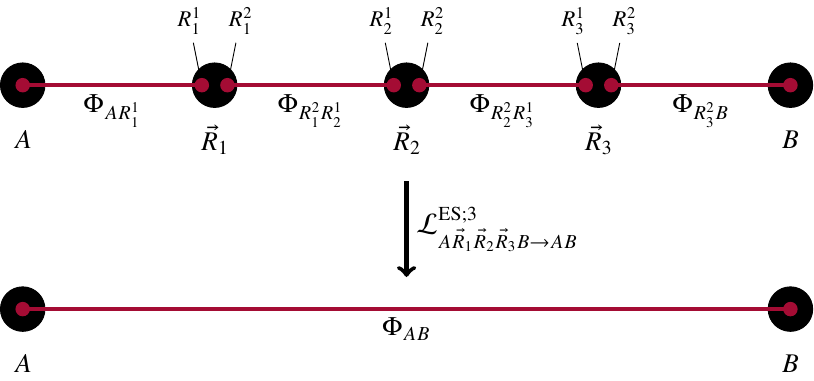}
		\caption{A chain of five nodes corresponding to the entanglement swapping protocol with $n=3$ intermediate nodes. The red lines represent maximally entangled states. The goal of the entanglement swapping protocol is to establish entanglement between $A$ and $B$. The protocol proceeds by first performing a Bell-basis measurement on the systems at the nodes $\vec{R}_j$, $1\leq j\leq n$, and communicating the results of the measurement to $B$, who applies a correction operation based on the outcomes.}\label{fig-ent_swap_chain}
	\end{figure}
	
	Let $\rho_{A\vec{R}_1\vec{R}_2\dotsb\vec{R}_nB}$ be a multipartite quantum state, where $n\geq 1$ and $\vec{R}_j\equiv R_j^1R_j^2$ is an abbreviation for two the quantum systems $R_j^1$ and $R_j^2$. The entanglement swapping protocol with $n$ intermediate nodes is defined by a Bell-basis measurement of the systems $\vec{R}_j$, i.e., a measurement described by the POVM $\{\Phi^{z,x}:z,x\in[d]\}$, where $[d]=\{0,1,\dotsc,d-1\}$, $\Phi^{z,x}=\ketbra{\Phi^{z,x}}{\Phi^{z,x}}$, and
	\begin{equation}
		\ket{\Phi^{z,x}}\coloneqq(Z^zX^x\otimes\mathbbm{1})\ket{\Phi}
	\end{equation}
	are the qudit Bell state vectors, with
	\begin{equation}\label{eq-max_ent_state}
		\ket{\Phi}\coloneqq\frac{1}{\sqrt{d}}\sum_{k=0}^{d-1}\ket{k,k}.
	\end{equation}
	The operators $Z$ and $X$ are the discrete Weyl operators~\cite{Wat18_book}, which are defined as
	\begin{equation}
		Z\coloneqq\sum_{k=0}^{d-1}\e^{\frac{2\pi\I k}{d}}\ketbra{k}{k},\quad X\coloneqq\sum_{k=0}^{d-1}\ketbra{k+1}{k}.
	\end{equation}
	Conditioned on the outcomes $(z_j,x_j)$ of the Bell measurement on $\vec{R}_j$, the unitary $Z_B^{z_1+\dotsb+z_n}X_B^{x_1+\dotsb+x_n}$ is applied to the system $B$, where the addition is performed modulo $d$. Let $\vec{z},\vec{x}\in[d]^{\times n}$, and define
	\begin{align}
		M_{\vec{R}_1\vec{R}_2\dotsb\vec{R}_n}^{\vec{z},\vec{x}}&\coloneqq \Phi_{\vec{R}_1}^{z_1,x_1}\otimes\Phi_{\vec{R}_2}^{z_2,x_2}\otimes\dotsb\otimes\Phi_{\vec{R}_n}^{z_n,x_n},\\[0.2cm]
		W_B^{\vec{z},\vec{x}}&\coloneqq Z_B^{z_1+\dotsb+z_n}X_B^{x_1+\dotsb+x_n},
	\end{align}
	where the addition in the second line is performed modulo $d$. Then, the LOCC quantum channel corresponding to the entanglement swapping protocol with $n\geq 1$ intermediate nodes is
	\begin{multline}\label{eq-ent_swap_channel}
		\mathcal{L}_{A\vec{R}_1\dotsb\vec{R}_nB\to AB}^{\ES;n}\left(\rho_{A\vec{R}_1\dotsb\vec{R}_nB}\right)\\\coloneqq\sum_{\vec{z},\vec{x}\in[d]^{\times n}}\Tr_{\vec{R}_1\dotsb\vec{R}_n}\!\!\left[M_{\vec{R}_1\dotsb\vec{R}_n}^{\vec{z},\vec{x}}W_B^{\vec{z},\vec{x}}\right.\\\left.\left(\rho_{A\vec{R}_1\dotsb\vec{R}_nB}\right)\left(W_B^{\vec{z},\vec{x}}\right)^\dagger\right].
	\end{multline}
	
	The standard entanglement swapping protocol~\cite{ZZH93} corresponds to the input state
	\begin{equation}
		\rho_{A\vec{R}_1\vec{R}_2\dotsb\vec{R}_nB}=\Phi_{AR_1^1}\otimes \Phi_{R_1^2R_2^1}\otimes\dotsb\otimes\Phi_{R_{n-1}^2R_n^1}\otimes\Phi_{R_n^2B}.
	\end{equation}
	This scenario is shown in Fig.~\ref{fig-ent_swap_chain}. Indeed, it can be shown that 
	\begin{multline}
		\mathcal{L}_{A\vec{R}_1\dotsb\vec{R}_nB\to AB}^{\ES;n}\left(\Phi_{AR_1^1}\otimes \Phi_{R_1^2R_2^1}\otimes\dotsb\right.\\\left.\otimes\Phi_{R_{n-1}^2R_n^1}\otimes\Phi_{R_n^2B}\right)=\Phi_{AB}.
	\end{multline}
	Furthermore, the standard teleportation protocol~\cite{BBC+93} corresponds to $n=1$ and the input state
	\begin{equation}
		\rho_{A\vec{R}_1B}=\sigma_{R_1^1}\otimes\Phi_{R_1^2B},
	\end{equation}
	where $A=\varnothing$ is a trivial (one-dimensional) system and $\sigma_{R_1^1}$ is an arbitrary $d$-dimensional quantum state, so that
	\begin{equation}
		\mathcal{L}_{\vec{R}_1\to B}^{\ES;1}(\sigma_{R_1^1}\otimes\Phi_{R_1^2B})=\sigma_B,
	\end{equation}
	as expected.
	
	\begin{proposition}[Fidelity after entanglement swapping]\label{prop-ent_swap_post_fid}
		For all $n\geq 1$ and all states $\rho_{AR_1^1}^1,\rho_{R_1^2R_2^1}^2,\dotsc,\rho_{R_n^2B}^{n+1}$, the fidelity of the maximally entangled state with the state after entanglement swapping of $\rho_{AR_1^1}^1,\rho_{R_1^2R_2^1}^2,\dotsc,\rho_{R_n^2B}^{n+1}$ is given by
		\begin{multline}\label{eq-ent_swap_post_fidelity}
			\bra{\Phi}_{AB}\mathcal{L}_{A\vec{R}_1\dotsb\vec{R}_nB\to AB}^{\ES;n}\left(\rho_{AR_1^1}^1\otimes\rho_{R_1^2R_2^1}^2\otimes\dotsb\otimes\rho_{R_n^2B}^{n+1}\right)\ket{\Phi}_{AB}\\=\sum_{\vec{z},\vec{x}\in[d]^{\times n}}^{d-1}\bra{\Phi^{z',x'}}\rho_{AR_1^1}^1\ket{\Phi^{z',x'}}\bra{\Phi^{z_1,x_1}}\rho_{R_1^2R_2^1}^2\ket{\Phi^{z_1,x_1}}\\\dotsb\bra{\Phi^{z_n,x_n}}\rho_{R_n^2B}^{n+1}\ket{\Phi^{z_n,x_n}},
		\end{multline}
		where $z'=-z_1-z_2-\dotsb-z_n$ and $x'=-x_1-x_2-\dotsb-x_n$.
	\end{proposition}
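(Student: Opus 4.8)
The plan is to reduce the left-hand side to a telescoping product of Bell-diagonal matrix elements by exploiting the ``ricochet'' (transpose) identity for the maximally entangled state, $(\mathbbm{1}\otimes M)\ket{\Phi}=(M^{\t}\otimes\mathbbm{1})\ket{\Phi}$, together with the Weyl-operator algebra $ZX=\e^{2\pi\I/d}XZ$, $X^{\t}=X^{-1}$, and $Z^{\t}=Z$. First I would expand the left-hand side using the definition \eqref{eq-ent_swap_channel} of $\mathcal{L}^{\ES;n}$ and the product form of the input, rewriting the fidelity via cyclicity of the trace as
\begin{equation}
\sum_{\vec{z},\vec{x}}\Tr\!\left[\left(\Phi_{AB}\otimes M_{\vec{R}}^{\vec{z},\vec{x}}\right)\left(W_B^{\vec{z},\vec{x}}\left(\rho^1\otimes\dotsb\otimes\rho^{n+1}\right)\left(W_B^{\vec{z},\vec{x}}\right)^{\dagger}\right)\right].
\end{equation}

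Then, and this is the conceptual crux, I would move the correction $W_B^{\vec{z},\vec{x}}=Z_B^{z_1+\dotsb+z_n}X_B^{x_1+\dotsb+x_n}$ off of system $B$ and onto system $A$ by passing it through the output projector $\Phi_{AB}$ with the ricochet identity. Since $B$ lives in $\rho^{n+1}$ but $A$ lives in $\rho^1$, this relocation is precisely what explains why the summed index lands on $\rho^1$ rather than on $\rho^{n+1}$: the correction on $B$ is converted, via the transpose, into a Weyl operator acting on the $A$-leg of $\rho^1$, which is exactly the operator needed to turn the output bra $\bra{\Phi}_{AB}$ into the Bell bra carrying the accumulated index. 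I expect the signs to work out so that this index is $z'=-(z_1+\dotsb+z_n)$, $x'=-(x_1+\dotsb+x_n)$.

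With the correction relocated, the remaining object is a \emph{bare} chain: a product of input states contracted against the internal Bell projectors $\Phi^{z_j,x_j}_{\vec{R}_j}$ together with the (now Weyl-shifted) output Bell state on $AB$. I would evaluate this by induction on $n$, contracting one internal projector at a time. Each contraction of $\Phi^{z_j,x_j}_{\vec{R}_j}$ between the free leg of the current chain and the $R_j^2$-leg of $\rho^{j+1}$ is a partial inner product that glues two maximally entangled connectors; after summing over the internal basis indices, the Weyl phases force the relevant leg of $\rho^{j+1}$ into the matching Bell component, producing the isolated diagonal factor $\bra{\Phi^{z_j,x_j}}\rho^{j+1}\ket{\Phi^{z_j,x_j}}$ and a reduced chain of length $n-1$. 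Iterating down to the base case $n=1$ peels off all factors $\rho^2,\dotsc,\rho^{n+1}$ and leaves $\rho^1$ sandwiched by the Bell bra/ket $\bra{\Phi^{z',x'}}\cdots\ket{\Phi^{z',x'}}$, yielding the claimed formula.

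The main obstacle will be bookkeeping rather than conceptual: one must carefully track the Weyl phase factors $\e^{2\pi\I z(r+x)/d}$ generated when the Bell kets are expanded in the computational basis, and the transposes and conjugates introduced by each application of the ricochet identity, in order to verify both that the cross-phases cancel---as they must, since each surviving factor $\bra{\Phi^{\cdot,\cdot}}\rho^k\ket{\Phi^{\cdot,\cdot}}$ is a real diagonal Bell matrix element---and that the indices combine with the correct signs to give exactly $z'=-\sum_j z_j$ and $x'=-\sum_j x_j$. A subtlety worth isolating is that $W_B^{\vec{z},\vec{x}}$ is a single Weyl operator with summed exponents rather than a product of per-node corrections; because it appears conjugated as $W_B^{\vec{z},\vec{x}}(\cdot)(W_B^{\vec{z},\vec{x}})^{\dagger}$, the commutation phases between its factors cancel, so for the purposes of the induction the correction may be split node-by-node and propagated through the chain without affecting the final expression.
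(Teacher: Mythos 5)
Your overall strategy---absorb the outcome-dependent correction $W_B^{\vec z,\vec x}$ into the $AB$ Bell vector via the ricochet identity, then convert the product of Bell projectors in the ``measurement pairing'' into diagonal Bell matrix elements in the ``input pairing''---is the same mechanism the paper uses, but the paper executes it in one global computation rather than by induction: after the step corresponding to Eq.~\eqref{eq-ent_swap_output_fid_pf1}, it expands every Bell vector in the computational basis, regroups all $2n+2$ tensor factors into the pairing $(AR_1^1)(R_1^2R_2^1)\dotsb(R_n^2B)$, and kills the cross terms all at once by repeated use of $\sum_{k=0}^{d-1}\e^{2\pi\I k\alpha/d}=d\delta_{\alpha,0}$. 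Your relocation of the correction onto the $A$-leg, and your expectation that the accumulated index lands on $\rho^1$ with the signs $z'=-\sum_j z_j$, $x'=-\sum_j x_j$, are consistent with that computation.

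The step that would fail as written is the inductive peeling. For a \emph{fixed} outcome $(z_j,x_j)$, the partial contraction of $\Phi_{\vec R_j}^{z_j,x_j}$ against $\rho^j\otimes\rho^{j+1}$ does not produce the isolated factor $\bra{\Phi^{z_j,x_j}}\rho^{j+1}\ket{\Phi^{z_j,x_j}}$ times a reduced chain; the identity in the proposition holds only for the sum over outcomes, not term by term. (For $n=1$, $d=2$ and ideal inputs, the single $(z,x)=(0,0)$ term of the left-hand side equals $1/d^2$, whereas the corresponding product of diagonal elements is $1$.) Summing ``over the internal basis indices'' of the Bell kets is not what removes the off-diagonal Bell components of $\rho^{j+1}$: those cancel only under the sum over the measurement outcomes $(z_j,x_j)$ themselves, and that sum cannot be performed locally at node $j$ because the relocated correction (equivalently, the index of the $AB$ Bell bra) also depends on $z_j,x_j$, coupling node $j$ to the ends of the chain. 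To repair the induction you would need to strengthen the inductive hypothesis from ``fidelity with $\ket{\Phi}$'' to ``every diagonal Bell matrix element of the partially swapped link is the convolution over $\mathbb{Z}_d\times\mathbb{Z}_d$ of the diagonal Bell distributions of its constituents,'' prove the $n=1$ case of that stronger statement (including the vanishing of the off-diagonal contributions under the outcome sum), and then peel one node at a time. With that modification the convolution picture gives a legitimate alternative proof; without it the key factorization step is unjustified.
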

	
	\begin{proof}
		See Appendix~\ref{app-ent_swap_post_fid_pf}.
	\end{proof}
	
	We remark that a formula for the fidelity after entanglement swapping of two arbitrary bipartite qubit states can be found in Ref.~[\onlinecite{kirby2016entanglementswapping}].

\subsubsection{GHZ entanglement swapping protocol}\label{ex-GHZ_ent_swap}
	
	The previous example takes a chain of Bell states and transforms them into a Bell state shared by the end nodes of the chain. In this example, we look at a protocol that takes the same chain of Bell states and transforms them instead to a multi-qubit GHZ state, which is defined as~\cite{GHZ89}
	\begin{equation}\label{eq-GHZ_state}
		\ket{\GHZ_n}\coloneqq\frac{1}{\sqrt{2}}(\ket{0}^{\otimes n}+\ket{1}^{\otimes n}).
	\end{equation}
	We call this protocol the \textit{GHZ entanglement swapping protocol}.
	
	The protocol for transforming a chain of two Bell states to a three-party GHZ state is shown in Fig.~\ref{fig-ent_swap_GHZ}. First, the two qubits $R_1^1$ and $R_1^2$ in the central node are entangled with a CNOT gate, followed by a measurement of $R_1^2$ in the standard basis (with corresponding POVM $\{\ketbra{0}{0},\ketbra{1}{1}\}$). The result $x\in\{0,1\}$ is communicated to $B$, where the correction operation $X_B^x$ is applied. The LOCC channel corresponding to this protocol is
	\begin{multline}\label{eq-GHZ_ent_swap_channel_1}
		\mathcal{L}_{A\vec{R}_1B}^{\GHZ;1}\left(\rho_{A\vec{R}_1B}\right)\\=\sum_{x=0}^1 \left(K_{\vec{R}_1}^x\otimes X_B^x\right)\rho_{A\vec{R}_1B}\left(K_{\vec{R}_1}^x\otimes X_B^x\right)^\dagger,
	\end{multline}
	where
	\begin{align}
		K_{\vec{R}_1}^x&\coloneqq\bra{x}_{R_1^2}\text{CNOT}_{\vec{R}_1},\\
		\text{CNOT}_{\vec{R}_1}&\coloneqq\ketbra{0}{0}_{R_1^1}\otimes\mathbbm{1}_{R_1^2}+\ketbra{1}{1}_{R_1^1}\otimes X_{R_1^2}.
	\end{align}
	
	\begin{figure}
		\centering
		\includegraphics[scale=1]{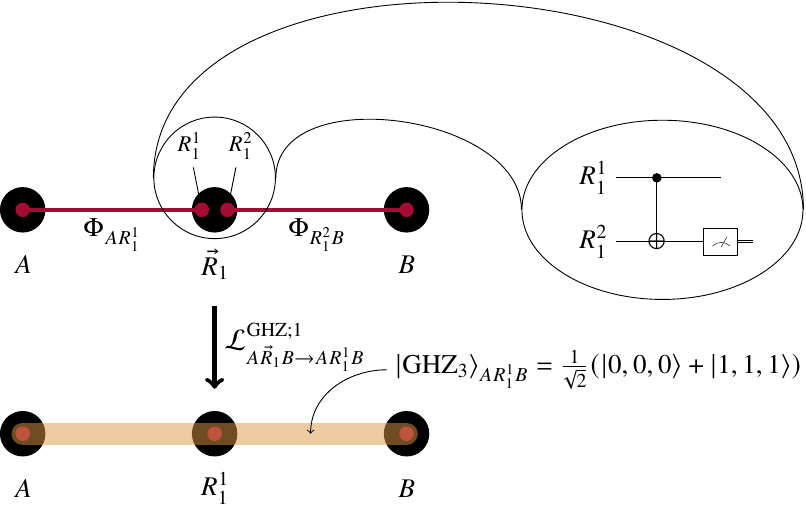}
		\caption{The GHZ entanglement swapping protocol with one intermediate node. The two qubits in the central node are entangled using the CNOT gate, after which the qubit $R_1^2$ is measured in the standard basis. The result $x\in\{0,1\}$ of the measurement is communicated to $B$, where the gate $X_B^x$ is applied.}\label{fig-ent_swap_GHZ}
	\end{figure}
	
	The protocol shown in Fig.~\ref{fig-ent_swap_GHZ}, with corresponding LOCC quantum channel in \eqref{eq-GHZ_ent_swap_channel_1}, can be easily extended to a scenario with $n>1$ intermediate nodes. In this case, the node $\vec{R}_1$ starts by applying the gate $\text{CNOT}_{\vec{R}_1}$ to its qubits and then measuring the qubit $R_1^2$ in the standard basis. The outcome of this measurement is sent to the node $\vec{R}_2$, and the corresponding correction operation is applied to the qubit $R_2^1$. Then, the gate $\text{CNOT}_{\vec{R}_2}$ is applied to the qubits at $\vec{R}_2$, followed by a standard-basis measurement of $R_2^2$ and communication of the outcome to $\vec{R}_3$ and a correction operation on $R_3^1$. This proceeds in sequence until the $n^{\text{th}}$ intermediate node $\vec{R}_n$, which sends its measurement outcome to $B$, which applies the appropriate correction operation. The LOCC channel for this protocol is
	\begin{multline}\label{eq-GHZ_ent_swap_channel}
		\mathcal{L}_{A\vec{R}_1\dotsb\vec{R}_nB\to AR_1^1\dotsb R_n^1B}^{\GHZ;n}\left(\rho_{A\vec{R}_1\dotsb\vec{R}_nB}\right)\\\coloneqq\sum_{\vec{x}\in\{0,1\}^n} P_{\vec{R}_1\dotsb\vec{R}_nB}^{\vec{x}}\left(\rho_{A\vec{R}_1\dotsb\vec{R}_nB}\right)P_{\vec{R}_1\dotsb\vec{R}_nB}^{\vec{x}~\dagger},
	\end{multline}
	where
	\begin{equation}
		P_{\vec{R}_1\dotsb\vec{R}_nB}^{\vec{x}}\coloneqq K_{\vec{R}_1}^{x_1}\otimes K_{\vec{R}_2}^{x_2}X_{R_2^1}^{x_1}\otimes\dotsb\otimes K_{\vec{R}_n}^{x_n}X_{R_n^1}^{x_{n-1}}\otimes X_B^{x_n}
	\end{equation}
	for all $\vec{x}\in\{0,1\}^n$. If the input state to this channel is
	\begin{equation}
		\rho_{A\vec{R}_1\dotsb\vec{R}_nB}=\Phi_{AR_1^1}\otimes \Phi_{R_1^2R_2^1}\otimes\dotsb\otimes\Phi_{R_{n-1}^2R_n^1}\otimes\Phi_{R_n^2B},
	\end{equation}
	then the output is a $(n+2)$-party GHZ state given by the state vector $\ket{\GHZ_{n+2}}_{AR_1^1\dotsb R_n^1B}$ as defined in \eqref{eq-GHZ_state}, i.e.,
	\begin{multline}
		\mathcal{L}_{A\vec{R}_1\dotsb\vec{R}_nB\to AR_1^1\dotsb R_n^1B}^{\GHZ;n}\left(\Phi_{AR_1^1}\otimes \Phi_{R_1^2R_2^1}\otimes\dotsb\right.\\\left.\otimes\Phi_{R_{n-1}^2R_n^1}\otimes\Phi_{R_n^2B}\right)=\ketbra{\GHZ_{n+2}}{\GHZ_{n+2}}.
	\end{multline}
	
	\begin{proposition}[Fidelity after GHZ entanglement swapping]\label{prop-GHZ_ent_swap_post_fid}
		For all $n\geq 1$, and for all states $\rho_{AR_1^1}^1,\rho_{R_1^2R_2^1}^2,\dotsc,\rho_{R_n^2B}^{n+1}$, the fidelity of the $(n+2)$-party GHZ state with the state after the GHZ entanglement swapping of $\rho_{AR_1^1}^1,\rho_{R_1^2R_2^1}^2,\dotsc,\rho_{R_n^2B}^{n+1}$ is
		\begin{multline}
			\bra{\GHZ_{n+2}}\mathcal{L}_{A\vec{R}_1\dotsb\vec{R}_nB\to AR_1^1\dotsb R_n^1B}^{\GHZ;n}\left(\rho_{AR_1^1}^1\otimes\right.\\\left.\rho_{R_1^2R_2^1}^2\otimes\dotsb\otimes\rho_{R_n^2B}^{n+1}\right)\ket{\GHZ_{n+2}}\\=\sum_{z_1,\dotsc,z_n=0}^1\bra{\Phi^{z_1+\dotsb+z_n,0}}\rho_{AR_1^1}\ket{\Phi^{z_1+\dotsb+z_n,0}}\\\bra{\Phi^{z_1,0}}\rho_{R_1^2R_2^1}^2\ket{\Phi^{z_1,0}}\dotsb\bra{\Phi^{z_n,0}}\rho_{R_n^2B}^{n+1}\ket{\Phi^{z_n,0}}.
		\end{multline}
	\end{proposition}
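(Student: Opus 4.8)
The plan is to compute the fidelity $\bra{\GHZ_{n+2}} \mathcal{L}^{\GHZ;n}(\rho^1 \otimes \cdots \otimes \rho^{n+1}) \ket{\GHZ_{n+2}}$ directly from the definition of the channel in \eqref{eq-GHZ_ent_swap_channel}, in close analogy with the proof of Proposition~\ref{prop-ent_swap_post_fid}. First I would insert the Kraus representation, so that the fidelity becomes $\sum_{\vec{x}\in\{0,1\}^n}\bra{\GHZ_{n+2}} P^{\vec{x}} (\rho^1\otimes\cdots\otimes\rho^{n+1}) P^{\vec{x}\,\dagger}\ket{\GHZ_{n+2}}$. The key move is to push the operators $P^{\vec{x}\,\dagger}$ onto the GHZ vector, i.e.\ to evaluate $P^{\vec{x}\,\dagger}\ket{\GHZ_{n+2}}_{AR_1^1\dotsb R_n^1 B}$ explicitly. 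Since each $K_{\vec{R}_j}^{x_j}=\bra{x_j}_{R_j^2}\text{CNOT}_{\vec{R}_j}$ is a contraction that reintroduces the measured-out systems $R_j^2$, the adjoint $K_{\vec{R}_j}^{x_j\,\dagger}=\text{CNOT}_{\vec{R}_j}\ket{x_j}_{R_j^2}$ maps a single-qubit input on $R_j^1$ to a two-qubit state on $\vec{R}_j$, so applying $P^{\vec{x}\,\dagger}$ to the GHZ vector produces a vector on the full $A\vec{R}_1\dotsb\vec{R}_nB$ space that should factor into a tensor product of (unnormalized) Bell state vectors.

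The heart of the calculation is to show that this factorization holds, i.e.\ that
\begin{equation}
P^{\vec{x}\,\dagger}\ket{\GHZ_{n+2}}_{AR_1^1\dotsb R_n^1 B}\propto \ket{\Phi^{z_1+\dotsb+z_n,0}}_{AR_1^1}\otimes\ket{\Phi^{z_1,0}}_{R_1^2R_2^1}\otimes\dotsb\otimes\ket{\Phi^{z_n,0}}_{R_n^2B}
\end{equation}
for an appropriate identification $z_j \leftrightarrow x_j$ coming from the $X^{x}$ correction operators. Concretely, I would write $\ket{\GHZ_{n+2}}=\frac{1}{\sqrt2}(\ket{0\cdots0}+\ket{1\cdots1})$ and track how each CNOT-plus-ancilla factor $\text{CNOT}_{\vec{R}_j}\ket{x_j}_{R_j^2}$, together with the propagated correction $X^{x_{j-1}}_{R_j^1}$ and the final $X_B^{x_n}$, redistributes the $\ket{0\cdots0}$ and $\ket{1\cdots1}$ branches across the links. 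The $Z$-type phase in the Bell vectors $\ket{\Phi^{z,0}}=(Z^z\otimes\mathbbm{1})\ket{\Phi}$ will emerge from summing the two branches with the relative signs induced by the CNOT structure, which is exactly why only phase indices $z_j$ (and $x=0$) appear on the right-hand side. Once the product form is established, each overlap $\bra{\Phi}\cdot P^{\vec{x}\,\dagger}(\cdots)P^{\vec{x}}\cdot\ket{\Phi}$ separates, by the tensor-product structure of the input state, into the product of single-link fidelities $\bra{\Phi^{z_j,0}}\rho^{j+1}\ket{\Phi^{z_j,0}}$, and the sum over $\vec{x}\in\{0,1\}^n$ becomes the sum over $z_1,\dotsc,z_n$ in the claimed expression.

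The main obstacle I anticipate is bookkeeping the correction operators correctly: the $P^{\vec{x}}$ string is not a simple product of independent single-node terms but carries propagated $X$-corrections $X^{x_{j-1}}_{R_j^1}$ linking consecutive nodes, and getting the combined index on the $A R_1^1$ link to come out as $z_1+\dotsb+z_n$ (with the other links carrying individual $z_j$) requires carefully commuting these corrections through the CNOTs and the GHZ branches. To keep this manageable I would prove the factorization by induction on $n$: the base case $n=1$ is a direct computation on \eqref{eq-GHZ_ent_swap_channel_1}, and the inductive step peels off the last node $\vec{R}_n$, using that $\text{CNOT}_{\vec{R}_n}\ket{x_n}$ acting on the $\ket{0\cdots0}+\ket{1\cdots1}$ structure splits off a Bell factor $\ket{\Phi^{z_n,0}}_{R_n^2 B}$ while leaving a GHZ state of one fewer party on the remaining systems, so the induction hypothesis applies. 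An alternative, possibly cleaner, route is to verify the factorization identity for the Kraus operators once and for all in the Heisenberg picture by checking it against the computational-basis matrix elements of $\ketbra{\GHZ_{n+2}}{\GHZ_{n+2}}$, but I expect the inductive argument to be the most transparent and least error-prone.
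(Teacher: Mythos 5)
Your overall strategy---inserting the Kraus decomposition and pushing the operators $P^{\vec{x}\,\dagger}$ onto the GHZ vector, then expanding in the Bell basis---is the same as the paper's, but the central claim on which your argument rests is false: $P^{\vec{x}\,\dagger}\ket{\GHZ_{n+2}}$ does \emph{not} factor into a single tensor product of Bell vectors for each fixed outcome $\vec{x}$. Already for $n=1$, a direct computation (apply $X_B^{x_1}$, insert $\ket{x_1}_{R_1^2}$, apply $\text{CNOT}_{\vec{R}_1}$ to each GHZ branch) gives
\begin{equation}
	P^{x_1\,\dagger}\ket{\GHZ_3}_{AR_1^1B}=\frac{1}{\sqrt{2}}\left(\ket{0,0}_{AR_1^1}\otimes\ket{x_1,x_1}_{R_1^2B}+\ket{1,1}_{AR_1^1}\otimes\ket{x_1+1,x_1+1}_{R_1^2B}\right)=\frac{1}{\sqrt{2}}\sum_{z=0}^{1}(-1)^{x_1z}\,\ket{\Phi^{z,0}}_{AR_1^1}\otimes\ket{\Phi^{z,0}}_{R_1^2B},
\end{equation}
which is a GHZ-like \emph{superposition} of Bell-pair products, maximally entangled across the two links---not a product. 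So the base case of your proposed induction already fails, and the one-to-one correspondence you posit between measurement outcomes $\vec{x}$ and Bell indices $\vec{z}$ (``the sum over $\vec{x}$ becomes the sum over $z_1,\dotsc,z_n$'') does not exist.

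The missing idea is the mechanism that removes the off-diagonal (in $\vec{z}$) contributions. For a fixed $\vec{x}$, the fidelity contains cross terms $\bra{\Phi^{z_j,0}}\rho^{j+1}\ket{\Phi^{z_j',0}}$ with $z_j\neq z_j'$, weighted by $\vec{x}$-dependent phases; since the input states are arbitrary (not assumed Bell-diagonal), these do not vanish term by term. They cancel only after summing over all measurement outcomes, via $\sum_{\vec{x}\in\{0,1\}^n}(-1)^{\vec{x}^{\t}(\vec{z}+\vec{z}')}=2^n\delta_{\vec{z},\vec{z}'}$, which is exactly the identity \eqref{eq-bit_string_sum_spec} the paper invokes at the end of its proof. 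Your argument can be repaired by replacing the factorization claim with the correct statement that $P^{\vec{x}\,\dagger}\ket{\GHZ_{n+2}}$ is a phase-weighted superposition over $\vec{z}$ of Bell-pair products, and then performing the sum over $\vec{x}$ to diagonalize---but at that point it coincides with the paper's proof rather than offering an alternative route.
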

	
	\begin{proof}
		See Appendix~\ref{app-GHZ_ent_swap_post_fid_pf}.
	\end{proof}

\subsubsection{Graph state distribution protocol}\label{ex-graph_state_dist}

	We now consider an example of distributing an arbitrary graph state, which can be viewed as a special case of the procedure considered in Ref.~[\onlinecite{MMG19}], and it has been shown explicitly in Ref.~[\onlinecite[Sec.~III.B]{CC12}]. A graph state~\cite{BR01,RB01,Briegel09} is a multi-qubit quantum state defined using graphs. 
		
	Consider a graph $G=(V,E)$, which consists of a set $V$ of vertices and a set $E$ of edges. For the purposes of this example, $G$ is an undirected graph, and $E$ is a set of two-element subsets of $V$. The graph state $\ket{G}$ is an $n$-qubit quantum state $\ket{G}_{A_1\dotsb A_n}$, with $n=|V|$, that is defined as
	\begin{equation}
		 \ket{G}_{A_1\dotsb A_n}\coloneqq\frac{1}{\sqrt{2^n}}\sum_{\vec{\alpha}\in\{0,1\}^n}(-1)^{\frac{1}{2}\vec{\alpha}^{\t}A(G)\vec{\alpha}}\ket{\vec{\alpha}},
	\end{equation}
	where $A(G)$ is the adjacency matrix of $G$, which is defined as
	\begin{equation}
		A(G)_{i,j}=\left\{\begin{array}{l l} 1 & \text{if } \{v_i,v_j\}\in E,\\ 0 & \text{otherwise}, \end{array}\right.
	\end{equation}
	and $\vec{\alpha}$ is the column vector $(\alpha_1,\dotsc,\alpha_n)^{\t}$. It is easy to show that
	\begin{equation}
		\ket{G}_{A_1\dotsb A_n}=\text{CZ}(G)(\ket{+}_{A_1}\otimes\dotsb\otimes\ket{+}_{A_n}),
	\end{equation}
	where $\ket{+}\coloneqq\frac{1}{\sqrt{2}}(\ket{0}+\ket{1})$ and
	\begin{equation}
		\text{CZ}(G)\coloneqq\bigotimes_{\{v_i,v_j\}\in E}\text{CZ}_{A_iA_j},
	\end{equation}
	with $\text{CZ}_{A_iA_j}\coloneqq\ketbra{0}{0}_{A_i}\otimes\mathbbm{1}_{A_j}+\ketbra{1}{1}_{A_i}\otimes Z_{A_j}$ being the controlled-$Z$ gate.
	
	\begin{figure}
		\centering
		\includegraphics[scale=1]{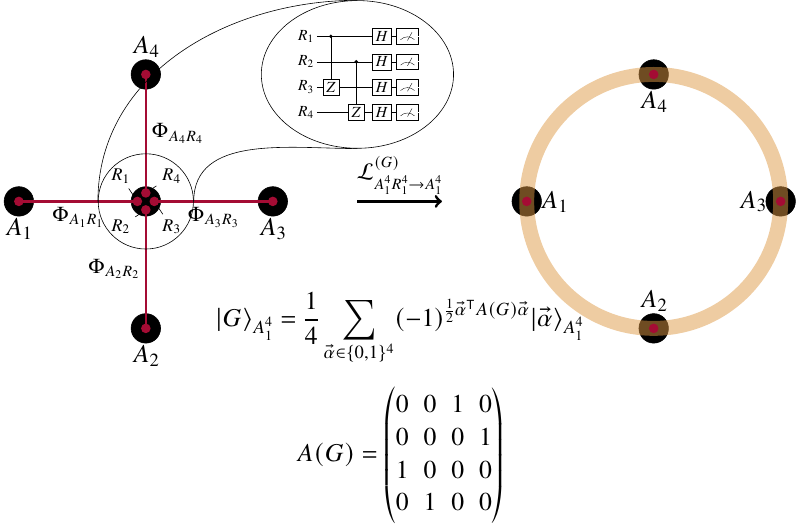}
		\caption{Depiction of a protocol for distributing a graph state among four nodes $A_1,A_2,A_3,A_4$, all of which initially share Bell states with the central node.}\label{fig-graph_state_dist}
	\end{figure}
	
	Now, consider the scenario depicted in Fig.~\ref{fig-graph_state_dist}, in which $n=4$ nodes share Bell states with a central node. The task is for the central node to distribute the graph state $\ket{G}$ to the outer nodes. One possible procedure is for the central node to locally prepare the graph state and then to teleport the individual qubits using the Bell states. However, it is possible to perform a slightly simpler procedure that does not require the additional qubits needed to prepare the graph state locally. In fact, the following deterministic procedure produces the required graph state $\ket{G}$ shared by the nodes $A_1,\dotsc,A_n$.
	\begin{enumerate}
		\item The central node applies $\text{CZ}(G)$ to the qubits $R_1,\dotsc,R_n$.
		\item On each of the qubits $R_1,\dotsc,R_n$, the central node performs the measurement defined by the POVM $\{\ketbra{+}{+},\ketbra{-}{-}\}$, where $\ket{\pm}=\frac{1}{\sqrt{2}}(\ket{0}\pm\ket{1})$. The outcome is an $n$-bit string $\vec{x}=(x_1,\dotsc,x_n)$, where $x_i=0$ corresponds to the ``$+$'' outcome and $x_i=1$ corresponds to the ``$-$'' outcome. The central node communicates outcome $x_i$ to the node $A_i$.
		\item The nodes $A_i$ apply $Z^{x_i}$ to their qubit. In other words, if $x_i=0$, then $A_i$ does nothing, and if $x_i=1$, then $A_i$ applies $Z$ to their qubit.
	\end{enumerate}
	Let us prove that this protocol achieves the desired outcome. First, observe that
	\begin{multline}
		\ket{\Phi}_{A_1R_1}\otimes\dotsb\otimes\ket{\Phi}_{A_nR_n}\\=\frac{1}{\sqrt{2^n}}\sum_{\vec{\alpha}\in\{0,1\}^n}\ket{\vec{\alpha}}_{A_1\dotsb A_n}\ket{\vec{\alpha}}_{R_1\dotsb R_n}.
	\end{multline}
	Then, after the first step, the state is
	\begin{multline}
		\frac{1}{\sqrt{2^n}}\sum_{\vec{\alpha}\in\{0,1\}^n}\ket{\vec{\alpha}}_{A_1\dotsb A_n}\text{CZ}(G)\ket{\vec{\alpha}}_{R_1\dotsb R_n}\\=\frac{1}{\sqrt{2^n}}\sum_{\vec{\alpha}\in\{0,1\}^n}(-1)^{\frac{1}{2}\vec{\alpha}^{\t}A(G)\vec{\alpha}}\ket{\vec{\alpha}}_{A_1\dotsb A_n}\ket{\vec{\alpha}}_{R_1\dotsb R_n},
	\end{multline}
	where we have used the fact that
	\begin{align}
		\text{CZ}(G)\ket{\vec{\alpha}}&=(-1)^{\sum_{i,j:\{v_i,v_j\}\in E}\alpha_i\alpha_j}\ket{\vec{\alpha}}\\
		&=(-1)^{\frac{1}{2}\vec{\alpha}^{\t}A(G)\vec{\alpha}}\ket{\vec{\alpha}}.
	\end{align}
	Then, we find that for every outcome string $(x_1,\dotsc,x_n)$ of the measurement on the qubits $R_1,\dotsc,R_n$ the corresponding (unnormalized) post-measurement state is
	\begin{equation}
		\frac{1}{2^n}\sum_{\vec{\alpha}\in\{0,1\}^n}(-1)^{\frac{1}{2}\vec{\alpha}^{\t}A(G)\vec{\alpha}}(-1)^{\alpha_1x_1+\dotsb+\alpha_nx_n}\ket{\vec{\alpha}}_{A_1\dotsb A_n}.
	\end{equation}
	Then, using the fact that $Z^x\ket{\alpha}=(-1)^{\alpha x}\ket{\alpha}$ for all $x,\alpha\in\{0,1\}$, we find that at the end of the second step the (unnormalized) state is
	\begin{multline}
		\frac{1}{2^n}(Z_{A_1}^{x_1}\otimes\dotsb\otimes Z_{A_n}^{x_n})\sum_{\vec{\alpha}\in\{0,1\}^n}(-1)^{\frac{1}{2}\vec{\alpha}^{\t}A(G)\vec{\alpha}}\ket{\vec{\alpha}}_{A_1\dotsb A_n}\\=\frac{1}{\sqrt{2^n}}(Z_{A_1}^{x_1}\otimes\dotsb\otimes Z_{A_n}^{x_n})\ket{G}_{A_1\dotsb A_n}
	\end{multline}
	for all $(x_1,\dotsc,x_n)\in\{0,1\}^n$. From this, we see that, up to local Pauli-$z$ corrections, the post-measurement state is equal to the desired graph state $\ket{G}$ with probability $\frac{1}{2^n}$ for every measurement outcome string $(x_1,\dotsc,x_n)$. Once all of the nodes $A_i$ receive their corresponding outcome $x_i$ and apply the correction $Z_{A_i}^{x_i}$, the nodes $A_1,\dotsc,A_n$ share the graph state $\ket{G}$. As a result of the classical communication of the measurement outcomes and the subsequent correction operations, the protocol is deterministic.
	
	The protocol described above has the following representation as an LOCC channel:
	\begin{multline}
		\mathcal{L}_{A_1^nR_1^n\to A_1^n}^{(G)}\left(\rho_{A_1^nR_1^n}\right)\\\coloneqq\sum_{\vec{x}\in\{0,1\}^n}\left(Z_{A_1^n}^{\vec{x}}\otimes\bra{\vec{x}}_{R_1^n}H^{\otimes n}\text{CZ}(G)_{R_1^n}\right)\left(\rho_{A_1^nR_1^n}\right)\\\left(Z_{A_1^n}^{\vec{x}}\otimes \text{CZ}(G)_{R_1^n}^{\dagger}H^{\otimes n}\ket{\vec{x}}_{R_1^n}\right),
	\end{multline}
	for every state $\rho_{A_1^nR_1^n}$, where $H=\ketbra{+}{0}+\ketbra{-}{1}$ is the Hadamard operator, and we have let
	\begin{equation}
		Z_{A_1\dotsb A_n}^{\vec{x}}\coloneqq Z_{A_1}^{x_1}\otimes\dotsb\otimes Z_{A_n}^{x_n}.
	\end{equation}
	We have also used the abbreviation $A_1^n\equiv A_1A_2\dotsb A_n$, and similarly for $R_1^n$. Using the fact that
	\begin{equation}\label{eq-graph_state_x_0}
		\text{CZ}(G)H^{\otimes n}\ket{\vec{x}}=Z^{\vec{x}}\ket{G}
	\end{equation}
	for all $\vec{x}\in\{0,1\}^n$, and letting
	\begin{equation}\label{eq-graph_state_x}
		\ket{G^{\vec{x}}}\coloneqq Z^{\vec{x}}\ket{G},
	\end{equation}
	we can write the channel in the following simpler form:
	\begin{multline}\label{eq-graph_state_dist_channel}
		\mathcal{L}_{A_1^nR_1^n\to A_1^n}^{(G)}\left(\rho_{A_1^nR_1^n}\right)\\=\sum_{\vec{x}\in\{0,1\}^n} \left(Z_{A_1^n}^{\vec{x}}\otimes\bra{G^{\vec{x}}}_{R_1^n}\right)\left(\rho_{A_1^nR_1^n}\right)\left(Z_{A_1^n}^{\vec{x}}\otimes\ket{G^{\vec{x}}}_{R_1^n}\right).
	\end{multline}
	From this, we see that the protocol can be thought of as measuring the systems $R_1,\dotsc, R_n$ according to the POVM $\left\{\ketbra{G^{\vec{x}}}{G^{\vec{x}}}\right\}_{\vec{x}\in\{0,1\}^n}$ and, conditioned on the outcome $\vec{x}$, applying the correction operation $Z^{\vec{x}}$ to the systems $A_1,\dotsc, A_n$. Note that $\left\{\ketbra{G^{\vec{x}}}{G^{\vec{x}}}\right\}_{\vec{x}\in\{0,1\}^n}$ is indeed a POVM due to the fact that
	\begin{equation}
		\ket{G^{\vec{x}}}=\text{CZ}(G)H^{\otimes n}\ket{\vec{x}}
	\end{equation}
	for all $\vec{x}\in\{0,1\}^n$, which follows from \eqref{eq-graph_state_x_0} and \eqref{eq-graph_state_x}, so that
	\begin{multline}
		\sum_{\vec{x}\in\{0,1\}^n}\ketbra{G^{\vec{x}}}{G^{\vec{x}}}\\=\text{CZ}(G)H^{\otimes n}\underbrace{\sum_{\vec{x}\in\{0,1\}^n}\ketbra{\vec{x}}{\vec{x}}}_{\mathbbm{1}}H^{\otimes n}\text{CZ}(G)^{\dagger}=\mathbbm{1}.
	\end{multline}
	
	\begin{remark}
		The set $\{\ket{G^{\vec{x}}}\}_{\vec{x}\in\{0,1\}^n}$ of state vectors defined in Eq.~\eqref{eq-graph_state_x} has been presented in Ref.~[\onlinecite[Sec.~II.A]{CC12}] and explicitly called the ``graph state basis'', labeled as $\ket{\boldsymbol{\mu}}$, for $\boldsymbol{\mu}\in\{0,1\}^n$.
	\end{remark}
	
	\begin{proposition}[Fidelity after graph state distribution]\label{prop-graph_state_dist_post_fid}
		For all $n\geq 2$, every graph $G$ with $n$ vertices, and all two-qubit states $\rho_{A_1R_1}^1$, $\rho_{A_2R_2}^2,\dotsc,\allowbreak\rho_{A_nR_n}^n$, the fidelity of the graph state $\ket{G}$ with the state after the graph state distribution protocol applied to $\rho_{A_1R_1}^1$, $\rho_{A_2R_2}^2,\dotsc,\allowbreak\rho_{A_nR_n}^n$ is
		\begin{multline}
			\bra{G}\mathcal{L}_{A_1^nR_1^n\to A_1^n}^{(G)}\left(\rho_{A_1R_1}^1\otimes\dotsb\otimes\rho_{A_nR_n}^n\right)\ket{G}\\=\sum_{\vec{x}\in\{0,1\}^n}\bra{\Phi^{z_1,x_1}}\rho_{A_1R_1}^1\ket{\Phi^{z_1,x_1}}\bra{\Phi^{z_2,x_2}}\rho_{A_2R_2}^2\ket{\Phi^{z_2,x_2}}\\\dotsb\bra{\Phi^{z_n,x_n}}\rho_{A_nR_n}^n\ket{\Phi^{z_n,x_n}},
		\end{multline}
		where the column vector $\vec{z}=(z_1,\dotsc,z_n)^{\t}$ is given by $\vec{z}=A(G)\vec{x}$, with $A(G)$ the adjacency matrix of $G$.
	\end{proposition}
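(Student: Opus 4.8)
The plan is to establish the identity by directly expanding the channel \eqref{eq-graph_state_dist_channel} and reorganizing the resulting computational-basis sum according to the pairs $(A_i,R_i)$, then matching it term by term against an analogous expansion of the claimed right-hand side. First I would substitute \eqref{eq-graph_state_dist_channel} into $\bra{G}\mathcal{L}_{A_1^nR_1^n\to A_1^n}^{(G)}(\rho^1_{A_1R_1}\otimes\dotsb\otimes\rho^n_{A_nR_n})\ket{G}$ and use that each $Z^{x_i}$ is Hermitian together with the definition \eqref{eq-graph_state_x}, $\ket{G^{\vec x}}=Z^{\vec x}\ket{G}$, to replace $\bra{G}Z^{\vec x}_{A_1^n}$ by $\bra{G^{\vec x}}_{A_1^n}$. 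This collapses the expression to
\begin{equation}
\sum_{\vec x\in\{0,1\}^n}\left(\bra{G^{\vec x}}_{A_1^n}\otimes\bra{G^{\vec x}}_{R_1^n}\right)\left(\bigotimes_{i=1}^n\rho^i_{A_iR_i}\right)\left(\ket{G^{\vec x}}_{A_1^n}\otimes\ket{G^{\vec x}}_{R_1^n}\right),
\end{equation}
so that the same graph-state-basis vector carries both the measurement outcome on the $R$ systems and the correction on the $A$ systems.

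Next I would insert the explicit expansion $\ket{G^{\vec x}}=\frac{1}{\sqrt{2^n}}\sum_{\vec\alpha}(-1)^{Q(\vec\alpha)+\vec x\cdot\vec\alpha}\ket{\vec\alpha}$, where $Q(\vec\alpha)\coloneqq\frac12\vec\alpha^{\t}A(G)\vec\alpha=\sum_{\{v_i,v_j\}\in E}\alpha_i\alpha_j$ is integer-valued, for each of the four graph-state vectors, and regroup $\ket{\vec\alpha}_{A_1^n}\otimes\ket{\vec\beta}_{R_1^n}=\bigotimes_i\ket{\alpha_i\beta_i}_{A_iR_i}$ so that the matrix element factorizes as $\prod_i\bra{\alpha_i\beta_i}\rho^i\ket{\alpha_i'\beta_i'}$. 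The sum over $\vec x$ is then performed with $\sum_{\vec x\in\{0,1\}^n}(-1)^{\vec x\cdot\vec w}=2^n\delta_{\vec w,\vec 0}$ (reading $\vec w$ modulo $2$), which imposes the per-pair constraint $\alpha_i+\beta_i\equiv\alpha_i'+\beta_i'\pmod 2$ and leaves a single sum over the constrained index set weighted by the phase $(-1)^{Q(\vec\alpha)+Q(\vec\beta)+Q(\vec\alpha')+Q(\vec\beta')}$.

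In parallel I would expand the target expression. Writing $\ket{\Phi^{z,x}}=\frac{1}{\sqrt2}\sum_{a}(-1)^{za}\ket{a}_A\ket{a+x}_R$ (addition modulo $2$) gives $\bra{\Phi^{z_i,x_i}}\rho^i\ket{\Phi^{z_i,x_i}}=\frac12\sum_{\alpha_i,\alpha_i'}(-1)^{z_i(\alpha_i+\alpha_i')}\bra{\alpha_i,\alpha_i+x_i}\rho^i\ket{\alpha_i',\alpha_i'+x_i}$; substituting $\vec z=A(G)\vec x$ collects the phase into $(-1)^{\vec x^{\t}A(G)(\vec\alpha+\vec\alpha')}$, and setting $\beta_i=\alpha_i+x_i$ and $\beta_i'=\alpha_i'+x_i$ recasts the sum over $(\vec x,\vec\alpha,\vec\alpha')$ as a sum over exactly the same constrained set of $(\vec\alpha,\vec\beta,\vec\alpha',\vec\beta')$ with $\vec x=\vec\alpha+\vec\beta$, now weighted by $(-1)^{(\vec\alpha+\vec\beta)^{\t}A(G)(\vec\alpha+\vec\alpha')}$. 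The two sides therefore coincide provided their phases agree.

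The crux of the argument, and the step I expect to demand the most care, is this phase identity: writing $\vec s\coloneqq\vec\alpha+\vec\beta=\vec\alpha'+\vec\beta'$, one must verify
\begin{equation}
Q(\vec\alpha)+Q(\vec\beta)+Q(\vec\alpha')+Q(\vec\beta')\equiv(\vec\alpha+\vec\alpha')^{\t}A(G)\,\vec s\pmod 2.
\end{equation}
This follows from the polarization relation $Q(\vec u+\vec v)=Q(\vec u)+Q(\vec v)+\vec u^{\t}A(G)\vec v$ applied to $Q(\vec\beta)=Q(\vec\alpha+\vec s)$ and $Q(\vec\beta')=Q(\vec\alpha'+\vec s)$: the contributions $2Q(\vec\alpha)$, $2Q(\vec\alpha')$, and $2Q(\vec s)$ drop out modulo $2$, leaving $\vec\alpha^{\t}A(G)\vec s+(\vec\alpha')^{\t}A(G)\vec s$, which equals the right-hand side by symmetry of $A(G)$. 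Once the two constrained sums are seen to share both index set and phase, the proposition follows; the only delicate points are the careful tracking of the modular constraint through the $\vec x$-summation and the bookkeeping of the four quadratic forms, everything else being routine rearrangement.
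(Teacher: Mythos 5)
Your proposal is correct and follows essentially the same route as the paper's proof: both reduce the left-hand side to the sum $\sum_{\vec{x}}\bigl(\bra{G^{\vec{x}}}\otimes\bra{G^{\vec{x}}}\bigr)\rho\bigl(\ket{G^{\vec{x}}}\otimes\ket{G^{\vec{x}}}\bigr)$, expand the graph-state-basis vectors in the computational basis, exploit the polarization identity for the quadratic form $\frac{1}{2}\vec{\alpha}^{\t}A(G)\vec{\alpha}$ together with the symmetry of $A(G)$, and evaluate the sums via $\sum_{\vec{\gamma}}(-1)^{\vec{\gamma}\cdot\vec{w}}=2^n\delta_{\vec{w},\vec{0}}$. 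The only difference is organizational: the paper pushes the left-hand side all the way into the Bell basis using the change-of-basis formula \eqref{eq-comp_to_Bell}, whereas you expand both sides into the same constrained computational-basis sum and match phases, which is an equally valid bookkeeping of the identical ingredients.
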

	
	\begin{proof}
		See Appendix~\ref{app-graph_state_dist_post_fid_pf}.	
	\end{proof}

\section{Analysis of a quantum network protocol}\label{sec-network_protocol}

	In the previous two sections, we described in detail how to model elementary links in a quantum network using Markov decision processes. Then, we showed how to model entanglement distillation protocols and joining protocols (such as entanglement swapping) as LOCC channels. The upshot of these developments is that they give us a method for determining the quantum states of elementary and virtual links in a quantum network that depend explicitly on the underlying device parameters and noise processes that characterize the device, thereby allowing us to perform a more realistic analysis of entanglement distribution protocols, as we now show in this section.
	
	In this section, we analyze a simple entanglement distribution protocol. Recall from Sec.~\ref{sec-introduction} that entanglement distribution refers to the task of creating virtual links---entanglement between non-adjacent nodes---from elementary links, which are entangled states shared by adjacent (physically connected) nodes. An entanglement distribution protocol can be thought of as a graph transformation, as done in Refs.~[\onlinecite{SMI+17,CRDW19}] and depicted in Fig.~\ref{fig-network_physical}. Starting with the graph $G=(V,E)$ of physical links in the network, the goal is to realize a new graph $G_{\text{target}}=(V,E_{\text{target}})$ consisting of virtual links in addition to elementary links, such as the graph in the right-most panel of Fig.~\ref{fig-network_physical}.
	
	The protocol that we consider consists of two steps: generate elementary links, and then perform joining protocols based on the given target graph. The protocol is described more formally in Fig.~\ref{fig-QDP_protocol}. Starting with the graph $G=(V,E)$ of elementary links, all of the elementary links independently undergo policies $\pi_{e}$, with $e\in E$. After $t\geq 1$ time steps, an algorithm finds paths for creating the virtual links specified by the target graph $G_{\text{target}}$ and the corresponding joining protocols are performed. If entire target network cannot be achieved in $t$ time steps, then a decision is made to either conclude the protocol with the current configuration or to continue for another $t$ time steps under the same policies.
	
	\begin{figure}
		\centering
		\includegraphics[scale=1]{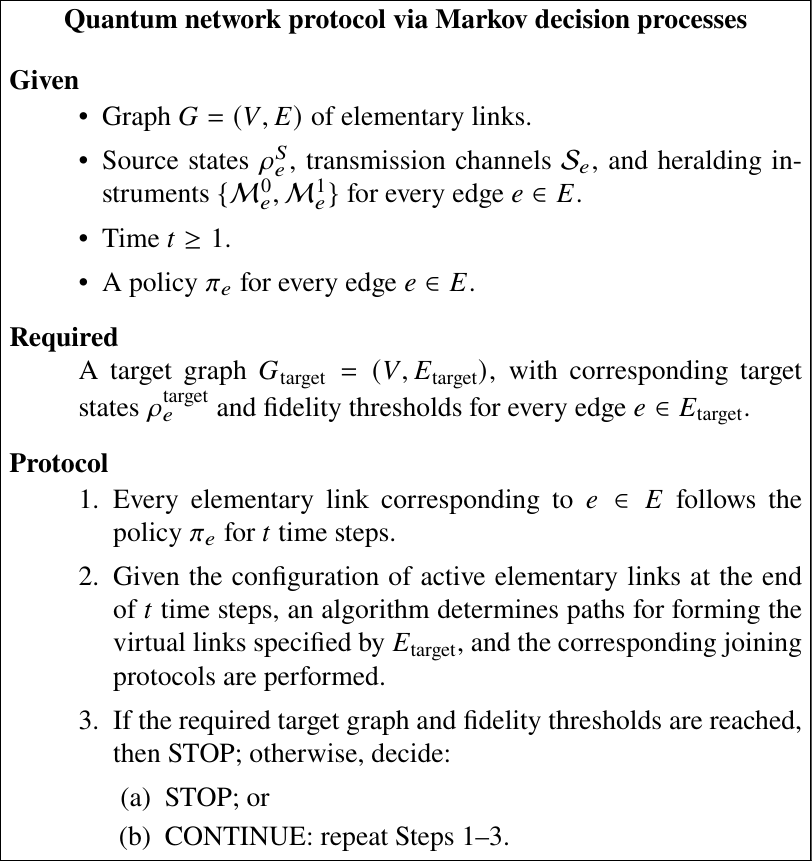}
		\caption{Outline of a quantum network protocol based on Markov decision processes. Every elementary link in the network follows a policy for $t\geq 1$ time steps. At the end of the $t$ time steps, the appropriate paths in the network are found and the corresponding joining protocols are performed in order to achieve the network corresponding to the target graph $G_{\text{target}}$.}\label{fig-QDP_protocol}
	\end{figure}
	
	\begin{remark}\label{rem-QDP_network_extensions}
		Note that in the protocol described in Fig.~\ref{fig-QDP_protocol}, the virtual links are created only when all of the required elementary links are active. This is of course not the most general procedure, because it is in general possible to join some of the elementary links along a path while waiting for the others to become active. To handle such general procedures requires developing MDPs for systems of multiple elementary links. While this is the subject of ongoing future work, we provide an example of how to extend the elementary-link MDP framework of Sec.~\ref{sec-MDP_elem_links} to a system of two elementary links, in which entanglement swapping is included, in Sec.~\ref{sec-MDP_two_elem_links}. We also note that the protocol in Fig.~\ref{fig-QDP_protocol} uses fixed routing and path-finding algorithms from Refs.~[\onlinecite{SMI+17,PKT+19,CRDW19}]. It is possible, in principle, to develop an MDP that takes into account routing. Doing so would allow us to obtain protocols that simultaneously optimize the actions of the elementary links, the joining operations, and the actions corresponding to routing, either directly using dynamic programming algorithms such as the one in Theorem~\ref{thm-opt_policy}, or through reinforcement learning. These possibilities, and other possibilities for developing more sophisticated protocols using MDPs, are interesting directions for future work.
	\end{remark}

\subsection{Fidelity}\label{sec-protocol_fidelity}

	In order to quantify the performance of the protocol described in Fig.~\ref{fig-QDP_protocol}, it is natural to ask what the fidelity of the resulting states of the elementary and virtual links are to prescribed target states. Thus, let us begin by showing, in general terms, how we could calculate the fidelity after $t$ time steps of our protocol.
	
	First, we note that all of the elementary links are independent of each other. This is due to the fact that we assume that every node has a separate quantum system for every one of the elementary links associated to that node. Furthermore, we assume that every elementary link undergoes its own policy independent of the other elementary links. Therefore, after $t$ time steps the quantum state of the network is
	\begin{equation}\label{eq-network_cq_state_QDP}
		\rho_G^{\vec{\pi}}(t)=\bigotimes_{e\in E}\rho_{e}^{\pi_{e}}(t),
	\end{equation}
	where $\vec{\pi}=\{\pi_e:e\in E\}$ is a collection of policies for the individual elementary links, and every state $\rho_{e}^{\pi_{e}}(t)$ is given by \eqref{eq-elem_link_initial_cq_state_2}, namely,
	\begin{equation}\label{eq-elem_link_cq_state}
		\rho_e^{\pi_e}(t)=\sum_{h^t}\Pr[H_e(t)=h^t]_{\pi_e}\,\ketbra{h^t}{h^t}\otimes\sigma_e(t|h^t).
	\end{equation}
	Recall from \eqref{eq-MDP_elem_link_history_prob} that $\Pr[H_e(t)=h^t]_{\pi_e}$ is the probability of the history $h^t$ with respect to the policy $\pi_e$, and $\sigma_e(t|h^t)$ is the quantum state of the elementary link conditioned on the history $h^t$, given by \eqref{eq-elem_link_cond_state}.
	
	The state in \eqref{eq-elem_link_cq_state} is a classical-quantum state that contains both classical information about the history of elementary link as well as the quantum state of the elementary link conditioned on every history. If we condition on an elementary link corresponding to $e\in E$ being active at time $t$, then the expected quantum state of the elementary link at time $t$ is~\cite{Kha21b}
	\begin{equation}
		\overline{\rho}_e^{\pi_e}(t)\coloneqq\frac{1}{X_e^{\pi_e}(t)}\sum_{h^t:m_t\neq -1}\Pr[H_e(t)=h^t]_{\pi_e}\,\sigma_e(t|h^t).
	\end{equation}
	From these states, we can calculate the quantum states of the virtual links in the target graph that are created via joining protocols. In general, the states are of the form \eqref{eq-network_QDP_virtual_link_fid_cond}. As a concrete example, let us consider the usual entanglement swapping protocol from Sec.~\ref{ex-ent_swap}. Let $w=(v_1,e_1,v_2,e_2,\dotsc,e_{n},v_{n+1})$ be a path between two non-neigbouring nodes $v_1$ and $v_{n+1}$, such that the entanglement swapping protocol along this path creates the virtual link given by the edge $\{v_1,v_{n+1}\}$. The quantum state at the input of the entanglement swapping protocol is $\bigotimes_{j=1}^{n}\overline{\rho}_{e_j}^{\pi_{e_j}}(t)$, and the output state, conditioned on success of the protocol is $\mathcal{L}^{\text{ES};n}(\bigotimes_{j=1}^{n}\overline{\rho}_{e_j}^{\pi_{e_j}}(t))$, where we recall the definition of $\mathcal{L}^{\text{ES};n}$ in \eqref{eq-ent_swap_channel}.
	
	After the appropriate joining protocols are performed, and conditioned on their success, we obtain the target graph $G_{\text{target}}=(V,E_{\text{target}})$, and the corresponding quantum state has the form $\bigotimes_{e\in E_{\text{target}}} \omega_e$, where if $e$ is a virtual link, obtained via a joining protocol, then $\omega_e$ is given by \eqref{eq-network_QDP_virtual_link_fid_cond}. Now, the target quantum state is simply a tensor product of the target states corresponding to the edges of the target graph, i.e., $\bigotimes_{e\in E_{\text{target}}}\omega_e^{\text{target}}$. Therefore, by multiplicativity of fidelity with respect to the tensor product, the fidelity of the quantum state after the protocol is equal to $\prod_{e\in E_{\text{target}}}F(\omega_e,\omega_e^{\text{target}})$. For the virtual links, individual fidelities in this product can be calculated using the formulas presented in Sec.~\ref{sec-joining_protocols}.
	

\subsection{Waiting time}

	In addition to the fidelity, another relevant figure of merit is the \textit{expected waiting time}, which is a figure of merit that indicates how long it takes (on average) to establish an elementary or virtual link. This figure of merit has been considered in prior work in the context of both a linear chain of quantum repeaters and general quantum networks~\cite{CJKK07,BPv11,SSv19,VK19,BCE19,KMSD19,CBE21}. 
	
	When defining the waiting times, we imagine a scenario in which elementary link generation is continuously occurring in the network~\cite{CRDW19} and that an end-user request for entanglement occurs at a time $t_{\text{req}}\geq 0$. The waiting time is then the number of time steps from time $t_{\text{req}}$ onward that it takes to establish the entanglement.
	
	\begin{definition}[Elementary link waiting time]\label{def-network_QDP_elem_link_waiting_time}
		Let $G=(V,E)$ be the graph corresponding to the elementary links of a quantum network and let $e\in E$. For all $t_{\text{req}}\geq 0$, the waiting time for the elementary link corresponding to the edge $e$ is defined to be
		\begin{equation}
			W_{e}(t_{\text{req}})\coloneqq\sum_{t=t_{\text{req}}+1}^{\infty} t X_{e}(t)\prod_{i=t_{\text{req}}+1}^{t-1}(1-X_{e}(i)).
		\end{equation}
		Then, the expected waiting time is
		\begin{multline}\label{eq-waiting_time_prob_late_request}
			\mathbb{E}[W_{e}(t_{\text{req}})]_{\pi}=\sum_{t=t_{\text{req}}+1}^{\infty} t\Pr[X_{e}(t_{\text{req}}+1)=0,\\\dotsc,X_{e}(t_{\text{req}}+t)=1]_{\pi},
		\end{multline}
		where $\pi$ is an arbitrary policy for the elementary link corresponding to the edge $e$.
	\end{definition}
	
	We make the following definition for the waiting time for a collection of elementary links.
	
	\begin{definition}[Collective elementary link waiting time]\label{def-collective_elem_link_waiting_time}
		Let $G=(V,E)$ be the graph corresponding to the elementary links of a quantum network, and let $t_{\text{req}}\geq 0$. For every subset $E'\subseteq E$, the waiting time for the elementary links corresponding to the elements of $E'$ is defined to be
		\begin{equation}
			W_{E'}(t_{\text{req}})\coloneqq\sum_{t=t_{\text{req}}+1}^{\infty} tX_{E'}(t)\prod_{i=t_{\text{req}}+1}^{t-1}(1-X_{E'}(i))
		\end{equation}
		where $X_{E'}(t)\coloneqq\prod_{e\in E'}X_e(t)$.
	\end{definition}

	In other words, the collective elementary link waiting time is the time it takes for all of the elementary links given by $E'$ to be simultaneously active, and its expected value is
	\begin{multline}\label{eq-exp_waiting_time_general}
		\mathbb{E}[W_{E'}(t_{\text{req}})]_{\pi}=\sum_{t=t_{\text{req}}+1}^{\infty} t\Pr[X_{E'}(t_{\text{req}}+1)=0,\\\dotsc,X_{E'}(t_{\text{req}}+t)=1]_{\vec{\pi}},
	\end{multline}
	where $\vec{\pi}=\left(\pi_{e}:e\in E'\right)$ is an arbitrary collection of policies for the elementary links corresponding to $E'$. If we consider a collection of elementary links, all undergoing the $t^{\star}=\infty$ memory-cutoff policy, then
	\begin{multline}\label{eq-exp_waiting_time_tInfty}
		\mathbb{E}[W_{E'}(t_{\text{req}})]_{\infty}=\sum_{k=1}^M \binom{M}{k}(-1)^{k+1}\left(1+\frac{(1-p_k)^{t_{\text{req}}+1}}{p_k}\right),\\p_k\coloneqq 1-(1-p)^k.
	\end{multline}
	Proofs of this result using various different techniques can be found in Refs.~[\onlinecite{BPv11,Prax13,KMSD19}]. In Appendix~\ref{sec-exp_waiting_time_tInfty_pf}, we prove this result within the framework introduced here by explicitly evaluating the formula in \eqref{eq-exp_waiting_time_general}.

	\begin{definition}[Virtual link waiting time]\label{def-network_QDP_virtual_waiting_time}
		Let $G=(V,E)$ be the graph corresponding to the elementary links of a quantum network, and let $t_{\text{req}}\geq 0$. Given a pair $v_1,v_n\in V$ of distinct non-adjacent vertices and a path $w=(v_1,e_1,v_2,e_2,\dotsc,e_{n-1},v_n)$ between them for some $n\geq 2$, the \textit{virtual link waiting time} along this path is defined to be the amount of time it takes to establish the virtual link given by the edge $\{v_1,v_n\}$: 
		\begin{equation}
			W_{\{v_1,v_n\};w}(t_{\text{req}})\coloneqq W_{E_w}(t_{\text{req}})\sum_{t=t_{\text{req}}+1}^{\infty}t Y_{w}(1-Y_{w})^{t-1},
		\end{equation}
		where $E_w=\{e_1,e_2,\dotsc,e_{n-1}\}$ is the set of edges corresponding to the path $w$, $W_{E_w}(t_{\text{req}})$ is the collective elementary link waiting time from Definition~\ref{def-collective_elem_link_waiting_time}, and $Y_{E_w}$ is a binary random variable for the success of the joining protocol along the path $w$, so that $Y_w=1$ corresponds to success of the joining protocol and $Y_w=0$ to failure. We define $Y_w$ and $W_{E_w}$ to be independent random variables.
	\end{definition}
	
	The formula for the virtual link waiting time in Definition~\ref{def-network_QDP_virtual_waiting_time} is based on the formula in Ref.~[\onlinecite{CJKK07}]. It corresponds to the simple strategy of waiting for all of the elementary links along the path $w$ to be established and then performing the measurements for the joining protocol. Note that this strategy is consistent with our overall quantum network protocol in Fig.~\ref{fig-QDP_protocol}.

\subsection{Key rates for quantum key distribution}
	
	In order to determine secret key rates between arbitrary pairs of nodes in a quantum network, we need to keep track of the quantum state of the relevant elementary links as a function of time. The following discussion and formulas for secret key rates are based on Ref.~[\onlinecite{GKF+15}].
	
	Suppose that $K$ is a function that gives the number of secret key bits per entangled state shared by the nodes of either an elementary link or virtual link. ($K$ is, for example, the formula for the asymptotic secret key rate of the BB84, six-state, or device-independent protocol.) Then, suppose that $G=(V,E)$ is the graph corresponding to the elementary links of a quantum network.
	Consider a collection $e'\coloneqq\{v_1,\dotsc,v_k\}\notin E$ of distinct nodes corresponding to a virtual link for some $k\geq 2$, and let $w$ be a path in the physical graph leading to the virtual link given by $e'$. An entanglment swapping protocol is performed along the path $w$ in order to establish the bipartite virtual link. Conditioned on success of the joining protocol, the quantum state of the virtual link is given by \eqref{eq-network_QDP_virtual_link_fid_cond}, namely,
	\begin{equation}\label{eq-network_QDP_virtual_link_output_cond}
		\frac{1}{p_{\text{succ}}}\mathcal{L}_{w\to e'}^1(\rho_w),
	\end{equation}
	where
	\begin{equation}
		p_{\text{succ}}=\Tr\!\left[\mathcal{L}_{w\to e'}^1(\rho_w)\right]
	\end{equation}
	is the success probability of the joining protocol. Then, the secret key rate (in units of secret key bits per second) for the virtual link along the path $w$ is
	\begin{equation}
		\widetilde{K}_{e';w}=p_{\text{succ}}\nu_{e'}^{\text{rep}}K.
	\end{equation}
	Here, $K$ is calculated using the state in \eqref{eq-network_QDP_virtual_link_output_cond}. The repetition rate $\nu_{e'}^{\text{rep}}$ in this case is a function of the end-to-end classical communication time required for executing the joining protocol.

\section{A Markov decision process beyond the elementary link level}\label{sec-MDP_two_elem_links}

	The developments so far in this work constitute an analysis of quantum networks using a Markov decision process (MDP) for elementary links. As we have seen, the framework of MDPs is useful because it allows us to model noise processes and imperfections that are present in near-term quantum technologies, and thus allows us to understand the limits on the performance of near-term quantum networks. An important question is how useful the MDP formalism will be in practice when scaling up to model systems of more than one elementary link. In this section, we provide an MDP for a system of two elementary links, taking entanglement swapping into account. We note that in recent work~\cite{SvL21} an MDPs for repeater chains with two, three and four elementary links have been considered, but the definition of the MDP here differs from from the one in Ref.~[\onlinecite{SvL21}], because here we take decoherence of the quantum memories into account.
	
	We start this section by defining the basic elements of the MDP, and then we show how to obtain optimal policies using linear programming. In particular, we formulate the optimal expected waiting time to obtain the end-to-end virual link and the optimal expected fidelity of the end-to-end virtual link as linear programs. Then, we show that prior analytical results on the expected waiting time for two elementary links under the memory-cutoff policy~\cite{CJKK07}, known only in the ``symmetric'' scenario when the two elementary links have the same transmission-heralding success probability and the same memory cutoff, can be reproduced. However, we note that our linear programming procedure can be applied even in non-symmetric scenarios.
	
\subsection{An MDP for two elementary links}\label{sec-MDP_two_elem_links_def}
	
	Let $p_1$ and $p_2$ be the success probabilities for generating the two elementary links, and let $q$ be the probability of successful entanglement swapping. Note that $p_1$ and $p_2$ are defined exactly as in Sec.~\ref{sec-practical_elem_link_generation}. In particular,
	\begin{align}
		p_1&=\Tr[(\mathcal{M}_1^1\circ\mathcal{S}_1)(\rho_1^S)],\\
		p_2&=\Tr[(\mathcal{M}_2^1\circ\mathcal{S}_2)(\rho_2^S)],
	\end{align}
	where $\mathcal{M}_j^1$, $j\in\{1,2\}$, are the completely positive maps corresponding to success of the heralding proecedure for the $j^{\text{th}}$ elementary link, $\mathcal{S}_j$ is the transmission channel from the source to the nodes for the $j^{\text{th}}$ elementary link, and $\rho_j^S$ is the state produced by the source associated with the $j^{\text{th}}$ elementary link; see Fig.~\ref{fig-two_elem_link_MDP}. We also define the states
	\begin{align}
		\sigma_j^0&=\frac{1}{p_j}(\mathcal{M}_j^1\circ\mathcal{S}_j)(\rho_j^S),\\
		\sigma_j(m)&=\mathcal{N}_j^{\circ m}(\sigma_j^0),\quad j\in\{1,2\},
	\end{align}
	where $\mathcal{N}_j$ is the quantum channel describing the decoherence of the quantum memories associated with the $j^{\text{th}}$ elementary link.
	
	\begin{figure}
		\centering
		\includegraphics[scale=1]{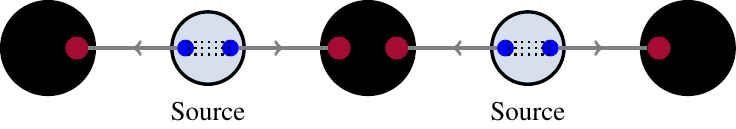}
		\caption{Two elementary links with entanglement swapping at the central node.}\label{fig-two_elem_link_MDP}
	\end{figure}
	
	Now, recall that in the case of one elementary link considered in Sec.~\ref{sec-elem_link_MDP_def}, the state variable was just the memory time $M(t)$, referring to the time for which the quantum state of the elementary link was held in the memories of the nodes, and the actions consisted of either keeping the elementary link or discarding it and generating a new one. Now, in the case of two elementary links, we must keep track of the memory time of both elementary links, and we also store information about whether or not the virtual (end-to-end) link is active. The actions are similar to before, consisting of the same elementary link actions as before, but now we define an additional action for performing the entanglement swapping operation. Formally, we have the following.
	\begin{itemize}
		\item \textit{States}: The states of the MDP are elements of the set $\SF{S}=\SF{X}\times\SF{M}_1\times\SF{M}_2$, where $\SF{X}=\{0,1\}$ indicates whether or not the end-to-end link is active, $\SF{M}_1=\{-1,0,1,\dotsc,m_1^{\star}\}$ is the set of possible states of the first elementary link (with the elements of the set having the same interpretation as in the elementary link MDP), and $\SF{M}_2=\{-1,0,1,\dotsc,m_2^{\star}\}$ is the set of possible states of the second elementary link. In particular, $m_1^{\star}$ and $m_2^{\star}$ are the maximum storage times of the two elementary links, corresponding to their coherence times; see Sec.~\ref{sec-elem_link_MDP_def}. To these states, we associate the (standard) probability simplex spanned by the orthonormal vectors $\ket{x}\otimes\ket{m_1}\otimes\ket{m_2}$, with $x\in\SF{X}$, $m_1\in\SF{M}_1$, and $m_2\in\SF{M}_2$, and we often use the abbreviation $\ket{s}\equiv\ket{x,m_1,m_2}\equiv\ket{x}\otimes\ket{m_1}\otimes\ket{m_2}$ for every $s=(x,m_1,m_2)\in\SF{S}$.
		
			We use $S(t)=(X(t),M_1(t),M_2(t))$, $t\in\mathbb{N}$, to refer to the random variables (taking values in $\SF{S}$) corresponding to the state of the MDP. 
		
		\item \textit{Actions}: The set of actions is $\SF{A}=\{00,01,10,11,\Join\}$, where the different actions have the following meanings:
				\begin{itemize}
					\item $00$: Keep both elementary links.
					\item $01$: Keep the first elementary link, discard and regenerate the second.
					\item $10$: Discard and regenerate the first elementary link, keep the second.
					\item $11$: Discard and regenerate both elementary links.
					\item $\Join$: Perform entanglement swapping.
				\end{itemize}
				
				We use $A(t)$, $t\in\mathbb{N}$, to refer to the random variables (taking values in the set $\SF{A}$) corresponding to the actions taken.
				
				We let $H(t)=(S(1),A(1),S(2),A(2),\dotsc,A(t-1),S(t))$ be the history, consisting of a sequence of states and actions, up to time $t\in\mathbb{N}$, with $H(1)=S(1)$.
			
		\item \textit{Figure of merit}: For the elementary link MDP defined in Sec.~\ref{sec-elem_link_MDP_def}, recall that the figure of merit was essentially the fidelity of the elementary link, but scaled by a factor corresponding to the probability that the elementary link is active. We define the figure of merit here in an analogous fashion as follows:
			\begin{widetext}
			\begin{equation}\label{eq-two_elem_link_fidelity}
				f(x,m_1,m_2)=\left\{\begin{array}{l l} \bra{\psi}\mathcal{L}^{\text{ES};1}(\sigma_1(m_1)\otimes\sigma_2(m_2))\ket{\psi} & \text{if } x=1,\,m_1,m_2\geq 0, \\ 0 & \text{otherwise}, \end{array}\right.
			\end{equation}
			\end{widetext}
			where we recall that $\mathcal{L}^{\text{ES};1}$ is the entanglement swapping channel for one intermediate node, as defined in Sec.~\ref{ex-ent_swap}, and $\ket{\psi}$ is a target pure state vector, which in this context is typically the maximally entangled state vector $\ket{\Phi}$, as defined in \eqref{eq-max_ent_state}.
	\end{itemize}
	
	Let us now proceed to the definition of the transition matrices for our MDP. Unlike the elementary link scenario, in this scenario of two elementary links we want not only for the fidelity and success probability of the end-to-end link to be high, but we also want the average amount of time it takes to generate the end-to-end link to be low---in other words, we want the expected waiting time to be low as well. Therefore, in order to address the expected waiting time in our MDP, we define the transition matrices in such a way that states corresponding to an active end-to-end link (i.e., states $s=(x,m_1,m_2)\in\SF{S}$ such that $x=1$) are \textit{absorbing states}. By doing this, the expected waiting time is nothing but the expected time to absorption, which is a standard result in the theory of Markov chains; see, e.g., Ref.~[\onlinecite{Stewart09_book}]. We note that this idea of relating the expected waiting time of a quantum repeater chain to the absorption time of a Markov chain has already been used in Ref.~[\onlinecite{SSv19}]; however, here, we apply this idea in the more general context of an MDP, while also taking memory decoherence and other device imperfections explicitly into account.
	
	Let $T_j^a$ denote the transition matrix for the $j^{\text{th}}$ elementary link, as defined in \eqref{eq-elem_link_transition_0} and \eqref{eq-elem_link_transition_1}, for $a\in\{0,1\}$. Then, using those elementary link transition matrices, we define the transition matrices for our MDP for two elementary links as follows:
	\begin{align}
		T^{00}&\coloneqq\ketbra{0}{0}\otimes T_1^0\otimes T_2^0+\ketbra{1}{1}\otimes\mathbbm{1}_1\otimes\mathbbm{1}_2,\\
		T^{01}&\coloneqq\ketbra{0}{0}\otimes T_1^0\otimes T_2^1+\ketbra{1}{1}\otimes\mathbbm{1}_1\otimes\mathbbm{1}_2,\\
		T^{10}&\coloneqq\ketbra{0}{0}\otimes T_1^1\otimes T_2^0+\ketbra{1}{1}\otimes\mathbbm{1}_1\otimes\mathbbm{1}_2,\\
		T^{11}&\coloneqq\ketbra{0}{0}\otimes T_1^1\otimes T_2^1+\ketbra{1}{1}\otimes\mathbbm{1}_1\otimes\mathbbm{1}_2,\\
		T^{\Join}&\coloneqq\ketbra{0}{0}\otimes\left((1-q)\ketbra{g_{p_1},g_{p_2}}{\gamma_1^+,\gamma_2^+}\right.\nonumber\\ &\qquad\left.+S_1\otimes\ketbra{-1}{-1}+\ketbra{-1}{-1}\otimes S_2\right.\nonumber\\ &\qquad +\ketbra{-1,-1}{-1,-1}+\ketbra{-1,-1}{-1,m_2^{\star}}\nonumber\\ &\qquad \left.+\ketbra{-1,-1}{m_1^{\star},-1}\right)\nonumber\\ &\qquad +\ketbra{1}{0}\otimes q\mathbbm{1}_1^+\otimes\mathbbm{1}_2^++\ketbra{1}{1}\otimes\mathbbm{1}_1\otimes\mathbbm{1}_2,
	\end{align}
	where
	\begin{align}
		\ket{\gamma_j^+}&=\sum_{m=0}^{m_j^{\star}}\ket{m},\\
		\mathbbm{1}_j&=\sum_{m=-1}^{m_j^{\star}}\ketbra{m}{m},\\
		\mathbbm{1}_j^+&=\sum_{m=0}^{m_j^{\star}}\ketbra{m}{m},\\
		S_j&=\sum_{m=0}^{m_j^{\star}-1}\ketbra{m+1}{m},
	\end{align}
	and $\ket{g_{p_j}}$, $j\in\{1,2\}$, is defined exactly as in \eqref{eq-elem_link_MDP_gen_vector}.
	
	First, let us observe that every transition matrix has a block structure, with the blocks defined by the transitions of the status of the end-to-end link. Specifically, we can write every transition matrix $T^a$ as
	\begin{equation}
		T^a=\begin{pmatrix} T_{0\to 0}^{a} & T_{1\to 0}^a \\ T_{0\to 1}^a & T_{1\to 1}^a \end{pmatrix},\quad a\in\SF{A},
	\end{equation}
	where the sub-blocks $T_{x\to x'}^a$ is the block corresponding to the transition of the status of the virtual link from $x\in\{0,1\}$ to $x'\in\{0,1\}$. (We note, as before, that probability vectors are applied to transition matrices from the right; see Appendix~\ref{sec-MDPs_overview}.) From this, we see that for the actions $00,01,10,11$, the transition matrices are of the following block-diagonal form:
	\begin{equation}
		T^{jk}=\begin{pmatrix} T_1^j\otimes T_2^k & 0 \\ 0 & \mathbbm{1}_1\otimes\mathbbm{1}_2 \end{pmatrix},\quad j,k\in\{0,1\}.
	\end{equation}
	Therefore, for these transition matrices, because the entanglement swapping action is not performed, the transition from $x=0$ to $x=1$ is not possible. Consequently, if the end-to-end is initially inactive ($x=0$), then it stays inactive and each elementary link transitions independently according to the elementary link transition matrices from Sec.~\ref{sec-elem_link_MDP_def}. If the end-to-end link is intially active ($x=1$), then nothing happens to the states of the elementary links, in accordance with the definition of an absorbing state. For the action $\Join$ of entanglement swapping, we have three non-zero blocks. The block $T_{0\to 0}^{\Join}$ means that the end-to-end link is initially inactive and stays inactive, which can happen in one of several ways:
	\begin{itemize}
		\item Both elementary links are initially active but the entanglement swapping fails, after which both elementary links are regenerated. This possibility is given by the term $(1-q)\ketbra{g_{p_1},g_{p_2}}{\gamma_1^+,\gamma_2^+}$.
		
		\item Both elementary links are initially inactive. In this case, they both remain inactive after the entanglement swapping action, and this is given by the term $\ketbra{-1,-1}{-1,-1}$.
		
		\item One of the elementary links is active but the other is not. In this case, the memory time of the active elementary link is incremented by one, corresponding to the ``shift'' operator $S_j$ on the active elementary link, while the inactive elementary link remains inactive. These possibilities are given by the terms $S_1\otimes\ketbra{-1}{-1}$ and $\ketbra{-1}{-1}\otimes S_2$.
		
		\item One of the elementary links is inactive and the other has reached is maximum storage time. In this case, the inactive elementary link remains inactive, and the other elementary link transitions to the $-1$ state, because the maximum time $m_j^{\star}$ was reached. These possibilities are given by the terms $\ketbra{-1,-1}{m_1^{\star},-1}$ and $\ketbra{-1,-1}{-1,m_2^{\star}}$.
		
	\end{itemize}
	The block $T_{0\to 1}^{\Join}$ corresponds to a transition from the end-to-end link initially being inactive to being active, which happens when the entanglement swapping succeeds. Since the entanglement swapping is possible only when both elementary links are active, and because we want to keep track of the memory times of the elementary links at the moment the entanglement swapping is performed, this block is given by $q\mathbbm{1}_1^+\otimes\mathbbm{1}_2^+$. Finally, the block $T_{1\to 1}^{\Join}$ corresponds to the end-to-end link being active already; thus, in accordance with the definition of an absorbing state, this block is given simply by $\mathbbm{1}_1\otimes\mathbbm{1}_2$, as with the other actions.

	Now, just as we defined a memory-cutoff policy for elementary links in Sec.~\ref{sec-elem_link_mem_cutoff_policy}, we can define a memory-cutoff policy for the system of two elementary links that we are considering here. Suppose that the first elementary link has cutoff time $t_1^{\star}\leq m_1^{\star}$ and the second elementary link has cutoff time $t_2^{\star}\leq m_2^{\star}$. Then, we define the decision function such that, if both elementary links are active, then an entanglement swap is attempted; otherwise, one of the actions $01$, $10$, or $11$ is performed, depending on which elementary links are active. This leads to the following definition of the deterministic decision function.
	\begin{multline}\label{eq-two_elem_link_mem_cutoff_policy}
		d(0,m_1,m_2)\\=\left\{\begin{array}{l l} 01, & m_1\in\{0,\dotsc,t_1^{\star}-1\},\,m_2=-1,\\[0.1cm]
		10, & m_1=-1,\,m_2\in\{0,\dotsc,t_2^{\star}-1\},\\[0.1cm]
		11, & (m_1,m_2)=(-1,-1),\,(-1,t_2^{\star}),\,(t_1^{\star},-1),\\[0.1cm]
		\Join, & m_1\in\{0,\dotsc,t_1^{\star}\},\,m_2\in\{0,\dotsc,t_2^{\star}\}, \end{array}\right.
	\end{multline}
	for all $m_1\in\SF{M}_1$ and $m_2\in\SF{M}_2$. Note that it is only necessary to define the decision function on the transient states $(0,m_1,m_2)$ and not the absorbing states $(1,m_1,m_2)$, because the figures of merit that we are concerned with (such as the expected value of the function $f$ in \eqref{eq-two_elem_link_fidelity} and the expected waiting time to absorption) do not depend on the values of the decision function on absorbing states.

\subsection{Optimal policies via linear programming}

	Having defined the basic elements of the MDP for two elementary links with entanglement swapping, let us now look at optimal policies. We are concerned both with the figure of merit defined in \eqref{eq-two_elem_link_fidelity} and with the expected waiting time to obtain an end-to-end link. In Appendix~\ref{sec-MDP_linear_prog}, we show that both quantities can be bounded using linear programs. In fact, the results in Appendix~\ref{sec-MDP_linear_prog} go beyond the MDP for two elementary links that we consider here, because the linear programs apply to general MDPs with arbitrary state and action sets and transition matrices.
	
	\begin{theorem}[Linear program for the optimal expected value for two elementary links]\label{thm-two_elem_links_opt_fidelity_lin_prog}
		Given a system of two elementary links, along with the associated MDP defined in Sec.~\ref{sec-MDP_two_elem_links_def}, the optimal expected value of the function $f$ defined in \eqref{eq-two_elem_link_fidelity} is given by the following linear program:
		\begin{equation}\label{eq-two_elem_links_opt_fidelity_lin_prog}
			\begin{array}{l l} \textnormal{maximize} & \braket{f}{1,x} \\[0.2cm] \textnormal{subject to} & 0\leq\ket{w_a}\leq\ket{x}\quad\forall~a\in\textnormal{\textsf{A}},\\[0.2cm] & 0\leq\ket{v_a}\leq\ket{y}\quad\forall~a\in\textnormal{\textsf{A}},\\[0.2cm] & \displaystyle\sum_{a\in\textnormal{\textsf{A}}}\sum_{i=0}^1 T_{0\to i}^a\ket{w_a}=\ket{0}\ket{x},\\[0.5cm] & \displaystyle\sum_{a\in\textnormal{\textsf{A}}}\ket{w_a}=\ket{x},\\[0.5cm] & \displaystyle\ket{y}-\sum_{a\in\textnormal{\textsf{A}}}T_{0\to 0}^a\ket{v_a}=\ket{g_{p_1}}\ket{g_{p_2}},\\[0.5cm] & \displaystyle \sum_{a\in\textnormal{\textsf{A}}}\ket{v_a}=\ket{y},\\[0.5cm] & \displaystyle T_{0\to 1}^{\Join}\ket{v_{\Join}}=\ket{x}, \end{array}
		\end{equation}
		where the optimization is with respect to the $(m_1^{\star}+2)\cdot(m_2^{\star}+2)$-dimensional vectors $\ket{x}$, $\ket{y}$, $\ket{w_a}$, $\ket{v_a}$, $a\in\SF{A}$, and the inequality constraints are component-wise. Every set of feasible points $\ket{x}$, $\ket{y}$, $\ket{w_a}$, $\ket{v_a}$, $a\in\SF{A}$, of this linear program defines a stationary policy with decision function $d$, whose values for the transient states $(0,m_1,m_2)$ are as follows:
		\begin{equation}
			d(0,m_1,m_2)(a)=\frac{\braket{0,m_1,m_2}{w_a}}{\braket{0,m_1,m_2}{x}},
		\end{equation}
		for all $m_1\in\SF{M}_1$, $m_2\in\SF{M}_2$, and $a\in\SF{A}$. If $\braket{0,m_1,m_2}{x}=0$, then we can set $d(0,m_1,m_2)$ to be an arbitrary probability distribution over the set $\SF{A}$ of actions.
	\end{theorem}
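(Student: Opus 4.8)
The plan is to obtain \eqref{eq-two_elem_links_opt_fidelity_lin_prog} by specializing the general occupation-measure linear program for average-reward MDPs (the general result in Appendix~\ref{sec-MDP_linear_prog}, of which the single-link Proposition~\ref{thm-opt_elem_link_fidelity_avg_inf_LP} is another instance) to the two-link MDP of Sec.~\ref{sec-MDP_two_elem_links_def}. The first observation is that, because every state $(1,m_1,m_2)$ is absorbing (so that $T_{1\to 1}^a=\mathbbm{1}_1\otimes\mathbbm{1}_2$ and $T_{1\to 0}^a=0$ for all $a$) while the figure of merit $f$ in \eqref{eq-two_elem_link_fidelity} vanishes on the transient block $x=0$, the long-run average of $f$ under any stationary policy equals the expectation of $f$ evaluated at the moment of absorption. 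Thus the optimal value equals $\max_d\sum_{m_1,m_2}f(1,m_1,m_2)\,\Pr_d[\text{absorbed at }(1,m_1,m_2)]$, which is exactly what the objective $\braket{f}{1,x}$ encodes once $\ket{x}$ is identified with the absorption distribution over $\SF{M}_1\times\SF{M}_2$.

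The second step is to split the occupation-measure variables of the general LP according to the $\SF{X}$-block structure $T^a=\left(\begin{smallmatrix} T_{0\to 0}^a & T_{1\to 0}^a\\ T_{0\to 1}^a & T_{1\to 1}^a\end{smallmatrix}\right)$. The long-run state--action frequencies, which are supported on the recurrent (absorbing) states, give rise to the variables $\ket{x}$ and $\ket{w_a}$, subject to $\sum_a\ket{w_a}=\ket{x}$ and $0\le\ket{w_a}\le\ket{x}$; the transient deviation measure, i.e.\ the expected number of visits to each transient state before absorption, gives the variables $\ket{y}$ and $\ket{v_a}$, subject to $\sum_a\ket{v_a}=\ket{y}$ and $0\le\ket{v_a}\le\ket{y}$. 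Because the process enters the transient block from the initial ``request both links'' step with distribution $\ket{g_{p_1}}\otimes\ket{g_{p_2}}$ and can move among transient states only through the blocks $T_{0\to 0}^a$, the balance equation for the transient measure reduces precisely to $\ket{y}-\sum_a T_{0\to 0}^a\ket{v_a}=\ket{g_{p_1}}\ket{g_{p_2}}$.

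Third, I would use the fact that the only transition from the transient block to the absorbing block is through entanglement swapping, namely $T_{0\to 1}^a=0$ for $a\neq\Join$ and $T_{0\to 1}^{\Join}=q\,\mathbbm{1}_1^+\otimes\mathbbm{1}_2^+$. Hence the flux into the absorbing states equals $T_{0\to 1}^{\Join}\ket{v_{\Join}}$, which must match the absorption distribution $\ket{x}$, giving the coupling constraint $T_{0\to 1}^{\Join}\ket{v_{\Join}}=\ket{x}$; together with the stationarity of the frequencies on the absorbing block (where $T_{1\to 1}^a=\mathbbm{1}$) this produces the remaining normalization constraints. The randomized stationary policy is then recovered on the transient states exactly as in the statement, by normalizing the action-resolved occupation measure by the total measure at each transient state, with an arbitrary choice wherever the denominator vanishes (such states are visited with probability zero and so do not affect the value).

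The hard part will be the multichain character of the MDP: unlike the single-link case, which is ergodic by Theorem~\ref{thm-elem_link_steady_state_dist} and hence covered by the clean unichain program of Proposition~\ref{thm-opt_elem_link_fidelity_avg_inf_LP}, here each absorbing state forms its own recurrent class, so the steady-state gain genuinely depends on the initial distribution. The careful work is therefore to justify that the general average-reward LP, with its separate gain and deviation variables, specializes correctly; in particular that policies which never swap contribute $\ket{x}=0$ and value $0$ (consistent with $f$ vanishing off absorption), and that for absorbing policies the transient deviation measure is finite so that the balance equation is well posed. Once the general LP is established in the appendix, the reduction to \eqref{eq-two_elem_links_opt_fidelity_lin_prog} is then a direct block-wise reading of its constraints, exploiting $T_{1\to 0}^a=0$, $T_{1\to 1}^a=\mathbbm{1}$, and the single absorbing transition $T_{0\to 1}^{\Join}$.
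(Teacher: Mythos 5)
Your proposal follows essentially the same route as the paper: Theorem~\ref{thm-two_elem_links_opt_fidelity_lin_prog} is proved there by specializing the general absorbing-state linear program of Proposition~\ref{prop-lin_prog_func_2} in Appendix~\ref{sec-MDP_linear_prog}, whose proof likewise sets $\ket{y}=(\mathbbm{1}-Q^d)^{-1}\ket{S_{\text{tra}}(1)}$ (the expected visit counts), $\ket{x}=R^d\ket{y}$ (the absorption distribution), $\ket{w_a}=D_a^d(0)\ket{x}$, $\ket{v_a}=D_a^d(0)\ket{y}$, and reads off the constraints block-wise, with the specialization $T_{0\to 1}^a=0$ for $a\neq\Join$ yielding the coupling constraint. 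Your flagging of the multichain/non-absorbing-policy issue (where $(\mathbbm{1}-Q^d)^{-1}$ need not exist) is a legitimate subtlety that the paper's proof passes over silently, but it does not alter the fact that the argument is the same.
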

	
	\begin{remark}\label{rem-decision_func_transient}
		Note that in the theorem statement above we defined the action of the decision function only for the transient states. For the absorbing states, we can set the decision function to be arbitrary, because neither the expected value of the MDP nor the expected waiting time to absorption is affected by the value of the decision function on absorbing states; see Appendix~\ref{sec-MDPs_figs_of_merit}.
	\end{remark}
	
	\begin{theorem}[Linear program for the optimal expected waiting time for two elementary links]\label{thm-two_elem_links_opt_waiting_time_lin_prog}
		Given a system of two elementary links, along with the associated MDP defined in Sec.~\ref{sec-MDP_two_elem_links_def}, the optimal expected waiting time is given by the following linear program:
		\begin{equation}\label{eq-two_elem_links_opt_waiting_time_lin_prog}
			\begin{array}{l l} \textnormal{minimize} & \braket{\gamma}{x} \\[0.2cm] \textnormal{subject to} & 0\leq\ket{w_a}\leq\ket{x}\quad\forall~a\in\textnormal{\textsf{A}}, \\[0.3cm] & \displaystyle\ket{x}-\sum_{a\in\textnormal{\textsf{A}}}T_{0\to 0}^a\ket{w_a}=\ket{g_{p_1}}\ket{g_{p_2}}, \\[0.5cm] & \displaystyle\sum_{a\in\textnormal{\textsf{A}}}\ket{w_a}=\ket{x}, \end{array}
		\end{equation}
		where the optimization is with respect to the $(m_1^{\star}+2)\cdot(m_2^{\star}+2)$-dimensional vectors $\ket{x}$ and $\ket{w_a}$, $a\in\SF{A}$, and the inequality constraints are component-wise. Every set of feasible points $\ket{x}$, $\ket{w_a}$, $a\in\SF{A}$, of this linear program defines a stationary policy with decision $d$, whose values for the transient states $(0,m_1,m_2)$ (see Remark~\ref{rem-decision_func_transient} above) are as follows:
		\begin{equation}
			d(0,m_1,m_2)(a)=\frac{\braket{0,m_1,m_2}{w_a}}{\braket{0,m_1,m_2}{x}},
		\end{equation}
		for all $m_1\in\SF{M}_1$, $m_2\in\SF{M}_2$, and $a\in\SF{A}$. If $\braket{0,m_1,m_2}{x}=0$, then we can set $d(0,m_1,m_2)$ to be an arbitrary probability distribution over the set $\SF{A}$ of actions.
	\end{theorem}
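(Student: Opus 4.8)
The plan is to recognise the expected waiting time as the expected time to absorption of the Markov chain induced by a stationary policy, and then to read off \eqref{eq-two_elem_links_opt_waiting_time_lin_prog} as the occupation-measure linear program that minimises this absorption time. Because the absorbing-state machinery is developed for general MDPs in Appendix~\ref{sec-MDP_linear_prog}, the real work here is to verify that the present MDP fits that framework and that its abstract constraints specialise to the ones displayed.

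First I would fix a stationary policy with decision function $d$ and work on the transient block consisting of the states with $x=0$. Writing $D_a$ for the diagonal matrix with entries $d(0,m_1,m_2)(a)$, the transient-to-transient transition matrix is $Q_d:=\sum_{a\in\SF{A}}T_{0\to0}^a D_a$, and since the states with $x=1$ are absorbing by construction of the $T^a$, the expected waiting time starting from the post-request distribution $\ket{g_{p_1}}\ket{g_{p_2}}$ is the expected number of steps spent among transient states before absorption. Defining the occupation vector $\ket{x}:=(\mathbbm{1}-Q_d)^{-1}\ket{g_{p_1}}\ket{g_{p_2}}$ (valid when $d$ is proper, so that $\mathbbm{1}-Q_d$ is invertible) and the per-action occupations $\ket{w_a}:=D_a\ket{x}$, one has $\ket{x}=\ket{g_{p_1}}\ket{g_{p_2}}+Q_d\ket{x}$, which is exactly the balance constraint $\ket{x}-\sum_a T_{0\to0}^a\ket{w_a}=\ket{g_{p_1}}\ket{g_{p_2}}$, together with $\sum_a\ket{w_a}=(\sum_a D_a)\ket{x}=\ket{x}$ and $\ket{w_a}\ge0$; the upper bound $\ket{w_a}\le\ket{x}$ is then automatic. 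The objective $\braket{\gamma}{x}$, with $\ket{\gamma}$ the all-ones vector, equals the total occupation, i.e.\ the expected waiting time, so every proper stationary policy furnishes a feasible point whose objective is its waiting time.

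Conversely, I would check that any feasible $(\ket{x},\{\ket{w_a}\})$ yields a stationary policy through $d(0,m_1,m_2)(a)=\braket{0,m_1,m_2}{w_a}/\braket{0,m_1,m_2}{x}$ (arbitrary where the denominator vanishes), and that the balance constraint forces this policy's occupation measure to be precisely $\ket{x}$, so its waiting time again equals $\braket{\gamma}{x}$. Optimality of the minimiser then follows from the standard primal--dual correspondence: \eqref{eq-two_elem_links_opt_waiting_time_lin_prog} is the linear-programming dual of the value-function program $\max\braket{g_{p_1},g_{p_2}}{V}$ subject to the Bellman inequalities $\braket{s}{V}\le1+\sum_{s'}T_{0\to0}^a(s';s)\braket{s'}{V}$ for every transient $s$ and every $a$, whose solution is the optimal value function by the Bellman equation for minimal time to absorption. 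Since the values of $d$ on absorbing states affect neither the time to absorption nor the constraints (Remark~\ref{rem-decision_func_transient}), leaving them unspecified is harmless.

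The main obstacle is properness: the representation $\ket{x}=(\mathbbm{1}-Q_d)^{-1}\ket{g_{p_1}}\ket{g_{p_2}}$ and the finiteness of $\braket{\gamma}{x}$ require that absorption occur with probability one, equivalently that $\mathbbm{1}-Q_d$ be invertible. I would dispatch this by restricting the optimisation to proper policies: whenever $p_1,p_2,q>0$ the memory-cutoff policy of \eqref{eq-two_elem_link_mem_cutoff_policy} is proper and has finite expected waiting time, so the optimum is attained among proper policies while improper policies carry infinite objective and are never optimal. The precise well-posedness hypotheses are exactly those imposed on the general linear program in Appendix~\ref{sec-MDP_linear_prog}; once they are in force, the remaining steps---bookkeeping the block decomposition of each $T^a$ into $T_{0\to0}^a$ and $T_{0\to1}^a$, and confirming that $\braket{\gamma}{x}$ counts steps with the correct normalisation---are routine.
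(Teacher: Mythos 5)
Your proposal follows essentially the same route as the paper, which proves this theorem as a special case of Proposition~\ref{prop-lin_prog_func_3} in Appendix~\ref{sec-MDP_linear_prog}: there the occupation vector $\ket{x}=(\mathbbm{1}_{\SF{S}_{\text{tra}}}-Q^d)^{-1}\ket{S_{\text{tra}}(1)}$ and the per-action occupations $\ket{w_a}=D_a^d(0)\ket{x}$ are introduced exactly as in your argument, the balance and normalization constraints are derived the same way, and the decision function is reconstructed from feasible points by the same ratio formula. Your supplementary remarks on properness of the policy (invertibility of $\mathbbm{1}-Q_d$) and the Bellman-inequality dual are not in the paper's proof, which treats these points only implicitly, but they do not change the approach.
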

	
	\begin{figure}
		\centering
		\includegraphics[scale=1]{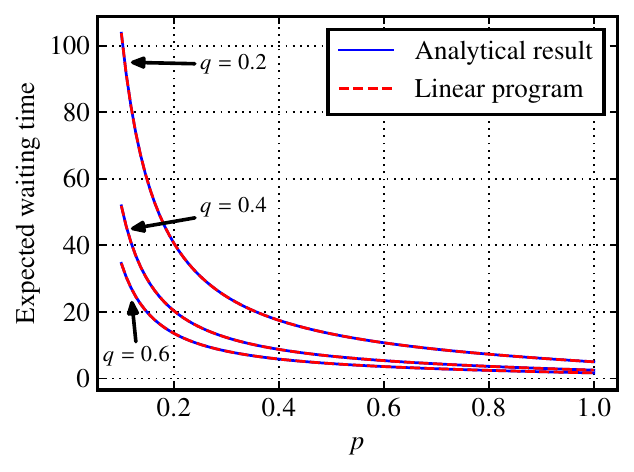}
		\caption{The expected waiting time for an end-to-end for a system of two elementary links, as depicted in Fig.~\ref{fig-two_elem_link_MDP}. We let $p_1=p_2=p$ be the transmission-heralding success probability for both elementary links, and we denote by $q$ the success probability for entanglement swapping. We compare the known analytical result for this scenario [\onlinecite[Eq.~(5)]{CJKK07}], with cutoff $t^{\star}=5$ (see \eqref{eq-two_elem_links_waiting_time_prior}), to the solution obtained by the linear program in \eqref{eq-two_elem_links_opt_waiting_time_lin_prog}, with maximum storage time $m^{\star}=5$.}\label{fig-LP_vs_analytic}
	\end{figure}

	We now show that the linear program in \eqref{eq-two_elem_links_opt_waiting_time_lin_prog} reproduces the known analytical result in Ref.~[\onlinecite[Eq.~(5)]{CJKK07}] for the expected waiting time for two elementary links with the same success probability $p$ and cutoff time $t^{\star}$:
	\begin{equation}\label{eq-two_elem_links_waiting_time_prior}
		\frac{3-2p(1-(1-p)^{t^{\star}})-2(1-p)^{t^{\star}}}{qp(2-p(1-2(1-p)^{t^{\star}})-2(1-p)^{t^{\star}})}.
	\end{equation}
	In Fig.~\ref{fig-LP_vs_analytic}, we plot this function along with the optimal value obtained for the linear program in \eqref{eq-two_elem_links_opt_waiting_time_lin_prog}. We find that the two curves coincide for all values of the transmission-heralding probability $p$ and the entanglement swapping success probability $q$ considered. This provides us not only with a sanity check on the linear program, but it also provides evidence that the memory-cutoff policy in \eqref{eq-two_elem_link_mem_cutoff_policy} is optimal, at least in the ``symmetric'' scenario, in which both elementary links have the same transmission-heralding success probability. We also note that the result in \eqref{eq-two_elem_links_waiting_time_prior} holds only in this symmetric scenario, while the linear program in \eqref{eq-two_elem_links_opt_waiting_time_lin_prog} can be used to determine the optimal expected waiting time in arbitrary parameter regimes.

\section{Summary and outlook}\label{sec-summary_outlook}

	The central topic of this work is the theory of near-term quantum networks---specifically, how to describe them and how to develop protocols for entanglement distribution in practical scenarios with near-term quantum technologies. The goal in this area of research is to develop protocols that can handle multiple-user requests, work for any given network topology, and can adapt to changes in topology and attacks to the network infrastructure, with the ultimate goal being the realization of the quantum internet. In this work, we have laid some of the foundations for this research program. The core idea is that Markov decision processes (MDPs) provide a natural setting in which to analyze near-term quantum network protocols. We illustrated this idea in this work by first analyzing the MDP for elementary links first introduced in Ref.~[\onlinecite{Kha21b}], simplifying its formulation and presenting some new results about it. Notably, in Theorem~\ref{thm-opt_pol_mem_cutoff}, we show that the memory-cutoff policy is optimal in the steady-state limit. We then showed how the elementary link MDP can be used as part of an overall quantum network protocol. Finally, we provided a first step towards using the MDP formalism for more realistic, larger networks, by providing an MDP for two elementary links. We showed that important figures of merit such as the fidelity of the end-to-end link as well as the expected waiting time for the end-to-end link, can be obtained using linear programs.
	
	Moving forward, there are many interesting directions to pursue. The MDPs introduced in this work are not entirely general, because they do not model protocols for arbitrary repeater chains nor arbitrary networks. Thus, to start with, extending the MDP for two elementary links to repeater chains of arbitrary length is an interesting direction for future work. In this direction, we expect that linear, and possibly even semi-definite relaxations of the expected value of the end-to-end link and of the expected waiting time, such as those in Theorem~\ref{thm-two_elem_links_opt_fidelity_lin_prog} and Theorem~\ref{thm-two_elem_links_opt_waiting_time_lin_prog}, are going to be crucial in the analysis of longer repeater chains, because the size of the MDP (the number of states and actions) will grow exponentially with the number of elementary links.
	
	Going beyond repeater chains to general quantum networks, it is of interest to examine protocols involving multiple cooperating agents. When we say that agents ``cooperate'', we mean that they are allowed to communicate with each other. In the context of quantum networks, agents who cooperate have knowledge beyond that of their own nodes. If every agent cooperates with an agent corresponding to a neighbouring elementary link, then the agents would have knowledge of the network in their local vicinity, and this would in principle improve waiting times and rates for entanglement distribution. Furthermore, the quantum state of the network would not be a simple tensor product of the quantum states corresponding to the individual edges, as we have in \eqref{eq-network_cq_state_QDP} when all the agents are independent. See Refs.~[\onlinecite{PKT+19,CRDW19}] for a discussion of nodes with local and global knowledge of a quantum network in the context of routing.
	
	Finally, another interesting direction for future work is to develop quantum network protocols based on decision processes that incorporate queuing models for requests for links of a specific type between specific nodes; see, e.g., Refs.~[\onlinecite{PGGT20,DT21}]. Then, one can calculate quantities such as the time needed to fulfill all requests. We can also calculate the ``capacity'' of the network, defined in the context of queuing systems as the maximum number of requests that can be fulfilled per unit time.
	
\bigskip

\begin{acknowledgments}
	
	Much of this work is based on the author's PhD thesis research~\cite{Kha21}, which was conducted at the Hearne Institute for Theoretical Physics, Department of Physics and Astronomy, Louisiana State University. During this time, financial support was provided by the National Science Foundation and the National Science and Engineering Research Council of Canada Postgraduate Scholarship. The author also acknowledges support from the BMBF (QR.X). The plots in this work were made using the Python package matplotlib~\cite{matplotlib}.

\end{acknowledgments}

	




\onecolumngrid

\appendix

\newpage

\toclesslab\section{Overview of Markov decision processes}{sec-MDPs_overview}

	In this section, we provide a brief overview of the concepts from the theory of Markov decision processes (MDPs) that are relevant for this work. We mostly follow the definitions and results as presented in Ref.~[\onlinecite{Put14_book}] while using the notation defined in Sec.~\ref{sec-notation}.
	
	\bigskip

\toclesslab\subsection{Notation}{sec-notation}
	
	Throughout this work, we deal with probability distributions defined on a discrete, finite set of points. It is very helpful to write these probability distributions as vectors in a (standard) probability simplex. We do this as follows. Consider a finite set $\SF{X}$. To this set, we associate the orthonormal vectors $\{\ket{x}\}_{x\in\SF{X}}$ in $\mathbb{R}^{|\SF{X}|}$, which means that $\braket{x}{x'}=\delta_{x,x'}$ for all $x,x'\in\SF{X}$. The probability simplex corresponding to $\SF{X}$ is then formally defined as all convex combinations of the vectors in $\{\ket{x}\}_{x\in\SF{X}}$:
	\begin{equation}
		\Delta_{\SF{X}}\coloneqq\left\{\sum_{x\in\SF{X}}p_x\ket{x}:0\leq p_x\leq 1,\,\sum_{x\in\SF{X}}p_x=1\right\}.
	\end{equation}
	This set is in one-to-one correspondence with the set of all probability distributions defined on $\SF{X}$. Specifically, let $P:\SF{X}\to[0,1]$ be a probability distribution (probability mass function) on $\SF{X}$, i.e., $P(x)\in[0,1]$ for all $x\in\SF{X}$ and $\sum_{x\in\SF{X}}P(x)=1$. The unique probability vector $\ket{P}_{\SF{X}}\in\Delta_{\SF{X}}$ corresponding to $P$ is
	\begin{equation}
		\ket{P}_{\SF{X}}\coloneqq\sum_{x\in\SF{X}}P(x)\ket{x}.
	\end{equation}
	We drop the subscript $\SF{X}$ from $\ket{P}_{\SF{X}}$ whenever the underlying set $\SF{X}$ is clear from context. It is important to note and to emphasize that the vector $\ket{P}$ does not represent a quantum state---the bra-ket notation is used merely for convenience. Normalization of the probability vector is then captured by defining the following vector:
	\begin{equation}
		\ket{\gamma_{\SF{X}}}\coloneqq\sum_{x\in\SF{X}}\ket{x}.
	\end{equation}
	We often omit the subscript $\SF{X}$ in $\ket{\gamma_{\SF{X}}}$ when the underlying set $\SF{X}$ is clear from context. Then
	\begin{equation}
		\braket{\gamma}{P}=\sum_{x\in\SF{X}}P(x)=1.
	\end{equation}
	It is often the case that a probability distribution is associated with a random variable $X$ taking values in $\SF{X}$, so that $P(x)\equiv P_X(x)=\Pr[X=x]$ for all $x\in\SF{X}$. In this case, for brevity, we sometimes write the probability vector as
	\begin{equation}
		\ket{X}\equiv\ket{P_X}=\sum_{x\in\SF{X}}\Pr[X=x]\ket{x}.
	\end{equation}
	
	Now, consider another random variable $Y$ taking values in the finite set $\SF{Y}$. We regard stochastic matrices mapping $X$ to $Y$ (i.e., matrices of conditional probabilities $\Pr[Y=y|X=x]$) as linear operators with domain $\Delta_{\SF{X}}$ and codomain $\Delta_{\SF{Y}}$: 
	\begin{equation}
		T_{Y|X}\coloneqq\sum_{\substack{x\in\SF{X}\\y\in\SF{Y}}}\Pr[Y=y|X=x]\ketbra{y}{x},
	\end{equation}
	and we denote the matrix elements by
	\begin{equation}
		T_{Y|X}(y;x)\coloneqq\bra{y}T_{Y|X}\ket{x}=\Pr[Y=y|X=x]\quad\forall~x\in\SF{X},\,y\in\SF{Y}.
	\end{equation}
	We then have, by definition of a stochastic matrix,
	\begin{equation}
		\bra{\gamma_{\SF{Y}}}T_{Y|X}=\bra{\gamma_{\SF{X}}},
	\end{equation}
	which captures the fact that the columns of a stochastic matrix sum to one. Then, if $\ket{P_X}\in\Delta_{\SF{X}}$ is a probability distribution corresponding to $X$, then the action of the matrix $T_{Y|X}$ on $\ket{P_X}$, which results in the probability distribution $\ket{P_Y}\in\Delta_{\SF{Y}}$ corresponding to $Y$, can be written as
	\begin{equation}
		\ket{P_Y}\coloneqq T_{Y|X}\ket{P_X}.
	\end{equation}
	In particular, for all $y\in\SF{Y}$,
	\begin{align}
		P_Y(y)&\coloneqq\braket{y}{P_Y}\\
		&=\bra{y}T_{Y|X}\ket{P_X}\\
		&=\sum_{x\in\SF{X}}\Pr[Y=y|X=x]P_X(x).
	\end{align}
	
	Finally, we discuss joint probability distributions. Consider two finite sets $\SF{X}$ and $\SF{Y}$ and the set $\Delta_{\SF{X}\times\SF{Y}}\subset\mathbb{R}^{|\SF{X}\times\SF{Y}|}$ of all (joint) probability distributions on $\SF{X}\times\SF{Y}$. Now, because $\mathbb{R}^{|\SF{X}\times\SF{Y}|}\cong\mathbb{R}^{|\SF{X}|}\otimes\mathbb{R}^{|\SF{Y}|}$, we can regard $\Delta_{\SF{X}\times\SF{Y}}$ as the convex span (convex hull) of tensor product orthonormal vectors $\ket{x}\otimes\ket{y}$, $x\in\SF{X}$, $y\in\SF{Y}$. Thus, every $\ket{Q}_{\SF{XY}}\in\Delta_{\SF{X}\times\SF{Y}}$ can be written as
	\begin{equation}
		\ket{Q}_{\SF{XY}}=\sum_{(x,y)\in\SF{X}\times\SF{Y}}Q_{x,y}\ket{x}\otimes\ket{y}.
	\end{equation}
	We frequently use the abbreviation $\ket{x,y}\equiv\ket{x}\otimes\ket{y}$ in this paper. Then, marginal distributions can be obtained as follows:
	\begin{align}
		\ket{Q}_{\SF{X}}&\coloneqq(\mathbbm{1}_{\SF{X}}\otimes\bra{\gamma_{\SF{Y}}})\ket{Q}_{\SF{XY}}=\sum_{x\in\SF{X}}\left(\sum_{y\in\SF{Y}}Q_{x,y}\right)\ket{x},\\
		\ket{Q}_{\SF{Y}}&\coloneqq(\bra{\gamma_{\SF{X}}}\otimes\mathbbm{1}_{\SF{Y}})\ket{Q}_{\SF{XY}}=\sum_{y\in\SF{Y}}\left(\sum_{x\in\SF{X}}Q_{x,y}\right)\ket{y},
	\end{align}
	where
	\begin{equation}
		\mathbbm{1}_{\SF{X}}\coloneqq\sum_{x\in\SF{X}}\ketbra{x}{x},\quad\mathbbm{1}_{\SF{Y}}\coloneqq\sum_{y\in\SF{Y}}\ketbra{y}{y}.
	\end{equation}
	These concepts for probability distributions defined on two sets can be readily extended to probability distributions defined on sets of the form $\SF{X}_1\times\SF{X}_2\times\dotsb\times\SF{X}_n$ for all $n\geq 2$. 
	
	\bigskip

\toclesslab\subsection{Definitions}{sec-MDPs_definitions}

	A \textit{Markov decision process (MDP)} is a stochastic process that models the evolution of a system with which an agent is allowed to interact. Formally, an MDP is defined as a collection
	\begin{equation}
		\left<\SF{S},\SF{A},\{T^a\}_{a\in\SF{A}},r\right>
	\end{equation}
	consisting of the following elements:
	\begin{itemize}
		\item A set $\SF{S}$ of the allowed \textit{states} of the system. We consider finite state sets throughout this work. The sequence $(S(t):t\in\mathbb{N})$ of random variables taking values in $\SF{S}$ describes the state of the system at all times $t\in\mathbb{N}$.
		
		\item A set $\SF{A}$ of \textit{actions} that the agent is allowed to perform on the system. We consider finite action sets throughout this work. The sequence $(A(t):t\in\mathbb{N})$ of random variables taking values in $\SF{S}$ describes the action taken by the agent at all times $t\in\mathbb{N}$.
		
		\item A set $\{T^a\}_{a\in\SF{A}}$ of \textit{transition matrices}, which are stochastic matrices with domain $\Delta_{\SF{S}}$ and codomain $\Delta_{\SF{S}}$. Specifically,
			\begin{equation}
				T^a=\sum_{s,s'\in\SF{S}}\Pr[S(t+1)=s'|S(t)=s,A(t)=a]\ketbra{s'}{s}
			\end{equation}
			for all $t\in\mathbb{N}$. These matrices determine how the system evolves from one time to the next conditioned on the actions of the agent.
			
		\item A function $r:\SF{S}\times\SF{A}\to\mathbb{R}$ that quantifies the \textit{reward} that the agent receives at every time step based on the current state of the system and the action that it takes.
	\end{itemize}
	
	The \textit{history} up to time $t\in\mathbb{N}$ of an MDP is the random sequence $H(t)\coloneqq(S(1),A(1),\dotsc,A(t-1),S(t))$, with $H(1)=S(1)$. By the Markovian nature of an MDP, the probability distribution of every history $h^t=(s_1,a_1,\dotsc,a_{t-1},s_t)$ is equal to
	\begin{equation}
		\Pr[H(t)=h^t]=\Pr[S(1)=s_1]\prod_{j=1}^{t-1} T^{a_{j}}(s_{j+1};s_{j})d_j(s_{j})(a_{j}),
	\end{equation}
	where
	\begin{equation}\label{eq-decision_function}
		d_j(s_{j})(a_{j})\coloneqq\Pr[A(j)=a_j|S(j)=s_j]
	\end{equation}
	is the probability distribution of actions at time $j$ conditioned on the current state of the system. We refer to $d_j:\SF{S}\times\SF{A}\to[0,1]$ as a \textit{decision function}. Note that $\sum_{a\in\SF{A}}d_j(s)(a)=1$ for all $s\in\SF{S}$. The sequence
	\begin{equation}
		\pi=(d_1,d_2,\dotsc)
	\end{equation}
	of decision functions at all times $t\in\mathbb{N}$ is known as a \textit{policy} of the agent. In the context of this work, policies should be thought of as synonymous with \textit{protocols} for quantum networks.
	
	Given a decision function $d$, we define the following linear operators acting on $\Delta_{\SF{S}}$:
	\begin{equation}\label{eq-decision_function_matrix}
		D_a^d\coloneqq\sum_{s\in\SF{S}}d(s)(a)\ketbra{s}{s},\quad\forall~a\in\SF{A}.
	\end{equation}
	Then, it is straightforward to show that the linear operator
	\begin{equation}\label{eq-transition_matrix_decision}
		P^d\coloneqq\sum_{a\in\SF{A}}T^a D_a^d
	\end{equation}
	from $\Delta_{\SF{S}}$ to $\Delta_{\SF{S}}$ is a stochastic matrix with elements
	\begin{equation}
		\bra{s'}P^d\ket{s}=\Pr[S(t+1)=s'|S(t)=s]
	\end{equation}
	for all $t\in\mathbb{N}$ and all $s,s'\in\SF{S}$.
	
	\begin{remark}
		Observe that for a fixed decision function $d$, the set $\{D_a^d\}_{a\in\SF{A}}$ of linear operators defined in \eqref{eq-decision_function_matrix} forms a positive operator-valued measure (POVM). Indeed, by definition, all of the operators are positive semidefinite; furthermore, by definition of the decision function in \eqref{eq-decision_function},
		\begin{align}
			\sum_{a\in\SF{A}}D_a^d&=\sum_{a\in\SF{A}}\sum_{s\in\SF{S}} d(s)(a)\ketbra{s}{s}\\
			&=\sum_{s\in\SF{S}}\underbrace{\left(\sum_{a\in\SF{A}} d(s)(a)\right)}_{=1~\forall\,s\in\SF{S}}\ketbra{s}{s}\\
			&=\sum_{s\in\SF{S}}\ketbra{s}{s}\\
			&=\mathbbm{1}_{\SF{S}}.
		\end{align}
	\end{remark}
	
	The transition matrices $P^d$ as defined in \eqref{eq-transition_matrix_decision} allow us to determine the probability distribution of the state of the system at every time $t\in\mathbb{N}$ for a given policy. Specifically, for a policy $\pi=(d_1,d_2,\dotsc)$,
	\begin{align}
		\ket{S(t)}_{\pi}&\coloneqq\sum_{s\in\SF{S}}\Pr[S(t)=s]_{\pi}\ket{s}\label{eq-state_time_policy_1}\\
		&=P^{d_{t-1}}\dotsb P^{d_2}P^{d_1}\ket{S(1)},\label{eq-state_time_policy_2}
	\end{align}
	where
	\begin{equation}
		\ket{S(1)}\coloneqq\sum_{s\in\SF{S}}\Pr[S(1)=s]\ket{s}
	\end{equation}
	is the probability distribution for the system at the initial time $t=1$.

\toclesslab\subsubsection{MDPs with absorbing states}{sec-MDPs_absorbing}

	We call a state $s\in\SF{S}$ \textit{absorbing} if $T^a\ket{s}=\ket{s}$ for all $a\in\SF{A}$. In other words, once the system reaches the state $s$ it always stays there, meaning that $P^d\ket{s}=\ket{s}$ for all decision functions $d$. Every state that is not absorbing is called \textit{transient} if there is non-zero probability that, starting from such a state, the system will eventually reach an absorbing state. We can partition the set $\SF{S}$ of all states into disjoint sets: $\SF{S}=\SF{S}_{\text{tra}}\cup\SF{S}_{\text{abs}}$, where $\SF{S}_{\text{abs}}$ is the set of absorbing states and $\SF{S}_{\text{tra}}$ is the set of transient states. We can then rewrite the set $\{\ket{s}\}_{s\in\SF{S}}$ as $\{\ket{0,s}\}_{s\in\SF{S}_{\text{tra}}}\cup\{\ket{1,s}\}_{s\in\SF{S}_{\text{abs}}}$, leading to the following block structure for the transition matrices $T^a$:
	\begin{equation}
		T^a=\ketbra{0}{0}\otimes T_{0\to 0}^a+\ketbra{0}{1}\otimes T_{1\to 0}^a+\ketbra{1}{0}\otimes T_{0\to 1}^a+\ketbra{1}{1}\otimes T_{1\to 1}^a,
	\end{equation}
	where $T_{0\to 0}^a$ is the block describing transitions between transient states, $T_{1\to 0}^a$ is the block describing transition between an absorbing state and a transient state, $T_{0\to 1}^a$ is the block describing transitions between a transient state and an absorbing state, and $T_{1\to 1}^a$ is the block describing transitions between absorbing states. Note that by our definition of an absorbing state, $T_{1\to 0}^a=0$ and $T_{1\to 1}^a=\mathbbm{1}_{\SF{S}_{\text{abs}}}$ for all $a\in\SF{A}$. Similarly, for a decision function $d$, we can write the matrices $D_a^d$, $a\in\SF{A}$, in block form as
	\begin{align}
		D_a^d&=\ketbra{0}{0}\otimes D_a^d(0)+\ketbra{1}{1}\otimes D_a^d(1),\\
		D_a^d(0)&=\sum_{s\in\SF{S}_{\text{trans}}} d(s)(a)\ketbra{s}{s},\\
		D_a^d(1)&=\sum_{s\in\SF{S}_{\text{abs}}}d(s)(a)\ketbra{s}{s}.
	\end{align}
	Consequently, the transition matrix $P^d$ in \eqref{eq-transition_matrix_decision} has the form
	\begin{align}
		P^d&=\ketbra{0}{0}\otimes Q^d+\ketbra{1}{0}\otimes R^d+\ketbra{1}{1}\otimes\mathbbm{1}_{\SF{S}_{\text{abs}}}\\
		&\equiv\begin{pmatrix} Q^d & 0 \\ R^d & \mathbbm{1}_{\SF{S}_{\text{abs}}}\end{pmatrix},\label{eq-transition_matrix_MDP_abs}
	\end{align}
	where
	\begin{align}
		Q^d&=\sum_{a\in\SF{A}}T_{0\to 0}^aD_a^d(0),\label{eq-transition_matrix_MDP_abs_tra}\\
		R^d&=\sum_{a\in\SF{A}}T_{0\to 1}^aD_a^d(0).
	\end{align}

\toclesslab\subsection{Figures of merit}{sec-MDPs_figs_of_merit}

	While the primary figure of merit in a Markov decision process is the expected reward, in this work we are mostly interested in what we call \textit{functions of state} (such as the fidelity) and the absorption time (corresponding to the waiting time for a virtual link).

	\paragraph*{Functions of state.} In this work, we are also interested in functions $f:\textsf{S}\to\mathbb{R}$ of the state of the system. We can associate to such functions the vector
	\begin{equation}
		\ket{f}\coloneqq\sum_{s\in\SF{S}}f(s)\ket{s}.
	\end{equation}
	Then, for a policy $\pi=(d_1,d_2,\dotsc)$, we are interested in the expected value of the random variable $f(S(t))$ for all $t\in\mathbb{N}$, i.e., the quantity
	\begin{equation}
		\mathbb{E}[f(S(t))]_{\pi}=\sum_{s\in\textsf{S}}f(s)\Pr[S(t)=s]_{\pi}.
	\end{equation}
	Using \eqref{eq-state_time_policy_1} and \eqref{eq-state_time_policy_2}, we immediately obtain
	\begin{align}
		\mathbb{E}[f(S(t))]_{\pi}&=\braket{f}{S(t)}_{\pi}\\
		&=\bra{f}P^{d_{t-1}}\dotsb P^{d_2}P^{d_1}\ket{S(1)}.
	\end{align}
	With respect to stationary policies $\pi=(d,d,\dotsc)$, we are also interested in the asymptotic quantity
	\begin{equation}\label{eq-MDP_func_inf}
		\lim_{t\to\infty}\mathbb{E}[f(S(t))]_{(d,d,\dotsc)}=\lim_{t\to\infty}\bra{f}(P^d)^{t-1}\ket{S(1)},
	\end{equation}
	if the limit exists, along with the optimal value
	\begin{equation}\label{eq-MDP_opt_func_inf}
		\sup_d\lim_{t\to\infty}\bra{f}(P^d)^{t-1}\ket{S(1)}.
	\end{equation}
	
	\paragraph*{Expected waiting time to absorption.} Finally, for MDPs with absorbing states, we are interested in the expected waiting time to absorption with respect to stationary policies, i.e., the expected number of time steps needed to reach an absorbing state when starting from a transient state and following a stationary policy. It is a standard result of the theory of Markov chains (see, e.g., Ref.~[\onlinecite[Theorem~9.6.1]{Stewart09_book}]) that in this setting, with a transition matrix as in \eqref{eq-transition_matrix_MDP_abs}, if the initial distribution of states is given by the probability vector $\ket{0,S_{\text{tr}}(1)}$ (entirely in the transient block), then the expected waiting time to absorption with respect to the policy $(d,d,\dotsc)$ is $\bra{\gamma}_{\SF{S}_{\text{tra}}}(\mathbbm{1}_{\SF{S}_{\text{tra}}}-Q^d)^{-1}\ket{S_{\text{tra}}(1)}$. We are then interested in the following optimal value:
	\begin{equation}\label{eq-MDP_opt_absorb_time}
		\inf_d\bra{\gamma}_{\SF{S}_{\text{tra}}}(\mathbbm{1}_{\SF{S}_{\text{tra}}}-Q^d)^{-1}\ket{S_{\text{tra}}(1)}.
	\end{equation}

\toclesslab\subsection{Linear programs}{sec-MDP_linear_prog}

	We now present linear programs for estimating the values of the figures of merit presented in the previous section in the steady-state limit with a time-homogeneous (stationary) policy. Linear programs have been used for MDPs in various different ways~\cite{Der62,KA68,OM68,FS02_book,Put14_book}. The linear programs we consider here are similar to those in the aforementioned references, but we present them here in the notation introduced at the beginning of this section. We start by considering a general MDP, not necessarily with absorbing states.
	
	\begin{proposition}[Linear program for the steady-state expected function value]\label{prop-lin_prog_func_1}
		Consider an MDP as defined in Sec.~\ref{sec-MDPs_definitions} along with a function $f:\SF{S}\to\mathbb{R}$ of the state of the MDP. Among decision functions $d$ for which the limit $\lim_{t\to\infty}(P^d)^{t-1}$ exists, the optimal steady-state expected value of $f$, namely the quantity in \eqref{eq-MDP_opt_func_inf}, is equal to the solution of the following linear program:
		\begin{equation}
			\begin{array}{l l} \textnormal{maximize} & \braket{f}{v} \\[0.2cm] \textnormal{subject to} & 0\leq\ket{w_a}\leq\ket{v}\leq 1\quad\forall~a\in\SF{A}, \\[0.2cm] & \displaystyle\sum_{s\in\SF{S}}\braket{s}{v}=1, \\[0.5cm] & \displaystyle\sum_{a\in\SF{A}}\ket{w_a}=\ket{v}=\sum_{a\in\SF{A}}T^a\ket{w_a}, \end{array}
		\end{equation}
		where the optimization is with respect to $\abs{\SF{S}}$-dimensional vectors $\ket{v}$ and $\ket{w_a}$, $a\in\SF{A}$, and the inequality constraints are component-wise. Every set of feasible points $\ket{v}$, $\ket{w_a}$, $a\in\SF{A}$, of this linear program defines a stationary policy with decision function $d$ as follows:
		\begin{equation}\label{eq-lin_prog_func_1_decision_func}
			d(s)(a)=\frac{\braket{s}{w_a}}{\braket{s}{v}},\quad\forall~s\in\SF{S},\,a\in\SF{A}.
		\end{equation}
	\end{proposition}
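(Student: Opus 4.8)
The plan is to establish equality by proving two inequalities, using the correspondence \eqref{eq-lin_prog_func_1_decision_func} as the explicit bridge between feasible points of the linear program and stationary policies. Throughout I would invoke the ergodicity hypothesis (the same one flagged in Proposition~\ref{thm-opt_elem_link_fidelity_avg_inf_LP}), which guarantees that every stationary policy with decision function $d$ for which $\lim_{t\to\infty}(P^d)^{t-1}$ exists induces a \emph{unique} stationary distribution $\ket{v}$ satisfying $P^d\ket{v}=\ket{v}$, so that the steady-state value $\lim_{t\to\infty}\bra{f}(P^d)^{t-1}\ket{S(1)}=\braket{f}{v}$ is independent of the initial distribution $\ket{S(1)}$.

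For the inequality $\textnormal{(LP optimum)}\geq\textnormal{(MDP optimum)}$, I would take an arbitrary stationary policy $d$ whose limit exists, let $\ket{v}$ be its stationary distribution, and set $\ket{w_a}\coloneqq D_a^d\ket{v}$. Feasibility follows directly from the definition of $D_a^d$ in \eqref{eq-decision_function_matrix}: the componentwise bounds $0\leq\ket{w_a}\leq\ket{v}\leq 1$ come from $0\leq d(s)(a)\leq 1$ together with $\ket{v}$ being a probability vector; normalization $\braket{\gamma}{v}=1$ is automatic; the equality $\sum_{a}\ket{w_a}=\ket{v}$ is precisely the POVM completeness relation $\sum_{a}D_a^d=\mathbbm{1}_{\SF{S}}$ applied to $\ket{v}$; and $\sum_{a}T^a\ket{w_a}=P^d\ket{v}=\ket{v}$ is stationarity, using $P^d=\sum_{a}T^aD_a^d$ from \eqref{eq-transition_matrix_decision}. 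Since the objective value $\braket{f}{v}$ equals the steady-state value of this policy, the LP optimum is at least the supremum in \eqref{eq-MDP_opt_func_inf}.

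For the reverse inequality I would start from an arbitrary feasible point $(\ket{v},\{\ket{w_a}\}_{a\in\SF{A}})$ and define $d$ through \eqref{eq-lin_prog_func_1_decision_func}. The constraints $0\leq\ket{w_a}\leq\ket{v}$ and $\sum_{a}\ket{w_a}=\ket{v}$ ensure that $d(s)(\cdot)$ is a genuine probability distribution on $\SF{A}$ whenever $\braket{s}{v}\neq 0$. The key identity is $D_a^d\ket{v}=\ket{w_a}$, which holds because $d(s)(a)\braket{s}{v}=\braket{s}{w_a}$ for every $s$ (true by definition when $\braket{s}{v}\neq 0$, and with both sides vanishing otherwise). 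Substituting into $P^d=\sum_{a}T^aD_a^d$ and using $\sum_{a}T^a\ket{w_a}=\ket{v}$ gives $P^d\ket{v}=\ket{v}$, so $\ket{v}$ is stationary for the induced chain; by ergodicity it is the unique stationary distribution and the steady-state value of this policy equals $\braket{f}{v}$. Hence every feasible objective value is attained by a policy, which yields $\textnormal{(MDP optimum)}\geq\textnormal{(LP optimum)}$ and, combined with the first part, equality.

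The main obstacle I anticipate is the bookkeeping around the initial distribution and degenerate states. The linear program references neither $\ket{S(1)}$ nor any periodicity data, whereas \eqref{eq-MDP_opt_func_inf} does; reconciling the two is exactly what the ergodicity (unichain, aperiodic) hypothesis buys, so I would take care to invoke it explicitly to conclude that the fixed point $\ket{v}$ is genuinely the limit $\lim_{t\to\infty}(P^d)^{t-1}\ket{S(1)}$ rather than merely one of several possible limiting distributions. The secondary technical point is the case $\braket{s}{v}=0$, where \eqref{eq-lin_prog_func_1_decision_func} is ill-defined; I would dispose of it by observing that $0\leq\ket{w_a}\leq\ket{v}$ forces $\braket{s}{w_a}=0$ there, so the identity $D_a^d\ket{v}=\ket{w_a}$ survives regardless of how $d(s)$ is chosen on such states, and these states carry no weight in $\ket{v}$.
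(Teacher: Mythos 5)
Your proposal is correct and follows essentially the same route as the paper's proof: the identification $\ket{w_a}=D_a^d\ket{v}$, the feasibility checks via the POVM property of $\{D_a^d\}_a$ and stationarity $P^d\ket{v}=\ket{v}$, and the recovery of $d$ from $\braket{s}{w_a}/\braket{s}{v}$ are all the same ingredients. You are in fact somewhat more explicit than the paper about the converse direction (that every feasible point induces a policy attaining the objective value, which the paper compresses into ``by uniqueness of the stationary probability vector the result follows''), and your handling of the $\braket{s}{v}=0$ case is a harmless variant of the paper's appeal to strict positivity of the stationary distribution.
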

	
	\begin{proof}
		By the assumption that $\lim_{t\to\infty}(P^d)^{t-1}$ exists and is unique, we have that $\lim_{t\to\infty}(P^d)^{t-1}=\ketbra{v}{\gamma}$ for some probability vector $\ket{v}$ such that $\braket{s}{v}\in(0,1]$ for all $s\in\SF{S}$ and $P^d\ket{v}=\ket{v}$. (Note that all elements $\braket{s}{v}$ are strictly greater than zero; see, e.g., Ref.~[\onlinecite[Theorem~A.2]{Put14_book}].) Therefore, $\lim_{t\to\infty}\mathbb{E}[f(S(t))]_{(d,d,\dotsc)}=\braket{f}{v}$. Furthermore, using the fact that $P^d=\sum_{a\in\SF{A}}T^aD_a^d$, we have $\sum_{a\in\SF{A}}T_aD_a^d\ket{v}=\ket{v}$. Now, let $\ket{w_a}\coloneqq D_a^d\ket{v}$. By recalling that $D_a^d=\sum_{s\in\SF{S},a\in\SF{A}}d(s)(a)\ketbra{s}{s}$, we see that $\braket{s}{w_a}=d(s)(a)\braket{s}{v}$, so that the elements $\braket{s}{w_a}$ can be thought of as the joint probabilities $\Pr[S(t)=s,A(t)=a]$ (in the steady state). This means that $\braket{s}{w_a}\in[0,1]$, and also that $\braket{s}{w_a}\leq\braket{s}{v}$, for all $s\in\SF{S}$ and $a\in\SF{A}$. Then, using the fact that $\sum_{a\in\SF{A}}D_a^d=\mathbbm{1}_{\SF{S}}$, we obtain $\sum_{a\in\SF{A}}\ket{w_a}=\ket{v}$. By uniqueness of the stationary probability vector $\ket{v}$, the result follows.
		
		The construction of the decision function $d$ in \eqref{eq-lin_prog_func_1_decision_func} follows from Ref.~[\onlinecite[Theorem~8.8.2]{Put14_book}]. Both $\braket{s}{w_a}$ and $\braket{s}{v}$ are obtained from the linear program, and because $\braket{s}{v}$ is strictly positive, we can divide in order to get $d(s)(a)$, and the condition $\sum_{a\in\SF{A}}\ket{w_a}$ guarantees that $\sum_{a\in\SF{A}}d(s)(a)=1$ for all $s\in\SF{S}$, as required for a conditional probability. This completes the proof.
	\end{proof}

	We now consider the optimal expected value of a function $f:\textsf{S}\to\mathbb{R}$ in the steady-state limit when there are absorbing states in the MDP. We now show how to obtain an upper bound using a linear program.
	
	\begin{proposition}[Linear program for the steady-state expected function value for an MDP with absorbing states]\label{prop-lin_prog_func_2}
		Consider an MDP with absorbing states, as defined in Sec.~\ref{sec-MDPs_absorbing}, along with a function $f:\SF{S}\to\mathbb{R}$ of the state of the MDP. Then, the optimal steady-state expected value of $f$, namely the quantity in \eqref{eq-MDP_opt_func_inf}, is given by the solution to the following linear program:
		\begin{equation}\label{eq-lin_prog_func_2}
			\begin{array}{l l} \textnormal{maximize} & \braket{f}{1,x} \\[0.2cm] \textnormal{subject to} & 0\leq\ket{w_a}\leq\ket{x}\quad\forall~a\in\textnormal{\textsf{A}},\\[0.2cm] & 0\leq\ket{v_a}\leq\ket{y}\quad\forall~a\in\textnormal{\textsf{A}},\\[0.2cm] & \displaystyle\sum_{a\in\textnormal{\textsf{A}}}\sum_{i=0}^1 T_{0\to i}^a\ket{w_a}=\ket{0}\ket{x},\\[0.5cm] & \displaystyle\sum_{a\in\textnormal{\textsf{A}}}\ket{w_a}=\ket{x},\\[0.5cm] & \displaystyle\ket{y}-\sum_{a\in\textnormal{\textsf{A}}}T_{0\to 0}^a\ket{v_a}=\ket{S_{\text{tra}}(1)},\\[0.5cm] & \displaystyle \sum_{a\in\textnormal{\textsf{A}}}\ket{v_a}=\ket{y},\\[0.5cm] & \displaystyle\sum_{a\in\textnormal{\textsf{A}}}T_{0\to 1}^a\ket{v_a}=\ket{x}, \end{array}
		\end{equation}
		where $\ket{S_{\text{tra}}(1)}$ is the initial $\abs{\SF{S}_{\text{tra}}}$-dimensional probability vector of the MDP (entirely in the transient block), and the optimization is with respect to $\abs{\SF{S}_{\text{tra}}}$-dimensional vectors $\ket{x},\ket{y},\ket{v_a},\ket{w_a}$, $a\in\SF{A}$, and the inequality constraints are component-wise. Every set of feasible points $\ket{x},\ket{y},\ket{v_a},\ket{w_a}$, $a\in\SF{A}$, of this linear program defines a stationary policy with decision function $d$ for the transient states as follows:
		\begin{equation}\label{eq-lin_prog_func_2_decision_function}
			\forall~s\in\SF{S}_{\text{tra}},\,a\in\SF{A}: d(s)(a)=\frac{\braket{s}{w_a}}{\braket{s}{x}}.
		\end{equation}
		If $\braket{s}{x}=0$, as well as for $s\in\SF{S}_{\text{abs}}$, we can set $d(s)$ to an arbitrary probability distribution.
	\end{proposition}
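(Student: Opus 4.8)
The plan is to reduce the steady-state quantity in \eqref{eq-MDP_opt_func_inf} to the $f$-average over the \emph{absorption distribution} of the chain, and then to read off the constraints of the linear program \eqref{eq-lin_prog_func_2} as the flow-conservation (occupation-measure) equations of an absorbing Markov chain. First I would use the block-triangular form \eqref{eq-transition_matrix_MDP_abs} of $P^d$ to compute $(P^d)^{t-1}$: its transient--transient block is $(Q^d)^{t-1}$ and its transient--absorbing block is $R^d\sum_{k=0}^{t-2}(Q^d)^k$. Starting from $\ket{0}\ket{S_{\text{tra}}(1)}$, supported entirely on the transient block, the transient part evolves as $(Q^d)^{t-1}\ket{S_{\text{tra}}(1)}$ and, for proper policies (those with $\lim_{t\to\infty}(Q^d)^{t-1}=0$), decays to zero, while the absorbing part converges to $R^d(\mathbbm{1}_{\SF{S}_{\text{tra}}}-Q^d)^{-1}\ket{S_{\text{tra}}(1)}=:\ket{x}$, which is precisely the distribution over absorbing states at absorption. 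Consequently $\lim_{t\to\infty}\mathbb{E}[f(S(t))]_{(d,d,\dotsc)}=\braket{f}{1,x}$, matching the objective of the program; I would also note that when $f$ vanishes on the transient states, as in the application \eqref{eq-two_elem_link_fidelity}, this identity holds even without properness, since the surviving transient mass never contributes to the $f$-average.

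Next I would exhibit the two-way correspondence between stationary policies and feasible points. Given a policy $d$, set the expected-visit vector $\ket{y}\coloneqq(\mathbbm{1}_{\SF{S}_{\text{tra}}}-Q^d)^{-1}\ket{S_{\text{tra}}(1)}=\sum_{t\geq 1}(Q^d)^{t-1}\ket{S_{\text{tra}}(1)}$ and the state--action occupation $\ket{v_a}\coloneqq D_a^d(0)\ket{y}$. Using $Q^d=\sum_a T_{0\to 0}^a D_a^d(0)$, the identity $(\mathbbm{1}-Q^d)\ket{y}=\ket{S_{\text{tra}}(1)}$ turns into the constraint $\ket{y}-\sum_a T_{0\to 0}^a\ket{v_a}=\ket{S_{\text{tra}}(1)}$; the relations $\sum_a\ket{v_a}=\ket{y}$ and $0\leq\ket{v_a}\leq\ket{y}$ follow from $\sum_a D_a^d(0)=\mathbbm{1}_{\SF{S}_{\text{tra}}}$; and the flux into the absorbing block, $R^d\ket{y}=\sum_a T_{0\to 1}^a\ket{v_a}$, recovers $\ket{x}$, giving the constraint $\sum_a T_{0\to 1}^a\ket{v_a}=\ket{x}$. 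The remaining variables carry the same policy against the distribution $\ket{x}$ through $\braket{s}{w_a}=d(s)(a)\braket{s}{x}$, so that $\sum_a\ket{w_a}=\ket{x}$, $0\leq\ket{w_a}\leq\ket{x}$, and the balance relation $\sum_a\sum_{i=0}^1 T_{0\to i}^a\ket{w_a}=\ket{0}\ket{x}$ all hold; this produces a feasible point whose objective equals the steady-state value.

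For the converse I would recover, from any feasible point, a decision function on the transient states by $d(s)(a)=\braket{s}{w_a}/\braket{s}{x}$ whenever $\braket{s}{x}>0$ and arbitrarily otherwise (and arbitrarily on $\SF{S}_{\text{abs}}$, which is legitimate by Remark~\ref{rem-decision_func_transient}); the constraint $\sum_a\ket{w_a}=\ket{x}$ guarantees $\sum_a d(s)(a)=1$. That the vectors $\ket{x}$, $\ket{y}$, $\ket{v_a}$ delivered by the constraints are genuinely the absorption distribution and occupation measure of this recovered policy is exactly the step justified, for the ergodic case, by Proposition~\ref{prop-lin_prog_func_1} and Ref.~[\onlinecite[Theorem~8.8.2]{Put14_book}]; here one applies the same reasoning to the substochastic transient block $Q^d$, with the exit flux $R^d$ playing the role the full stochastic matrix $P^d$ plays there. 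Combining the two directions then shows the optimal value of \eqref{eq-lin_prog_func_2} equals $\sup_d\lim_{t\to\infty}\mathbb{E}[f(S(t))]$, and that an optimal feasible point yields an optimal policy.

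The step I expect to be the genuine obstacle is the reconciliation of the two coupled blocks $(\ket{w_a},\ket{x})$ and $(\ket{v_a},\ket{y})$ through the shared vector $\ket{x}$: one must verify that the policy read off from $(\ket{w_a},\ket{x})$ is exactly the policy whose forward occupation measure is $(\ket{v_a},\ket{y})$, so that the $\ket{x}$ entering the objective really is the absorption distribution of the recovered policy and not merely an unrelated feasible vector. Pinning down precisely what the balance constraint $\sum_a\sum_{i=0}^1 T_{0\to i}^a\ket{w_a}=\ket{0}\ket{x}$ enforces, together with the careful treatment of states with $\braket{s}{x}=0$ (never visited, hence policy-unconstrained) and the verification that $\rho(Q^d)<1$ on the relevant support, is the delicate bookkeeping on which the equivalence rests.
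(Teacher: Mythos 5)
Your proposal is correct and follows essentially the same route as the paper's proof: the block-triangular computation of $(P^d)^{t-1}$ giving the absorption distribution $\ket{x}=R^d(\mathbbm{1}_{\SF{S}_{\text{tra}}}-Q^d)^{-1}\ket{S_{\text{tra}}(1)}$, the occupation vector $\ket{y}=(\mathbbm{1}_{\SF{S}_{\text{tra}}}-Q^d)^{-1}\ket{S_{\text{tra}}(1)}$, and the assignments $\ket{w_a}=D_a^d(0)\ket{x}$, $\ket{v_a}=D_a^d(0)\ket{y}$ are exactly the paper's construction. You are in fact somewhat more explicit than the paper about the converse direction (recovering a policy from a feasible point and verifying that $\ket{x}$ really is its absorption distribution) and about the properness condition $\rho(Q^d)<1$, both of which the paper handles only by appeal to the analogous ergodic argument of Proposition~\ref{prop-lin_prog_func_1}.
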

	
	\begin{proof}
		For every decision function $d$, for the transition matrix in \eqref{eq-transition_matrix_MDP_abs} it is known that~\cite{Stewart09_book}
		\begin{equation}
			\lim_{t\to\infty}(P^d)^{t-1}=\begin{pmatrix} 0 & 0 \\ R^d(\mathbbm{1}_{\SF{S}_{\text{tra}}}-Q^d)^{-1} & \mathbbm{1}_{\mathsf{S}_{\text{abs}}} \end{pmatrix}.
		\end{equation}
		Therefore,
		\begin{align}
			\lim_{t\to\infty}\mathbb{E}[f(S(t))]_{(d,d,\dotsc)}&=\bra{f}\left(\ketbra{1}{0}\otimes(R^d(\mathbbm{1}_{\SF{S}_{\text{tra}}}-Q^d)^{-1})+\ketbra{1}{1}\otimes\mathbbm{1}_{\mathsf{S}_{\text{abs}}}\right)\ket{0}\ket{S_{\text{tra}}(1)}\\
			&=\bra{f}\left(\ketbra{1}{0}\otimes(R^d(\mathbbm{1}_{\SF{S}_{\text{tra}}}-Q^d)^{-1})\right)\ket{0}\ket{S_{\text{tra}}(1)}
		\end{align}
		Now, let
		\begin{align}
			\ket{x}&=R^d(\mathbbm{1}_{\SF{S}_{\text{tra}}}-Q^d)^{-1}\ket{S_{\text{tra}}(1)},\\
			\ket{y}&=(\mathbbm{1}_{\SF{S}_{\text{tra}}}-Q^d)^{-1}\ket{S_{\text{tra}}(1)}.
		\end{align}
		Then, it is easy to verify that
		\begin{align}
			P^d\ket{0}\ket{x}&=\ket{0}\ket{x},\label{eq-lin_prog_func_2_pf1}\\
			(\mathbbm{1}_{\SF{S}_{\text{tra}}}-Q^d)\ket{y}&=\ket{S_{\text{tra}}(1)},\label{eq-lin_prog_func_2_pf2}\\
			R^d\ket{y}&=\ket{x}.\label{eq-lin_prog_func_2_pf3}
		\end{align}
		Then, as in the proof of Proposition~\ref{prop-lin_prog_func_1}, we define
		\begin{align}
			\ket{w_a}&=D_a^d(0)\ket{x},\\
			\ket{v_a}&=D_a^d(0)\ket{y},
		\end{align}
		for all $a\in\SF{A}$. Then, by the same arguments as in the proof of Proposition~\ref{prop-lin_prog_func_1}, we have that $0\leq\ket{w_a}\leq\ket{x}$ and $0\leq\ket{v_a}\ket{y}$ for all $a\in\SF{A}$, and $\sum_{a\in\SF{A}}\ket{w_a}=\ket{x}$, $\sum_{a\in\SF{A}}\ket{v_a}=\ket{y}$. Combining the constraints in \eqref{eq-lin_prog_func_2_pf1}--\eqref{eq-lin_prog_func_2_pf3} along with the definitions and constraints for the vectors $\ket{w_a}$ and $\ket{v_a}$ leads to the constraints in \eqref{eq-lin_prog_func_2}. Optimizing with respect to these constraints therefore results in a value that cannot be less than the value in \eqref{eq-MDP_opt_func_inf}, leading to the desired result. The construction of the decision function in \eqref{eq-lin_prog_func_2_decision_function} results from the definition of $\ket{w_a}$ and the reasoning analogous to that given in the proof of Proposition~\ref{prop-lin_prog_func_1}. This completes the proof.
	\end{proof} 
	
	Finally, we show how to bound the expected absorption time of an MDP with absorbing states using a linear program.
	
	\begin{proposition}[Linear program for the expected waiting time to absorption]\label{prop-lin_prog_func_3}
		Consider an MDP with absorbing states, as defined in Sec.~\ref{sec-MDPs_absorbing}. The minimum expected waiting time to reach an absorbing state, namely the quantity in \eqref{eq-MDP_opt_absorb_time}, is given by the following linear program:
		\begin{equation}\label{eq-lin_prog_func_3}
			\begin{array}{l l} \textnormal{minimize} & \braket{\gamma}{x} \\[0.2cm] \textnormal{subject to} & 0\leq\ket{w_a}\leq\ket{x}\quad\forall~a\in\textnormal{\textsf{A}}, \\[0.3cm] & \displaystyle\ket{x}-\sum_{a\in\textnormal{\textsf{A}}}T_{0\to 0}^a\ket{w_a}=\ket{S_{\text{tra}}(1)}, \\[0.5cm] & \displaystyle\sum_{a\in\textnormal{\textsf{A}}}\ket{w_a}=\ket{x}, \end{array}
		\end{equation}
		where $\ket{S_{\text{tra}}(1)}$ is the initial $\abs{\SF{S}_{\text{tra}}}$-dimensional probability vector of the MDP (entirely in the transient block), and the optimization is with respect to $\abs{\SF{S}_{\text{tra}}}$-dimensional vectors $\ket{x}$, $\ket{w_a}$, $a\in\SF{A}$, and the inequality constraints are component-wise. Every set of feasible points $\ket{x}$, $\ket{w_a}$, $a\in\SF{A}$, of this linear program defines a stationary policy with decision function $d$ for the transient states as follows:
		\begin{equation}\label{eq-lin_prog_func_3_decision_func}
			d(s)(a)=\frac{\braket{s}{w_a}}{\braket{s}{x}}\quad\forall~s\in\SF{S}_{\text{tra}},\,a\in\SF{A}.
		\end{equation}
		If $\braket{s}{x}=0$, then we can set $d(s)$ to be an arbitrary probability distribution.
	\end{proposition}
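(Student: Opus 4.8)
The plan is to mirror the proof of Proposition~\ref{prop-lin_prog_func_2}, exploiting the standard Markov-chain fact (Ref.~[\onlinecite[Theorem~9.6.1]{Stewart09_book}]) that, for a fixed stationary policy $(d,d,\dotsc)$ whose transition matrix has the block form in \eqref{eq-transition_matrix_MDP_abs}, the expected waiting time to absorption from the transient initial distribution $\ket{S_{\text{tra}}(1)}$ equals $\bra{\gamma}_{\SF{S}_{\text{tra}}}(\mathbbm{1}_{\SF{S}_{\text{tra}}}-Q^d)^{-1}\ket{S_{\text{tra}}(1)}$, with the fundamental matrix admitting the Neumann expansion $(\mathbbm{1}_{\SF{S}_{\text{tra}}}-Q^d)^{-1}=\sum_{k\geq 0}(Q^d)^k$ whenever every transient state reaches an absorbing state under $d$. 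Equivalently, writing the absorption time as $N$, one has $\mathbb{E}[N]=\sum_{t\geq 0}\bra{\gamma}(Q^d)^t\ket{S_{\text{tra}}(1)}$, since $\bra{\gamma}(Q^d)^t\ket{S_{\text{tra}}(1)}$ is exactly the probability of still being transient at step $t$. I would establish the two inequalities separately, using the decision-function dictionary $Q^d=\sum_{a\in\SF{A}}T_{0\to0}^aD_a^d(0)$ together with $\sum_{a\in\SF{A}}D_a^d(0)=\mathbbm{1}_{\SF{S}_{\text{tra}}}$ and the fact that each $D_a^d(0)$ is diagonal with entries in $[0,1]$.

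For the bound $\text{LP}\leq\inf_d$, given any $d$ with $\mathbbm{1}_{\SF{S}_{\text{tra}}}-Q^d$ invertible I set $\ket{x}\coloneqq(\mathbbm{1}_{\SF{S}_{\text{tra}}}-Q^d)^{-1}\ket{S_{\text{tra}}(1)}$ and $\ket{w_a}\coloneqq D_a^d(0)\ket{x}$. Nonnegativity of the fundamental matrix and of $\ket{S_{\text{tra}}(1)}$ gives $\ket{x}\geq 0$; the properties of $D_a^d(0)$ then yield $0\leq\ket{w_a}\leq\ket{x}$ and $\sum_{a\in\SF{A}}\ket{w_a}=\ket{x}$. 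Moreover $\sum_{a\in\SF{A}}T_{0\to0}^a\ket{w_a}=Q^d\ket{x}=\ket{x}-\ket{S_{\text{tra}}(1)}$ is precisely the remaining equality constraint, so this point is feasible with objective $\braket{\gamma}{x}$ equal to the waiting time of $d$. Hence the LP minimum is at most the waiting time of every such $d$, i.e.\ at most $\inf_d$.

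For the reverse bound $\inf_d\leq\text{LP}$, I take any feasible $(\ket{x},\{\ket{w_a}\}_{a\in\SF{A}})$ and define $d$ by \eqref{eq-lin_prog_func_3_decision_func}, namely $d(s)(a)=\braket{s}{w_a}/\braket{s}{x}$ on transient $s$ with $\braket{s}{x}>0$ and arbitrarily where $\braket{s}{x}=0$; the constraint $\sum_{a\in\SF{A}}\ket{w_a}=\ket{x}$ makes this a bona fide conditional distribution, and $0\leq\ket{w_a}\leq\ket{x}$ keeps each value in $[0,1]$. A short computation gives $D_a^d(0)\ket{x}=\ket{w_a}$ (components with $\braket{s}{x}=0$ vanish because $0\leq\ket{w_a}\leq\ket{x}$ forces $\braket{s}{w_a}=0$), so the equality constraint rearranges to $(\mathbbm{1}_{\SF{S}_{\text{tra}}}-Q^d)\ket{x}=\ket{S_{\text{tra}}(1)}$. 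The expected waiting time of this $d$ is then bounded above by the objective.

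The main obstacle is exactly this last step: an arbitrary feasible point might induce a $d$ under which some states are never absorbed, so that $Q^d$ has spectral radius $1$ and the fundamental-matrix identity fails to apply verbatim. I would sidestep the invertibility question by a monotonicity argument rather than inverting. From $\ket{x}=\ket{S_{\text{tra}}(1)}+Q^d\ket{x}$ with $\ket{x}\geq 0$ and $Q^d\geq 0$ entrywise, iterating gives $\ket{x}\geq\sum_{t=0}^{n}(Q^d)^t\ket{S_{\text{tra}}(1)}$ for every $n$; the partial sums are componentwise nondecreasing and dominated by $\ket{x}$, hence convergent, so $\mathbb{E}[N]=\bra{\gamma}\sum_{t\geq 0}(Q^d)^t\ket{S_{\text{tra}}(1)}\leq\braket{\gamma}{x}$. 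Thus the waiting time of the constructed $d$ is at most the objective at any feasible point, giving $\inf_d\leq\text{LP}$; combined with the first bound this yields equality, and the feasibility relation $(\mathbbm{1}_{\SF{S}_{\text{tra}}}-Q^d)\ket{x}=\ket{S_{\text{tra}}(1)}$ shows the inequality is in fact saturated on the sub-MDP reachable from the support of $\ket{S_{\text{tra}}(1)}$, on which $Q^d$ is genuinely substochastic, exactly as in the companion Proposition~\ref{prop-lin_prog_func_2}.
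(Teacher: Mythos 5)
Your proposal is correct and, for the main direction, follows exactly the paper's route: given a decision function $d$, set $\ket{x}=(\mathbbm{1}_{\SF{S}_{\text{tra}}}-Q^d)^{-1}\ket{S_{\text{tra}}(1)}$ and $\ket{w_a}=D_a^d(0)\ket{x}$, check feasibility, and conclude that the LP value is at most the optimal expected absorption time. Where you go beyond the paper is in the reverse inequality. The paper's proof stops at observing that a feasible point induces a valid decision function via \eqref{eq-lin_prog_func_3_decision_func}; it does not argue that the objective $\braket{\gamma}{x}$ at an arbitrary feasible point dominates the waiting time of that induced policy, nor does it address the possibility that $\mathbbm{1}_{\SF{S}_{\text{tra}}}-Q^d$ fails to be invertible for the induced $d$. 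Your monotonicity argument --- iterating $\ket{x}=\ket{S_{\text{tra}}(1)}+Q^d\ket{x}$ with entrywise nonnegativity to get $\ket{x}\geq\sum_{t=0}^{n}(Q^d)^t\ket{S_{\text{tra}}(1)}$ for all $n$, then pairing with $\bra{\gamma}$ and using the tail-sum formula for the expected absorption time --- closes this gap cleanly, and as a byproduct shows that any feasible point forces the induced policy to have finite expected absorption time. This is the same underlying identification of LP variables with the fundamental matrix, but your version is the more complete argument; the paper's ``it is then clear that'' step is precisely what your Neumann-series bound makes rigorous.
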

	
	\begin{proof}
		We start with the fact that, for a given decision function $d$, the expected waiting time to absorption is given by $\bra{\gamma}_{\SF{S}_{\text{tra}}}(\mathbbm{1}_{\SF{S}_{\text{tra}}}-Q^d)^{-1}\ket{S_{\text{tra}}(1)}$. Now, let
		\begin{equation}
			\ket{x}=(\mathbbm{1}_{\SF{S}_{\text{tra}}}-Q^d)^{-1}\ket{S_{\text{tra}}(1)}.
		\end{equation}
		Then, we have that $(\mathbbm{1}_{\SF{S}_{\text{tra}}}-Q^d)\ket{x}=\ket{S_{\text{tra}}(1)}$. Using the definition of $Q^d$ in \eqref{eq-transition_matrix_MDP_abs_tra}, we obtain $\ket{x}-\sum_{a\in\SF{A}}T_{0\to 0}^a\ket{w_a}=\ket{S_{\text{tra}}(1)}$, where $\ket{w_a}=D_a^d(0)\ket{x}$, $a\in\SF{A}$. The definition of $\ket{w_a}$ is the same as in the proof of Proposition~\ref{prop-lin_prog_func_2}, thus for the same reasons as in that proof we have that $0\leq\ket{w_a}\leq\ket{x}$ for all $a\in\SF{A}$ and $\sum_{a\in\SF{A}}\ket{w_a}=\ket{x}$. It is then clear that, by optimizing the quantity $\braket{\gamma}{x}$ with respect to $\ket{x}$ and $\ket{w_a}$, $a\in\SF{A}$, the result can be no greater than the optimal expected time to reach an absorbing state, leading to the desired result. Then, given feasible points $\ket{x}$ and $\ket{w_a}$, $a\in\SF{A}$, the function $d$ defined by \eqref{eq-lin_prog_func_3_decision_func} is a valid decision function whenever $\braket{s}{x}$ is non-zero. This completes the proof. 
	\end{proof}

\toclesslab\section{Quantum states and channels}{sec-prelim_q_states_chan}
	
	In this section, we summarize some standard material on quantum states and channels, which can be found in Refs.~[\onlinecite{NC00_book,Hol12_book,Wilde17_book,Wat18_book,KW20_book}]. Given a quantum system $A$ with associated Hilbert space $\SF{H}_A$, the \textit{quantum state} of $A$ is given by a density operator acting on $\SF{H}_A$: a linear operator $\rho_A:\SF{H}_A\to\SF{H}_A$ that is positive semi-definite and has unit trace, i.e., $\rho_A\geq 0$ and $\Tr[\rho_A]=1$.
	
	A type of quantum state that we frequently encounter in this work is a \textit{classical-quantum} state, which is a quantum state of the form
	\begin{equation}\label{eq-classical_quantum_state}
		\rho_{XA}=\sum_{x\in\SF{X}}p(x)\ketbra{x}{x}_X\otimes\rho_A^x,
	\end{equation}
	where $\SF{X}$ is a finite set, $p:\SF{X}\to[0,1]$ is a probability mass function, and $\{\rho_A^x\}_{x\in\SF{X}}$ is a set of quantum states. Classical-quantum states can be used to model scenarios in which classical information accompanies the state of a quantum system. Specifically, if a quantum system $A$ is prepared in a state from the set $\{\rho_A^x\}_{x\in\SF{X}}$ according to the probability distribution defined by $p$, then knowledge of the label $x\in\SF{X}$ is stored in the classical register~$X$.
	
	Given two quantum states $\rho$ and $\sigma$, their \textit{fidelity} is defined to be ~\cite{Uhl76}
	\begin{equation}
		F(\rho,\sigma)\coloneqq\left(\Tr\!\left[\sqrt{\sqrt{\rho}\sigma\sqrt{\rho}}\right]\right)^2.
	\end{equation}
	The fidelity quantifies the closeness of two quantum states. In particular, $F(\rho,\sigma)=1$ if and only if $\rho=\sigma$, and $F(\rho,\sigma)=0$ if and only if $\rho$ and $\sigma$ are supported on orthogonal subspaces. If one of the states, say $\sigma$, is pure, then it is straightforward to show that
	\begin{equation}
		F(\rho,\ketbra{\psi}{\psi})=\bra{\psi}\rho\ket{\psi}.
	\end{equation}
	The fidelity is also \textit{multiplicative}, meaning that
	\begin{equation}\label{eq-fidelity_multiplicative}
		F(\rho_1\otimes\rho_2,\sigma_1\otimes\sigma_2)=F(\rho_1,\sigma_1)F(\rho_2,\sigma_2),
	\end{equation}
	for all states $\rho_1,\rho_2,\sigma_1,\sigma_2$.
	
	A \textit{quantum channel} is a mathematical description of the evolution of a quantum system. Let $\Lin(\SF{H}_A)$ denote the vector space of linear operators acting on the Hilbert space $\SF{H}_A$. A linear map $\mathcal{T}:\Lin(\SF{H}_A)\to\Lin(\SF{H}_B)$ is often called a \textit{superoperator}, and it is such that
	\begin{equation}
		\mathcal{T}(\alpha X+\beta Y)=\alpha\mathcal{T}(X)+\beta\mathcal{T}(Y)
	\end{equation}
	for all $\alpha,\beta\in\mathbb{C}$ and all $X,Y\in\Lin(\SF{H}_A)$. It is often helpful to explicitly indicate the input and output Hilbert spaces of a superoperator $\mathcal{T}:\Lin(\SF{H}_A)\to\Lin(\SF{H}_B)$ by writing $\mathcal{T}_{A\to B}$. The identity superoperator is denoted by $\id_A$, and it satisfies $\id_A(X)=X$ for all $X\in\Lin(\SF{H}_A)$.
	
	A quantum channel $\mathcal{N}_{A\to B}$ is a linear, \textit{completely positive}, and \textit{trace-preserving} superoperator acting on the vector space $\Lin(\SF{H}_A)$ of linear operators of the Hilbert space $\SF{H}_A$ of the quantum system $A$. Given an input state $\rho_A$ of the system $A$, the output is the state of a new quantum system $B$ given by $\mathcal{N}_{A\to B}(\rho_A)$.
	\begin{itemize}
		\item A superoperator $\mathcal{N}$ is \textit{completely positive} if the map $\id_{k}\otimes\mathcal{N}$ is positive for all $k\in\mathbb{N}$, where $\id_k:\Lin(\mathbb{C}^k)\to\Lin(\mathbb{C}^k)$ is the identity superoperator. In other words, $(\id_k\otimes\mathcal{N})(X)\geq 0$ for every linear operator $X\geq 0$.
		\item A superoperator $\mathcal{N}$ is \textit{trace preserving} if $\Tr[\mathcal{N}(X)]=\Tr[X]$ for every linear operator $X$.
	\end{itemize}

\toclesslab\subsection{Quantum instruments}{sec-quantum_instruments}
	
	Let $A$ be a quantum system with associated Hilbert space $\SF{H}_A$. A \textit{measurement} of $A$ is defined by a finite set $\{M_A^x\}_{x\in\SF{X}}$ of linear operators acting on $\SF{H}_A$, called a \textit{positive operator-valued measure (POVM)}, that satisfies the following two properties.
	\begin{itemize}
		\item $M_A^x \geq 0$ for all $x\in\SF{X}$;
		\item $\displaystyle \sum_{x\in\SF{X}} M_A^x=\mathbbm{1}_A$.
	\end{itemize}
	Elements of the set $\SF{X}$ label the possible outcomes of the measurement. Given a state $\rho_A$, the probability of obtaining the outcome $x\in\SF{X}$ is given by the Born rule as $\Tr[M_A^x\rho_A]$.
	
	As a generalization of a measurement, a \textit{quantum instrument} is a finite set $\{\mathcal{M}^x\}_{x\in\SF{X}}$ of completely positive trace non-increasing maps such that the sum $\sum_{x\in\SF{X}}\mathcal{M}^x$ is a trace-preserving map, and thus a quantum channel. (A trace non-increasing map $\mathcal{N}_{A\to B}$ satisfies $\Tr[\mathcal{N}_{A\to B}(X_A)]\leq\Tr[X_A]$ for every positive semi-definite linear operator $X_A$.) The \textit{quantum instrument channel} $\mathcal{M}$ associated to the quantum instrument $\{\mathcal{M}^x\}_{x\in\SF{X}}$ is defined as
	\begin{equation}\label{eq-quantum_instrument_channel}
		\mathcal{M}(\cdot)\coloneqq\sum_{x\in\SF{X}} \ketbra{x}{x}\otimes\mathcal{M}^x(\cdot).
	\end{equation}
	A quantum instrument $\{\mathcal{M}^x\}_{x\in\SF{X}}$ can be thought of as a generalized form of a measurement, in which the completely positive maps $\mathcal{M}^x$ represent the evolution of the quantum system conditioned on the outcome $x$. The trace non-increasing property of the maps $\mathcal{M}^x$ represents the fact that the outcome $x$ occurs probabilistically. Specifically, the probability of obtaining the outcome $x$ is equal to $\Tr[\mathcal{M}^x(\rho)]$, which can be thought of as a generalized form of the Born rule stated above. The quantum instrument channel in \eqref{eq-quantum_instrument_channel} can be thought of as an operation that stores both the outcome $x$ of the instrument in the classical register as well as the corresponding output state.

\toclesslab\subsection{LOCC channels}{sec-LOCC_channels}

	\begin{figure*}
		\centering
		\includegraphics[scale=1]{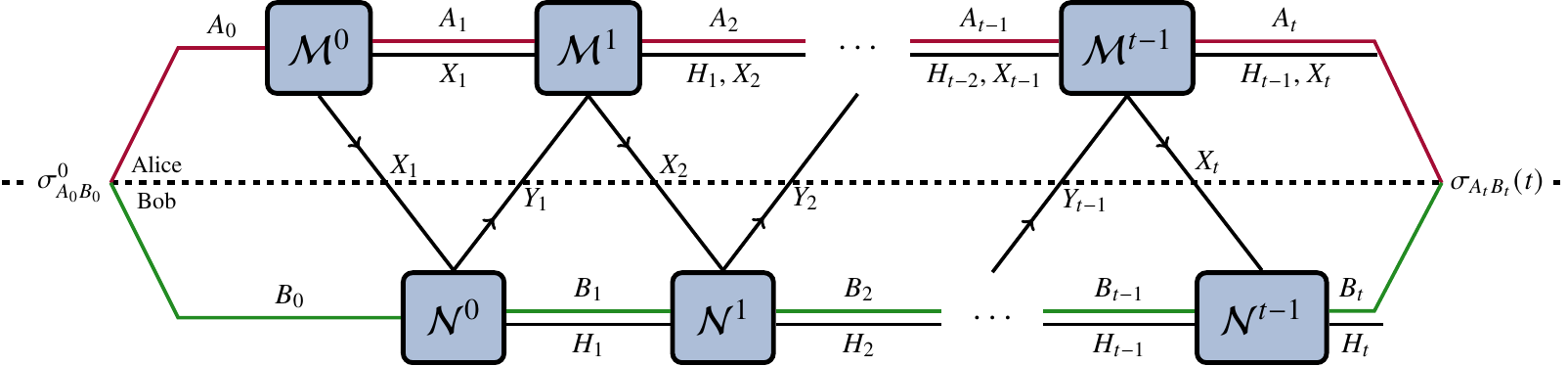}
		\caption{Depiction of a $t$-round LOCC protocol between Alice and Bob. The channels $\mathcal{M}^0,\mathcal{M}^1,\dotsc,\mathcal{M}^{t-1}$ represent Alice's local operations, the channels $\mathcal{N}^0,\mathcal{N}^1,\dotsc,\mathcal{N}^{t-1}$ represent Bob's local operations, and the registers $H_1,H_2,\dotsc,H_t$ represent the classical communication to and from Alice and Bob. The final quantum state $\sigma_{A_tB_t}(t)$ shared by Alice and Bob has the form shown in \eqref{eq-LOCC_protocol_final_state}.}\label{fig-LOCC}
	\end{figure*}

	Consider two parties, Alice and Bob, who are spatially separated. Suppose that they have the ability to perform arbitrary quantum operations (quantum channels, measurements, instruments) in their respective labs and that they are connected by a classical communication channel. It is often the case that Alice and Bob are also connected by a quantum channel and/or share an entangled quantum state, and their task is to make use of these \textit{resources} as sparingly as possible in order to accomplish their desired goal. Their local operations and classical communication (LOCC) can be used freely to help with achieving the goal, which could be feedback-assisted quantum communication~\cite{BDSW96} (see also Ref.~[\onlinecite{KW20_book}]), which includes quantum teleportation and entanglement swapping \r~\cite{BBC+93,Vaidman94,BFK00,ZZH93}, or it could be entanglement distillation~\cite{BDSW96}. In the network setting, the task is repeater-assisted quantum communication. Here, we focus on the basic mathematical definition of an LOCC channel and provide some examples. For more mathematical details about LOCC channels, we refer to Ref.~[\onlinecite{CLM+14}].
	
	Consider the scenario shown in Fig.~\ref{fig-LOCC}, which is an \textit{LOCC protocol} with $t$ rounds. The LOCC channel corresponding to this protocol, i.e., the channel mapping the input systems $A_0B_0$ to the output systems $A_tB_t$ at the end of the $t^{\text{th}}$ round of the protocol, can be derived as follows.
	
	We start with the initial state $\sigma_{A_0B_0}^0$ shared by Alice and Bob. In the first round, Alice acts on her system $A_0$ with a quantum instrument channel $\mathcal{M}_{A_0\to X_1A_1}^0$, which leads to the following output:
	\begin{equation}
		\sigma_{A_0B_0}^0\mapsto \mathcal{M}_{A_0\to X_1A_1}(\sigma_{A_0B_0}^0)=\sum_{x_1\in\SF{X}_1}\ketbra{x_1}{x_1}_{X_1}\otimes\mathcal{M}_{A_0\to A_1}^{0;x_1}(\sigma_{A_0B_0}^0),
	\end{equation}
	where $\SF{X}_1$ is a finite set and $\{\mathcal{M}_{A_0\to A_1}^{0;x_1}\}_{x_1\in\SF{X}_1}$ is a quantum instrument. The classical register $X_1$ is then communicated to Bob, who applies the conditional quantum instrument channel given by
	\begin{equation}
		\mathcal{N}_{X_1B_0\to X_1Y_1B_1}^0(\ketbra{x_1}{x_1}_{X_1}\otimes \tau_{B_0})=\ketbra{x_1}{x_1}_{X_1}\otimes\mathcal{N}_{B_0\to Y_1B_1}^{0;x_1}(\tau_{B_0}),
	\end{equation}
	where $\mathcal{N}_{B_0\to Y_1B_1}^{0;x_1}$ is a quantum instrument channel, i.e.,
	\begin{equation}
		\mathcal{N}_{B_0\to Y_1B_1}^{0;x_1}(\tau_{B_0})=\sum_{y_1\in\SF{Y}_1}\ketbra{y_1}{y_1}_{Y_1}\otimes\mathcal{N}_{B_0\to B_1}^{0;x_1,y_1}(\tau_{B_0}),
	\end{equation}
	with $\{\mathcal{N}_{B_0\to B_1}^{0;x_1,y_1}\}_{y_1\in\SF{Y}_1}$ being a quantum instrument, i.e., every $\mathcal{N}_{B_0\to B_1}^{0;x_1,y_1}$ is a completely positive trace non-increasing map and $\sum_{y_1\in\SF{Y}_1}\mathcal{N}_{B_0\to B_1}^{0;x_1,y_1}$ is a trace-preserving map. Bob sends the outcome $y_1\in\SF{Y}_1$ of the quantum instrument to Alice. This completes the first round, and the quantum state shared by Alice and Bob is
	\begin{align}
		\rho_{X_1Y_1A_1B_1}(1)&\coloneqq\left(\mathcal{M}_{A_0\to X_1A_1}^0\otimes\mathcal{N}_{X_1B_0\to X_1Y_1B_1}^0\right)(\sigma_{A_0B_0}^0)\\
		&=\sum_{\substack{x_1\in\SF{X}_1\\y_1\in\SF{Y}_1}}\ketbra{x_1,y_1}{x_1,y_1}_{X_1Y_1}\otimes\left(\mathcal{M}_{A_0\to A_1}^{0;x_1}\otimes\mathcal{N}_{B_0\to B_1}^{0;x_1,y_1}\right)(\sigma_{A_0B_0}^0)\\
		&\quad=\sum_{\substack{x_1\in\SF{X}_1\\y_1\in\SF{Y}_1}}\ketbra{x_1,y_1}{x_1,y_1}_{X_1Y_1}\otimes\widetilde{\sigma}_{A_1B_1}(1;x_1,y_1),
	\end{align}
	where in the last line we let
	\begin{equation}
		\widetilde{\sigma}_{A_1B_1}(1;x_1,y_1)\coloneqq\left(\mathcal{M}_{A_0\to A_1}^{0;x_1}\otimes\mathcal{N}_{B_0\to B_1}^{0;x_1,y_1}\right)(\sigma_{A_0B_0}^0).
	\end{equation}
	
	Now, in the second round, Alice applies the conditional quantum instrument channel given by
	\begin{equation}
		\mathcal{M}_{X_1Y_1A_1\to X_1Y_1X_2A_2}^1(\ketbra{x_1,y_1}{x_1,y_1}_{X_1Y_1}\otimes\rho_{A_1})\coloneqq\ketbra{x_1,y_1}{x_1,y_1}_{X_1Y_1}\otimes\mathcal{M}_{A_1\to X_2A_2}^{1;x_1,y_1}(\rho_{A_1}),
	\end{equation}
	where $\mathcal{M}_{A_1\to X_2A_2}^{1;x_1,y_1}$ is a quantum instrument channel in which the underlying quantum instrument $\{\mathcal{M}_{A_1\to A_2}^{1;x_1,y_1,x_2}\}_{x_2\in\SF{X}_2}$ is conditioned on the histories $\{(x_1,y_1):x_1\in\SF{X}_1,\,y_1\in\SF{Y}_1\}$ of hers and Bob's prior outcomes. She sends the outcome $x_2\in\SF{X}_2$ of the quantum instrument to Bob, who then applies the conditional quantum instrument channel given by
	\begin{equation}
		\mathcal{N}_{X_1Y_1X_2B_1\to X_1Y_1X_2Y_2B_2}^1(\ketbra{x_1,y_1,x_2}{x_1,y_1,x_2}_{X_1Y_1X_2}\otimes\tau_{B_1})\coloneqq\ketbra{x_1,y_1,x_2}{x_1,y_1,x_2}_{X_1Y_1X_2}\otimes\mathcal{N}_{B_1\to Y_2B_2}^{1;x_1,y_1,x_2}(\tau_{B_1}),
	\end{equation}
	where $\mathcal{N}_{B_1\to Y_2B_2}^{1;x_1,y_1,x_2}$ is a quantum instrument channel in which the underlying quantum instrument $\{\mathcal{N}_{B_1\to B_2}^{1;x_1,y_1,x_2,y_2}\}_{y_2\in\SF{Y}_2}$ depends on the prior outcomes $x_1\in\SF{X}_1,\,y_1\in\SF{Y}_1,\,x_2\in\SF{X}_2$. The outcome of the instrument is $y_2\in\SF{Y}_2$, so that, at the end of the second round, the state shared by Alice and Bob is
	\begin{equation}
		\rho_{X_1Y_1X_2Y_2A_2B_2}(2)=\sum_{h^2}\ketbra{h^2}{h^2}_{H_2}\otimes\widetilde{\sigma}_{A_2B_2}(2;h^2),
	\end{equation}
	where 
	we used the abbreviations
	\begin{align}
		H_2&\equiv X_1Y_1X_2Y_2,\\
		h^2&\equiv (x_1,y_1,x_2,y_2)\in\SF{X}_1\times\SF{Y}_1\times\SF{X}_2\times\SF{Y}_2,
	\end{align}
	and
	\begin{equation}
		\widetilde{\sigma}(2;h^2)\coloneqq\left(\mathcal{M}_{A_1\to A_2}^{1;x_1,y_1,x_2}\circ\mathcal{M}_{A_0\to A_1}^{0;x_1}\otimes\mathcal{N}_{B_1\to B_2}^{1;h^2}\circ\mathcal{N}_{B_0\to B_1}^{0;x_1,y_1}\right)(\sigma_{A_0B_0}^0).
	\end{equation}
	
	Let $h^j\coloneqq(x_1,y_1,\dotsc,x_j,y_j)\in\SF{X}_1\times\SF{Y}_1\times\dotsb\times\SF{X}_j\times\SF{Y}_j$ be the history up to $j$ steps, $j\in\{1,2,\dotsc,\}$. Proceeding in the manner presented above, at the $j^{\text{th}}$ step of the protocol, Alice and Bob apply conditional quantum instrument channels of the form
	\begin{equation}
		\mathcal{M}_{H_{j}A_j\to H_jX_{j+1}A_{j+1}}^j\left(\ketbra{h^j}{h^j}_{H_j}\otimes\rho_{A_j}\right)=\ketbra{h^j}{h^j}_{H_j}\otimes\mathcal{M}_{A_j\to X_{j+1}A_{j+1}}^{j;h^j}(\rho_{A_j})
	\end{equation}
	and
	\begin{equation}
		\mathcal{N}_{H_jX_{j+1}B_j\to H_{j+1}B_{j+1}}^{j}\left(\ketbra{h^j,x_{j+1}}{h^j,x_{j+1}}_{H_jX_{j+1}}\otimes\sigma_{B_j}\right)=\ketbra{h^j,x_{j+1}}{h^j,x_{j+1}}_{H_jX_{j+1}}\otimes\mathcal{N}_{B_j\to Y_{j+1}B_{j+1}}^{j;h^j,x_{j+1}}(\sigma_{B_j}),
	\end{equation}
	where
	\begin{equation}
		\mathcal{M}_{A_j\to X_{j+1}A_{j+1}}^{j;h^j}(\rho_{A_j})=\sum_{x_{j+1}\in\SF{X}_{j+1}}\ketbra{x_{j+1}}{x_{j+1}}_{X_{j+1}}\otimes\mathcal{M}_{A_j\to A_{j+1}}^{j;h^j,x_{j+1}}(\rho_{A_j})
	\end{equation}
	and
	\begin{equation}
		\mathcal{N}_{B_j\to Y_{j+1}B_{j+1}}^{j;h^j,x_{j+1}}(\sigma_{B_j})=\sum_{y_{j+1}\in\SF{Y}_{j+1}}\ketbra{y_{j+1}}{y_{j+1}}_{Y_{j+1}}\otimes\mathcal{N}_{B_j\to B_{j+1}}^{j;h^j,x_{j+1},y_{j+1}}(\sigma_{B_j}).
	\end{equation}
	Therefore, at the end of the $t^{\text{th}}$ round, the classical-quantum state shared by Alice and Bob is
	\begin{equation}\label{eq-cq_state_LOCC}
		\rho_{H_tA_tB_t}(t)=\sum_{h^t}\ketbra{h^t}{h^t}_{H_t}\otimes\widetilde{\sigma}_{A_tB_t}(t;h^t),
	\end{equation}
	where $H_t\equiv X_1Y_1\dotsb X_tY_t$, $h^t=(x_1,y_1,x_2,y_2,\dotsc,x_t,y_t)$,
	\begin{align}\label{eq-cq_state_LOCC_3}
		\widetilde{\sigma}(t;h^t)&=\left(\mathcal{S}_{A_0\to A_t}^{t;h^t}\otimes\mathcal{T}_{B_0\to B_t}^{t;h^t}\right)(\sigma_{A_0B_0}^0),\\
		\mathcal{S}_{A_0\to A_t}^{t;h^t}&\coloneqq\mathcal{M}_{A_{t-1}\to A_t}^{t-1;h_{t-1}^t,x_t}\circ\dotsb\circ\mathcal{M}_{A_1\to A_2}^{1;h_1^t,x_2}\circ\mathcal{M}_{A_0\to A_1}^{0;x_1},\\[0.2cm]
		\mathcal{T}_{B_0\to B_t}^{t;h^t}&\coloneqq\mathcal{N}_{B_{t-1}\to B_t}^{t-1;h^t}\circ\dotsb\circ\mathcal{N}_{B_1\to B_2}^{1;h_2^t}\circ\mathcal{N}_{B_0\to B_1}^{0;h_1^t},
	\end{align}
	and we have defined $h_j^t\coloneqq (x_1,y_1,x_2,y_2,\dotsc,x_j,y_j)$ for all $j\in\{1,2,\dotsc,t-1\}$.
	
	Now, if Alice and Bob discard the history of their outcomes, then this corresponds to tracing out the classical history register $H_t$, and it results in the state
	\begin{align}
		\sigma_{A_tB_t}(t)&\coloneqq\Tr_{H_t}[\rho_{H_tA_tB_t}(t)]\\
		&=\sum_{h^t}\widetilde{\sigma}_{A_tB_t}(t;h^t)\\
		&=\sum_{h^t}(\mathcal{S}_{A_0\to A_t}^{t;h^t}\otimes\mathcal{T}_{B_0\to B_t}^{t;h^t})(\sigma_{A_0B_0}^0),\label{eq-LOCC_protocol_final_state}
	\end{align}
	which is of the form
	\begin{equation}\label{eq-LOCC_channel}
		\mathcal{L}_{AB\to\hat{A}\hat{B}}(\cdot)\coloneqq\sum_{x\in\SF{X}}(\mathcal{S}_{A\to\hat{A}}^x\otimes\mathcal{T}_{B\to\hat{B}}^x)(\cdot).
	\end{equation}
	Here, $\{\mathcal{S}^x\}_{x\in\SF{X}}$ and $\{\mathcal{T}^x\}_{x\in\SF{X}}$ are completely positive trace non-increasing maps such that the sum $\sum_{x\in\SF{X}}\mathcal{S}^x\otimes\mathcal{T}^x$ is a trace-preserving map.
	
	Note that the sum $\displaystyle\sum_{h^t}\mathcal{S}^{t;h^t}\otimes\mathcal{T}^{t;h^t}$ in \eqref{eq-LOCC_protocol_final_state} is indeed a trace-preserving map, because for every $j\in\{1,2,\dotsc,t-1\}$, history $h^{j-1}$ up to time $j-1$, and linear operator $X_{A_{j-1}B_{j-1}}$,
	\begin{align}
		&\sum_{x_j,y_j}\Tr\!\left[\left(\mathcal{M}_{A_{j-1}\to A_j}^{j-1;h^{j-1},x_j}\otimes\mathcal{N}_{B_{j-1}\to B_j}^{j-1;h^{j-1},x_j,y_j}\right)(X_{A_{j-1}B_{j-1}})\right]\nonumber\\
		&=\sum_{x_j}\Tr\!\left[\left(\mathcal{M}_{A_{j-1}\to A_j}^{j-1;h^{j-1},x_j}\otimes\sum_{y_j}\mathcal{N}_{B_{j-1}\to B_j}^{j-1;h^{j-1},x_j,y_j}\right)(X_{A_{j-1}B_{j-1}})\right]\\
		&=\sum_{x_j}\Tr_{A_j}\!\left[\mathcal{M}_{A_{j-1}\to A_j}^{j-1;h^{j-1},x_j}(\Tr_{B_{j-1}}[X_{A_{j-1}B_{j-1}}])\right]\\
		&=\Tr_{A_j}\!\left[\sum_{x_j}\mathcal{M}_{A_{j-1}\to A_j}^{j-1;h^{j-1},x_j}(\Tr_{B_{j-1}}[X_{A_{j-1}B_{j-1}}])\right]\\
		&=\Tr[X_{A_{j-1}B_{j-1}}],
	\end{align}
	where we have used the fact that $\displaystyle\sum_{y_j}\mathcal{N}_{B_{j-1}\to B_j}^{j-1;h^{j-1},x_j,y_j}$ and $\displaystyle\sum_{x_j}\mathcal{M}_{A_{j-1}\to A_j}^{j-1;h^{j-1},x_j}$ are trace-preserving maps. By applying this recursively at all time steps, it follows that for every linear operator $X_{A_0B_0}$,
	\begin{equation}
		\Tr\!\left[\sum_{h^t}(\mathcal{S}_{A_0\to A_t}^{t;h^t}\otimes\mathcal{T}_{A_0\to A_t}^{t;h^t})(X_{A_0B_0})\right]=\Tr[X_{A_0B_0}].
	\end{equation}
	So we conclude that the sum $\displaystyle\sum_{h^t}\mathcal{S}^{t;h^t}\otimes\mathcal{T}^{t;h^t}$ is a trace-preserving map.
	
	\begin{remark}[LOCC instruments]
		From \eqref{eq-cq_state_LOCC} and \eqref{eq-cq_state_LOCC_3}, the classical-quantum state $\rho_{H_tA_tB_t}(t)$ after $t$ rounds of an LOCC protocol is
		\begin{equation}
			\rho_{H_tA_tB_t}(t)=\sum_{h^t}\ketbra{h^t}{h^t}_{H_t}\otimes\left(\mathcal{S}_{A_0\to A_t}^{t;h^t}\otimes\mathcal{T}_{B_0\to B_t}^{t;h^t}\right)(\sigma_{A_0B_0}^0).
		\end{equation}
		Observe that this state has exactly the form of the output state of a quantum instrument channel. In particular, letting
		\begin{equation}
			\mathcal{L}_{A_0B_0\to A_tB_t}^{t;h^t}\coloneqq\mathcal{S}_{A_0\to A_t}^{t;h^t}\otimes\mathcal{T}_{B_0\to B_t}^{t;h^t},
		\end{equation}
		we see that the state $\rho_{H_tA_tB_t}(t)$ can be regarded as the output state of an \textit{LOCC instrument}, i.e., a finite set $\left\{\mathcal{L}_{AB\to\hat{A}\hat{B}}^{x}\right\}_{x\in\SF{X}}$ of completely positive trace non-increasing LOCC maps such that sum $\displaystyle\sum_{x\in\SF{X}}\mathcal{L}_{AB\to \hat{A}\hat{B}}^{x}$ is a trace-preserving map, and thus an LOCC quantum channel.
	\end{remark}

\toclesslab\section{Examples of elementary link generation}{sec-elem_link_examples}

\toclesslab\subsection{Ground-based transmission}{sec-network_setup_ground_based}

	The most common medium for quantum information transmission for communication purposes is photons traveling either through either free space or fiber-optic cables. These transmission media are modeled well by a bosonic pure-loss/attenuation channel $\mathcal{L}^{\eta}$~\cite{Serafini_book}, where $\eta\in(0,1]$ is the transmittance of the medium, which for fiber-optic or free-space transmission has the form $\eta=\e^{-\frac{L}{L_0}}$~\cite{SveltoBook,KJK_book,KGMS88_book}, where $L$ is the transmission distance and $L_0$ is the attenuation length of the fiber.
	
	Before the $k$ quantum systems corresponding to the source state $\rho^S$ are transmitted through the pure-loss channel, they are each encoded into $d$ bosonic modes with $d\geq 2$. A simple encoding is the following:
	\begin{align}
		\ket{0_d}&\coloneqq\ket{1,0,0,\dotsc,0},\label{eq-d_rail_encoding_0}\\
		\ket{1_d}&\coloneqq\ket{0,1,0,\dotsc,0},\\
		&\vdots\nonumber\\
		\ket{(d-1)_d}&\coloneqq\ket{0,0,0,\dotsc,1},\label{eq-d_rail_encoding}
	\end{align}
	sometimes called the \textit{$d$-rail encoding}. In other words, using $d$ bosonic modes, we form a qudit quantum system by defining the standard basis elements of the associated Hilbert space by the states corresponding to a single photon in each of the $d$ modes. We let
	\begin{equation}
		\ket{\text{vac}}\coloneqq\ket{0,0,\dotsc,0}
	\end{equation}
	denote the vacuum state of the $d$ modes, which is the state containing no photons.
	
	In the context of photonic state transmission, the source state $\rho^S$ is typically of the form $\ketbra{\psi^S}{\psi^S}$, where
	\begin{equation}\label{eq-source_state}
		\ket{\psi^S}=\sqrt{\smash[b]{p_0^S}}\ket{\text{vac}}+\sqrt{\smash[b]{p_1^S}}\ket{\psi_1^S}+\sqrt{\smash[b]{p_2^S}}\ket{\psi_2^S}+\dotsb,
	\end{equation}
	where $\ket{\psi_n^S}$ is a state vector with $n$ photons in total for each of the $k$ parties and the numbers $p_n^S\geq 0$ are probabilities, so that $\sum_{n=0}^{\infty} p_n^S=1$. For example, in the case $k=2$ and $d=2$, the following source state is generated from a parametric down-conversion process (see, e.g., Refs.~[\onlinecite{KB00,KGD+16}]):
	\begin{align}
		\ket{\psi^S}&=\sum_{n=0}^{\infty}\frac{\sqrt{n+1}r^n}{\e^q}\ket{\psi_n^S},\\
		\ket{\psi_n^S}&=\frac{1}{\sqrt{n+1}}\sum_{m=0}^n (-1)^m \ket{n-m,m;m,n-m},
	\end{align}
	where $r$ and $q$ are parameters characterizing the process. One often considers a truncated version of this state as an approximation, so that~\cite{KGD+16}
	\begin{equation}\label{eq-SPDC_state_2}
		\ket{\psi^S}=\sqrt{p_0}\ket{0,0;0,0}+\sqrt{\frac{p_1}{2}}(\ket{1,0;0,1}+\ket{0,1;1,0})+\sqrt{\frac{p_2}{3}}(\ket{2,0;0,2}+\ket{1,1;1,1}+\ket{0,2;2,0}),
	\end{equation}
	where $p_0+p_1+p_2=1$.
	
	Typically, the encoding into bosonic modes is not perfect, which means that a source state of the form \eqref{eq-source_state} is not ideal and that the desired state is given by one of the state vectors $\ket{\psi_j^S}$, and the other terms arise due to the naturally imperfect nature of the source. For example, for the state in \eqref{eq-SPDC_state_2}, the desired bipartite state is the maximally entangled state
	\begin{equation}\label{eq-photonic_source_ideal}
		\ket{\Psi^+}=\frac{1}{\sqrt{2}}(\ket{1,0;0,1}+\ket{0,1;1,0}).
	\end{equation}
	
	Once the source state is prepared, each mode is sent through the pure-loss channel. Letting
	\begin{equation}
		\mathcal{L}^{\eta,(d)}\coloneqq (\mathcal{L}^{\eta})^{\otimes d}
	\end{equation}
	denote the quantum channel that acts on the $d$ modes of each of the $k$ systems, the overall quantum channel through which the source state $\rho^S$ is sent is
	\begin{equation}
		\mathcal{S}^{\vec{\eta},(k;d)}\coloneqq \mathcal{L}^{\eta_1,(d)}\otimes\mathcal{L}^{\eta_2,(d)}\otimes\dotsb\otimes\mathcal{L}^{\eta_k,(d)},
	\end{equation}
	where $\vec{\eta}=(\eta_1,\eta_2,\dotsc,\eta_k)$ and $\eta_j$ is the transmittance of the medium to the $j^{\text{th}}$ node in the edge. The quantum state shared by the $k$ nodes after transmission from the source is then $\rho^{S,\text{out}}=\mathcal{S}^{\vec{\eta},(k;d)}(\rho^S)$.
	
	Now, it is known (see, e.g., Ref.~[\onlinecite{BH14}]) that the action of the bosonic pure-loss channel on any linear operator $\sigma_d$ encoded in $d$ modes according to the encoding in \eqref{eq-d_rail_encoding} is equivalent to the output of an erasure channel~\cite{BDS97,GBP97}. In general, a $d$-dimensional quantum erasure channel $\mathcal{E}_p^{(d)}$, with $p\in[0,1]$, is defined as follows. Consider the vector space $\mathbb{C}^d$ with orthonormal basis elements $\{\ket{0},\ket{1},\dotsc,\ket{d-1}\}$, and the vector space $\mathbb{C}^{d+1}$ with orthonormal basis elements $\{\ket{0},\ket{1},\dotsc,\ket{d-1},\ket{d}\}$. Then, for every linear operator $X\in\Lin(\mathbb{C}^d)$, $\mathcal{E}_p^{(d)}(X)=pX+(1-p)\ketbra{d}{d}$. Note that the output is an element of $\Lin(\mathbb{C}^{d+1})$. In particular, note that the vector $\ket{d}$ is orthogonal to the input vector space $\mathbb{C}^d$.
	
	\begin{lemma}[Pure-loss channel with a $d$-rail encoding~\cite{BH14}]\label{lem-pure_loss_erasure}
		Let $d\geq 2$. For every linear operator $X$ acting on a $d$-dimensional Hilbert space defined by the basis elements in \eqref{eq-d_rail_encoding_0}--\eqref{eq-d_rail_encoding}, we have that
		\begin{align}\label{eq-pure_loss_erasure}
			\mathcal{L}^{\eta,(d)}(X)&=(\mathcal{L}^{\eta})^{\otimes d}(\sigma_d)\\
			&=\eta X+(1-\eta)\Tr[X]\ketbra{\textnormal{vac}}{\textnormal{vac}}.
		\end{align}
	\end{lemma}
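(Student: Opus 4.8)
The plan is to reduce the $d$-mode statement to $d$ independent single-mode calculations, exploiting the factorization $\mathcal{L}^{\eta,(d)}=(\mathcal{L}^{\eta})^{\otimes d}$ together with the tensor-product structure of the $d$-rail basis states.

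First I would record the action of the single-mode pure-loss channel on the two-dimensional subspace spanned by the vacuum $\ket{0}$ and the single-photon state $\ket{1}$. Using the Kraus operators $A_{\ell}=\sum_{n\geq\ell}\sqrt{\binom{n}{\ell}}\,(1-\eta)^{\ell/2}\eta^{(n-\ell)/2}\ketbra{n-\ell}{n}$ (equivalently, the beamsplitter-with-vacuum dilation), a direct computation restricted to these low Fock levels gives $\mathcal{L}^{\eta}(\ketbra{0}{0})=\ketbra{0}{0}$, $\mathcal{L}^{\eta}(\ketbra{1}{1})=\eta\ketbra{1}{1}+(1-\eta)\ketbra{0}{0}$, and $\mathcal{L}^{\eta}(\ketbra{1}{0})=\sqrt{\eta}\,\ketbra{1}{0}$ (and its adjoint). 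These three identities are the only nontrivial input; everything downstream follows by linearity.

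Next I would expand $X=\sum_{i,j=0}^{d-1}X_{ij}\ketbra{i_d}{j_d}$ in the $d$-rail basis and write each $\ketbra{i_d}{j_d}$ as a tensor product over the $d$ modes: the mode hosting the ket photon carries $\ketbra{1}{0}$ or $\ketbra{1}{1}$, the mode hosting the bra photon carries $\ketbra{0}{1}$, and every remaining mode carries $\ketbra{0}{0}$. Applying $(\mathcal{L}^{\eta})^{\otimes d}$ mode by mode and splitting into the diagonal case $i=j$ and the off-diagonal case $i\neq j$, I expect to obtain $(\mathcal{L}^{\eta})^{\otimes d}(\ketbra{i_d}{i_d})=\eta\ketbra{i_d}{i_d}+(1-\eta)\ketbra{\text{vac}}{\text{vac}}$ and $(\mathcal{L}^{\eta})^{\otimes d}(\ketbra{i_d}{j_d})=\eta\,\ketbra{i_d}{j_d}$ for $i\neq j$; in the latter, the two coherent modes each contribute a factor $\sqrt{\eta}$, whose product is precisely $\eta$.

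Finally, summing over $i,j$ with coefficients $X_{ij}$ and using $\sum_i X_{ii}=\Tr[X]$ collapses the diagonal vacuum contributions into $(1-\eta)\Tr[X]\ketbra{\text{vac}}{\text{vac}}$ and leaves $\eta X$, which is the claim. The main thing to be careful about is the bookkeeping of tensor factors, so that the leftover vacuum terms generated by each diagonal component $\ketbra{i_d}{i_d}$ assemble correctly into the single global vacuum operator $\ketbra{\text{vac}}{\text{vac}}$, and tracking that the off-diagonal prefactor is the product $\sqrt{\eta}\cdot\sqrt{\eta}$; both are routine once the single-mode identities above are established.
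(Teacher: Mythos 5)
Your proposal is correct and follows essentially the same route as the paper's proof: establish the single-mode action of $\mathcal{L}^{\eta}$ on the span of $\ket{0}$ and $\ket{1}$ from a Kraus representation, decompose $X$ in the $d$-rail basis, apply the channel factor-wise to the diagonal and off-diagonal terms, and resum using $\sum_i X_{ii}=\Tr[X]$. The only cosmetic difference is the particular form of the Kraus operators you quote, which is equivalent to the one used in the paper.
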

	
	\begin{proof}
		To start, the bosonic pure-loss channel has the following Kraus representation~\cite{FH08,SSS11}:
		\begin{equation}\label{eq-pure_loss_Kraus}
			\mathcal{L}^{\eta}(\rho)=\sum_{\ell=0}^{\infty} \frac{(1-\eta)^{\ell}}{\ell!}\sqrt{\eta}^{a^\dagger a} a^k\rho a^{k\dagger}\sqrt{\eta}^{a^\dagger a},
		\end{equation}
		where $a$ and $a^\dagger$ are the annihilation and creation operators of the bosonic mode, which are defined as $a\ket{n}=\sqrt{n}\ket{n-1}$ for all $n\geq 1$ (with $a\ket{0}=0$), and $a^\dagger\ket{n}=\sqrt{n+1}\ket{n+1}$ for all $n\geq 0$.
		
		Now, every linear operator $X$ acting on a $d$-dimensional space that is encoded into $d$ bosonic modes as in \eqref{eq-d_rail_encoding_0}--\eqref{eq-d_rail_encoding} can be written as
		\begin{equation}
			X=\sum_{\ell,\ell'=0}^{d-1}\alpha_{\ell,\ell'}\ketbra{\ell_d}{\ell'_d},
		\end{equation}
		for $\alpha_{\ell,\ell'}\in\mathbb{C}$. Using \eqref{eq-pure_loss_Kraus}, it is straightforward to show that
		\begin{align}
			\mathcal{L}^{\eta}(\ketbra{0}{0})&=\ketbra{0}{0},\\
			\mathcal{L}^{\eta}(\ketbra{0}{1})&=\sqrt{\eta}\ketbra{0}{1},\\
			\mathcal{L}^{\eta}(\ketbra{1}{0})&=\sqrt{\eta}\ketbra{1}{0},\\
			\mathcal{L}^{\eta}(\ketbra{1}{1})&=(1-\eta)\ketbra{0}{0}+\eta\ketbra{1}{1}.
		\end{align}
		Using this, we find that
		\begin{equation}
			(\mathcal{L}^{\eta})^{\otimes d}(\ketbra{\ell_d}{\ell'_d})=\left\{\begin{array}{l l} \eta\ketbra{\ell_d}{\ell_d}+(1-\eta)\ketbra{\text{vac}}{\text{vac}} & \text{if } \ell=\ell',\\ \eta\ketbra{\ell_d}{\ell'_d} & \text{if }\ell\neq\ell'. \end{array}\right.
		\end{equation}
		Therefore,
		\begin{align}
			(\mathcal{L}^{\eta})^{\otimes d}(X)&=\eta\sum_{\ell,\ell'=0}^{d-1}\alpha_{\ell,\ell'}\ketbra{\ell_d}{\ell'_d}+(1-\eta)\left(\sum_{\ell=0}^{d-1}\alpha_{\ell,\ell}\right)\ketbra{\text{vac}}{\text{vac}}\\
			&=\eta X+(1-\eta)\Tr[X]\ketbra{\text{vac}}{\text{vac}},
		\end{align}
		as required.
	\end{proof}
	
	After transmission from the source to the nodes, the heralding procedure typically involves doing measurements at the nodes to check whether all of the photons arrived. In the ideal case the quantum instrument $\{\mathcal{M}^0,\mathcal{M}^1\}$ for the heralding procedure corresponds simply to a measurement in the single-photon subspace defined by \eqref{eq-d_rail_encoding_0}--\eqref{eq-d_rail_encoding}. To be specific, let
	\begin{align}
		\Lambda^1&\coloneqq\Pi^{(d)}\\
		&\coloneqq\ketbra{0_d}{0_d}+\ketbra{1_d}{1_d}+\dotsb+\ketbra{(d-1)_d}{(d-1)_d},\label{eq-photonic_heralding_example_1}\\
		\Lambda^0&\coloneqq \mathbbm{1}_{\SF{H}_d}-\Lambda^0,\label{eq-photonic_heralding_example_2}
	\end{align}
	where $\Pi^{(d)}$ is the projection onto the $d$-dimensional single-photon subspace defined by \eqref{eq-d_rail_encoding_0}--\eqref{eq-d_rail_encoding}, and $\mathbbm{1}_{\SF{H}_d}$ is the identity operator of the full Hilbert space $\SF{H}_d$ of $d$ bosonic modes. Then, letting $\vec{x}\in\{0,1\}^k$ and defining
	\begin{equation}\label{eq-photonic_heralding_example_3}
		\Lambda^{\vec{x}}\coloneqq\Lambda^{x_1}\otimes\Lambda^{x_2}\otimes\dotsb\otimes\Lambda^{x_k},
	\end{equation}
	the maps $\mathcal{M}^0$ and $\mathcal{M}^1$ have the form
	\begin{align}
		\mathcal{M}^1(\cdot)&=\Lambda^{\vec{1}}(\cdot)\Lambda^{\vec{1}},\label{eq-photonic_heralding_1}\\
		\mathcal{M}^0(\cdot)&=\sum_{\substack{\vec{x}\in\{0,1\}^k\\\vec{x}\neq\vec{1}}}\Lambda^{\vec{x}}(\cdot)\Lambda^{\vec{x}}.\label{eq-photonic_heralding_0}
	\end{align}
	These maps correspond to perfect photon-number-resolving detectors. However, the detectors are typically noisy due to dark counts and other imperfections (see, e.g., Refs.~[\onlinecite{KGD+16}]), so that in practice the maps $\mathcal{M}^0$ and $\mathcal{M}^1$ will not have the ideal forms presented in \eqref{eq-photonic_heralding_1} and \eqref{eq-photonic_heralding_0}.
	
	
	Let
	\begin{align}
		\widetilde{\sigma}(0)&\coloneqq(\mathcal{M}^0\circ\mathcal{S})(\rho^S),\\
		\widetilde{\sigma}(1)&\coloneqq(\mathcal{M}^1\circ\mathcal{S})(\rho^S).
	\end{align}
	Then, if the source produces the ideal quantum state, such as the state in \eqref{eq-photonic_source_ideal}, so that $\rho^S=\Psi^+=\ketbra{\Psi^+}{\Psi^+}$, and if the heralding procedure is also ideal, then using \eqref{eq-pure_loss_erasure} we obtain
	\begin{align}
		\widetilde{\sigma}(1)&=\eta_1\eta_2\Psi^+,\\
		\widetilde{\sigma}(0)&=\eta_1(1-\eta_2)\frac{\Pi^{(2)}}{2}\otimes\ketbra{\text{vac}}{\text{vac}}\nonumber\\
		&\qquad+(1-\eta_1)\eta_2\ketbra{\text{vac}}{\text{vac}}\otimes\frac{\Pi^{(2)}}{2}\nonumber\\
		&\qquad+(1-\eta_1)(1-\eta_2)\ketbra{\text{vac}}{\text{vac}}\otimes\ketbra{\text{vac}}{\text{vac}},
	\end{align}
	which means that the transmission-heralding success probability as defined in \eqref{eq-elem_link_success_prob} is simply $p=\Tr[\widetilde{\sigma}(1)]=\eta_1\eta_2$.
	
	\begin{remark}[Multiplexing]\label{rem-multiplexing}
		In practice, in order to increase the transmission-heralding success probability, multiplexing strategies are used. The term ``multiplexing'' here refers to the use of a single transmission channel to send multiple signals simultaneously, with the signals being encoded into distinct (i.e., orthogonal) frequency modes; see, e,g., Ref.~[\onlinecite{GKF+15}]. If $M\geq 1$ distinct frequency modes are used, then the source state being transmitted is $(\rho^S)^{\otimes M}$. If $p$ denotes the probability that any single one of the signals is received and heralded successfully, then the probability that at least one of the $M$ signals is received and heralded successfully is $1-(1-p)^M$.
	\end{remark}

\toclesslab\subsection{Transmission from satellites}{sec-sat_architecture}

	Let us now consider the model of elementary link generation proposed in Ref.~[\onlinecite{KBD+19}], in which the entanglement sources are placed on satellites orbiting the earth. For further information on satellite-based quantum communication, we refer to Ref.~[\onlinecite{SJG+21}] for a review, and we refer to Refs.~[\onlinecite{BMH+13,Vogel2017weather,LKB+19,Vogel2019atmlinks}] for more detailed modeling of the satellite-to-ground quantum channel than what we consider here.
	
	When modeling photon transmission from satellites to ground stations, we must take into account background photons. Here, we analyze the scenario in which a source on board a satellite generates an entangled photon pair and distributes the individual photons to two parties, Alice ($A$) and Bob ($B$), on the ground. We allow the distributed photons to mix with background photons from an uncorrelated thermal source. Also, as before, we use the bosonic encoding defined in \eqref{eq-d_rail_encoding_0}--\eqref{eq-d_rail_encoding}, but we stick to $d=2$, i.e., qubit source states and thus bipartite elementary links. In this scenario, it is common for the two modes to represent the polarization degrees of freedom of the photons, so that
	\begin{align}
		\ket{H}&\equiv\ket{0_2}=\ket{1,0},\\
		\ket{V}&\equiv\ket{1_2}=\ket{0,1}
	\end{align}
	represent the state of one horizontally and vertically polarized photon, respectively.
	
	Let $\overline{n}$ be the average number of background photons. Then, as done in Ref.~[\onlinecite{KBD+19}], we can define an approximate thermal background state as
	\begin{equation}\label{eq-approx_th_state}
		\widetilde{\Theta}^{\overline{n}}\coloneqq(1-\overline{n})\ketbra{\text{vac}}{\text{vac}}+\frac{\overline{n}}{2}\left(\ketbra{H}{H}+\ketbra{V}{V}\right).
	\end{equation}
	The transmission channel from the satellite to the ground stations is then
	\begin{equation}\label{eq-noisy_transmission_channel}
		\mathcal{L}^{\eta_{\text{sg}},\overline{n}}(\rho_{A_1A_2})\coloneqq\Tr_{E_1E_2}\left[\left(U_{A_1E_1}^{\eta_{\text{sg}}}\otimes U^{\eta_{\text{sg}}}_{A_2E_2}\right)\left(\rho_{A_1A_2}\otimes\widetilde{\Theta}_{E_1E_2}^{\overline{n}}\right)\left(U^{\eta_{\text{sg}}}_{A_1E_1}\otimes U^{\eta_{\text{sg}}}_{A_2E_2}\right)^\dagger\right],
	\end{equation}
	where $U^{\eta_{\text{sg}}}$ is the beamsplitter unitary (see, e.g., Ref.~[\onlinecite{Serafini_book}]), and $A_1$ and $A_2$ refer to the horizontal and vertical polarization modes, respectively, of the dual-rail quantum system being transmitted; similarly for $E_1$ and $E_2$. Note that for $\overline{n}=0$, the transformation in \eqref{eq-noisy_transmission_channel} reduces to the one in \eqref{eq-pure_loss_erasure} with $d=2$.
	
	For a source state $\rho_{AB}^S$, with $A\equiv A_1A_2$ and $B\equiv B_1B_2$, the quantum state shared by Alice and Bob after transmission of the state $\rho_{AB}^S$ from the satellite to the ground stations is
	\begin{equation}\label{eq-transmission_noisy_output}
		\rho_{AB}^{S,\text{out}}=\left(\mathcal{L}_A^{\eta_{\text{sg}}^{(1)},\overline{n}_1}\otimes\mathcal{L}_B^{\eta_{\text{sg}}^{(2)},\overline{n}_2}\right)(\rho_{AB}^S),
	\end{equation}	
	where $\eta_{\text{sg}}^{(1)}$ and $\eta_{\text{sg}}^{(2)}$ are the transmittances to the ground stations and $\overline{n}_1$ and $\overline{n}_2$ are the corresponding thermal background noise parameters. In Sec.~\ref{sec-sats}, we look at a specific example of a source state $\rho_{AB}^S$, and thus provide an explicit form for the state $\rho_{AB}^{S,\text{out}}$. We also consider the heralding procedure defined by \eqref{eq-photonic_heralding_example_1}--\eqref{eq-photonic_heralding_0}, and thus provide explicit forms for the states $\sigma^0$ and $\tau^{\varnothing}$ in \eqref{eq-initial_link_state_success} and \eqref{eq-initial_link_state_failure} corresponding to success and failure, respectively, of the heralding procedure.
	
	The transmittance $\eta_{\text{sg}}$ generally depends on atmospheric conditions (such as turbulence and weather conditions) and on orbital parameters (such as altitude and zenith angle)~\cite{Vogel2017weather,Vogel2019atmlinks,LKB+19}. In general, if the satellite is at altitude $h$ and the path length from the satellite to the ground station is $L$, then
	\begin{equation}\label{eq-eta_sg}
		\eta_{\text{sg}}(L,h)=\eta_{\text{fs}}(L)\eta_{\text{atm}}(L,h),
	\end{equation}
	where
	\begin{align}
		\eta_{\text{fs}}(L)&=1-\exp\left(-\frac{2r^2}{w(L)^2}\right),\label{eq-fs_transmittance_2}\\
		w(L)&\coloneqq w_{0}\sqrt{1+\left(\frac{L}{L_{R}}\right)^2},\\
		L_{R}&\coloneqq\pi w_{0}^2\lambda^{-1},
	\end{align}
	and
	\begin{equation}\label{eq-atmospheric_transmittance_zenith_2}
		\eta_{\text{atm}}(L,h)=\left\{\begin{array}{l l} (\eta_{\text{atm}}^{\text{zen}})^{\sec\zeta} & \text{if } -\frac{\pi}{2}<\zeta<\frac{\pi}{2},\\[0.2cm] 0 & \text{if } |\zeta|\geq\frac{\pi}{2}, \end{array}\right.
	\end{equation}
	with $\eta_{\text{atm}}^{\text{zen}}$ the transmittance at zenith ($\zeta=0$). In general, the zenith angle $\zeta$ is given by
	\begin{equation}
		\cos\zeta=\frac{h}{L}-\frac{1}{2}\frac{L^2-h^2}{R_{\oplus}L}
	\end{equation}
	for a circular orbit of altitude $h$, with $R_{\oplus}\approx 6378$~km being the earth's radius. The following parameters thus characterize the total transmittance from satellite to ground: the initial beam waist $w_0$, the receiving aperture radius $r$, the wavelength $\lambda$ of the satellite-to-ground signals, and the atmospheric transmittance $\eta_{\text{atm}}^{\text{zen}}$ at zenith. Throughout the rest of this section, we take~\cite{KBD+19} $r=0.75$~m, $w_0=2.5$~cm, $\lambda=810$~nm, and $\eta_{\text{atm}}^{\text{zen}}=0.5$ at 810~nm~\cite{BMH+13}.
	
	After transmission, we assume a heralding procedure defined by post-selecting on coincident events using (perfect) photon-number-resolving detectors. One can justify this assumption because, in the high-loss and low-noise regimes ($\eta_{\text{sg}}^{(1)},\eta_{\text{sg}}^{(2)},\overline{n}\ll 1$), the probability of four-photon and three-photon occurrences is negligible compared to two-photon events. Therefore, upon successful heralding, the (unnormalized) quantum state shared by Alice and Bob is
	\begin{equation}\label{eq-initial_state_tilde_1_sats}
		\widetilde{\sigma}_{AB}(1)\coloneqq\Pi_{AB}\left(\mathcal{L}_A^{\eta_{\text{sg}}^{(1)},\overline{n}_1}\otimes\mathcal{L}_B^{\eta_{\text{sg}}^{(2)},\overline{n}_2}\right)(\rho_{AB}^S)\Pi_{AB},
	\end{equation}
	where
	\begin{equation}
		\Pi_{AB}\coloneqq(\ketbra{H}{H}_A+\ketbra{V}{V}_A)\otimes(\ketbra{H}{H}_B+\ketbra{V}{V}_B)
	\end{equation}
	is the projection onto the two-photon-coincidence subspace. Note that the projection $\Pi_{AB}$ is exactly the projection $\Lambda^1\otimes\Lambda^1$, with $\Lambda^1$ defined in \eqref{eq-photonic_heralding_example_1}. Then, the transmission-heralding success probability is, as per the definition in \eqref{eq-elem_link_success_prob},
	\begin{align}
		p&\coloneqq \Tr[\widetilde{\sigma}_{AB}(1)]\\
		&=\Tr\!\left[\Pi_{AB}\left(\mathcal{L}_A^{\eta_{\text{sg}}^{(1)},\overline{n}_1}\otimes\mathcal{L}_B^{\eta_{\text{sg}}^{(2)},\overline{n}_2}\right)(\rho_{AB}^S)\right].
	\end{align}
	
	Now, let us take the source state $\rho_{AB}^S$ to be the following:
	\begin{equation}\label{eq-sats_source_state}
		\rho_{AB}^S=f_S\Phi_{AB}^{+}+\left(\frac{1-f_S}{3}\right)(\Phi_{AB}^-+\Psi_{AB}^++\Psi_{AB}^-),
	\end{equation}
	where $f_S\in[0,1]$ and 
	\begin{align}
		\Phi_{AB}^{\pm}&\coloneqq\ketbra{\Phi^{\pm}}{\Phi^{\pm}}_{AB},\\
		\Psi_{AB}^{\pm}&\coloneqq\ketbra{\Psi^{\pm}}{\Psi^{\pm}}_{AB},\\
		\ket{\Phi^{\pm}}_{AB}&\coloneqq\frac{1}{\sqrt{2}}(\ket{H,H}_{AB}\pm\ket{V,V}_{AB}),\\
		\ket{\Psi^{\pm}}_{AB}&\coloneqq\frac{1}{\sqrt{2}}(\ket{H,V}_{AB}\pm\ket{V,H}_{AB}).
	\end{align}
	Using \eqref{eq-sats_source_state}, we obtain an explicit form for the (unnormalized) state $\widetilde{\sigma}_{AB}(1)$ in \eqref{eq-initial_state_tilde_1_sats}.
	
	\begin{proposition}[Quantum state of a satellite-to-ground elementary link~\cite{KBD+19}]\label{prop-noisy_transmission_output_Bell}
		Let $\eta_{\text{sg}}^{(1)},\eta_{\text{sg}}^{(2)},\overline{n}_1,\overline{n}_2\in[0,1]$, and consider the source state $\rho_{AB}^S$ given by \eqref{eq-sats_source_state}. Then, after successful heralding, the (unnormalized) state $\widetilde{\sigma}_{AB}(1)$ given by \eqref{eq-initial_state_tilde_1_sats} is equal to
		\begin{align}
			\widetilde{\sigma}_{AB}(1)&=\Pi_{AB}\left(\mathcal{L}_A^{\eta_{\text{sg}}^{(1)},\overline{n}_1}\otimes\mathcal{L}_B^{\eta_{\text{sg}}^{(2)},\overline{n}_2}\right)(\rho_{AB}^S)\Pi_{AB}\nonumber\\
			&=\frac{1}{2}\left(f_S(a+b)+\left(\frac{1-f_S}{3}\right)(a+2c-b)\right)\Phi_{AB}^+\nonumber\\
			&\quad +\frac{1}{2}\left(f_S(a-b)+\left(\frac{1-f_S}{3}\right)(a+2c+b)\right)\Phi_{AB}^-\nonumber\\
			&\quad +\frac{1}{2}\left(f_Sc+\left(\frac{1-f_S}{3}\right)(2a+c)\right)\Psi_{AB}^+\nonumber\\
			&\quad +\frac{1}{2}\left(f_Sc+\left(\frac{1-f_S}{3}\right)(2a+c)\right)\Psi_{AB}^-,\label{eq-rho_0_sats}
		\end{align}
		where
		\begin{equation}\label{eq-initial_state_params}
			a\coloneqq x_1x_2+y_1y_2,\quad b\coloneqq z_1z_2,\quad c\coloneqq x_1y_2+y_1x_2,
		\end{equation}
		and
		\begin{align}
			x_i&\coloneqq (1-\overline{n}_i)\eta_{\text{sg}}^{(i)}+\frac{\overline{n}_i}{2}\left(\left(1-2\eta_{\text{sg}}^{(i)}\right)^2+\left(\eta_{\text{sg}}^{(i)}\right)^2\right),\\
			y_i&\coloneqq \frac{\overline{n}_i}{2}\left(1-\eta_{\text{sg}}^{(i)}\right)^2,\\
			z_i&\coloneqq (1-\overline{n}_i)\eta_{\text{sg}}^{(i)}-\overline{n}_i\eta_{\text{sg}}^{(i)}\left(1-2\eta_{\text{sg}}^{(i)}\right),
		\end{align}
		for $i\in\{1,2\}$.
	\end{proposition}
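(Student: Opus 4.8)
The plan is to exploit the fact that, because the heralding projects onto the two-photon coincidence subspace, the heralded state lives entirely on the qubit-pair space spanned by $\{\ket{H},\ket{V}\}_A\otimes\{\ket{H},\ket{V}\}_B$, so the whole computation reduces to a single-party \emph{compressed} map followed by a tensor product. Concretely, since the projector in \eqref{eq-initial_state_tilde_1_sats} factorizes as $\Pi_{AB}=\Lambda^1_A\otimes\Lambda^1_B$ (with $\Lambda^1$ as in \eqref{eq-photonic_heralding_example_1}) and the two transmission channels of \eqref{eq-noisy_transmission_channel} act on disjoint systems, one has
\[
\widetilde{\sigma}_{AB}(1)=\left(\mathcal{C}_1\otimes\mathcal{C}_2\right)\!\left(\rho_{AB}^S\right),\qquad \mathcal{C}_i(\cdot)\coloneqq\Lambda^1\,\mathcal{L}^{\eta_{\text{sg}}^{(i)},\overline{n}_i}(\cdot)\,\Lambda^1 ,
\]
where each $\mathcal{C}_i$ is regarded as a (trace-non-increasing) map on the polarization qubit $\{\ket{H},\ket{V}\}$. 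Thus it suffices to compute the four matrix elements of $\mathcal{C}_i$ and then expand the tensor product in the Bell basis.

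First I would compute $\mathcal{C}_i$ explicitly from the beamsplitter model. The background state \eqref{eq-approx_th_state} is the mixture of the vacuum (weight $1-\overline{n}_i$) and of a single background photon of polarization $H$ or $V$ (each weight $\overline{n}_i/2$); I would propagate the input single photon together with each of these three environment components through the two beamsplitters (one per polarization mode) and project the signal back onto the single-photon subspace. The amplitudes needed are: a signal photon is transmitted with amplitude $\sqrt{\eta_{\text{sg}}^{(i)}}$, an environment photon is reflected into the signal with amplitude $\sqrt{1-\eta_{\text{sg}}^{(i)}}$, and when a signal photon meets a like-polarized background photon at a beamsplitter the Hong--Ou--Mandel amplitude for the coincidence outcome $\ket{1,1}\mapsto\ket{1,1}$ is $2\eta_{\text{sg}}^{(i)}-1$. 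Carrying out this bookkeeping — and using the $H\leftrightarrow V$ symmetry of \eqref{eq-approx_th_state}, which forces the result to be diagonal in the Pauli sense — yields
\[
\mathcal{C}_i(\ketbra{H}{H})=x_i\ketbra{H}{H}+y_i\ketbra{V}{V},\quad \mathcal{C}_i(\ketbra{V}{V})=y_i\ketbra{H}{H}+x_i\ketbra{V}{V},\quad \mathcal{C}_i(\ketbra{H}{V})=z_i\ketbra{H}{V},
\]
together with the conjugate relation for $\ketbra{V}{H}$, where $x_i,y_i,z_i$ are precisely the coefficients in the statement. As a consistency check, setting $\overline{n}_i=0$ gives $x_i=z_i=\eta_{\text{sg}}^{(i)}$ and $y_i=0$, which recovers Lemma~\ref{lem-pure_loss_erasure} restricted to the single-photon subspace.

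With $\mathcal{C}_i$ in hand, the remainder is linear algebra in the Bell basis. Applying $\mathcal{C}_1\otimes\mathcal{C}_2$ to the four Bell projectors, the diagonal populations recombine through $a=x_1x_2+y_1y_2$ and $c=x_1y_2+y_1x_2$, while the $HH/VV$ and $HV/VH$ coherences contract by the factor $b=z_1z_2$; re-expressing $\ketbra{HH}{HH}+\ketbra{VV}{VV}$, $\ketbra{HH}{VV}+\ketbra{VV}{HH}$, $\ketbra{HV}{HV}+\ketbra{VH}{VH}$ and $\ketbra{HV}{VH}+\ketbra{VH}{HV}$ in the Bell basis gives, for instance, $(\mathcal{C}_1\otimes\mathcal{C}_2)(\Phi_{AB}^{+})=\tfrac12\big[(a+b)\,\Phi_{AB}^{+}+(a-b)\,\Phi_{AB}^{-}+c\,\Psi_{AB}^{+}+c\,\Psi_{AB}^{-}\big]$, with analogous images for $\Phi^{-}$, $\Psi^{+}$ and $\Psi^{-}$. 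Substituting the source decomposition \eqref{eq-sats_source_state} and collecting the coefficient of each Bell projector then reproduces \eqref{eq-rho_0_sats} term by term.

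The main obstacle is the single-party computation of $\mathcal{C}_i$, and within it the coherence coefficient $z_i$: obtaining the correct sign in $z_i=(1-\overline{n}_i)\eta_{\text{sg}}^{(i)}-\overline{n}_i\eta_{\text{sg}}^{(i)}(1-2\eta_{\text{sg}}^{(i)})$ requires tracking the interference \emph{amplitude} $2\eta_{\text{sg}}^{(i)}-1$ (rather than its square) through the beamsplitter, since the two single-background-photon branches each contribute an off-diagonal term carrying this signed weight. Once $\mathcal{C}_i$ is established, the tensoring and the Bell-basis bookkeeping are routine.
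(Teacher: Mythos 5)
The paper does not actually prove this proposition: it is imported verbatim from Ref.~[KBD+19] and stated without proof, so there is no in-paper argument to compare yours against. Your blind derivation is correct and would serve as a complete proof. The reduction to a single-party compressed map $\mathcal{C}_i=\Lambda^1\,\mathcal{L}^{\eta_{\text{sg}}^{(i)},\overline{n}_i}(\cdot)\,\Lambda^1$ is legitimate because $\Pi_{AB}=\Lambda^1_A\otimes\Lambda^1_B$ and the two parties have independent thermal environments, and I have checked that your claimed action of $\mathcal{C}_i$ on the polarization qubit is exactly right: the vacuum branch contributes $(1-\overline{n}_i)\eta_{\text{sg}}^{(i)}$ to the diagonal, the like-polarized background branch contributes $\tfrac{\overline{n}_i}{2}(1-2\eta_{\text{sg}}^{(i)})^2$ via the Hong--Ou--Mandel $\ket{1,1}\to\ket{1,1}$ amplitude $2\eta_{\text{sg}}^{(i)}-1$, the cross-polarized branch contributes $\tfrac{\overline{n}_i}{2}\eta_{\text{sg}}^{(i)2}$ to $\ketbra{H}{H}$ and $\tfrac{\overline{n}_i}{2}(1-\eta_{\text{sg}}^{(i)})^2$ to the flipped population $y_i$, and the coherence picks up $(1-\overline{n}_i)\eta_{\text{sg}}^{(i)}+\overline{n}_i\eta_{\text{sg}}^{(i)}(2\eta_{\text{sg}}^{(i)}-1)=z_i$ because each single-background-photon branch contributes $\tfrac{\overline{n}_i}{2}\eta_{\text{sg}}^{(i)}(2\eta_{\text{sg}}^{(i)}-1)$. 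The subsequent Bell-basis bookkeeping, e.g.\ $(\mathcal{C}_1\otimes\mathcal{C}_2)(\Phi_{AB}^{+})=\tfrac12[(a+b)\Phi_{AB}^{+}+(a-b)\Phi_{AB}^{-}+c\Psi_{AB}^{+}+c\Psi_{AB}^{-}]$ and its analogues, reproduces every coefficient of \eqref{eq-rho_0_sats}. Two points deserve an explicit sentence in a written-up version: (i) the different environment branches add incoherently only because their final environment states are mutually orthogonal (this is what kills, for instance, interference between the ``signal transmitted, background retained'' and ``signal lost, background injected'' paths in the cross-polarized case); and (ii) the absence of off-diagonal terms such as $\ketbra{H}{V}\mapsto\ketbra{V}{H}$ follows from photon-number conservation in each individual mode pair, not merely from the $H\leftrightarrow V$ symmetry of the background state.
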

	
	From \eqref{eq-rho_0_sats}, we have that the transmission-heralding success probability is given by
	\begin{equation}\label{eq-trans_succ_prob_sats}
		p=\Tr[\widetilde{\sigma}_{AB}(1)]=a+c=(x_1+y_1)(x_2+y_2),
	\end{equation}
	so that the quantum state shared by Alice and Bob conditioned on successful heralding is, as per the definition in \eqref{eq-initial_link_state_success},
	\begin{equation}\label{eq-rho_0_norm_sats}
		\sigma_{AB}^0=\frac{\widetilde{\sigma}_{AB}(1)}{p}.
	\end{equation}

\paragraph{Success probability and fidelity.}

	\begin{figure}
		\centering
		\includegraphics[scale=1]{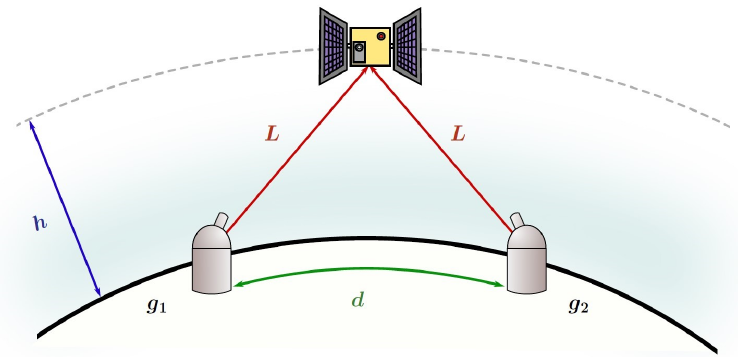}
		\caption{Optical satellite-to-ground transmission~\cite{KBD+19}. Two ground stations $g_1$ and $g_2$ are separated by a distance $d$ with a satellite at altitude $h$ at the midpoint. Both ground stations are the same distance $L$ away from the satellite, so that the total transmittance for two-qubit entanglement transmission (one qubit to each ground station) is $\eta_{\text{sg}}^2$, where $\eta_{\text{sg}}=\eta_{\text{fs}}\eta_{\text{atm}}$, with $\eta_{\text{fs}}$ given by \eqref{eq-fs_transmittance_2} and $\eta_{\text{atm}}$ given by \eqref{eq-atmospheric_transmittance_zenith_2}.}\label{fig-atmosphere_geometry}
	\end{figure}

	Let us now evaluate the quality of entanglement transmission from a satellite to two ground stations. For illustrative purposes, and for simplicity, we focus primarily on the simple scenario depicted in Fig.~\ref{fig-atmosphere_geometry}, in which a satellite passes over the midpoint between two ground stations, although the same analysis can be done even when this is not the case. Since the satellite is an equal distance away from both ground stations, we have $\eta_{\text{sg}}^{(1)}=\eta_{\text{sg}}^{(2)}$. We also let $\overline{n}_1=\overline{n}_2$. This means that $x_1=x_2\equiv x$, $y_1=y_2\equiv y$ and $z_1=z_2\equiv z$, so that
	\begin{equation}\label{eq-sat_transmission_symmetric}
		a=x^2+y^2,\quad b=z^2,\quad c=2xy\quad(\eta_{\text{sg}}^{(1)}=\eta_{\text{sg}}^{(2)}=\eta_{\text{sg}}\text{ and } \overline{n}_1=\overline{n}_2=\overline{n}).
	\end{equation}
	In this scenario, given a distance $d$ between the ground stations and an altitude $h$ for the satellite, by simple geometry the distance $L$ between the satellite and either ground station is given by
	\begin{equation}\label{eq-link_distance_symmetric}
		L=\sqrt{4R_{\oplus}(R_{\oplus}+h)\sin^2\left(\frac{d}{4R_{\oplus}}\right)+h^2},
	\end{equation}
	where $R_{\oplus}$ is the radius of the earth.
	
	\begin{figure}
		\centering
		\includegraphics[scale=1]{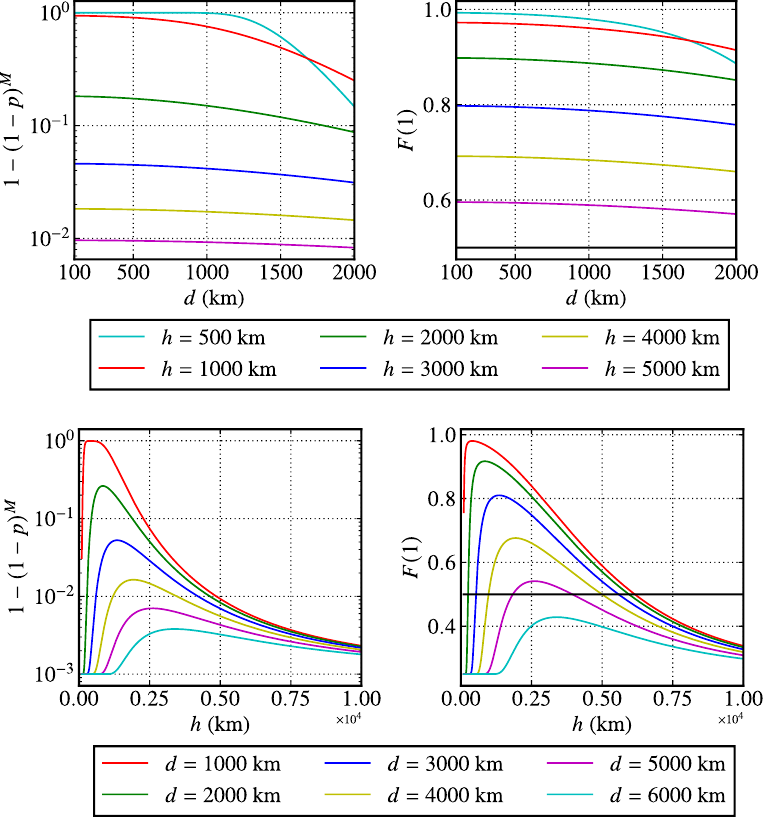}
		\caption{Plots of the transmission-heralding success probability as well as the initial fidelity of the quantum state $\sigma_{AB}^0$ conditioned on successful heralding for the situation depicted in Fig.~\ref{fig-atmosphere_geometry}, in which $\eta_{\text{sg}}^{(1)}=\eta_{\text{sg}}^{(2)}=\eta_{\text{sg}}$ and $\overline{n}_1=\overline{n}_2=\overline{n}$. Indicated is the threshold fidelity of $\frac{1}{2}$ beyond which the state $\sigma_{AB}^0$ is entangled (see Proposition~\ref{prop-initial_state_entanglement}). The success probability is shown in a multiplexing setting with $M=10^5$ (see Remark~\ref{rem-multiplexing}). Also, we have let $\overline{n}=10^{-4}$ and $f_S=1$.}\label{fig-initial_fid_prob}
	\end{figure}

	Now, let us consider the transmission-heralding success probability $p$ in \eqref{eq-trans_succ_prob_sats}. Due to the altitude of the satellites, there typically has to be multiplexing of the signals (see Remark~\ref{rem-multiplexing}) in order to maintain a high probability of both ground stations receiving the entangled state. In Fig.~\ref{fig-initial_fid_prob}, we plot the success probability with multiplexing, which is given by $1-(1-p)^M$, where $M$ is the number of distinct frequency modes used for multiplexing.
	
	We also plot in Fig.~\ref{fig-initial_fid_prob} the fidelity of the initial state, which is given by
	\begin{align}
		F(1)&=\bra{\Phi^+}\sigma_{AB}^0\ket{\Phi^+}=\frac{1}{p}\widetilde{F}(1),\label{eq-initial_fid_sats}\\
		\widetilde{F}(1)&=\bra{\Phi^+}\widetilde{\sigma}_{AB}(1)\ket{\Phi^+}\\
		&=\frac{1}{2}f_S(a+b)+\frac{1}{2}\left(\frac{1-f_S}{3}\right)(a+2c-b),\label{eq-initial_fid_tilde}
	\end{align}
	with $a,b,c$ given by \eqref{eq-initial_state_params} in general and by \eqref{eq-sat_transmission_symmetric} in the special case depicted in Fig.~\ref{fig-atmosphere_geometry}.
	
	The fidelity of $\sigma_{AB}^0$ with respect to $\Phi_{AB}^+$ is related in a simple way to the entanglement of $\sigma_{AB}^0$. In particular, by the partial positive transpose (PPT) criterion~\cite{Peres96,HHH96}, $\sigma_{AB}^0$ is entangled if and only if its fidelity with respect to $\Phi_{AB}^+$ is strictly greater than $\frac{1}{2}$, and this leads to constraints on the loss and noise parameters of the satellite-to-ground transmission. 
	
	\begin{proposition}\label{prop-initial_state_entanglement}
		The quantum state $\sigma_{AB}^0$ after successful satellite-to-ground transmission, as defined in \eqref{eq-rho_0_norm_sats}, is entangled if and only if the fidelity of the source state in \eqref{eq-sats_source_state} satisfies $f_S>\frac{1}{2}$, and
		\begin{equation}\label{eq-initial_state_ent_cond_pf}
			2(f_S-1)a+(4f_S-1)b-(1+2f_S)c>0,
		\end{equation}
		with $a,b,c$ given by \eqref{eq-initial_state_params} in general and by \eqref{eq-sat_transmission_symmetric} in the special case depicted in Fig.~\ref{fig-atmosphere_geometry}.
	\end{proposition}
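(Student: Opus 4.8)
The plan is to exploit the fact that, by Proposition~\ref{prop-noisy_transmission_output_Bell}, the normalized state $\sigma_{AB}^0 = \widetilde{\sigma}_{AB}(1)/p$ is diagonal in the Bell basis $\{\Phi_{AB}^{\pm},\Psi_{AB}^{\pm}\}$. Write $\lambda_+,\lambda_-,\mu$ for the (unnormalized) weights of $\Phi^+$, of $\Phi^-$, and of each of $\Psi^{\pm}$ read off from \eqref{eq-rho_0_sats}. For a two-qubit state the Peres--Horodecki (PPT) criterion is necessary and sufficient, and for a Bell-diagonal state it reduces to the statement that the state is entangled if and only if its largest Bell weight exceeds one half. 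First I would record this reduction and note that, since $p=a+c$ by \eqref{eq-trans_succ_prob_sats}, the condition that the $\Phi^+$ weight exceeds one half reads $\lambda_+ > p/2$, and that $\lambda_+ = \widetilde{F}(1)$ by \eqref{eq-initial_fid_tilde}.

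The first computation is to verify that $\lambda_+ > p/2$ is exactly \eqref{eq-initial_state_ent_cond_pf}. Multiplying $2\lambda_+ > a+c$ through by $3$ and collecting the coefficients of $a$, $b$, and $c$ yields $2(f_S-1)a + (4f_S-1)b - (1+2f_S)c > 0$, the claimed inequality; equivalently, this is the statement $F(1) = \widetilde{F}(1)/p > 1/2$. This already settles the reverse implication: if the inequality holds then $\lambda_+ > p/2$, so the $\Phi^+$ weight is the unique Bell weight exceeding one half and $\sigma_{AB}^0$ is entangled.

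For the forward implication I would proceed in two steps. To obtain $f_S > 1/2$, note that the map sending $\rho_{AB}^S$ to $\widetilde{\sigma}_{AB}(1) = \Pi_{AB}(\mathcal{L}_A^{\eta_{\text{sg}}^{(1)},\overline{n}_1} \otimes \mathcal{L}_B^{\eta_{\text{sg}}^{(2)},\overline{n}_2})(\rho_{AB}^S)\Pi_{AB}$ is built from local channels followed by the product projection $\Pi_{AB}$, and therefore cannot create entanglement; since the source \eqref{eq-sats_source_state} is a qubit isotropic state, which is separable precisely when $f_S \le 1/2$, an entangled $\sigma_{AB}^0$ forces $f_S > 1/2$. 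With $f_S > 1/2$ secured, I would then show that $\lambda_+$ dominates the other three weights, so that the entanglement of the Bell-diagonal $\sigma_{AB}^0$ (some weight $> p/2$) can only be witnessed by $\lambda_+ > p/2$, i.e.\ by \eqref{eq-initial_state_ent_cond_pf}. The needed identities are $\lambda_+ - \lambda_- = \tfrac{1}{3}b(4f_S-1)$ and $\lambda_+ - \mu = \tfrac{1}{6}(a+b-c)(4f_S-1)$, with $a-c = (x_1-y_1)(x_2-y_2)$; since $4f_S-1 > 1 > 0$, the domination reduces to the sign facts $b \ge 0$ and $a+b-c \ge 0$.

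The main obstacle is exactly this sign analysis. Using the explicit forms in Proposition~\ref{prop-noisy_transmission_output_Bell}, one has $b = z_1 z_2$ and $a+b-c = (x_1-y_1)(x_2-y_2) + z_1 z_2$, and writing $z_i = \eta_{\text{sg}}^{(i)}\bigl(1 - 2\overline{n}_i(1-\eta_{\text{sg}}^{(i)})\bigr)$ these signs are transparent in the physically relevant low-noise regime but need care for general $\overline{n}_i \in [0,1]$. I would either restrict to the regime in which the two-photon-coincidence heralding approximation is justified (so $\overline{n}_i \ll 1$ and each $z_i > 0$), or bound the relevant products directly. Everything else is routine bookkeeping in the Bell basis.
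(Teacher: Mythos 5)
Your route is the same as the paper's at its core: write $\sigma_{AB}^0$ in Bell-diagonal form using Proposition~\ref{prop-noisy_transmission_output_Bell} and invoke the PPT criterion to reduce entanglement to $\bra{\Phi^+}\sigma_{AB}^0\ket{\Phi^+}>\tfrac{1}{2}$, which after clearing denominators is exactly \eqref{eq-initial_state_ent_cond_pf}. The paper's proof essentially stops there, asserting that for this state entanglement is equivalent to the $\Phi^+$ fidelity exceeding one half and noting only that the Bell coefficients are non-negative when $f_S>\tfrac{1}{2}$. You go further on the ``only if'' direction in two ways the paper does not. First, you derive $f_S>\tfrac{1}{2}$ rather than assume it, by observing that the transmission map is local channels followed by a product projection, hence a stochastic LOCC operation that cannot create entanglement from the separable ($f_S\le\tfrac{1}{2}$) source state; this is a clean argument absent from the paper. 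Second, you make explicit that the equivalence ``entangled $\Leftrightarrow$ $\Phi^+$ weight $>1/2$'' requires the $\Phi^+$ weight to dominate the other three Bell weights, and you reduce this to the sign conditions $b\ge 0$ and $a+b-c\ge 0$ via the identities $\lambda_+-\lambda_-=\tfrac{1}{3}b(4f_S-1)$ and $\lambda_+-\mu=\tfrac{1}{6}(a+b-c)(4f_S-1)$, both of which check out against \eqref{eq-rho_0_sats}. This exposes a real gap in the paper's own argument: since $z_i=\eta_{\text{sg}}^{(i)}\bigl(1-2\overline{n}_i(1-\eta_{\text{sg}}^{(i)})\bigr)$ can be negative for large $\overline{n}_i$, the quantity $b=z_1z_2$ need not be non-negative, in which case $\Phi^-$ could carry the dominant weight and the stated ``only if'' would fail. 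Your proposal honestly flags this and offers to restrict to the low-noise regime in which the two-photon heralding approximation is justified anyway; that restriction (or an explicit verification of the signs) is what is needed to make the forward implication airtight, and it is missing from the published proof as well. The remaining sign analysis is the only loose end in your write-up, and it is one the paper shares.
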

	
	\begin{proof}
		Observe that the state $\sigma_{AB}^0$ is a Bell-diagonal state of the form
		\begin{equation}\label{eq-initial_state_entanglement_pf1}
			\sigma_{AB}^0=(\alpha+\beta)\Phi_{AB}^++(\alpha-\beta)\Phi_{AB}^-+\gamma\Psi_{AB}^++\gamma\Psi_{AB}^-,
		\end{equation}
		where $\alpha,\beta,\gamma\geq 0$ (when $f_S>\frac{1}{2}$). Indeed, the coefficient of $\Phi_{AB}^+$ in \eqref{eq-rho_0_sats} can be written as
		\begin{equation}
			\frac{1}{2}f_Sa+\frac{1}{2}\left(\frac{1-f_S}{3}\right)(a+2c)+\frac{1}{2}f_Sb-\frac{1}{2}\left(\frac{1-f_S}{3}\right)b,
		\end{equation}
		and the coefficient of $\Phi_{AB}^-$ in \eqref{eq-rho_0_sats} can be written as
		\begin{equation}
			\frac{1}{2}f_Sa+\frac{1}{2}\left(\frac{1-f_S}{3}\right)(a+2c)-\left(\frac{1}{2}f_Sb-\frac{1}{2}\left(\frac{1-f_S}{3}\right)b\right).
		\end{equation}
		We can thus make the following identifications:
		\begin{align}
			\alpha&\equiv\frac{1}{a+c}\left(\frac{1}{2}f_Sa+\frac{1}{2}\left(\frac{1-f_S}{3}\right)(a+2c)\right),\label{eq-rho0_alpha}\\
			\beta&\equiv \frac{1}{a+c}\left(\frac{1}{2}f_Sb-\frac{1}{2}\left(\frac{1-f_S}{3}\right)b\right),\label{eq-rho0_beta}\\
			\gamma&\equiv \frac{1}{2}f_Sc+\frac{1}{2}\left(\frac{1-f_S}{3}\right)(2a+c).\label{eq-rho0_gamma}
		\end{align}
		Now, using the PPT criterion~\cite{Peres96,HHH96}, we have that $\sigma_{AB}^0$ is entangled if and only if $\bra{\Phi^+}\sigma_{AB}^0\ket{\Phi^+}>\frac{1}{2}$. Then, from \eqref{eq-initial_fid_sats}, we have that
		\begin{equation}
			\bra{\Phi^+}\sigma_{AB}^0\ket{\Phi^+}=\frac{1}{2}f_S\frac{a+b}{a+c}+\frac{1}{2}\left(\frac{1-f_S}{3}\right)\frac{a+2c-b}{a+c},
		\end{equation}
		so we require
		\begin{equation}
			\frac{1}{2}f_S\frac{a+b}{a+c}+\frac{1}{2}\left(\frac{1-f_S}{3}\right)\frac{a+2c-b}{a+c}>\frac{1}{2}.
		\end{equation}
		Simplifying this leads to
		\begin{equation}
			2(f_S-1)a+(4f_S-1)b-(1+2f_S)c>0,
		\end{equation}
		as required.
	\end{proof}
	
	\begin{figure}
		\centering
		\includegraphics[scale=1]{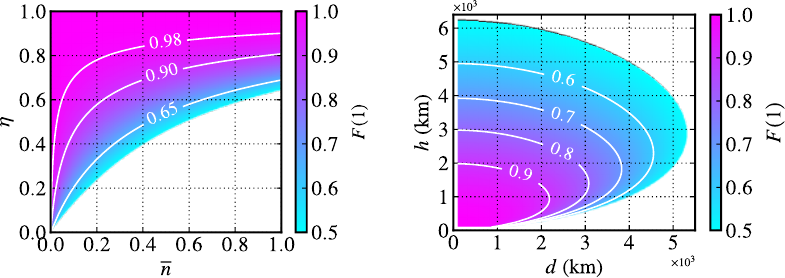}
		\caption{Plots of the entanglement region for the state $\sigma_{AB}^0$ obtained after successful satellite-to-ground transmission for the scenario depicted in Fig.~\ref{fig-atmosphere_geometry}. The regions are defined by the condition $F(1)>\frac{1}{2}$, with $F(1)$ the fidelity of the state $\sigma_{AB}^0$ with the maximally entangled state; see \eqref{eq-initial_fid_sats} and Proposition~\ref{prop-initial_state_entanglement}. For both plots, we assume $f_S=1$. For the right-hand plot, we take $\overline{n}_1=\overline{n}_2=10^{-4}$.}\label{fig-ent_sg}
	\end{figure}

	Now, for the scenario depicted in Fig.~\ref{fig-atmosphere_geometry}, we have that $x_1=x_2=x$, $y_1=y_2=y$, and $z_1=z_2=z$, so that from \eqref{eq-sat_transmission_symmetric} we have $a=x^2+y^2$, $b=z^2$, and $c=2xy$. Substituting this into \eqref{eq-initial_state_ent_cond_pf} leads to $2(f_S-1)(x^2+y^2)+(4f_S-1)z^2-2(1+2f_S)xy>0$ as the condition for $\sigma_{AB}^0$ to be entangled. We plot this condition in Fig.~\ref{fig-ent_sg}. The inequality gives us the colored regions, and the values within the regions are obtained by evaluating the fidelity according to \eqref{eq-initial_fid_sats}.

\paragraph{Key rates for QKD.}
	
	Let us also consider key rates for quantum key distribution (QKD) between Alice and Bob, who are at the ends of the elementary link whose quantum state is $\sigma_{AB}^0$ (conditioned on successful transmission and heralding), as given by \eqref{eq-rho_0_norm_sats}. We consider the BB84, six-state, and device-independent (DI) QKD protocols, and we calculate the secret key rates using known asymptotic secret key rate formulas, which we review (along with other necessary background on QKD) in Appendix~\ref{sec-QKD}.
	
	Recalling from the proof of Proposition~\ref{prop-initial_state_entanglement} that $\sigma_{AB}^0$ is a quantum state of the form
	\begin{equation}
		\sigma_{AB}^0=(\alpha+\beta)\Phi_{AB}^++(\alpha-\beta)\Phi_{AB}^-+\gamma\Psi_{AB}^++\gamma\Psi_{AB}^-,
	\end{equation}
	with $\alpha,\beta,\gamma$ defined in \eqref{eq-rho0_alpha}--\eqref{eq-rho0_gamma}, it is easy to show using \eqref{eq-rhoAB_Qx}--\eqref{eq-rhoAB_Qz} that the quantum bit-error rates for the BB84 and six-state protocols are
	\begin{align}
		Q_{\text{BB84}}^{(d,h)}&=\frac{1}{2}(Q_x+Q_z)=\frac{3}{4}-\frac{1}{2}\beta-\alpha,\label{eq-QBER_BB84_sats}\\
		Q_{\text{6-state}}^{(d,h)}&=\frac{1}{3}(Q_x+Q_y+Q_z)=\frac{2}{3}(1-(\alpha+\beta)).\label{eq-QBER_6state_sats}
	\end{align}
	For the device-independent protocol, we assume that the correlation is such that the quantum bit-error rate is $Q_{\text{DI}}^{(d,h)}=Q_{\text{6-state}}^{(d,h)}$ and $S^{(d,h)}=2\sqrt{2}(1-2Q_{\text{DI}}^{(d,h)})$. Then, assuming that $M$ signals per second are transmitted from the satellite, the secret-key rate (in units of secret key bits per second) is given by $\widetilde{K}=pMK$, where $p=a+c$ is the success probability of elementary link generation and $K$ is the asymptotic secret key rate per copy of the state $\sigma_{AB}^0$, which depends on the protocol under consideration. Using the formulas in Appendix~\ref{sec-QKD}, we obtain
	\begin{align}
		\widetilde{K}_{\text{BB84}}(d,h)&=M(a+c)K_{\text{BB84}}(Q_{\text{BB84}}^{(d,h)})\label{eq-key_rate_BB84_sats}\\
		\widetilde{K}_{\text{6-state}}(d,h)&=M(a+c)K_{\text{6-state}}(Q_{\text{6-state}}^{(d,h)})\label{eq-key_rate_6state_sats}\\
		\widetilde{K}_{\text{DI}}(d,h)&=M(a+c)K_{\text{DI}}(Q_{\text{DI}}^{(d,h)},S^{(d,h)})\label{eq-key_rate_DI_sats}
	\end{align}
	We plot these secret key rates in Fig.~\ref{fig-key_rates_sats}.
	
	\begin{figure}
		\centering
		\includegraphics[scale=1]{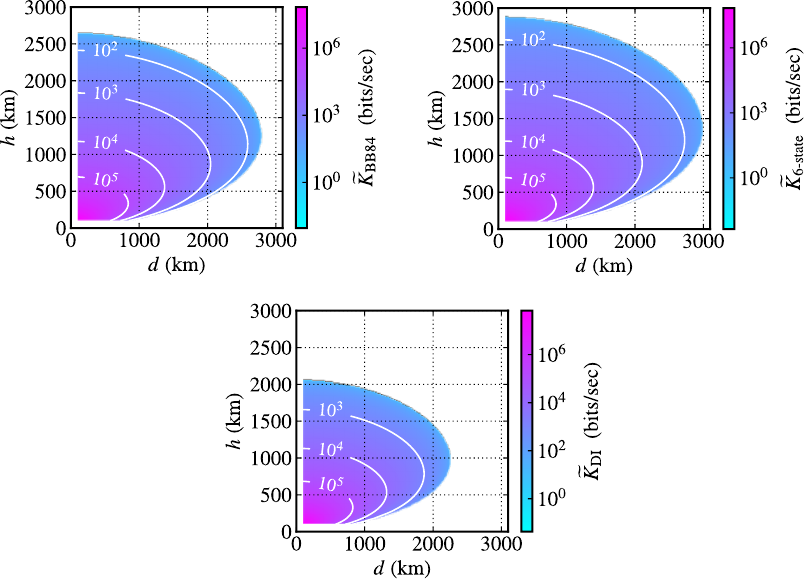}
		\caption{Asymptotic secret key rates for the BB84, six-state, and device-independent (DI) quantum key distribution protocols for the scenario depicted in Fig.~\ref{fig-atmosphere_geometry}. When calculating the error rates in \eqref{eq-QBER_BB84_sats} and \eqref{eq-QBER_6state_sats}, we take $f_S=1$. To calculate the key rates in \eqref{eq-key_rate_BB84_sats}, \eqref{eq-key_rate_6state_sats}, and \eqref{eq-key_rate_DI_sats}, we have taken $M=10^9$.}\label{fig-key_rates_sats}
	\end{figure}
	
	In Fig.~\ref{fig-key_rates_sats}, notice that the region of non-zero secret key rate is largest for the six-state protocol, with the region for the BB84 protocol being smaller and the region for the DI protocol being even smaller. This is due to the fact that the error threshold for the DI protocol is the smallest among the three protocols, with the error threshold for the BB84 protocol slightly larger, and the error threshold for the 6-state protocol the largest.

\toclesslab\section{Policies for satellite-to-ground entanglement distribution}{sec-sats}

	In this section, we present an example of an analysis of elementary links based on the satellite-to-ground transmission model presented in Sec.~\ref{sec-sat_architecture} based on Ref.~[\onlinecite{KBD+19}].
	
	\bigskip

\toclesslab\subsection{Quantum memory model}{sec-quantum_memory_sats}

	Having examined the quantum state immediately after successful transmission and heralding, let us now consider a particular model of decoherence for the quantum memories in which the transmitted qubits are stored. For illustrative purposes, we consider a simple amplitude damping decoherence model for the quantum memories. The amplitude damping channel $\mathcal{A}_{\gamma}$ is a qubit channel, with $\gamma\in[0,1]$, such that~\cite{NC00_book}
	\begin{align}
		\mathcal{A}_{\gamma}(\ketbra{0}{0})&=\ketbra{0}{0},\\
		\mathcal{A}_{\gamma}(\ketbra{0}{1})&=\sqrt{1-\gamma}\ketbra{0}{1},\\
		\mathcal{A}_{\gamma}(\ketbra{1}{0})&=\sqrt{1-\gamma}\ketbra{1}{0},\\
		\mathcal{A}_{\gamma}(\ketbra{1}{1})&=\gamma\ketbra{0}{0}+(1-\gamma)\ketbra{1}{1}.
	\end{align}
	Note that for $\gamma=0$ we recover the noiseless (identity) channel. We can relate $\gamma$ to the coherence time of the quantum memory, which we denote by $t_{\text{coh}}$, as follows [\onlinecite[Sec.~3.4.3]{Preskill20}]:
	\begin{equation}\label{eq-AD_gamma_coh}
		\gamma\coloneqq 1-\e^{-\frac{1}{t_{\text{coh}}}}.
	\end{equation}
	Note that infinite coherence time corresponds to an ideal quantum memory, meaning that the quantum channel is noiseless. Indeed, by relating the noise parameter $\gamma$ to the coherence time as in \eqref{eq-AD_gamma_coh}, we have that $t_{\text{coh}}=\infty\Rightarrow\gamma=0$.
	
	For $m\in\mathbb{N}_0$ applications of the amplitude damping channel, it is straightforward to show that
	\begin{align}
		\mathcal{A}_{\gamma}^{\circ m}(\ketbra{0}{0})&=\ketbra{0}{0},\\
		\mathcal{A}_{\gamma}^{\circ m}(\ketbra{0}{1})&=\sqrt{\lambda_m}\ketbra{0}{1},\\
		\mathcal{A}_{\gamma}^{\circ m}(\ketbra{1}{0})&=\sqrt{\lambda_m}\ketbra{1}{0},\\
		\mathcal{A}_{\gamma}^{\circ m}(\ketbra{1}{1})&=(1-\lambda_m)\ketbra{0}{0}+\lambda_m\ketbra{1}{1},
	\end{align}
	where $\lambda_m\coloneqq\e^{-\frac{m}{t_{\text{coh}}}}=(1-\gamma)^m$. Then, for all $m\in\mathbb{N}_0$,
	\begin{align}
		\sigma_{AB}(m)&\coloneqq(\mathcal{A}_{\gamma}^{\circ m}\otimes\mathcal{A}_{\gamma}^{\circ m})(\sigma_{AB}^0)\\
		&=\left(\alpha\lambda_m^2+\left(\beta-\frac{1}{2}\right)\lambda_m+\frac{1}{2}\right)\Phi_{AB}^++\left(\alpha\lambda_m^2+\left(-\beta-\frac{1}{2}\right)\lambda_m+\frac{1}{2}\right)\Phi_{AB}^-\\
		&\quad +\lambda_m\left(\frac{1}{2}-\alpha\lambda_m\right)\Psi_{AB}^++\lambda_m\left(\frac{1}{2}-\alpha\lambda_m\right)\Psi_{AB}^-\\
		&\quad+\frac{1}{2}(1-\lambda_m)\left(\ketbra{\Phi^+}{\Phi^-}_{AB}+\ketbra{\Phi^-}{\Phi^+}_{AB}\right),\label{eq-elem_link_m_sats}
	\end{align}
	where $\alpha$ and $\beta$ are given by \eqref{eq-rho0_alpha} and \eqref{eq-rho0_beta}, respectively. Note that we have assumed that the memories corresponding to systems $A$ and $B$ have the same coherence time. It follows that
	\begin{equation}
		f(m)=\bra{\Phi^+}(\mathcal{A}_{\gamma}^{\circ m}\otimes\mathcal{A}_{\gamma}^{\circ m})(\sigma_{AB}^0)\ket{\Phi^+}=\alpha\lambda_m^2+\left(\beta-\frac{1}{2}\right)\lambda_m+\frac{1}{2},\label{eq-fid_decay_sats}
	\end{equation}
	for all $m\in\mathbb{N}_0$. Note that $f(m)\leq f(0)$ for all $m\in\mathbb{N}_0$.


\toclesslab\subsection{Memory-cutoff policy}{sec-sats_policies_mem_cutoff}

	Let us now consider the memory-cutoff policy, which we defined in Sec.~\ref{sec-elem_link_policies}. In what follows, we make use of the following definitions for the deterministic decision functions corresponding to the memory-cutoff policy:
	\begin{align}
		d^{t^{\star}}\!(m)&\coloneqq\left\{\begin{array}{l l} 0 & \text{if }m\in\{0,1,\dotsc,t^{\star}-1\},\\ 1 & \text{if }m=-1,t^{\star}, \end{array}\right.\label{eq-mem_cutoff_policy_decision_func}\\
		d^{\infty}(m)&\coloneqq\left\{\begin{array}{l l} 0 & \text{if }m\in\{0,1,2,\dotsc\}, \\ 1 & \text{if }m=-1. \end{array}\right.\label{eq-mem_cutoff_policy_infty}
	\end{align}
	Using \eqref{eq-avg_fid_tilde_tInfty} and \eqref{eq-avg_fid_tInfty}, along with the expression for $f(m)$ in \eqref{eq-fid_decay_sats}, for every cutoff $t^{\star}\in\mathbb{N}_0$ we obtain
	\begin{align}
		\lim_{t\to\infty}\widetilde{F}^{t^{\star}}\!\!(t)&=\frac{p}{1+t^{\star}p}\sum_{m=0}^{t^{\star}}\!\!\left(\alpha\lambda_m^2+\left(\beta-\frac{1}{2}\right)\lambda_m+\frac{1}{2}\right),\\[0.5cm]
		\lim_{t\to\infty}F^{t^{\star}}\!\!(t)&=\frac{1}{t^{\star}+1}\sum_{m=0}^{t^{\star}}\left(\alpha\lambda_m^2+\left(\beta-\frac{1}{2}\right)\lambda_m+\frac{1}{2}\right).
	\end{align}
	Then, using the fact that $\lambda_m=\e^{-\frac{m}{t_{\text{coh}}}}$, it is straightforward to show that
	\begin{align}
		\sum_{m=0}^{t^{\star}}\lambda_m&=\e^{-\frac{t^{\star}}{2t_{\text{coh}}}}\frac{\sinh\left(\frac{1+t^{\star}}{2t_{\text{coh}}}\right)}{\sinh\left(\frac{1}{2t_{\text{coh}}}\right)},\\
		\sum_{m=0}^{t^{\star}}\lambda_m^2&=\e^{-\frac{t^{\star}}{t_{\text{coh}}}}\frac{\sinh\left(\frac{1+t^{\star}}{t_{\text{coh}}}\right)}{\sinh\left(\frac{1}{t_{\text{coh}}}\right)}.
	\end{align}
	Therefore, in the steady-state limit,
	\begin{align}
		\lim_{t\to\infty}\widetilde{F}^{t^{\star}}\!\!(t)&=\frac{\alpha p \e^{-\frac{t^{\star}}{t_{\text{coh}}}}}{1+t^{\star}p}\frac{\sinh\left(\frac{1+t^{\star}}{t_{\text{coh}}}\right)}{\sinh\left(\frac{1}{t_{\text{coh}}}\right)}+\frac{p \e^{-\frac{t^{\star}}{2t_{\text{coh}}}}}{1+t^{\star}p}\left(\beta-\frac{1}{2}\right)\frac{\sinh\left(\frac{1+t^{\star}}{2t_{\text{coh}}}\right)}{\sinh\left(\frac{1}{2t_{\text{coh}}}\right)}+\frac{1}{2}\frac{(t^{\star}+1)p}{1+t^{\star}p},\\[0.5cm]
		\lim_{t\to\infty}F^{t^{\star}}\!\!(t)&=\frac{\alpha\e^{-\frac{t^{\star}}{t_{\text{coh}}}}}{t^{\star}+1}\frac{\sinh\left(\frac{1+t^{\star}}{t_{\text{coh}}}\right)}{\sinh\left(\frac{1}{t_{\text{coh}}}\right)}+\frac{\e^{-\frac{t^{\star}}{2t_{\text{coh}}}}}{t^{\star}+1}\left(\beta-\frac{1}{2}\right)\frac{\sinh\left(\frac{1+t^{\star}}{2t_{\text{coh}}}\right)}{\sinh\left(\frac{1}{2t_{\text{coh}}}\right)}+\frac{1}{2}.
	\end{align}

	For $t^{\star}=\infty$, from \eqref{eq-avg_fid_tilde_sats}, we obtain
	\begin{equation}
		\widetilde{F}^{\infty}(t)=\sum_{m=0}^{t-1}\left(\alpha\lambda_m^2+\left(\beta-\frac{1}{2}\right)\lambda_m+\frac{1}{2}\right)p(1-p)^{t-1-m}
	\end{equation}
	for all $t\geq 1$. Evaluating the sums leads to
	\begin{equation}\label{eq-fid_tilde_tstarInf}
		\widetilde{F}^{\infty}(t)=\frac{\alpha p\e^{\frac{2}{t_{\text{coh}}}}\left(\e^{-\frac{2t}{t_{\text{coh}}}}-(1-p)^t\right)}{1-\e^{\frac{2}{t_{\text{coh}}}}(1-p)}+\left(\beta-\frac{1}{2}\right)\frac{p\e^{\frac{1}{t_{\text{coh}}}}\left(\e^{-\frac{t}{t_{\text{coh}}}}-(1-p)^t\right)}{1-\e^{\frac{1}{t_{\text{coh}}}}(1-p)}+\frac{1}{2}\left(1-(1-p)^t\right).
	\end{equation}
	Then, for all $p\in(0,1]$, we obtain $\lim_{t\to\infty}\widetilde{F}^{\infty}(t)=\frac{1}{2}$.
	
	Let us now focus primarily on the $t^{\star}=\infty$ memory-cutoff policy by considering an example. Consider the situation depicted in Fig.~\ref{fig-atmosphere_geometry}, in which we have two ground stations separated by a distance $d$ and a satellite at altitude $h$ that passes over the midpoint between the two ground stations. Now, given that the ground stations are separated by a distance $d$, it takes time at least $\frac{2d}{c}$ to perform the heralding procedure, as this is the round-trip communication time between the ground stations ($c$ is the speed of light). We thus take the duration of each time step in the decision process for the elementary link to be $\frac{2d}{c}$. If the coherence time of the quantum memories is $x$ seconds, then $t_{\text{coh}}=\frac{xc}{2d}$ time steps. In Fig.~\ref{fig-tstarInf_sats}, we plot the quantities $\widetilde{F}^{\infty}(t)$ (solid lines), $F^{\infty}(t)$ (dashed lines), and $X^{\infty}(t)$ (dotted lines) for the $t^{\star}=\infty$ memory-cutoff policy under this scenario.
	
	\begin{figure}
		\centering
		\includegraphics[scale=1]{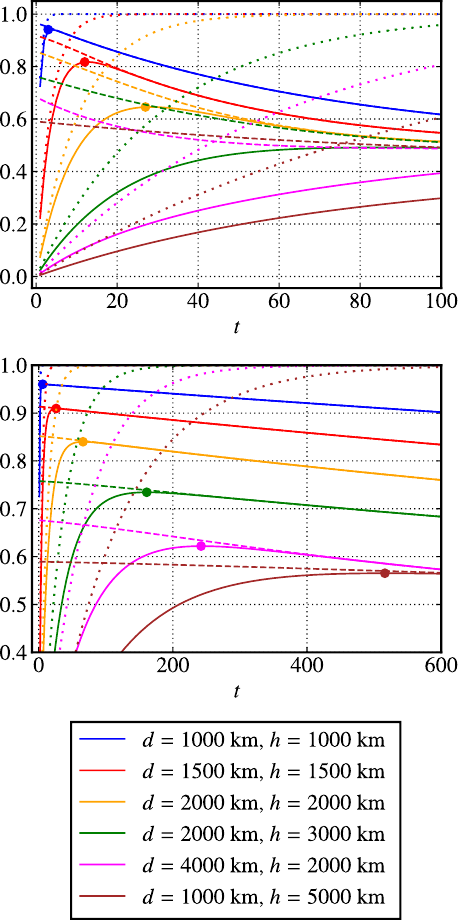}
		\caption{The $t^{\star}=\infty$ memory-cutoff policy for satellite-to-ground elementary link generation for various ground distances $d$ and satellite altitudes $h$, according to the situation depicted in Fig.~\ref{fig-atmosphere_geometry}. The solid lines are $\widetilde{F}^{\infty}(t)$ (as given by \eqref{eq-fid_tilde_tstarInf}), the dashed lines are $F^{\infty}(t)$, and the dotted lines are $X^{\infty}(t)=1-(1-p)^t$ (see \eqref{cor-link_status_Pr1}), where $p=1-(1-(a+c))^M$, with $a$ and $c$ given by \eqref{eq-sat_transmission_symmetric} and $M=10^5$. We let $f_S=1$ be the fidelity of the source, we let $\overline{n}_1=\overline{n}_2=10^{-4}$ be the average number of background photons, and we take the memory coherence times to be 1 s (top) and 60 s (bottom). The dots are placed at the maxima of the curves for $\widetilde{F}^{\infty}(t)$.}\label{fig-tstarInf_sats}
	\end{figure}
	
	In Fig.~\ref{fig-tstarInf_sats}, we can see the trade-off between the quantities $\widetilde{F}$, $F$, and $X$. On the one hand, the fidelity $F^{\infty}(t)$ is always highest at time $t=1$, as we expect, but at this point the probability $X^{\infty}(t)$ that the elementary link is active is simply $p$. Since we want not only a high fidelity for the elementary link but also a high probability that the elementary link is active, by optimizing $\widetilde{F}$ it is possible to achieve a higher elementary link activity probability at the expense of a slightly lower fidelity. Specifically, in Fig.~\ref{fig-tstarInf_sats}, we see that for every choice of $d$ and $h$ there exists a time step $t_{\text{crit}}\geq 1$ at which $\widetilde{F}$ is maximal. At this point, the elementary link activity probability is $1-(1-p)^{t_{\text{crit}}}$, which in many cases is dramatically greater than $p$, while the fidelity $F^{\infty}(t_{\text{crit}})$ is only slightly lower than the fidelity at time $t=1$. Therefore, by waiting until time $t_{\text{crit}}$, it is possible to obtain an elementary link that is almost deterministically active, while incurring only a slight decrease in the fidelity. The time $t_{\text{crit}}$, obtained by optimizing the quantity $\widetilde{F}^{\infty}(t)$ with respect to time $t$ and can be found using the formula in \eqref{eq-fid_tilde_tstarInf}, can be viewed as the optimal time $t$ that should be chosen for the quantum network protocol presented in Fig.~\ref{fig-QDP_protocol}. We refer to Ref.~[\onlinecite{CRDW19}] for an argument similar to the one presented here, except that in Ref.~[\onlinecite{CRDW19}] the time $t_{\text{crit}}$ is obtained by considering a desired value of the fidelity $F^{\infty}(t)$ rather than by optimizing $\widetilde{F}^{\infty}(t)$ with respect to $t$, which is what we do here.

	\bigskip

\toclesslab\subsection{Forward recursion policy}{sec-forward_recursion_policy_sats}

	The forward recursion policy is defined as the time-homogeneous policy such that the action at time $t$ is equal to the one that maximizes the quantity $\widetilde{F}^{\pi}(t+1)$ at the next time step. The corresponding decision function is~\cite{Kha21b}
	\begin{equation}\label{eq-forward_greedy_decision_func}
		d^{\text{FR}}(m)=\left\{\begin{array}{l l} 1 & \text{if }m=-1, \\ 0 & \text{if }m\geq 0\text{ and } f(m+1)>pf(0), \\ 1 & \text{if }m\geq 0\text{ and } f(m+1)\leq pf(0). \end{array}\right.
	\end{equation}
	
	Observe that if $p=1$, then the second condition in \eqref{eq-forward_greedy_decision_func} is always false, because of the fact that $f(m)\leq f(0)$ for all $m\in\mathbb{N}_0$; see \eqref{eq-fid_decay_sats}. Therefore, when $p=1$, we have that $d_t^{\text{FR}}=d_t^0$, i.e., the forward recursion policy is equal to the $t^{\star}=0$ memory-cutoff policy; see \eqref{eq-mem_cutoff_policy_infty}. We now show that the forward recursion policy reduces to a memory-cutoff policy even when $p<1$.
		
	\begin{proposition}\label{prop-forward_recursion_sats}
		Consider satellite-to-ground bipartite elementary link generation with $\overline{n}_1=\overline{n}_2=0$ and $f_S=1$, and let $p\in(0,1)$ be the transmission-heralding success probability, as given by \eqref{eq-trans_succ_prob_sats}. Let $t_{\text{coh}}$ be the coherence time of the quantum memories, as defined in Sec.~\ref{sec-quantum_memory_sats}. Then, for all $t\geq 1$,
		\begin{equation}
			d^{\textnormal{FR}}=\left\{\begin{array}{l l} d^{\infty} & \text{if }p\leq\frac{1}{2}, \\[0.2cm] d^{t^{\star}} & \text{if }p>\frac{1}{2},\end{array}\right.
		\end{equation}
		where
		\begin{equation}\label{eq-forward_greedy_cutoff_sats_spec}
			t^{\star}=\left\lceil-\frac{t_{\textnormal{coh}}}{2}\ln(2p-1)-1\right\rceil.
		\end{equation}
		In other words, if $p\leq\frac{1}{2}$, then the forward recursion policy is equal to the $t^{\star}=\infty$ memory-cutoff policy; if $p>\frac{1}{2}$, then the forward recursion policy is equal to the $t^{\star}$ memory-cutoff policy, with $t^{\star}$ given by \eqref{eq-forward_greedy_cutoff_sats_spec}.
	\end{proposition}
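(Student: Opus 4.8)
The plan is to reduce the forward recursion decision function in \eqref{eq-forward_greedy_decision_func} to an explicit threshold on the memory index $m$, by first computing the figure of merit $f(m)$ in the stated special case and then comparing the resulting threshold against the memory-cutoff decision functions in \eqref{eq-mem_cutoff_policy_decision_func} and \eqref{eq-mem_cutoff_policy_infty}.

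First I would specialize the satellite-to-ground formulas to $\overline{n}_1=\overline{n}_2=0$ and $f_S=1$. Setting $\overline{n}_i=0$ in the definitions of $x_i,y_i,z_i$ gives $x_i=z_i=\eta_{\text{sg}}^{(i)}$ and $y_i=0$, whence $a=b=\eta_{\text{sg}}^{(1)}\eta_{\text{sg}}^{(2)}$ and $c=0$ by \eqref{eq-initial_state_params}; in particular $p=a+c=a$ by \eqref{eq-trans_succ_prob_sats}. Substituting $f_S=1$ and $c=0$ into \eqref{eq-rho0_alpha}--\eqref{eq-rho0_beta} yields $\alpha=\beta=\frac{1}{2}$, so that the fidelity decay \eqref{eq-fid_decay_sats} collapses to
\begin{equation}
	f(m)=\tfrac{1}{2}\left(1+\lambda_m^2\right)=\tfrac{1}{2}\left(1+\e^{-2m/t_{\text{coh}}}\right),\qquad f(0)=1.
\end{equation}

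Next I would insert this into the waiting condition $f(m+1)>pf(0)$ in the middle case of \eqref{eq-forward_greedy_decision_func}. Since $f(0)=1$, this is equivalent after elementary manipulation to $\e^{-2(m+1)/t_{\text{coh}}}>2p-1$, and the case split is then immediate. If $p\leq\frac{1}{2}$, the right-hand side is nonpositive while the left-hand side is strictly positive, so the inequality holds for every $m\geq 0$; hence $d^{\text{FR}}(m)=0$ for all $m\geq 0$ and $d^{\text{FR}}(-1)=1$, which is exactly $d^{\infty}$. If $p>\frac{1}{2}$, taking logarithms (valid since $2p-1>0$) turns the inequality into $m<T$, with $T\coloneqq-\frac{t_{\text{coh}}}{2}\ln(2p-1)-1$; thus $d^{\text{FR}}$ waits precisely on the states $m\in\{0,1,\dotsc,\lceil T\rceil-1\}$ and requests on $m=-1$ and on every $m\geq\lceil T\rceil$. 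Comparing with \eqref{eq-mem_cutoff_policy_decision_func}, this is the memory-cutoff policy with cutoff $t^{\star}=\lceil T\rceil$, which is \eqref{eq-forward_greedy_cutoff_sats_spec}.

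The step I expect to require the most care is the ceiling bookkeeping at the end, since the forward-recursion waiting condition is a strict inequality while the memory-cutoff policy discards exactly at $m=t^{\star}$. I would verify that for integer $m$ the set $\{m\geq 0:m<T\}$ equals $\{0,\dotsc,\lceil T\rceil-1\}$ regardless of whether $T$ is an integer, and then check the boundary state $m=\lceil T\rceil$ explicitly: when $T$ is an integer one computes $f(T+1)=p$, so the request condition $f(m+1)\leq pf(0)$ holds there with equality and the link is correctly discarded. I would also record that the hypothesis $p\in(0,1)$ forces $T>-1$, so that $t^{\star}=\lceil T\rceil\geq 0$ is a well-defined cutoff, and that $d^{\text{FR}}$ and $d^{t^{\star}}$ agree on the entire domain $\{-1,0,\dotsc,t^{\star}\}$ on which the memory-cutoff decision function is defined.
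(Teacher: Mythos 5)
Your proposal is correct and follows essentially the same route as the paper's proof: specialize to $\overline{n}_1=\overline{n}_2=0$, $f_S=1$ to get $\alpha=\beta=\tfrac{1}{2}$, reduce the forward-recursion waiting condition to $\e^{-2(m+1)/t_{\text{coh}}}>2p-1$, and split on $p\lessgtr\tfrac{1}{2}$. Your handling of the ceiling bookkeeping is in fact slightly more careful than the paper's (which writes $m_{\max}=\lfloor T\rfloor$ and $t^{\star}=1+m_{\max}$, an identity that glitches when $T$ is an integer even though the final formula $t^{\star}=\lceil T\rceil$ remains correct), so no changes are needed.
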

	
	\begin{remark}
		The result of Proposition~\ref{prop-forward_recursion_sats} goes beyond elementary link generation with satellites, because we assumed that $\overline{n}_1=\overline{n}_2=0$ and $f_S=1$. As a result of these assumptions, the result of Proposition~\ref{prop-forward_recursion_sats} applies to every elementary link generation scenario (such as ground-based elementary link generation as described in Sec.~\ref{sec-network_setup_ground_based}) in which the transmission channel is a pure-loss channel, the heralding procedure is described by \eqref{eq-photonic_heralding_example_1}--\eqref{eq-photonic_heralding_0}, the source state is equal to the target state, and the quantum memories are modeled as in Sec.~\ref{sec-quantum_memory_sats}.
	\end{remark}

	\begin{proof}
		For the state $\sigma_{AB}^0$ as given by \eqref{eq-initial_link_state_success}, using \eqref{eq-fid_decay_sats} the second condition in \eqref{eq-forward_greedy_decision_func} translates to 
		\begin{align}
			&\alpha\lambda_{m+1}^2+\left(\beta-\frac{1}{2}\right)\lambda_{m+1}+\frac{1}{2}>p(\alpha+\beta)\label{eq-forward_greedy_condition_sats}\\
			\Rightarrow & p<\frac{\alpha\lambda_{m+1}^2}{\alpha+\beta}+\frac{\left(\beta-\frac{1}{2}\right)\lambda_{m+1}}{\alpha+\beta}+\frac{1}{2(\alpha+\beta)}.\label{eq-forward_greedy_condition_sats2}
		\end{align}
		In the case $\overline{n}_1=\overline{n}_2=0$ and $f_S=1$, we have that $\alpha=\beta=\frac{1}{2}$, so that the inequality in \eqref{eq-forward_greedy_condition_sats2} becomes
		\begin{equation}\label{eq-forward_greedy_condition_sats3}
			p<\frac{1}{2}\left(\e^{-\frac{2(m+1)}{t_{\text{coh}}}}+1\right),
		\end{equation}
		Now, this inequality is satisfied for all $m\in\mathbb{N}_0$ if and only if $p\leq\frac{1}{2}$. In other words, if $p\leq\frac{1}{2}$, then for all possible memory times the action is to wait if the elementary link is currently active, meaning that the decision function in \eqref{eq-forward_greedy_decision_func} becomes
		\begin{equation}
			d^{\text{FR}}(m)=\left\{\begin{array}{l l} 1 & \text{if }m=-1, \\ 0 & \text{if }m\geq 0, \end{array}\right.
		\end{equation}
		which is precisely the decision function $d^{\infty}$ for the $t^{\star}=\infty$ memory-cutoff policy; see \eqref{eq-mem_cutoff_policy_infty}.
		
		For $p\in\left(\frac{1}{2},1\right)$, whether or not the inequality in \eqref{eq-forward_greedy_condition_sats3} is satisfied depends on the memory time $m$. Consider the largest value of $m$ for which the inequality is satisfied, and denote that value by $m_{\max}$. Since the action is to wait, at the next time step the memory value will be $m_{\max}+1$, which by definition will not satisfy the inequality in \eqref{eq-forward_greedy_condition_sats}. This means that, for all memory times strictly less than $m_{\max}+1$, the forward recursion policy dictates that the ``wait'' action should be performed if the elementary link is currently active. As soon as the memory time is equal to $m_{\max}+1$, then the forward recursion policy dictates that the ``request'' action should be performed. This means that $m_{\max}+1$ is a cutoff value. In particular, by rearranging the inequality in \eqref{eq-forward_greedy_condition_sats3}, we obtain
		\begin{equation}
			m<-\frac{t_{\text{coh}}}{2}\ln(2p-1)-1,
		\end{equation}
		which means that
		\begin{equation}
			m_{\max}=\left\lfloor -\frac{t_{\text{coh}}}{2}\ln(2p-1)-1 \right\rfloor,
		\end{equation}
		and
		\begin{equation}
			t^{\star}=1+m_{\max}=\left\lceil -\frac{t_{\text{coh}}}{2}\ln(2p-1)-1 \right\rceil,
		\end{equation}
		as required.
	\end{proof}
	
	Observe that the cutoff in \eqref{eq-forward_greedy_cutoff_sats_spec} is equal to zero for all $p\geq\frac{1}{2}\left(1+\e^{-\frac{2}{t_{\text{coh}}}}\right)$. This means that $p=1$ is not the only transmission-heralding success probability for which the forward recursion policy is equal to the $t^{\star}=0$ memory-cutoff policy. Intuitively, for $\frac{1}{2}\left(1+\e^{-\frac{2}{t_{\text{coh}}}}\right)\leq p\leq 1$, the transmission-heralding success probability is high enough that it is not necessary to store the quantum state in memory---for the purposes of maximizing the value of $\widetilde{F}$, it suffices to request a new quantum state at every time step. At the other extreme, for $0\leq p\leq\frac{1}{2}$, the probability is too low to keep requesting---for the purposes of maximizing the value of $\widetilde{F}$, it is better to keep the quantum state in memory indefinitely.

\toclesslab\subsection{Backward recursion policy}{sec-backward_recursion_policy_sats}

	Finally, to end this section, let us consider the backward recursion policy, which we know to be optimal from Theorem~\ref{thm-opt_policy}. We perform the policy optimization for small times, just as a proof of concept. 
	
	
	
	In Fig.~\ref{fig-opt_pol_sats}, we plot optimal values of $\widetilde{F}^{\pi}(t+1)$ for a single elementary link, except now we plot them as a function of the ground station distance $d$ and the satellite altitude $h$ as per the situation depicted in Fig.~\ref{fig-atmosphere_geometry}. We also plot the elementary link activity probability $X^{\pi}(t+1)$ and the expected fidelities $F^{\pi}(t+1)$ associated with the optimal policies. As before, we assume that $f_S=1$, but unlike before we assume that $\overline{n}_1=\overline{n}_2=10^{-4}$, and we consider multiplexing with $M=10^5$ distinct frequency modes per transmission. We assume a coherence time of 1 s throughout. For small distance-altitude pairs, we find that the optimal value is reached within five time steps. For these cases, it is worth pointing out that the optimal value of $\widetilde{F}^{\pi}(t+1)$ corresponds to an elementary link activity probability $X^{\pi}(t+1)$ of nearly one, while the fidelity (although it drops, as expected) does not drop significantly, meaning that the elementary link can still be useful for performing entanglement distillation of parallel elementary links or for creating virtual links. It is also interesting to point out that for a ground distance separation of $d=2000$~km, the optimal values for satellite altitude $h=1000$~km is higher than for $h=500$~km. This result can be traced back to the top-left panel of Fig.~\ref{fig-initial_fid_prob}, in which we see that the transmission-heralding success probability curves for $h=500$ km and $h=1000$~km cross over at around 1700~km, so that $h=1000$~km has a higher probability than $h=500$~km when $d=2000$~km.
	
	\begin{figure}
		\centering
		\includegraphics[scale=1]{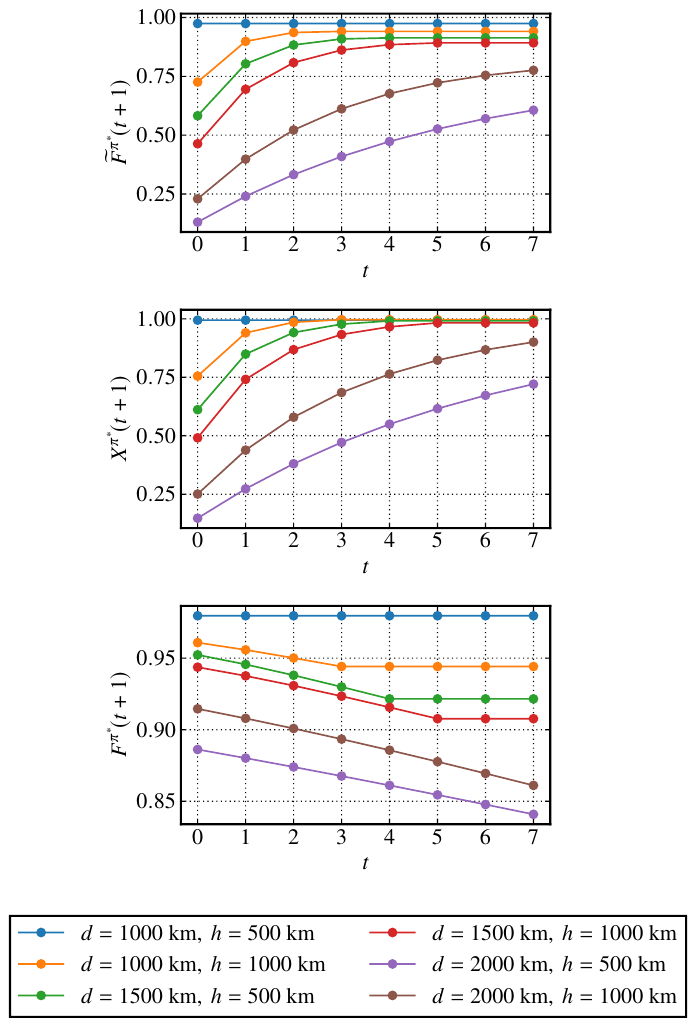}
		\caption{Optimal values of $\widetilde{F}^{\pi}(t+1)$, along with the associated values of $X^{\pi}(t+1)$ and fidelities $F^{\pi}(t+1)$, for a single elementary link distributed by a satellite to two ground stations, according to the symmetric situation depicted in Fig.~\ref{fig-atmosphere_geometry}. We assume that $f_S=1$ and that $\overline{n}_1=\overline{n}_2=10^{-4}$, and we assume that the quantum memories have a coherence time of 1~s. We also assume multiplexing with $M=10^5$ distinct frequency modes per transmission.}\label{fig-opt_pol_sats}
	\end{figure}

	\bigskip

\toclesslab\section{Overview of quantum key distribution}{sec-QKD}

	In this section, we provide a brief overview of quantum key distribution (QKD). We refer to Refs.~[\onlinecite{GRG+02,SBPC+09,Lut14,MyhrThesis,kaur2020,XXQ+20,PAB+19}] for pedagogical introductions and reviews of state-of-the-art QKD research. 
	
	Let us consider the following scenario of so-called entanglement-based QKD. Suppose that Alice and Bob have access to a source that distributes entangled states $\rho_{AB}$ to them, and that their task is to use many copies of this quantum state to distill a secret key. The general strategy of Alice and Bob is to measure their quantum systems. Based on their measurement statistics, they decide whether or not to use their classical measurement data to distill a secret key. The measurement statistics are of the form
	\begin{equation}\label{eq-QKD_correlation}
		p_{AB}(x,y|a,b)\coloneqq\Tr\!\left[\left(\Pi_{A}^{a,x}\otimes \Lambda_B^{b,y}\right)\rho_{AB}\right],\quad x\in\SF{X},\,y\in\SF{Y},\,a\in\SF{A},\,b\in\SF{B},
	\end{equation}
	where $\SF{A}$ and $\SF{B}$ are finite sets of POVMs, such that $\{\Pi_A^{a,x}\}_{x\in\SF{X}}$ is a POVM for Alice's measurement for all $a\in\SF{A}$ and $\{\Lambda_B^{b,y}\}_{y\in\SF{Y}}$ is a POVM for Bob's measurement for all $b\in\SF{B}$. 
	
	\paragraph*{BB84 and six-state protocols.} Two well-known device-dependent protocols that we discuss here are the BB84~\cite{BB84} and six-state~\cite{B98,BG99} protocols. The original formulation of these protocols is as so-called prepare-and-measure protocols, which do not require Alice and Bob to share entanglement. However, these protocols can be viewed from an entanglement-based point of view, in which Alice and Bob possess an entangled state; see Ref.~[\onlinecite{MyhrThesis}] for a discussion on the equivalence of entanglement-based and prepare-and-measure-based protocols, and Ref.~[\onlinecite{TL17}] for a more general discussion of the security of prepare-and-measure-based and entanglement-based QKD protocols. In this device-dependent scenario, we explicitly assume that the state $\rho_{AB}$ is a two-qubit state, and the correlation in \eqref{eq-QKD_correlation} is given by measurement of the qubit Pauli observables $X$, $Z$, and $Y=\I XZ$. In other words, the sets $\SF{A}$ and $\SF{B}$ indicate which observable to be measured, and the sets $\SF{X}$ and $\SF{Y}$ contain the outcomes of the measurements. It can be shown via certain symmetrization procedures that, without loss of generality, $\rho_{AB}$ is a Bell-diagonal state; see Refs.~[\onlinecite{MyhrThesis,WKKG19}] for details. It then suffices to estimate the following three quantities, called \textit{quantum bit-error rates (QBERs)}, in order to characterize the eavesdropper's knowledge:
	\begin{align}
		Q_x&\coloneqq\Tr[(\ketbra{+}{+}_A\otimes\ketbra{-}{-}_B)\rho_{AB}]+\Tr[(\ketbra{-}{-}_A\otimes\ketbra{+}{+}_B)\rho_{AB}]\label{eq-rhoAB_Qx}\\
		&=\frac{1}{2}(1-\Tr[(X\otimes X)\rho_{AB}]),\\
		Q_y&\coloneqq\Tr[(\ketbra{+\I}{+\I}_A\otimes\ketbra{-\I}{-\I}_B)\rho_{AB}]+\Tr[(\ketbra{-\I}{-\I}_A\otimes\ketbra{+\I}{+\I}_B)\rho_{AB}]\label{eq-rhoAB_Qy}\\
		&=\frac{1}{2}(1+\Tr[(Y\otimes Y)\rho_{AB}]),\\
		Q_z&\coloneqq\Tr[(\ketbra{0}{0}_A\otimes\ketbra{1}{1}_B)\rho_{AB}]+\Tr[(\ketbra{1}{1}_A\otimes\ketbra{0}{0}_B)\rho_{AB}]\label{eq-rhoAB_Qz}\\
		&=\frac{1}{2}(1-\Tr[(Z\otimes Z)\rho_{AB}]),
	\end{align}
	where $\ket{\pm}=\frac{1}{\sqrt{2}}(\ket{0}\pm\ket{1})$ and $\ket{\pm\I}=\frac{1}{\sqrt{2}}(\ket{0}\pm\I\ket{1})$. For example, $Q_x$ is simply the probability that Alice and Bob's measurement outcomes disagree when they both measure the observable $X$, and similarly for $Q_y$ and $Q_z$.
	
	A standard figure of merit for QKD protocols is the number of secret key bits obtained per copy of the source state; see, e.g., Ref.~[\onlinecite{WKKG19}] for precise definitions. For the BB84 protocol, the asymptotic secret key rate is~\cite{MY98,LC99,BBB+00,SP00,Mayers01,BBB+06}
	\begin{equation}\label{eq-key_rate_BB84}
		K_{\text{BB84}}(Q)=1-2h_2(Q),
	\end{equation}
	where $Q=\frac{1}{2}(Q_x+Q_z)$ and
	\begin{equation}
		h_2(Q)\coloneqq-Q\log_2(Q)-(1-Q)\log_2(1-Q)
	\end{equation}
	is the binary entropy. For the six-state protocol, the asymptotic secret key rate is~\cite{B98,Lo01}.
	\begin{equation}\label{eq-key_rate_6state}
		K_{\text{6-state}}(Q)=1+\left(1-\frac{3Q}{2}\right)\log_2\left(1-\frac{3Q}{2}\right)+\frac{3Q}{2}\log_2\left(\frac{Q}{2}\right),
	\end{equation}
	where $Q=\frac{1}{3}(Q_x+Q_y+Q_z)$.
	
	\begin{remark}
		The QBERs $Q_x,Q_y,Q_z$ in \eqref{eq-rhoAB_Qx}, \eqref{eq-rhoAB_Qy}, and \eqref{eq-rhoAB_Qz} have a useful interpretation in terms of the fidelity of an arbitrary two-qubit state $\rho_{AB}$ to the maximally entangled state $\Phi_{AB}$. In particular,
		\begin{equation}\label{eq-rhoAB_fid_Bell_QBERs}
			\bra{\Phi}\rho_{AB}\ket{\Phi}=1-\frac{1}{2}(Q_x+Q_y+Q_z)
		\end{equation}
		for every two-qubit state $\rho_{AB}$. It is easy to see this by noting that
		\begin{equation}
			\Phi_{AB}=\frac{1}{4}\left(\mathbbm{1}_A\otimes\mathbbm{1}_B+X_A\otimes X_B-Y_A\otimes Y_B+Z_A\otimes Z_B\right).
		\end{equation}
		Then, using the definitions in \eqref{eq-rhoAB_Qx}, \eqref{eq-rhoAB_Qy}, and \eqref{eq-rhoAB_Qz}, we obtain \eqref{eq-rhoAB_fid_Bell_QBERs}.
	\end{remark}
	
	\paragraph*{Device-independent protocols.} The device-independent protocol that we present here is the one introduced in Refs.~[\onlinecite{AMP06,ABG+07}], and the basic idea behind the protocol comes from the protocol in Ref.~[\onlinecite{Eke91}]. The security of the protocol is based on violation of a Bell inequality, specifically the CHSH inequality~\cite{CHSH69} (see Ref.~[\onlinecite{Scar13}] for a pedagogical introduction). In this protocol, unlike the device-dependent protocols shown above, it is not required to assume that $\rho_{AB}$ is a two-qubit state. However, like the device-dependent protocols considered above, there are symmetrization procedures and other reductions from which it can be argued that $\rho_{AB}$ is a two-qubit Bell-diagonal state without loss of generality; see Refs.~[\onlinecite{AMP06,ABG+07}] for details. The correlation in \eqref{eq-QKD_correlation} is given by measurement of observables $P_A^0,P_A^1,P_A^2$ for system $A$ and observables $Q_B^1,Q_B^2$ for system $B$, and we assume that they all have spectral decompositions of the form
	\begin{align}
		P_A^j&=\Pi_A^{j,0}-\Pi_A^{j,1},\quad j\in\{0,1,2\},\\
		T_B^k&=\Lambda_B^{k,0}-\Lambda_B^{k,1},\quad k\in\{1,2\}.
	\end{align}
	In other words, $\SF{A}=\{0,1,2\}$, $\SF{B}=\{1,2\}$, and $\SF{X}=\SF{Y}=\{0,1\}$.
	
	Two quantities in this case characterize the secret key rate:
	\begin{equation}
		S\coloneqq \Tr\!\left[\left(P_A^1\otimes T_B^1+P_A^1\otimes T_B^2+P_A^2\otimes T_B^1-P_A^2\otimes T_B^2\right)\rho_{AB}\right],
	\end{equation}
	and the quantum bit-error rate (QBER) $Q$, which is defined as
	\begin{equation}
		Q\coloneqq \Tr[(\Pi_A^{0,0}\otimes\Lambda_B^{1,1})\rho_{AB}]+\Tr[(\Pi_A^{0,1}\otimes\Lambda_B^{1,0})\rho_{AB}].
	\end{equation}
	As with the QBERs defined previously, the QBER here is the probability that the outcomes of Alice and Bob disagree when a measurement of $P_A^0$ is performed by Alice and a measurement of $T_B^1$ is performed by Bob. The asymptotic secret key rate is then~\cite{ABG+07,PAB+09}
	\begin{equation}\label{eq-key_rate_DIQKD}
		K_{\text{DI}}(Q,S)=1-h_2(Q)-h_2\left(\frac{1+\sqrt{(S/2)^2-1}}{2}\right).
	\end{equation}

\toclesslab\section{Proof of Theorem~\ref{thm-elem_link_steady_state_dist}}{sec-elem_link_steady_state_dist_pf}

	To prove this, we use \eqref{eq-MDP_func_inf}. First of all, it is straightforward to show that the transition matrix $P^d$ given by the definition in \eqref{eq-transition_matrix_decision} is equal to
	\begin{multline}
		P^d=\left(1-p\overline{\alpha}(-1)\right)\ketbra{-1}{-1}+p\overline{\alpha}(-1)\ketbra{0}{-1}\\+\left(1-p\overline{\alpha}(m^{\star})\right)\ketbra{-1}{m^{\star}}+p\overline{\alpha}(m^{\star})\ketbra{0}{m^{\star}}\\+\sum_{m=0}^{m^{\star}-1}(\alpha(m)\ketbra{m+1}{m}+p\overline{\alpha}(m)\ketbra{0}{m}+(1-p)\overline{\alpha}(m)\ketbra{-1}{m}).
	\end{multline}
	With this, we can verify that the vector $\ket{M(\infty)}_d\coloneqq\sum_{m=-1}^{m^{\star}}s_d(m)\ket{m}$ is a unit-eigenvalue probability vector of $P^d$, i.e., that $P^d\ket{M(\infty)}_d=\ket{M(\infty)}_d$. This is the unique such vector, because the Markov chain defined by the transition matrix $P^d$ is ergodic, which can be straightforwardly verified. Therefore, by ergodicity, the stationary vector $\ket{M(\infty)}$ is unique and $\lim_{t\to\infty} (P^d)^{t-1}=\ketbra{M(\infty)}{\gamma}$; see, e.g., Ref.~[\onlinecite[Theorem~A.2]{Put14_book}]. Therefore, using \eqref{eq-MDP_func_inf}, we obtain the desired result.
	
	\bigskip

\toclesslab\section{Proof of Theorem~\ref{thm-opt_pol_mem_cutoff}}{sec-thm_opt_pol_mem_cutoff_pf}

	The inequality
	\begin{equation}
		\sup_d\lim_{t\to\infty}\widetilde{F}^{(d,d,\dotsc)}(t)\geq \lim_{t\to\infty}\widetilde{F}^{t^{\star}}\!(t)
	\end{equation}
	is certainly true for all $t^{\star}\in\{0,1,\dotsc,m^{\star}\}$, which implies that
	\begin{equation}
		\sup_d\lim_{t\to\infty}\widetilde{F}^{(d,d,\dotsc)}(t)\geq\max_{t^{\star}\in\{0,1,\dotsc,m^{\star}\}}\lim_{t\to\infty}\widetilde{F}^{t^{\star}}\!(t).
	\end{equation}
	
	Now, to prove the opposite inequality, we show that for every decision function $d$ there exists a $t^{\star}\in\{0,1,\dotsc,m^{\star}\}$ such that $\lim_{t\to\infty}\widetilde{F}^{(d,d,\dotsc)}(t)\leq\lim_{t\to\infty}\widetilde{F}^{t^{\star}}\!(t)$. To this end, let $d$ be an arbitrary decision function. We first observe that, from Theorem~\ref{thm-elem_link_steady_state_dist}, the steady-state probability distribution $s_d$ of the memory storage time is such that $s_d(m)\leq s_d(0)$ for all $m\in\{1,2,\dotsc,m^{\star}\}$. On the other hand, for all $t^{\star}\in\mathbb{N}_0$, we have $s_{t^{\star}}(m)=\frac{p}{1+t^{\star}p}$ for all $m\in\{0,1,\dotsc,t^{\star}\}$. We thus need a cutoff value $t^{\star}\in\{0,1,\dotsc,m^{\star}\}$ satisfying $\frac{p}{1+t^{\star}p}\geq s_d(0)$. Rearranging this inequality leads to the condition
	\begin{equation}
		t^{\star}\leq \frac{1}{s_d(0)}-\frac{1}{p}=\frac{N_d}{p(1-\alpha(-1))}-\frac{1}{p}.
	\end{equation}
	Now, from \eqref{eq-elem_link_steady_state_dist_normalization}, we have that
	\begin{equation}
		N_d\leq 1+p(1-\alpha(-1))(1+m^{\star}),
	\end{equation}
	which implies that
	\begin{equation}
		t^{\star}\leq\frac{1}{p(1-\alpha(-1))}+1+m^{\star}-\frac{1}{p}\leq \frac{1}{p}\left(\frac{1}{1-\alpha(-1)}-1\right)+1+m^{\star}.
	\end{equation}
	Now, because $\alpha(-1)$ is a probability, if $\alpha(-1)\in[0,1)$, then $\frac{1}{1-\alpha(-1)}-1\geq 0$, which implies that the right-most quantity in the above inequality is positive and strictly greater than $m^{\star}$. (Note that if $\alpha(-1)=1$, then $\lim_{t\to\infty}\widetilde{F}^{(d,d,\dotsc)}(t)=0$.) Therefore, we can set $t^{\star}=m^{\star}$, resulting in
	\begin{equation}
		\lim_{t\to\infty}\widetilde{F}^{(d,d,\dotsc)}(t)=\sum_{m=0}^{m^{\star}}f(m)s_d(m)\leq\frac{p}{1+m^{\star}p}\sum_{m=0}^{m^{\star}}f(m)=\lim_{t\to\infty}\widetilde{F}^{m^{\star}}\!(t)\leq\max_{t^{\star}\in\{0,1,\dotsc,m^{\star}\}}\lim_{t\to\infty}\widetilde{F}^{t^{\star}}\!(t),
	\end{equation}
	as required. This completes the proof.

\bigskip

\toclesslab\section{Proofs from Sec.~\ref{sec-joining_protocols}}{sec-joining_protocols_pf}

\toclesslab\subsection{Proof of Proposition~\ref{prop-ent_swap_post_fid}}{app-ent_swap_post_fid_pf}

	Let $\rho_{A\vec{R}_1\dotsb\vec{R}_nB}$ be an arbitrary state. Then,
	\begin{multline}
		\bra{\Phi}_{AB}\mathcal{L}_{A\vec{R}_1\dotsb\vec{R}_nB\to AB}^{\ES_n}\left(\rho_{A\vec{R}_1\dotsb\vec{R}_nB}\right)\ket{\Phi}_{AB}\\=\sum_{\vec{x},\vec{z}\in[d]^{\times n}}\left(\bra{\Phi^{a,b}}_{AB}\otimes\bra{\Phi^{z_1,x_1}}_{R_1^1R_1^2}\otimes\dotsb\otimes\bra{\Phi^{z_n,x_n}}_{R_n^1R_n^2}\right)\left(\rho_{A\vec{R}_1\dotsb\vec{R}_nB}\right)\\\left(\ket{\Phi^{a,b}}_{AB}\otimes\ket{\Phi^{z_1,x_1}}_{R_1^1R_1^2}\otimes\dotsb\otimes\ket{\Phi^{z_n,x_n}}_{R_n^1R_n^2}\right),\label{eq-ent_swap_output_fid_pf1}
	\end{multline}
	where
	\begin{equation}
		a\coloneqq z_1+\dotsb+z_n,\quad b\coloneqq x_1+\dotsb+x_n.
	\end{equation}
	Using
	\begin{align}
		\ket{\Phi^{z,x}}&=(Z^zX^x\otimes\mathbbm{1})\ket{\Phi}\\
		&=\frac{1}{\sqrt{d}}\sum_{k=0}^{d-1}\e^{\frac{2\pi\I(k+x)z}{d}}\ket{k+x,k}
	\end{align}
	and
	\begin{equation}\label{eq-comp_to_Bell}
		\ket{j,k}=\frac{1}{\sqrt{d}}\sum_{z,x=0}^{d-1}\e^{\frac{-2\pi\I jz}{d}}\delta_{j,k+x}\ket{\Phi^{z,x}},
	\end{equation}
	we obtain
	\begin{align}
		&\ket{\Phi^{a,b}}_{AB}\otimes\ket{\Phi^{z_1,x_1}}_{\vec{R}_1}\otimes\ket{\Phi^{z_2,x_2}}_{\vec{R}_2}\otimes\dotsb\otimes\ket{\Phi^{z_n,x_n}}_{\vec{R}_n}\nonumber\\
		&\quad=\frac{1}{\sqrt{d^{n+1}}}\sum_{k_0,k_1,\dotsc,k_n=0}^{d-1}\e^{\frac{2\pi\I(k_0+b)a}{d}}\left(\prod_{\ell=1}^n\e^{\frac{2\pi\I(k_{\ell}+x_{\ell})z_{\ell}}{d}}\right)\ket{k_0+b,k_0}_{AB}\ket{k_1+x_1,k_1}_{R_1^1R_1^2}\nonumber\\
		&\qquad\qquad\qquad\qquad\qquad\qquad\qquad\qquad\qquad\qquad\qquad\ket{k_2+x_2,k_2}_{R_2^1R_2^2}\dotsb\ket{k_n+x_n,k_n}_{R_n^1R_n^2}\\
		&\quad=\frac{1}{\sqrt{d^{n+1}}}\sum_{k_0,k_1,\dotsc,k_n=0}^{d-1}\e^{\frac{2\pi\I(k_0+b)a}{d}}\left(\prod_{\ell=1}^n\e^{\frac{2\pi\I(k_{\ell}+x_{\ell})z_{\ell}}{d}}\right)\ket{k_0+b,k_1+x_1}_{AR_1^1}\nonumber\\
		&\qquad\qquad\qquad\qquad\qquad\qquad\qquad\qquad\qquad\qquad\qquad\ket{k_1,k_2+x_2}_{R_1^2R_2^1}\dotsb\ket{k_n,k_0}_{R_n^2B}.
	\end{align}
	Now,
	\begin{align}
		\ket{k_0+b,k_1+x_1}_{AR_1^1}&=\frac{1}{\sqrt{d}}\sum_{z_0',x_0'=0}^{d-1}\e^{\frac{-2\pi\I(k_0+b)z_0'}{d}}\delta_{k_0+b,k_1+x_1+x_0'}\ket{\Phi^{z_0',x_0'}}_{AR_1^1}\\
		\ell\in\{1,\dotsc,n-1\}:\ket{k_{\ell},k_{\ell+1}+x_{\ell+1}}_{R_{\ell}^2R_{\ell+1}^1}&=\frac{1}{\sqrt{d}}\sum_{z_{\ell}',x_{\ell}'=0}^{d-1}\e^{\frac{-2\pi\I k_{\ell}z_{\ell}'}{d}}\delta_{k_{\ell},k_{\ell+1}+x_{\ell+1}+x_{\ell}'}\ket{\Phi^{z_{\ell}',x_{\ell}'}}_{R_{\ell}^2R_{\ell+1}^1},\\
		\ket{k_n,k_0}_{R_n^2B}&=\frac{1}{\sqrt{d}}\sum_{z_n',x_n'=0}^{d-1}\e^{\frac{-2\pi\I k_n z_n'}{d}}\delta_{k_n,k_0+x_n'}\ket{\Phi^{z_n',x_n'}}_{R_n^2B}
	\end{align}
	Therefore,
	\begin{multline}
		\ket{\Phi^{a,b}}_{AB}\otimes\ket{\Phi^{z_1,x_1}}_{\vec{R}_1}\otimes\ket{\Phi^{z_2,x_2}}_{\vec{R}_2}\otimes\dotsb\otimes\ket{\Phi^{z_n,x_n}}_{\vec{R}_n}\\=\frac{1}{d^{n+1}}\sum_{\substack{k_0,\dotsc,k_n=0\\z_0',\dotsc,z_n'=0\\x_0',\dotsc,x_n'=0}}^{d-1} \e^{\frac{2\pi\I(k_0+b)a}{d}}\left(\prod_{\ell=1}^n\e^{\frac{2\pi\I(k_{\ell}+x_{\ell})z_{\ell}}{d}}\right)\e^{\frac{-2\pi\I(k_0+b)z_0'}{d}}\delta_{k_0+b,k_1+x_1+x_0'}\left(\prod_{\ell=1}^{n-1} \e^{\frac{-2\pi\I k_{\ell}z_{\ell}'}{d}}\delta_{k_{\ell},k_{\ell+1}+x_{\ell+1}+x_{\ell}'} \right)\e^{\frac{-2\pi\I k_n z_n'}{d}}\delta_{k_n,k_0+x_n'}\\\ket{\Phi^{z_0',x_0'}}_{AR_1^1}\bigotimes_{\ell=1}^{n-1}\ket{\Phi^{z_{\ell}',x_{\ell}'}}_{R_{\ell}^2R_{\ell+1}^1}\ket{\Phi^{z_n',x_n'}}_{R_n^2B}.
	\end{multline}
	Evaluating the sums with respect to $k_0,\dotsc,k_n$, starting with $k_n$ and proceeding backwards to $k_0$, we obtain
	\begin{multline}
		\ket{\Phi^{a,b}}_{AB}\otimes\ket{\Phi^{z_1,x_1}}_{\vec{R}_1}\otimes\ket{\Phi^{z_2,x_2}}_{\vec{R}_2}\otimes\dotsb\otimes\ket{\Phi^{z_n,x_n}}_{\vec{R}_n}\\=\frac{1}{d^n}\sum_{\substack{z_0',\dotsc,z_n'=0\\x_0',\dotsc,x_n'=0}}^{d-1}\e^{-\frac{2\pi\I}{d}ab}\left(\prod_{\ell=1}^n \e^{\frac{2\pi\I}{d}(x_{\ell}+x_{\ell}'+\dotsb+x_n+x_n')z_{\ell}}\right)\left(\prod_{\ell=1}^n \e^{-\frac{2\pi\I}{d}(x_{\ell}'+x_{\ell+1}+x_{\ell+1}'+\dotsb+x_n+x_n')z_{\ell}'}\right)\e^{\frac{2\pi\I}{d}(z_1'+\dotsb+z_n')b}\\\ket{\Phi^{2a-z_1'-\dotsb-z_n',-x_1'-\dotsb-x_n'}}_{AR_1^1}\bigotimes_{\ell=1}^{n-1}\ket{\Phi^{z_{\ell}',x_{\ell}'}}_{R_{\ell}^2R_{\ell+1}^1}\ket{\Phi^{z_n',x_n'}}_{R_n^2B},
	\end{multline}
	where for the sum with respect to $k_0$ we used the identity
	\begin{equation}\label{eq-exp_sum}
		\sum_{k=0}^{d-1}\e^{\frac{2\pi\I k\alpha}{d}}=d\delta_{\alpha,0},
	\end{equation}
	which holds for all $\alpha\in\{0,1,\dotsc,d-1\}$. Now, observe that
	\begin{equation}\label{eq-Bell_spec}
		\ket{\Phi^{2a-z_1'-\dotsb-z_n',-x_1'-\dotsb-x_n'}}_{AR_1^1}=Z_A^{2a}\ket{\Phi^{-z_1'-\dotsb-z_n',-x_1'-\dotsb-x_n'}}_{AR_1^1}.
	\end{equation}
	Using this, along with the fact that $(Z_A^{z})^{\dagger}Z_A^{z}=\mathbbm{1}$ for all $z\in\{0,1,\dotsc,d-1\}$, and after much simplification and repeated use of \eqref{eq-exp_sum}, we obtain
	\begin{multline}
		\sum_{\vec{x},\vec{z}\in[d]^{\times n}}\left(\bra{\Phi^{a,b}}_{AB}\otimes\bra{\Phi^{z_1,x_1}}_{R_1^1R_1^2}\otimes\dotsb\otimes\bra{\Phi^{z_n,x_n}}_{R_n^1R_n^2}\right)\left(\rho_{A\vec{R}_1\dotsb\vec{R}_nB}\right)\\\left(\ket{\Phi^{a,b}}_{AB}\otimes\ket{\Phi^{z_1,x_1}}_{R_1^1R_1^2}\otimes\dotsb\otimes\ket{\Phi^{z_n,x_n}}_{R_n^1R_n^2}\right)\\=\sum_{\vec{z}',\vec{x}'\in[d]^{\times n}}\left(\bra{\Phi^{-z_1'-\dotsb-z_n',-x_1'-\dotsb-x_n'}}_{AR_1^1}\bigotimes_{\ell=1}^{n-1}\bra{\Phi^{z_{\ell}',x_{\ell}'}}_{R_{\ell}^2R_{\ell+1}^1}\bra{\Phi^{z_n',x_n'}}_{R_n^2B}\right)\left(\rho_{A\vec{R}_1\dotsb\vec{R}_nB}\right)\\\left(\ket{\Phi^{-z_1'-\dotsb-z_n',-x_1'-\dotsb-x_n'}}_{AR_1^1}\bigotimes_{\ell=1}^{n-1}\ket{\Phi^{z_{\ell}',x_{\ell}'}}_{R_{\ell}^2R_{\ell+1}^1}\ket{\Phi^{z_n',x_n'}}_{R_n^2B}\right),
	\end{multline}
	which leads to the desired result.

	\bigskip

\toclesslab\subsection{Proof of Proposition~\ref{prop-GHZ_ent_swap_post_fid}}{app-GHZ_ent_swap_post_fid_pf}

	Let $\rho_{A\vec{R}_1\dotsb\vec{R}_nB}$ be an arbitrary state. We then have
	\begin{multline}\label{eq-ent_swap_GHZ_post_fid_pf1}
		\bra{\GHZ_{n+2}}\mathcal{L}_{A\vec{R}_1\dotsb\vec{R}_nB\to AR_1^1\dotsb R_n^1B}^{\GHZ;n}\left(\rho_{A\vec{R}_1\dotsb\vec{R}_nB}\right)\ket{\GHZ_{n+2}}\\=\frac{1}{2}\sum_{x,x'=0}^1\sum_{\vec{x}\in\{0,1\}^n}\bra{x,x,\dotsc,x}L^{x_n}_n\dotsb L^{x_2}_2 L^{x_1}_1\left(\rho_{A\vec{R}_1\dotsb\vec{R}_nB}\right)L^{x_1\dagger}_{1}L^{x_2\dagger}_{2}\dotsb L^{x_n\dagger}_{n}\ket{x',x',\dotsc,x'},
	\end{multline}
	where
	\begin{align}
		L^{x_j}_{j}&\coloneqq\bra{x_j}_{R_j^2}\text{CNOT}_{\vec{R}_j}X_{R_{j+1}^1}^{x_j}\\
		&=\bra{x_j}_{R_j^2}\left(\sum_{x'=0}^1\ketbra{x'}{x'}_{R_j^1}\otimes X_{R_j^2}^{x'}\right)(\mathbbm{1}_{\vec{R}_j}\otimes X_{R_{j+1}^1}^{x_j})\\
		&=\sum_{x'=0}^1\ketbra{x'}{x'}_{R_j^1}\otimes\bra{x_j+x'}_{R_j^2}\otimes X_{R_{j+1}^1}^{x_j}.
	\end{align}
	Then,
	\begin{multline}
		L^{x_n}_{n}\dotsb L^{x_2}_2 K^{x_1}_1=\sum_{x_1',\dotsc,x_n'=0}^1\ketbra{x_1',\dotsc,x_n'}{x_1',x_2'+x_1,x_3'+x_2,\dotsc,x_n'+x_{n-1}}_{R_1^1R_2^1\dotsb R_n^1}\\\otimes\bra{x_1+x_1',x_2+x_2',\dotsc,x_n+x_n'}_{R_1^2R_2^2\dotsb R_n^2}\otimes X_B^{x_n},
	\end{multline}
	so that, using \eqref{eq-comp_to_Bell} with $d=2$,
	\begin{align}
		&\bra{x,x,\dotsc,x}_{AR_1^1R_2^1\dotsb R_n^1B}L^{x_n}_n\dotsb L^{x_2}_2 L^{x_1}_1\nonumber\\
		&\quad=\bra{x}_A\bra{x,x+x_1,x+x_2,\dotsc,x+x_{n-1}}_{R_1^1R_2^1\dotsb R_n^1}\bra{x_1+x,x_2+x,\dotsc,x_n+x}_{R_1^2R_2^2\dotsb R_n^2}\bra{x+x_n}_B\\
		&\quad=\bra{x,x}_{AR_1^1}\bra{x+x_1,x+x_1}_{R_1^2R_2^1}\bra{x+x_2,x+x_2}_{R_2^2R_3^1}\dotsb\bra{x+x_n,x+x_n}_{R_n^2B}\\
		&\quad=\frac{1}{\sqrt{2^{n+1}}}\sum_{\vec{z}\in\{0,1\}^n}^1(-1)^{z_1x}(-1)^{z_2(x+x_1)}\dotsb(-1)^{z_{n+1}(x+x_n)}\bra{\Phi^{z_1,0}}_{AR_1^1}\bra{\Phi^{z_2,0}}_{R_1^2R_2^1}\dotsb\bra{\Phi^{z_{n+1},0}}_{R_n^2B}.
	\end{align}
	We substitute this into \eqref{eq-ent_swap_GHZ_post_fid_pf1}, simplify, and then make use of the following identity:
	\begin{equation}\label{eq-bit_string_sum_spec}
		\sum_{\vec{\gamma}\in\{0,1\}^n}(-1)^{\vec{\gamma}^{\t}\vec{x}}=2^n\delta_{\vec{x},\vec{0}}.
	\end{equation}
	This leads to
	\begin{multline}
		\bra{\GHZ_{n+2}}\mathcal{L}_{A\vec{R}_1\dotsb\vec{R}_nB\to AR_1^1\dotsb R_n^1B}^{\GHZ;n}\left(\rho_{A\vec{R}_1\dotsb\vec{R}_nB}\right)\ket{\GHZ_{n+2}}\\=\sum_{z_2,\dotsc,z_{n+1}=0}^1\bra{\Phi^{z_2+\dotsb+z_{n+1},0}}\bra{\Phi^{z_2,0}}\dotsb\bra{\Phi^{z_{n+1},0}}\left(\rho_{A\vec{R}_1\dotsb\vec{R}_nB}\right)\ket{\Phi^{z_2+\dotsb+z_{n+1},0}}\ket{\Phi^{z_2,0}}\dotsb\ket{\Phi^{z_{n+1},0}}.
	\end{multline}
	This holds for every state $\rho_{A\vec{R}_1\dotsb\vec{R}_nB}$, so it holds for the tensor product state in the statement of the proposition, thus completing the proof.

\bigskip

\toclesslab\subsection{Proof of Proposition~\ref{prop-graph_state_dist_post_fid}}{app-graph_state_dist_post_fid_pf}

	Let $\rho_{A_1^nB_1^n}$ be an arbitrary $2n$-qubit state. Then, by definition of the channel $\mathcal{L}^{(G)}$, we have that
	\begin{equation}
		\bra{G}\mathcal{L}^{(G)}\left(\rho_{A_1^nR_1^n}\right)\ket{G}=\sum_{\vec{\gamma}\in\{0,1\}^n}\left(\bra{G^{\vec{\gamma}}}_{A_1^n}\otimes\bra{G^{\vec{\gamma}}}_{R_1^n}\right)\left(\rho_{A_1^nR_1^n}\right)\left(\ket{G^{\vec{\gamma}}}_{A_1^n}\otimes\ket{G^{\vec{\gamma}}}_{R_1^n}\right),
	\end{equation}
	where we recall the definition of $\ket{G^{\vec{\gamma}}}$ in \eqref{eq-graph_state_x}. Now,
	\begin{equation}
		\ket{G^{\vec{\gamma}}}_{A_1^n}\otimes\ket{G^{\vec{\gamma}}}_{R_1^n}=\frac{1}{2^n}\sum_{\vec{\alpha},\vec{\beta}\in\{0,1\}^n}(-1)^{\gamma_1(\alpha_1+\beta_1)+\dotsb\gamma_n(\alpha_n+\beta_n)}(-1)^{\frac{1}{2}\vec{\alpha}^{\t}A(G)\vec{\alpha}+\frac{1}{2}\vec{\beta}^{\t}A(G)\vec{\beta}}\ket{\vec{\alpha}}_{A_1^n}\otimes\ket{\vec{\beta}}_{R_1^n},
	\end{equation}
	and, for all $\vec{\alpha},\vec{\beta}\in\{0,1\}^n$,
	\begin{equation}
		\ket{\vec{\alpha}}_{A_1^n}\otimes\ket{\vec{\beta}}_{R_1^n}=\frac{1}{\sqrt{2^n}}\sum_{\vec{x},\vec{z}\in\{0,1\}^n}(-1)^{\alpha_1z_1+\dotsb+\alpha_nz_n}\delta_{\beta_1,\alpha_1+x_1}\dotsb\delta_{\beta_n,\alpha_n+x_n}\ket{\Phi^{z_1,x_1}}_{A_1R_1}\otimes\dotsb\otimes\ket{\Phi^{z_n,x_n}}_{A_nR^n},
	\end{equation}
	where we have used \eqref{eq-comp_to_Bell}. Then,
	\begin{multline}
		\ket{G^{\vec{\gamma}}}_{A_1^n}\otimes\ket{G^{\vec{\gamma}}}_{R_1^n}\\=\frac{1}{(2^n)^{\frac{3}{2}}}\sum_{\vec{\alpha},\vec{x},\vec{z}\in\{0,1\}^n}(-1)^{\vec{\gamma}^{\t}\vec{x}+\vec{\alpha}^{\t}\vec{z}}(-1)^{\frac{1}{2}\vec{\alpha}^{\t}A(G)\vec{\alpha}+\frac{1}{2}(\vec{\alpha}+\vec{x})^{\t}A(G)(\vec{\alpha}+\vec{x})}\ket{\Phi^{x_1,z_1}}_{A_1R_1}\otimes\dotsb\otimes\ket{\Phi^{z_n,x_n}}_{A_nR_n}.
	\end{multline}
	Now, because $A(G)$ is a symmetric matrix, we have that $\vec{\alpha}^{\t}A(G)\vec{x}=\vec{x}^{\t}A(G)\vec{\alpha}$. We thus obtain
	\begin{equation}
		(-1)^{\frac{1}{2}\vec{\alpha}^{\t}A(G)\vec{\alpha}+\frac{1}{2}(\vec{\alpha}+\vec{x})^{\t}A(G)(\vec{\alpha}+\vec{x})}=(-1)^{\vec{\alpha}^{\t}A(G)\vec{x}+\frac{1}{2}\vec{x}^{\t}A(G)\vec{x}},
	\end{equation}
	so that
	\begin{equation}
		\ket{G^{\vec{\gamma}}}_{A_1^n}\otimes\ket{G^{\vec{\gamma}}}_{R_1^n}=\frac{1}{(2^n)^{\frac{3}{2}}}\sum_{\vec{\alpha},\vec{x},\vec{z}\in\{0,1\}^n}(-1)^{\vec{\gamma}^{\t}\vec{x}+\vec{\alpha}^{\t}\vec{z}}(-1)^{\frac{1}{2}\vec{x}^{\t}A(G)\vec{x}+\vec{\alpha}^{\t}A(G)\vec{x}}\ket{\Phi^{z_1,x_1}}_{A_1R_1}\otimes\dotsb\otimes\ket{\Phi^{z_n,x_n}}_{A_nR_n}.
	\end{equation}
	Therefore, using \eqref{eq-bit_string_sum_spec}, we find that
	\begin{multline}
		\sum_{\vec{\gamma}\in\{0,1\}^n}\left(\bra{G^{\vec{\gamma}}}_{A_1^n}\otimes\bra{G^{\vec{\gamma}}}_{R_1^n}\right)\left(\rho_{A_1^nR_1^n}\right)\left(\ket{G^{\vec{\gamma}}}_{A_1^n}\otimes\ket{G^{\vec{\gamma}}}_{R_1^n}\right)\\=\frac{1}{(2^n)^2}\sum_{\vec{\alpha},\vec{\alpha}',\vec{z},\vec{z}',\vec{x}\in\{0,1\}^n}(-1)^{\vec{\alpha}^{\t}(A(G)\vec{x}+\vec{z})+\vec{\alpha}^{\prime\t}(A(G)\vec{x}+\vec{z}')}\left(\bra{\Phi^{z_1,x_1}}_{A_1R_1}\otimes\dotsb\otimes\bra{\Phi^{z_n,x_n}}_{A_nR_n}\right)\left(\rho_{A_1^nR_1^n}\right)\\\left(\ket{\Phi^{z_1',x_1'}}_{A_1R_1}\otimes\dotsb\otimes\ket{\Phi^{z_n',x_n}}_{A_nR_n}\right).
	\end{multline}
	Using \eqref{eq-bit_string_sum_spec} two more times in the summation with respect to $\vec{\alpha}$ and $\vec{\alpha}'$ finally leads to 
	\begin{multline}
		\sum_{\vec{\gamma}\in\{0,1\}^n}\left(\bra{G^{\vec{\gamma}}}_{A_1^n}\otimes\bra{G^{\vec{\gamma}}}_{R_1^n}\right)\left(\rho_{A_1^nR_1^n}\right)\left(\ket{G^{\vec{\gamma}}}_{A_1^n}\otimes\ket{G^{\vec{\gamma}}}_{R_1^n}\right)\\=\sum_{\vec{x}\in\{0,1\}^n}\left(\bra{\Phi^{z_1,x_1}}_{A_1R_1}\otimes\dotsb\otimes\bra{\Phi^{z_n,x_n}}_{A_nR_n}\right)\left(\rho_{A_1^nR_1^n}\right)\left(\ket{\Phi^{z_1,x_1}}_{A_1R_1}\otimes\dotsb\otimes\ket{\Phi^{z_n,x_n}}_{A_nR_n}\right),
	\end{multline}
	where $\vec{z}=A(G)\vec{x}$. Since this holds for every state $\rho_{A_1^nR_1^n}$, it holds for the tensor product state in the statement of the proposition, which completes the proof.
	
	\bigskip
	
\toclesslab\section{Proof of Eq.~\eqref{eq-exp_waiting_time_tInfty}}{sec-exp_waiting_time_tInfty_pf}
	
	By definition,
	\begin{equation}
		\Pr[W_{E'}(t_{\text{req}})=t]_{\infty}=\Pr[X_{E'}(t_{\text{req}}+1)=0,\dotsc,X_{E'}(t_{\text{req}}+t)=1]_{\infty}.
	\end{equation}
	Note that
	\begin{equation}\label{eq-waiting_time_tstarInf_pf1}
		\Pr[W_{E'}(t_{\text{req}})=1]_{\infty}=\Pr[X_{E'}(t_{\text{req}}+1)=1]_{\infty}=(1-(1-p)^{t_{\text{req}}+1})^M=p_{t_{\text{req}}+1}^M,
	\end{equation}
	which holds because all of the elementary links are generated independently and because they all have the same success probability.
	
	Now, for $t\geq 2$, our first goal is to prove that
	\begin{equation}\label{eq-waiting_time_tstarInf_pf4}
		\Pr[W_{E'}(t_{\text{req}})=t]_{\infty}=(1-(1-p_{t_{\text{req}}+1})(1-p)^{t-1})^M-(1-(1-p_{t_{\text{req}}+1})(1-p)^{t-2})^M.
	\end{equation}
	In order to prove this, let us for the moment take $t_{\text{req}}=0$. Then, $X_{E'}(1)=0$ means that at least one of the $M$ elementary links is not active in the first time step, and the same for all subsequent time steps except for the $t^{\text{th}}$ time step, in which all of the $M$ elementary links are active. Then, because $t^{\star}=\infty$, the links that are active in the first time step always remain active. This means that we can evaluate $\Pr[W_{E'}(0)=t]_{\infty}$ by counting the number of elementary links that are inactive at each time step. For example, for $t=2$, we obtain
	\begin{align}
		\Pr[X_{E'}(1)=0,X_{E'}(2)=1]_{\infty}&=\sum_{k_1=1}^M\binom{M}{k_1}\underbrace{(1-p)^{k_1}}_{\substack{k_1\text{inactive links}\\\text{in the first}\\\text{time step}}} \underbrace{p^{M-k_1}}_{\substack{M-k_1\text{ active}\\\text{links in the}\\\text{first time step}}} \underbrace{p^{k_1}}_{\substack{\text{remaining }k_1\\\text{ inactive links}\\\text{succeed in the}\\\text{second time step}}}\\
		&=p^M\sum_{k_1=1}^M\binom{M}{k_1}(1-p)^{k_1}.
	\end{align}
	Similarly, for $t=3$, we find that
	\begin{align}
		&\Pr[X_{E'}(1)=0,X_{E'}(2)=0,X_{E'}(3)=1]_{\infty}\nonumber\\
		&\qquad\qquad\qquad\qquad=\sum_{k_1=1}^M\binom{M}{k_1}(1-p)^{k_1}p^{M-k_1}\sum_{k_2=1}^{k_1}\binom{k_1}{k_2}(1-p)^{k_2}p^{k_1-k_2}p^{k_2}\\
		&\qquad\qquad\qquad\qquad=p^M\sum_{k_1=1}^M\sum_{k_2=1}^{k_1}\binom{M}{k_1}\binom{k_1}{k_2}(1-p)^{k_1}(1-p)^{k_2}.
	\end{align}
	In general, then, for all $t\geq 2$,
	\begin{align}
		&\Pr[W_{E'}(0)=t]_{\infty}\nonumber=\Pr[X_{E'}(1)=0,\dotsc,X_{E'}(t)=1]_{\infty}\\
		&\quad=p^M\sum_{k_1=1}^M\sum_{k_2=1}^{k_1}\sum_{k_3=1}^{k_2}\dotsb\sum_{k_{t-1}=1}^{k_{t-2}}\binom{M}{k_1}\binom{k_1}{k_2}\binom{k_2}{k_3}\dotsb\binom{k_{t-2}}{k_{t-1}}(1-p)^{k_1}(1-p)^{k_2}\dotsb(1-p)^{k_{t-1}}\\
		&\quad=\sum_{k_1=1}^M\binom{M}{k_1}(1-p)^{k_1}p^{M-k_1}\underbrace{p^{k_1}\sum_{k_2=1}^{k_1}\sum_{k_3=1}^{k_2}\dotsb\sum_{k_{t-1}=1}^{k_{t-2}}\binom{k_1}{k_2}\binom{k_2}{k_3}\dotsb\binom{k_{t-2}}{k_{t-1}}(1-p)^{k_2}\dotsb(1-p)^{k_{t-1}}}_{\Pr[W_{k_1}^{(\infty)}(0)=t-1]}\\
		&\quad=\sum_{k_1=1}^M \binom{M}{k_1}(1-p)^{k_1}p^{M-k_1}\Pr[W_{k_1}(0)=t-1]_{\infty} \label{eq-waiting_time_tstarInf_pf3}
	\end{align}
	Using this, we can immediately prove the following result by induction on $t$:
	\begin{equation}\label{eq-waiting_time_tstarInf_pf2}
		\Pr[W_{E'}(0)=t]_{\infty}=(1-(1-p)^t)^M-(1-(1-p)^{t-1})^M.
	\end{equation}
	Indeed, from \eqref{eq-waiting_time_tstarInf_pf1}, we immediately have that this result holds for $t=1$. Similarly, using the fact that
	\begin{equation}
		\sum_{k_1=1}^M\binom{M}{k_1}(1-p)^{k_1}=-1+(2-p)^M=\frac{1}{p^M}\left((1-(1-p)^2)^M-(1-(1-p))^M\right),
	\end{equation}
	we see that \eqref{eq-waiting_time_tstarInf_pf2} holds for $t=2$ as well. Now, assuming that \eqref{eq-waiting_time_tstarInf_pf2} holds for all $t\geq 2$, using \eqref{eq-waiting_time_tstarInf_pf3} we find that
	\begin{align}
		\Pr[W_{E'}(0)=t+1]_{\infty}&=\sum_{k_1=1}^M \binom{M}{k_1}(1-p)^{k_1}p^{M-k_1}\Pr[W_{k_1}(0)=t]_{\infty}\\
		&=\sum_{k_1=1}^M\binom{M}{k_1}(1-p)^{k_1}p^{M-k_1}\left((1-(1-p)^t)^{k_1}-(1-(1-p)^{t-1})^{k_1}\right)\\
		&=(1-(1-p)^{t+1})^M-(1-(1-p)^t)^M,
	\end{align}
	as required. Therefore, \eqref{eq-waiting_time_tstarInf_pf2} holds for all $t\geq 1$.
	
	We are now in a position to prove \eqref{eq-waiting_time_tstarInf_pf4}. Recall that for the $t^{\star}=\infty$ policy, $\Pr[X(t)=1]_{\infty}=1-(1-p)^t=p_t$. Therefore, at time step $t_{\text{req}}+1$, the probability that $k_1\geq 1$ elementary links are inactive is $(1-p_{t_{\text{req}}+1})^{k_1}$ and the probability that $M-k_1$ elementary links are active is $p_{t_{\text{req}}+1}^{M-k_1}$. In the subsequent time steps, each inactive elementary link from the previous time step is active with probability $p$ and inactive with probability $1-p$. Therefore,
	\begin{align}
		\Pr[W_{E'}(t_{\text{req}})=t]_{\infty}&=\sum_{k_1=1}^M\binom{M}{k_1}(1-p_{t_{\text{req}}+1})^{k_1}p_{t_{\text{req}}+1}^{M-k_1}\sum_{k_2=1}^{k_1}\binom{k_1}{k_2}(1-p)^{k_2}p^{k_2-k_1}\dotsb\nonumber\\
		&\qquad\qquad\qquad\qquad\qquad\qquad\dotsb\sum_{k_{t-1}=1}^{k_{t-2}}\binom{k_{t-2}}{k_{t-1}}(1-p)^{k_{t-1}}p^{k_{t-2}-k_{t-1}}p^{k_{t-1}}\\
		&=\sum_{k_1=1}^M\binom{M}{k_1}(1-p_{t_{\text{req}}+1})^{k_1}p_{t_{\text{req}}+1}^{M-k_1}p^{k_1}\sum_{k_2=1}^{k_1}\dotsb\nonumber\\
		&\qquad\qquad\qquad\qquad\qquad\dotsb\sum_{k_{t-1}=1}^{k_{t-2}}\binom{k_1}{k_2}\dotsb\binom{k_{t-2}}{k_{t-1}}(1-p)^{k_2}\dotsb(1-p)^{k_{t-1}}\\
		&=\sum_{k_1=1}^M\binom{M}{k_1}(1-p_{t_{\text{req}}+1})^{k_1}p_{t_{\text{req}}+1}^{M-k_1}\Pr[W_{k_1}(0)=t-1]_{\infty}\\
		&=(1-(1-p_{t_{\text{req}}+1})(1-p)^{t-1})^M-(1-(1-p_{t_{\text{req}}+1})(1-p)^{t-2})^M,
	\end{align}
	which is precisely \eqref{eq-waiting_time_tstarInf_pf4}.
	
	Now, for brevity, let $\widetilde{q}\equiv 1-p_{t_{\text{req}}+1},\quad q\equiv 1-p$. Then,
	\begin{align}
		\Pr[W_{E'}(t_{\text{req}})=t]_{\infty}&=(1-\widetilde{q}q^{t-1})^M-(1-\widetilde{q}q^{t-2})^M\\
		&=\sum_{k=0}^M\binom{M}{k}(-1)^k(\widetilde{q}q^{t-1})^k-\sum_{k=0}^M\binom{M}{k}(-1)^k(\widetilde{q}q^{t-2})^k\\
		&=\sum_{k=1}^M\binom{M}{k}(-1)^k\widetilde{q}^k(q^{t-1})^k(1-q^{-k}).
	\end{align}
	Then, using the fact that
	\begin{equation}
		\sum_{t=2}^\infty t (q^k)^{t-1}=\frac{q^k(2-q^k)}{(1-q^k)^2},
	\end{equation}
	we obtain
	\begin{align}
		\mathbb{E}[W_{E'}(t_{\text{req}})]_{\infty}&=\sum_{t=1}^\infty t \Pr[W_{E'}(t_{\text{req}})=t]_{\infty}\\
		&=(1-\widetilde{q})^M+\sum_{k=1}^M\binom{M}{k}(-1)^k\widetilde{q}^k\left(\frac{q^k(2-q^k)}{(1-q^k)^2}\right)(1-q^{-k})\\
		&=(1-\widetilde{q})^M+\sum_{k=1}^M\binom{M}{k}(-1)^{k+1}\widetilde{q}^k\left(1+\frac{1}{1-q^k}\right)\\
		&=(1-\widetilde{q})^M+\sum_{k=1}^M\binom{M}{k}(-1)^{k+1}\widetilde{q}^k\left(1+\frac{1}{p_k}\right)\\
		&=(1-\widetilde{q})^M+\sum_{k=1}^M\binom{M}{k}(-1)^{k+1}\widetilde{q}^k+\sum_{k=1}^M\binom{M}{k}(-1)^{k+1}\frac{(1-p_k)^{t_{\text{req}}+1}}{p_k}\\
		&=1+\sum_{k=1}^M\binom{M}{k}(-1)^{k+1}\frac{(1-p_k)^{t_{\text{req}}+1}}{p_k},
	\end{align}
	where in the second-last line we used the fact that $\widetilde{q}^k=(1-p_k)^{t_{\text{req}}+1}$. Finally, using the fact that $1=\sum_{k=1}^M\binom{M}{k}(-1)^{k+1}$, we obtain
	\begin{equation}
		\mathbb{E}[W_{E'}(t_{\text{req}})]_{\infty}=\sum_{k=1}^M\binom{M}{k}(-1)^{k+1}\left(1+\frac{(1-p_k)^{t_{\text{req}}+1}}{p_k}\right),
	\end{equation}
	as required.

\begin{center}\rule{0.5\textwidth}{1pt}\end{center}

\twocolumngrid

\bibliography{refs_main_new.bib}

\end{document}